\documentclass[12pt,reqno]{gtpart}
\usepackage{graphicx,bbm,comment}
\usepackage{amsmath, amssymb, amsfonts, stmaryrd}
\usepackage[utf8]{inputenc}
\usepackage{epstopdf}
\usepackage{tikz-cd} % commutative diagrams
\usepackage{setspace}
\usepackage{xcolor}
\usepackage{tcolorbox}
\DeclareGraphicsRule{.tif}{png}{.png}{`convert #1 `dirname #1`/`basename #1 .tif`.png}
\setlength{\columnsep}{2.5cm} 
\usepackage[titletoc,title]{appendix}
\usepackage{enumitem}
\usepackage{ulem}

\usepackage{hyperref}
\definecolor{darkred}{rgb}{0.5,0.15,0.15}
\hypersetup{colorlinks=true,urlcolor=darkred,linkcolor=darkred,citecolor=darkred}

\allowdisplaybreaks

%%%%%%%%%%%%%%%%%%%%%%%%%%%%%%%%%%%%%%%%%%%%%%%%%%%%%%%%%%%%%%%%%%%%%%

\textheight 225mm%190mm
\textwidth 155mm
\oddsidemargin 10mm
\evensidemargin 10mm
\topmargin -10mm%15mm
\headheight 12pt \headsep 4mm
\sloppy
\flushbottom
\parindent0em
\parskip1ex
\leftmargini 2em
\leftmarginv .5em
\leftmarginvi .5em
\textfloatsep 3mm
%%%%%%%%%%%%%%%%%%%%%%%%%%%%%%%%%%%%%%%%%%%%%%%%%%%%%%%%%%%%%%%%%%%%%

\renewcommand{\Z}{\mathbb{Z}}

\newcommand{\BR}{{\mathbb R}}
\newcommand{\ii}{\mathrm{i}}
\newcommand{\ra}{\to}
\newcommand{\rf}[1]{(\ref{#1})}

\newcommand{\be}{\beta}
\newcommand{\ga}{\gamma}
\newcommand{\Ga}{\Gamma}
\newcommand{\de}{\delta}

\newcommand\numberthis{\addtocounter{equation}{1}\tag{\theequation}}

\newcommand{\ep}{\epsilon}
\newcommand{\la}{\lambda}

\newcommand{\si}{\sigma}

\newcommand{\vf}{\varphi}

\newcommand{\pa}{\partial}

\newcommand{\nco}{\newcommand}

\nco{\on}{\operatorname}

\newcommand{\CC}{{\mathcal C}}

\newcommand{\CF}{{\mathcal F}}

\newcommand{\CM}{{\mathcal M}}

\newcommand{\CW}{{\mathcal W}}
\newcommand{\CX}{{\mathcal X}}
\newcommand{\CZ}{{\mathcal Z}}

\newcommand{\BC}{{\mathbb C}}

\newcommand{\BZ}{{\mathbb Z}}

\DeclareMathOperator*{\Res}{Res}

 % placeholder for a curve

\renewcommand{\ell}{X} % elliptic surface

\newcommand{\e}{{\mathrm e}}

\newcommand{\I}{{\mathrm i}}

\newcommand{\ee}{\end{eqnarray}}
\newcommand{\bea}{\begin{eqnarray}}
\newcommand{\eea}{\end{eqnarray}}

\newcommand{\ben}{\begin{eqnarray}}
\newcommand{\een}{\end{eqnarray}}

\renewcommand{\hat}{\widehat}

\theoremstyle{plain}
\newtheorem{thm}{Theorem}[section]
\newtheorem{prop}[thm]{Proposition}
\newtheorem{lem}[thm]{Lemma}
\newtheorem{cor}[thm]{Corollary}

\theoremstyle{definition}
\newtheorem{dfn}[thm]{Definition}

\newtheorem*{question*}{Question}

\theoremstyle{remark}
\newtheorem{rem}[thm]{Remark}

\pagestyle{plain}

\numberwithin{equation}{section}

\title{Mathematical structures of non-perturbative topological string theory:\\ \\
from GW to DT invariants}

\author{Murad Alim}
\author{Arpan Saha}
\author{J\"org Teschner}
\author{Iv\'an Tulli}
\address{M.A., A.S., I.T.: Fachbereich Mathematik, Universit\"at Hamburg, Bundesstr. 55, 20146, Hamburg; J.T.: Fachbereich Mathematik, Universit\"at Hamburg, Bundesstr. 55, 20146 and DESY Theory, Notkestr. 85, 22607 Hamburg}

\begin{document}

\begin{abstract}
We study the Borel summation of the Gromov--Witten potential for the resolved conifold.
The Stokes phenomena associated to this Borel summation
are shown to encode the Donaldson--Thomas invariants of the resolved conifold, having a direct relation  to the
Riemann--Hilbert problem formulated by T. {Bridgeland}.
There {exist} distinguished integration contours for which the 
Borel summation reproduces previous proposals for the {non-perturbative} topological
string partition functions of the resolved conifold. 
These partition functions are shown to have another asymptotic expansion 
at strong {topological string} coupling. We demonstrate that the Stokes phenomena of the 
strong-coupling expansion encode the DT invariants of the resolved
conifold in a second way. Mathematically, one finds a relation 
to Riemann--Hilbert problems associated to DT invariants which is different
from the one found at weak coupling. The Stokes phenomena of 
the strong-coupling expansion turn out to be closely related to 
the wall-crossing phenomena 
in the spectrum of BPS states on the resolved conifold
studied in the context of supergravity by D. Jafferis and G. Moore.

\end{abstract}
\maketitle
\setcounter{tocdepth}{2}
\tableofcontents

%%%%%%%%%%%%%%%%%%%%%%%%%%%%%%%%%%%%
\section{Introduction}

The study of the geometric structures associated to quantum field and string theories has been extremely fruitful in revealing connections between different areas of mathematics as well as in putting forward organizing principles and relations for mathematical structures and invariants.\\

The focus of this work is on the connection of two types of invariants associated to families of Calabi--Yau (CY) threefolds. On the one hand, the Gromov--Witten invariants are characteristics of the enumerative geometry of maps into the CY.
Their generating function is closely related to the partition function of topological string theory. The latter is a formal power series which is asymptotic in the topological string coupling constant. On the other hand, the Donaldson--Thomas or BPS invariants associated to the same geometry can be defined using the enumeration of coherent sheaves supported on holomorphic submanifolds on the same CY subject to a stability condition. Physically, the latter correspond to BPS states, which are realized by D-branes supported on subspaces of the CY geometry. The generating functions of BPS invariants are expected to correspond to physical partition functions of black holes. In physical terms, the 
topological string theory is obtained from a perturbative formulation of the underlying string theory, while the BPS invariants represent data representing non-perturbative effects in string theory. Relations between the two very different types of data and mathematical invariants have long been expected both from the points of view of physics \cite{INOV,DVV,Ooguri:2004zv} as well as mathematics \cite{MNOP1,MNOP2}.\\

The link between GW and DT invariants is thus expected to be intimately related to the non-perturbative structure of topological string theory. Since the latter is defined by an asymptotic series in the topological string coupling, the most canonical path to its non-perturbative structure is to consider the theory of resurgence and Borel resummation; see \cite{Marinolecture} and references therein for an overview. This has indeed been applied to topological string theory in connection with Chern--Simons theory and matrix models in \cite{Pasquetti:2009jg} as well as for the resolved conifold in \cite{HO}. In particular, \cite{HO} used a generalization of the Borel resummation and produced via Borel resummation a partition function which matched the expectations of a proposal for the non-perturbative structure of topological string theory on non-compact CY manifolds put forward earlier in \cite{HMMO,GHM}. A non-perturbative definition of the topological string free
energy for general toric CY has been proposed in \cite{GHM} in terms of the spectral determinants of the 
finite difference operators obtained by quantising the mirror curves. 
In \cite{Couso-Santamaria:2014iia}, techniques of resurgence and transseries were applied to the study of topological string theory via the holomorphic anomaly equations of BCOV \cite{Bershadsky:1993cx}; see also \cite{CousoSantamaria:2014xml} and references therein. 
These techniques have been applied to the study of the proposal of \cite{GHM} in \cite{CMS}.
The link to BPS structures started to emerge more clearly recently \cite{Grassi:2019coc,Gu:2021ize} where connections between Stokes phenomena and BPS invariants have been investigated. See also \cite{kontsevich2020analyticity,Garoufalidis:2020xec} for works in related directions.\\

On the side of DT invariants and BPS structures, exciting insights are coming from the study of wall-crossing phenomena. The wall-crossing formulas of Kontsevich and Soibelman \cite{KS} as well as Joyce and Song \cite{JS} have led to a lot of progress on wall-crossing phenomena of BPS states. In \cite{GMN1,GMN2,GMN3}, Gaiotto, Moore and Neitzke (GMN) provided a physical interpretation of these developments as well as new geometric constructions of hyperk\"ahler manifolds having metrics determined by the BPS spectra; see e.~g.~\cite{Neitzkenotes}. More recent developments are concerned with the analytic and integrable structures behind wall-crossing phenomena. The emerging links indicate new connections between DT invariants and GW invariants, going substantially beyond the scope of the MNOP relation \cite{MNOP1,MNOP2}. Bridgeland \cite{BridgelandDT} formulated a Riemann--Hilbert associated to the Donaldson--Thomas invariants of a given derived category
and defined an associated potential called 
Tau-function in \cite{BridgelandDT}. In simple examples including the resolved conifold \cite{BridgelandCon}, it was shown that an asymptotic expansion of the Tau-function reproduces the full Gromov--Witten potential. In \cite{CLT20} it was proposed that the topological string partition functions for a certain class of local CY represent local sections of certain canonical 
holomorphic line bundles 
defined by the relevant solutions to the Riemann-Hilbert problems from \cite{BridgelandDT}. \\

In this paper, we will revisit the Borel summation of the 
resolved
conifold partition function from a new perspective. We will
show, on the one hand, that the Stokes jumps of the Borel 
summation of the expansion in powers of the topological 
string coupling have a close relation to the jumps defining 
the Riemann--Hilbert problem defined by Bridgeland using 
DT invariants as input data in \cite{BridgelandCon}. The Stokes jumps serve as 
certain types of potentials for the jumps of the Darboux
coordinates defining the Riemann--Hilbert problem in \cite{BridgelandCon}.\\

The Borel summations along different rays $\rho$
are found to have the following structure
\begin{equation}\label{Frho-decomp}
F_{\rho}(\la,t)=F_{\mathrm{GV}}(\la,t)+F_{\mathrm D}(\la,t;\rho),
\end{equation}
where $\la$ is the topological string coupling, and 
$t$ the complexified K\"ahler parameter. The contribution denoted 
$F_{\mathrm{GV}}(\la,t)$ is the canonical re-organisation known 
from the work of Gopakumar and Vafa of the formal series in 
powers of $\la$ as a  series in powers of $Q=e^{2\pi\ii t}$ which is 
convergent for $\mathrm{Im}(t)>0$. $F_{\mathrm{GV}}(\la,t)$ does not depend on the
ray $\rho$.  The second part, $F_{\mathrm D}(\la,t;\rho)$ strongly depends
on the choice of a ray $\rho$.  $F_{\mathrm D}(\la,t;\rho)$ can be 
represented as functions of the variables $Q'=e^{4\pi^2\ii t/\la}$
and $q'=e^{4\pi^2\ii /\la}$, suggesting an interpretation in terms 
of non-perturbative effects  associated to D-branes in type II string 
theory. It is known that there exist non-perturbative
effects in string theory represented by disk amplitudes associated to  
stable
D-branes. Closely related effects have recently been identified with
non-perturbative corrections to the metric on the 
hypemultiplet moduli space in type II string theory on CY three-folds
\cite{ASS}.
It seems natural to interpret 
the jumps of $F_{\rho}(\la,t)$ across Stokes rays
as the consequences of changes of the set of stable
objects contributing to the non-perturbative effects 
in the partition functions.
The explicit results for the jumps take a  
particularly simple form, having a direct
relation to the Riemann-Hilbert problems associated to DT-invariants
in \cite{BridgelandCon} further discussed below.\\

The results associated to different rays $\rho$
interpolate between two special functions which had previously 
been proposed as candidates for non-perturbative definitions of the topological string 
partition functions: Integration along the imaginary axis yields
the Gopakumar--Vafa resummation $F_{\mathrm{GV}}(\la,t)$ on the one hand, while choosing $\rho$ to be
the positive real axis, $\rho=\BR_{>0}$, yields
a function  closely related to the triple sine function. 
In the case $\rho=\BR_{>0}$, we find that 
the function $F_{\mathrm D}(\la,t;\rho)$ 
appearing in
equation \rf{Frho-decomp} specialises to 
the previously known function $F_{\mathrm{NS}}^{}(\la,t)$ which 
can be obtained from the refined version of $F_{\mathrm{GV}}^{}$ introduced in 
\cite{IKV} in the limits studied by Nekrasov and Shatashvili \cite{NS}.
The combination 
\begin{equation}
F_{\rm np}^{}(\la,t):=F_{\mathrm{GV}}^{}(\la,t)+F_{\mathrm{NS}}^{}(\la,t),
\end{equation}
appearing on the right side of \rf{Frho-decomp} in the case $\rho=\BR_{>0}$ 
has been studied before as a candidate for a non-perturbative 
completion of the topological string partition function \cite{HMMO}. 
Relations to previous work studying the function $F_{\rm np}^{}(\la,t)$
in connection to topological string theory are further discussed in Section 
\ref{prevres}.\\

It turns out that there is an appealing way to encode the Stokes data geometrically,
in line with the previous suggestions made in \cite{CLT20}. It will be shown that
the Stokes jumps
can be interpreted as transition functions of a certain line bundle 
canonically associated to the solution of the Riemann--Hilbert problem 
considered by Bridgeland. We will show that this line bundle 
is closely related to the hyperholomorphic  line bundles studied in 
relation to hyperk\"ahler geometry in \cite{APP,Neitzke_hyperhol}.
The Borel summations of the topological string partition functions represent
local sections of this line bundle. 
In the previous work \cite{Coman:2018uwk,CLT20}, it had been demonstrated that the Fourier transforms
of the topological string partition functions associated to a certain class of 
local CY are related to the isomonodromic tau-functions which represent 
local sections of this line bundle. Due to the absence of compact four-cycles, 
the tau-functions  simply coincide with the topological string 
partition functions for the case at hand.\\

The Borel summation along the positive real axis 
appears to be distinguished in some ways.
This function also 
has an asymptotic expansion for $\la\ra\infty$,
referred to as the strong-coupling expansion in the following. The Borel summations 
of the strong-coupling expansion along different rays $\rho'$ 
are found to have the following structure:
\begin{equation}
{F}_{\rho'}'(\la,t)=F_{\rm BPS}(\la,t;\rho')+F_{\rm NS}(\la,t).
\end{equation}
The contribution $F_{\rm NS}(\la,t)$ is now independent of $\rho'$, while
$F_{\rm BPS}(\la,t;\rho')$ exhibits jumps when $\rho'$ crosses certain rays 
in the complex plane of the variable $1/\lambda$. We find that 
$Z_{\rm BPS}(\la,t;\rho'):=e^{F_{\rm BPS}(\la,t;\rho')}$ is closely related to  the counting functions for BPS states 
previously studied in the context of supergravity by Jafferis and Moore \cite{JM}. 
The Stokes jumps of $Z_{\mathrm{ BPS}}(\la,t;\rho')$ display a precise correspondence
to the wall-crossing behaviour of the counting functions for BPS states
studied in \cite{JM}. 
In the case of $\rho'=\BR_{>0}$ we recover $F_{\mathrm{GV}}(\la,t)$.\\

Mathematically one may again observe a close relation to a Riemann--Hilbert 
problem associated to DT theory. However, the jumps 
of $Z_{\mathrm{BPS}}(\la,t;\rho')$ now directly coincide with the jumps of a particular
coordinate function in a close relative of Bridgeland's Riemann--Hilbert problem, 
as could have been expected from previous 
computations of  $Z_{\mathrm{BPS}}(\la,t;\rho')$ 
on the basis of wall-crossing formulae \cite[Appendix A]{Banerjee:2019apt}. 
It should be stressed that both the 
location of jumps, and the functional form of the jumps are different for 
weak- and strong-coupling expansions. However, we find that both are determined 
by Riemann--Hilbert type problems associated to DT invariants, 
albeit quite 
remarkably in somewhat different ways. \\

At least in the example studied in this paper,
we have identified two new ways to extract non-perturbative 
information on DT invariants from the GW invariants
defining the topological string partition functions. Our results
suggest that these data are deeply encoded in the analytic 
structures of non-perturbatively defined partition functions. 
The way this happens indicates close connections 
to string-theoretic 
S-duality conjectures, as will be briefly discussed 
in Section \ref{sec:S-dual}. \\

\textbf{Acknowledgments:} we have benefited from discussions with Vicente Cort\'es, Timo Weigand, and Alexander Westphal around common research projects within the Cluster of Excellence ``Quantum Universe". The authors would furthermore like to thank Sergei Alexandrov, Tom Bridgeland, Marcos Mariño, Greg Moore, and Boris Pioline for comments on a preliminary version of this paper. The work of J.T. and I.T. is funded by the Deutsche Forschungsgemeinschaft (DFG, German Research Foundation) under Germany's Excellence Strategy EXC 2121 Quantum Universe 390833306. The work of M.A and A.S. is supported through the DFG Emmy Noether grant AL 1407/2-1.

\section{Borel summations of the resolved conifold partition function}\label{mainresults}

We are going to study the  formal series
 \[
  \widetilde{F}(\lambda,t)= \frac{1}{\lambda^2} \mathrm{Li}_{3}(Q)+\frac{B_2}{2}\mathrm{Li}_1(Q) + \sum_{g=2}^{\infty} \lambda^{2g-2} \frac{(-1)^{g-1}B_{2g}}{2g (2g-2)!} \mathrm{Li}_{3-2g} (Q),
  \]
with $Q=\exp(2\pi \I t)$, and polylogarithms $\mathrm{Li}_s(z)$ and Bernoulli numbers $B_n$  defined by
\begin{equation}\label{polylogbern}
\mathrm{Li}_s(z) = \sum_{n=1}^{\infty} \frac{z^n}{n^s}\, ,\quad s\in \mathbb{C}\,,\qquad
\frac{w}{e^w-1} = \sum_{n=0}^{\infty} B_n \frac{w^n}{n!}\,.
\end{equation}
Borel summation of this formal series will repackage the information 
contained in it in an interesting way, revealing non-obvious  
mathematical structures.    Our goals in this section will be to state the results on the Stokes phenomena of the Borel sums of $\widetilde{F}(\lambda,t)$,
to discuss
some of its implications, and relations to previous results in the literature. 

\subsection{Motivation: Topological string theory on the resolved conifold}

Topological string theory motivates the consideration of the topological string partition functions. One expects to be able to 
associate such partition functions 
to families of Calabi--Yau (CY) threefolds $X=X_t$, with 
$t=(t^1,\dots,t^n)$ being a set of distinguished local coordinates on the  
CY K\"ahler moduli space $\mathcal{M}$ of dimension $n=h^{1,1}(X_t)$. 
The partition function is expected to be defined by 
an asymptotic series in the topological string coupling $\lambda$ of the form
\begin{equation}
Z_{\rm top} (\lambda,t)= \exp \left(\sum_{g=0}^{\infty} \lambda^{2g-2} {F}^{g}(t)\right).
\end{equation}
In order to provide a rigorous mathematical basis for the definition of topological 
string partition functions, one may start by defining
the GW potential of a Calabi--Yau threefold $X$ as the formal power series
\begin{equation}\label{F-formal}
\CF(Q,\lambda) = \sum_{g\ge 0}  \lambda^{2g-2} \CF^g(Q)=\sum_{g\ge 0} 
\sum_{\beta\in H_2(X,\mathbb{Z})} \lambda^{2g-2}
  N_\be^g \,Q^{\beta}\, ,
\end{equation}
where $Q^{\beta}=\prod_{r=1}^nQ_r^{\be_r}$ if $\be=\sum_{r=1}^n\be_r\ga_r$,
with $\{\ga_1,\dots,\ga_n\}$ being an integral  basis for $H_2(X,\BZ)$, 
and $Q_r$ being formal variables for $r=1,\dots,n$.
One may note that the term associated to $\be=0$ is independent of the K\"ahler class
$\be$, motivating the decomposition
\begin{equation}\label{freeenergydecomp}
\CF(Q,\la)=\CF_{0}(\la) + \widetilde{\CF}(Q,\la)\,, 
\end{equation}
where the contribution $\CF_{0}(\la)$ takes the universal form \cite{FP}
\begin{equation}\label{constmapterms}
\CF_{0}(\la)=\sum_{g\ge 0}  \lambda^{2g-2}F_{0}^g,\qquad
F_{0}^g = \frac{\chi(X)(-1)^{g-1}\, B_{2g}\, B_{2g-2}}{4g (2g-2)\, (2g-2)!}\,, \quad g\ge2\,,
\end{equation}
with $\chi(X)$ being the Euler characteristic of $X$.
The formal series $\widetilde{\CF}(Q,\la)$ is defined as
\begin{equation}
\widetilde{\CF}(Q,\la)=\sum_{g\ge 0}  
\sum_{\beta\in \Ga} \lambda^{2g-2} [\mathrm{GW}]_{\be,g}^{} \,Q^{\beta},
\end{equation}
where $\Ga=\{\be\in H_2(X,\BZ);\be\neq 0\}$, 
with $[\mathrm{GW}]_{\be,g}^{}$ being the Gromov--Witten invariants.
In this way, one  arrives at a precise definition of $\CF(Q,\la)$ as a formal series.\\
 
There is a class of CY manifolds where the series  $\CF^g(Q)$ actually 
have finite radii of convergence, allowing us to define 
the functions $F^g(t)=\CF^g(e^{2\pi\I t_1},...,e^{2\pi\I t_n})$,
where $t=(t_1,...,t_n)$.
The resulting power series in $\la$ is not expected to be convergent, in general. 
One may hope, however, that there can exist analytic functions 
having the series $F(\la,t)=\sum_{g\geq 0}\la^{2g-2}F^g(t)$ as asymptotic expansion.\\

We are here considering a particular example of a CY manifold $X$ called the 
resolved conifold. This
CY threefold represents the total space of the rank two bundle over the projective line:
\begin{equation}
X := \mathcal{O}(-1) \oplus \mathcal{O}(-1) \rightarrow \mathbb{P}^1\,,
\end{equation}
and corresponds to the resolution of the conifold singularity.\\ 

The GW potential for this geometry was determined in physics \cite{Gopakumar:1998ii,GV}, and in mathematics \cite{Faber} with the following outcome for the non-constant maps:\footnote{See also \cite{MM} for the determination of $F^g$ from a string theory duality and the explicit appearance of the polylogarithm expressions.}
\begin{equation}\label{resconfree}
\widetilde{F}(\lambda,t)= \sum_{g=0}^{\infty} \lambda^{2g-2} \widetilde{F}^g(t)= \frac{1}{\lambda^2} \mathrm{Li}_{3}(Q)+ \sum_{g=1}^{\infty} \lambda^{2g-2} \frac{(-1)^{g-1}B_{2g}}{2g (2g-2)!} \mathrm{Li}_{3-2g} (Q) \, ,
\end{equation}
using the notation $Q=e^{2\pi\ii t}$. The constant map constribution has the form (\ref{constmapterms}) with $\chi(X)=2$ and $F_0^0=-\zeta(3)$. The value of $F_0^1$ only shifts $\mathcal{F}(Q,\lambda)$ by an overall constant, and its specific value won't be important. Our first goal will be to study the Borel summability of the series \rf{resconfree} and (\ref{constmapterms}).
This was first studied in 
\cite{Pasquetti:2009jg}. The results presented below
complete and clarify previous work on this subject, as will be discussed in more
detail below. 

\subsection{Statement of results for the Borel sum and its Stokes phenomena}\label{resultssect}

Here we state a theorem collecting the results that we wish to prove. The proof of each part will be presented in Section \ref{proofs}.\\

Before stating the theorem, we briefly recall the definition of Borel summation. Given a formal power series $a(\check{\lambda}) \in \check{\lambda}\mathbb{C}[[\check{\lambda}]]$, we consider its Borel transform $\mathcal{B}(a)(\xi)$, where

\begin{equation}
    \mathcal{B}\colon\check{\lambda}\mathbb{C}[[\check{\lambda}]]\to \mathbb{C}[[\xi]], \;\;\;\; \mathcal{B}(\check{\lambda}^{n+1})=\frac{\xi^n}{n!}.
\end{equation}
Let $\check{\lambda}\in \mathbb{C}^{\times}$ and let $\rho$ be a ray from $0$ to $\infty$ in the complex $\xi$-plane. If $\mathcal{B}(a)(\xi)$ defines an analytic function along $\rho$, we define the Borel sum of $a(\check{\lambda})$ at $\check{\lambda}$, along $\rho$ by 
\begin{equation}\label{defborelsum}
    \int_{\rho}\mathrm d\xi \;e^{-\xi/\check{\lambda}}\mathcal{B}(a)(\xi)\,.
\end{equation}
If (\ref{defborelsum}) is finite, we say $a(\check{\lambda})$ is Borel summable at $\check{\lambda}$, along $\rho$.

\begin{thm} \label{theorem1} Consider the formal series
\begin{align*}\label{formal}
\widetilde{F}(\lambda,t)&=  \frac{1}{\lambda^2} \mathrm{Li}_{3}(Q)+\frac{B_2}{2}\mathrm{Li}_1(Q) + \sum_{g=2}^{\infty} \lambda^{2g-2} \frac{(-1)^{g-1}B_{2g}}{2g (2g-2)!}\, \mathrm{Li}_{3-2g} (Q) \,  \\
&=\frac{1}{\lambda^2} \mathrm{Li}_{3}(Q)  +\frac{B_2}{2}\mathrm{Li}_1(Q)+ \Phi(\check{\lambda},t)\,, \quad \check{\lambda}=\frac{\lambda}{2\pi}\,, \quad Q=e^{2\pi \I t}\,. \numberthis
\end{align*} 
Then we have the following:
\begin{itemize}
    \item[(i)] (Borel transform)  For $t\in \mathbb{C}^{\times}$ with $|\mathrm{Re}(t)|<1/2$, let $G(\xi,t):=\mathcal{B}(\Phi(-,t))(\xi)$ denote the Borel transform of $\Phi(\check \lambda,t)$. Then $G(\xi,t)$ converges for $|\xi|<2\pi |t|$.  Furthermore, $G(\xi,t)$ admits a series representation of the form
    \begin{equation}\label{Borel-sum}
        G(\xi,t) = \frac{1}{(2\pi )^2}\sum_{m\in\Z \setminus \{0\}}\frac{1}{m^3}
  %      \bigg(1+\frac{\xi}{2}\frac{\partial}{\partial \xi}\bigg)
        \frac{1}{2\xi}\frac{\partial}{\partial \xi}
        \bigg(\frac{\xi^2}{1-e^{-2\pi \I t + \xi/m}}-\frac{\xi^2}{1-e^{-2\pi \I t - \xi/m}}\bigg).
\end{equation}
We can use the above series representation to analytically continue $G(\xi,t)$ in the $\xi$ variable to a meromorphic function with poles at $\xi=2\pi \I(t+k)m$ for $k \in \mathbb{Z}$ and $m \in \mathbb{Z}\setminus\{0\}$.
\item[(ii)] (Borel sum) For $k\in \mathbb{Z}$ let $l_k:=\mathbb{R}_{<0}\cdot 2\pi \I(t+k)$ and $l_{\infty}:=\I\mathbb{R}_{<0}$. Given any ray $\rho$ from $0$ to $\infty$ different from $\{\pm l_{k}\}_{k\in \mathbb{Z}}\cup \{\pm l_{\infty}\}$, and $\lambda$ in the half-plane $\mathbb{H}_{\rho}$ centered at $\rho$, we define the Borel sum of $\widetilde{F}(\lambda,t)$ along $\rho$ as
\begin{equation}
    F_{\rho}(\lambda,t):=\frac{1}{\lambda^2} \mathrm{Li}_{3}(Q)+ \frac{B_2}{2}\mathrm{Li}_{1}(Q)+\int_{\rho}\mathrm d\xi\, e^{-\xi/\check{\lambda}} G(\xi,t)\,.
\end{equation}
Taking $\rho=\mathbb{R}_{>0}$, and assuming that $\mathrm{Im}(t)>0$ and $0<\mathrm{Re}(t)<1$, we have the following identity whenever $\mathrm{Re}(t)<\mathrm{Re}(\check{\lambda}+1)$:

\begin{equation}\label{FR+-def}
    F_{\mathbb{R}_{>0}}(\lambda,t)= - \int_{\mathbb{R}+\I 0^+} \frac{\mathrm du}{8u}\frac{e^{u(t-1/2)}}{\sinh(u/2)(\sinh(\check\lambda u/2))^2}\,.
\end{equation}
\item[(iii)] (Stokes jumps) Let $\rho_k$ be a ray in the sector determined by the Stokes rays $l_{k}$ and $l_{k-1}$. Then if $\mathrm{Im}(t)>0$, on the overlap of their domains of definition in the $\lambda$ variable we have
\begin{equation}
    \phi_{\pm l_k}(\lambda,t):=F_{\pm \rho_{k+1}}(\lambda,t)
    -F_{\pm \rho_k}(\lambda,t) =\frac{1}{2\pi \I}\partial_{\check\lambda}\Big(\check{\lambda}\,\mathrm{Li}_2\big(e^{\pm 2\pi \I(t+k)/\check \lambda}\big)\Big)\;.
\end{equation}
If $\mathrm{Im}(t)<0$, then the previous jumps also hold provided $\rho_{k+1}$ is interchanged with $\rho_{k}$ in the above formula. \\

\item[(iv)] (Limits to $\pm \I\mathbb{R}_{>0}$)  Let $\rho_k$ denote any ray between the rays $l_{k}$ and $l_{k-1}$. Furthermore, assume that $0<\mathrm{Re}(t)<1$, $\mathrm{Im}(t)>0$, $\mathrm{Re}(\lambda)>0$, $\mathrm{Im}(\lambda)<0$, and $\mathrm{Re}\, t < \mathrm{Re} (\check{\lambda}+1)$. Then
\begin{equation}
    \lim_{k\to \infty}F_{\rho_k}(\lambda,t)=\lim_{k\to \infty}F_{-\rho_k}(-\lambda,t)=\sum_{k=1}^\infty\frac{e^{2\pi \I k t}}{k\big(2\sin\big(\frac{\lambda k}{2}\big)\big)^2}\,.
\end{equation}
Furthemore, we can write the sum of the Stokes jumps along $l_k$ for $k\geq 0$ as
\begin{equation}
    \sum_{k=0}^{\infty}\phi_{l_k}(\lambda,t)=\frac{1}{2\pi \I }\partial_{\lambda}\Big(\lambda \sum_{l=1}^{\infty}\frac{w^l}{l^2(1-\widetilde{q}^l)}\Big), \;\;\;\; w:=e^{2\pi \I t/\check\lambda},\;\; \widetilde{q}:=e^{2\pi \I/\check{\lambda}}\;.
\end{equation}
If, on the other hand, we take $0<\mathrm{Re}(t)<1$, $\mathrm{Im}(t)>0$, $\mathrm{Re}(\lambda)>0$, $\mathrm{Im}(\lambda)>0$, $\mathrm{Re}(t)<\mathrm{Re}(\check\lambda +1)$ and furthermore assume that $|e^{2\pi \I t/\check{\lambda}}|<1$, then we also have
\begin{equation}
    \lim_{k\to -\infty}F_{\rho_k}(\lambda,t)=\lim_{k\to -\infty}F_{-\rho_k}(-\lambda,t)=\sum_{k=1}^\infty\frac{e^{2\pi \I k t}}{k\big(2\sin\big(\frac{\lambda k}{2}\big)\big)^2}.
\end{equation}
    
\end{itemize}
\end{thm}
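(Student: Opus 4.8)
The plan is to prove (i)--(iv) in order, with (i) providing the analytic input for everything else.

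\textbf{Part (i).} Applying $\mathcal{B}$ termwise to $\Phi(\check\lambda,t)=\sum_{g\ge 2}(2\pi\check\lambda)^{2g-2}\tfrac{(-1)^{g-1}B_{2g}}{2g(2g-2)!}\mathrm{Li}_{3-2g}(Q)$ gives $G(\xi,t)=\sum_{g\ge2}\tfrac{(-1)^{g-1}B_{2g}(2\pi)^{2g-2}}{2g(2g-2)!(2g-3)!}\xi^{2g-3}\mathrm{Li}_{3-2g}(Q)$. The point is to rewrite $B_{2g}$ via $\zeta(2g)=\tfrac{(-1)^{g+1}B_{2g}(2\pi)^{2g}}{2(2g)!}$: the factorials collapse, $c_g(2\pi m)^{2g-2}=\tfrac{2(2g-1)\zeta(2g)}{(2\pi)^2}m^{2g-2}$, turning $G$ into a double series over the $\mathrm{Li}$-index $m$ and the $\zeta$-index. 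Summing the inner series against $\sum_{g\ge2}(2g-1)y^{2g-3}/(2g-3)!=y\cosh y+2\sinh y$ and resumming over $m$ yields, for each $m$, exactly $\tfrac1{2\xi}\partial_\xi\!\big(\tfrac{\xi^2}{1-Qe^{\xi/m}}-\tfrac{\xi^2}{1-Qe^{-\xi/m}}\big)$; the elementary identity $\tfrac1{1-Qe^{w}}=1-\tfrac1{1-Q^{-1}e^{-w}}$ converts this to the form \rf{Borel-sum} with $Q^{-1}=e^{-2\pi\I t}$, and letting $m$ run over $\Z\setminus\{0\}$ absorbs the factor $2$. Convergence for $|\xi|<2\pi|t|$ is read off from $|B_{2g}|\sim 2(2g)!/(2\pi)^{2g}$ and $|\mathrm{Li}_{3-2g}(Q)|\sim(2g-3)!\,(2\pi|t|)^{-(2g-2)}$, the latter via $\mathrm{Li}_{3-2g}(Q)=\tfrac{(2g-3)!}{(2\pi\I)^{2g-2}}\sum_{k\in\Z}(t+k)^{-(2g-2)}$ with the $k=0$ term dominating precisely when $|\mathrm{Re}\,t|<1/2$; the same $m$-series converges locally uniformly off $\{2\pi\I m(t+k)\}$, giving the meromorphic continuation.

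\textbf{Part (ii).} Along any ray $\rho$ avoiding the Stokes rays each summand of $G$ is $O(1)$ at infinity with the tail $O(m^{-4})$, so $F_\rho$ is well defined on $\mathbb{H}_\rho$. For \rf{FR+-def}, rewrite $\tfrac1{\sinh^2(\check\lambda u/2)\sinh(u/2)}=\tfrac{8e^{\check\lambda u+u/2}}{(e^{\check\lambda u}-1)^2(e^u-1)}$, so the right-hand side becomes $-\int_{\BR+\I 0^+}\tfrac{du}{u}\,\tfrac{e^{u(t+\check\lambda)}}{(e^{\check\lambda u}-1)^2(e^u-1)}$, and close this contour in the upper half-plane: the simple poles $u=2\pi\I n$ ($n\ge1$) contribute $F_{\mathrm{GV}}(\lambda,t)=\sum_{k\ge1}e^{2\pi\I kt}/(k(2\sin(\lambda k/2))^2)$, whose $\lambda$-expansion is $\tfrac1{\lambda^2}\mathrm{Li}_3(Q)+\tfrac{B_2}{2}\mathrm{Li}_1(Q)+\cdots$, and the double poles $u=2\pi\I n/\check\lambda$ contribute a function $F_{\mathrm{NS}}(\lambda,t)$; one then checks that $\tfrac1{\lambda^2}\mathrm{Li}_3+\tfrac{B_2}{2}\mathrm{Li}_1+\int_{\BR_{>0}}e^{-\xi/\check\lambda}G\,\mathrm d\xi$ equals $F_{\mathrm{GV}}+F_{\mathrm{NS}}$, either by evaluating the Borel integral through the $m$-series of $G$ (each term an elementary integral of $e^{-\xi/\check\lambda}$ against a rational function of $e^{\pm\xi/m}$) and matching the double-pole residues, or by invoking uniqueness of Borel summation --- the triple-sine-type integral being holomorphic in $\check\lambda$ on the right half-plane and Gevrey-$1$ asymptotic to $\widetilde F$ at $\check\lambda=0$ --- the identity then extending by analytic continuation in $\lambda$. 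The inequalities $0<\mathrm{Re}\,t<1$, $\mathrm{Im}\,t>0$, $\mathrm{Re}\,t<\mathrm{Re}(\check\lambda+1)$ are exactly what make the arcs at infinity vanish and these manipulations legitimate.

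\textbf{Parts (iii) and (iv).} Passing $\rho_k\to\rho_{k+1}$ drags the Borel contour across $l_k$, where $G$ has double poles at $\xi_n=-2\pi\I n(t+k)$, $n\ge1$, from the $m=n$ summand $-\tfrac{\xi^2}{1-e^{-2\pi\I t-\xi/n}}$; expanding that summand one further order gives $G=\tfrac{\xi_n}{(2\pi)^2n^2}(\xi-\xi_n)^{-2}-\tfrac1{(2\pi)^2n^2}(\xi-\xi_n)^{-1}+\cdots$, hence $\mathrm{Res}_{\xi_n}\!\big(e^{-\xi/\check\lambda}G\big)=-\tfrac1{(2\pi)^2n^2}e^{2\pi\I n(t+k)/\check\lambda}\big(1-\tfrac{2\pi\I n(t+k)}{\check\lambda}\big)$, and summing over $n$ (using $\sum n^{-2}x^n=\mathrm{Li}_2(x)$, $\sum n^{-1}x^n=-\log(1-x)$) and multiplying by $2\pi\I$ gives $\tfrac1{2\pi\I}\partial_{\check\lambda}\big(\check\lambda\,\mathrm{Li}_2(e^{2\pi\I(t+k)/\check\lambda})\big)=\phi_{l_k}$; the orientation of the crossing, hence the $\rho_{k+1}\!\leftrightarrow\!\rho_k$ alternative, is fixed by the sign of $\mathrm{Im}\,t$, and $-l_k$ is identical with $\xi_n=+2\pi\I n(t+k)$. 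For (iv), $F_{-\rho}(-\lambda,t)=F_\rho(\lambda,t)$ because $G(\xi,t)$ is odd in $\xi$ (manifest in \rf{Borel-sum}), so it suffices to treat $\lim_kF_{\rho_k}(\lambda,t)$; since $\BR_{>0}$ is an admissible $\rho_0$, telescoping (iii) gives $F_{\rho_k}=F_{\BR_{>0}}+\sum_{j=0}^{k-1}\phi_{l_j}$ on the common $\lambda$-domain, and the geometric resummation $\sum_{j\ge0}\mathrm{Li}_2(w\widetilde q^{\,j})=\sum_{l\ge1}w^l/(l^2(1-\widetilde q^{\,l}))$ (convergent since the hypotheses force $|w|<1$, $|\widetilde q|<1$) yields the displayed formula for $\sum_{k\ge0}\phi_{l_k}$ and shows the limit exists. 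Finally, evaluating the double-pole residues at $u=2\pi\I n/\check\lambda$ in \rf{FR+-def}, with the hyperbolic cotangents expanded as geometric series in $\widetilde q^{\,n}$, identifies $F_{\mathrm{NS}}=-\tfrac1{2\pi\I}\partial_\lambda\big(\lambda\sum_{l\ge1}w^l/(l^2(1-\widetilde q^{\,l}))\big)=-\sum_{j\ge0}\phi_{l_j}$, whence $\lim_{k\to\infty}F_{\rho_k}=F_{\BR_{>0}}+\sum_{j\ge0}\phi_{l_j}=F_{\mathrm{GV}}$. The case $k\to-\infty$ is the mirror image: telescope downward, and use $|\widetilde q|>1$ (now forced by $\mathrm{Im}\,\lambda>0$) so that $\sum_{j\ge1}\mathrm{Li}_2(w\widetilde q^{-j})$ is again represented by $-\sum_l w^l/(l^2(1-\widetilde q^{\,l}))$, concluding identically.

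\textbf{Main obstacle.} The crux is part (ii) --- turning the abstract Borel sum into the explicit triple-sine integral \rf{FR+-def}. Whichever route one takes, the work is in the contour analysis (which poles are enclosed, vanishing of the arcs at infinity, interchanging the sum over $m$ or over residue families with the $u$- or $\xi$-integral), and this is exactly where the hypotheses on $t$ and $\check\lambda$ are consumed. A close second is the residue identity $F_{\mathrm{NS}}=-\tfrac1{2\pi\I}\partial_\lambda(\lambda\sum_l w^l/(l^2(1-\widetilde q^{\,l})))$ that closes (iv): morally it is the $\check\lambda\mapsto1/\check\lambda$ reflection of the trivial $F_{\mathrm{GV}}=-2\pi\I\sum_n\mathrm{Res}_{u=2\pi\I n}(\cdots)$, but making it precise means handling the double poles at $2\pi\I n/\check\lambda$ and the cotangent expansions, with care that all the resummations share a common domain of convergence. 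Parts (i) and (iii), by contrast, are essentially resurgence bookkeeping and a single double-pole residue computation.
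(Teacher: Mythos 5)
Your parts (i), (iii) and (iv) are sound and track the paper closely: for (i) you run the computation of Appendix B (Lemma \ref{lemmaApp}) in reverse, swapping the $\zeta$- and polylog-sums directly instead of going through the Hadamard-product contour integral of Proposition \ref{Borelhadprod} -- a legitimate alternative, with the same convergence analysis via the asymptotics of $B_{2g}$ and $\mathrm{Li}_{3-2g}$; your double-pole residue computation in (iii) reproduces Proposition \ref{Stokesjumps} exactly; and your telescoping plus geometric resummation plus residue decomposition of the right side of \rf{FR+-def} is precisely Propositions \ref{rblimit1} and \ref{lblimit1}.

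The genuine gap is part (ii), which you yourself flag as the crux but do not actually prove. You offer two routes to the identity \rf{FR+-def} and execute neither. The first -- evaluating $\int_0^\infty e^{-\xi/\check\lambda}G(\xi,t)\,\mathrm d\xi$ termwise through the $m$-series -- is not a collection of ``elementary integrals'': the geometric expansion of $(1-Qe^{\xi/m})^{-1}$ is not uniformly valid along the whole ray $\mathbb{R}_{>0}$ (the modulus of $Qe^{\xi/m}$ crosses $1$), so one must split the contour and control the resulting Lerch-type sums, and none of that is done. The second -- invoking uniqueness of Borel summation -- requires a Nevanlinna--Sokal-type hypothesis: uniform Gevrey-$1$ remainder bounds for $\log G_3(t\,|\,\check\lambda,1)$ on a full disk tangent to the imaginary axis, which you assert (``Gevrey-$1$ asymptotic'') but do not establish; knowing only that $\widetilde F$ is the asymptotic expansion is not enough. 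The paper closes this gap by an entirely different and fully executed argument: the ``Woronowicz form'' of Lemma \ref{woronowiczlemma}, obtained from the unitarity of the Fourier transform applied to a carefully chosen pair of functions, followed by the integration-by-parts manipulations of Proposition \ref{BorelsumR} (with an alternative in Appendix \ref{altproofBorel} via the Hadamard integral representation and a rescaling/residue analysis). Without one of these, the identity \rf{FR+-def} -- and hence everything in (iv) that leans on the residue decomposition of $F_{\mathrm{np}}$ -- is unsupported.
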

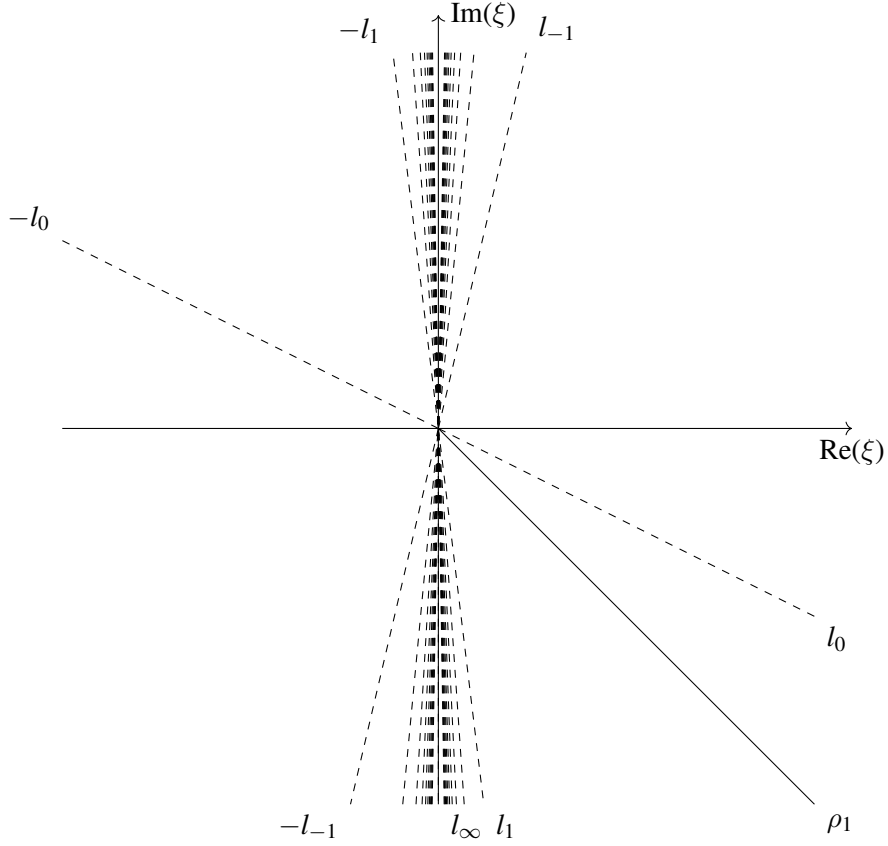
\begin{figure}
\begin{center}
\begin{tikzpicture}
  \draw[->] (-5,0) -- (5.5,0) coordinate[label = {below:$\textrm{Re} (\xi)$}] (xmax);
  \draw[->] (0,-5) -- (0,5.5) coordinate[label = {right:$\textrm{Im} (\xi)$}] (ymax) ;
 \draw[dashed] (0,0)--(5,-2.5) node[anchor=north west] {$l_0$};
\draw[dashed] (0,0)--(-5,2.5) node[anchor=south east] {$-l_0$};
\draw (0,0)--(5,-5) node[anchor=north west] {$\rho_1$};
  \draw[dashed] (0,0)--({5/(-2-6.28)},5) node[anchor=south east] {$-l_1$};
  \draw[dashed] (0,0)--({-5/(-2-6.28)},-5) node[anchor=north west] {$l_1$}; 
 \draw[dashed] (0,0)--({5/(-2-6.28*(-1))},5) node[anchor=south west] {$l_{-1}$};
  \draw[dashed] (0,0)--({-5/(-2-6.28*(-1)},-5) node[anchor=north east] {$-l_{-1}$};
 \draw[dashed] (0,0)--(0,-5) node[anchor=north west] {$l_{\infty}$}; 
 \foreach \k in {-10,...,-2} {
  \draw[dashed] ({-5/(-2-6.28*\k)},-5)--({5/(-2-6.28*\k)},5);
  };
 \foreach \k in {2,...,10} {
  \draw[dashed] ({-5/(-2-6.28*\k)},-5)--({5/(-2-6.28*\k)},5);
  };
\end{tikzpicture}
\end{center}

\caption{Illustration of the Stokes rays $l_k=\mathbb{R}_{<0}\cdot 2\pi \I (t+k)$ in the Borel plane, plotted for $t= \frac{1}{\pi}\left(1+ \frac{\I}{2}\right)$ and $k=-10,\dots,10$, as well as a possible integration ray $\rho_1$.}
\end{figure}
Let us note that under the assumptions of the first part of (iv),  $\lim_{k\to \infty}F_{\rho_k}(\lambda,t)$ differs from $F_{\mathbb{R}_{>0}}(\la,t)$ by the sum over all jumps 
$\phi_{l_k}(\lambda,t)$ for $k\geq 0$, leading to the decomposition
\begin{equation}\label{Fnp-decomp}
F_{\BR_{>0}}(\la,t)=\lim_{k\to \infty}F_{\rho_k}(\lambda,t)-\frac{1}{2\pi\ii}\frac{\pa}{\pa \la}
\Big(\la\sum_{l=1}^\infty \frac{w^l}{l^2(1-\widetilde{q}^{l})}\Big).
\end{equation}
As we will see in Proposition \ref{rblimit1}, this decomposition can be obtained by evaluating the integral 
on the right of \rf{FR+-def} as a sum over residues. Part (iv)
of the theorem above identifies the second term on the right of 
\rf{Fnp-decomp} with the sum over the Stokes jumps in the lower right quadrant
of the Borel plane.\\

It will turn out that the Borel summation $F_{0,\rho}(\la)$
of the formal series \rf{constmapterms} is closely related to the value of the function 
$F_{\rho}(\la,0)$ defined in Theorem \ref{theorem1}. The relation will be found to be of the form 
\begin{equation}\label{Borelconstmap}
    F_{0,\rho}(\la) = -F_{\rho}(\la,0)-\frac{1}{12}\log\check{\la}+C,
\end{equation} 
where $C$ is a constant independent of $\rho$ which won't be of interest for us. 
Equation \rf{Borelconstmap}
finally allows us to represent the Borel summation $\hat{F}_{\rho}(\lambda,t)$ of the 
full free energy $F(\la,t)=F_0(\lambda)+\widetilde{F}(\lambda,t)$ of the topological string theory by the formula
\begin{equation}\label{Borel-full}
\hat{F}_{\rho}(\la,t)=F_{\rho}(\la,t)-F_{\rho}(\la,0)-\frac{1}{12}\log\check{\la}+C.
\end{equation}

In the following two subsections we will first discuss the interpretation of 
Theorem \ref{theorem1} in the context of topological string theory. 
This will be followed by a discussion of the relation to previous results in 
this direction. 
\subsection{Connection to topological string theory}\label{connST}

In the case of the resolved conifold, non-perturbative definitions of the topological string partition 
functions should be analytic functions of $\la$ and $t$ such that 
(\ref{resconfree}) gives an asymptotic series expansion for $\la\ra 0$ of the corresponding free energy $\widetilde{F}(\lambda,t)$. 

\subsubsection{}

The Gopakumar--Vafa (GV) resummation of the GW potential \cite{GVb} re-organises the non-constant part 
$\widetilde{F}(\lambda,t)$ of the GW potential in the following form:
\begin{equation}\label{GVresum}
\sum_{g\ge 0}  \lambda^{2g-2}
\sum_{\beta\in \Ga}  [\mathrm{GW}]^{}_{\be,g} \,Q^{\beta}
= \sum_{\beta\in\Ga}\sum_{g\ge 0} \,[\mathrm{GV}]^{}_{\be,g}\, \sum_{k\ge 1} \frac{1}{k} \left( 2 \sin \big( {\textstyle \frac{k\lambda}{2}}\big)\right)^{2g-2} Q^{k\beta}\,.
\end{equation}
Equation \rf{GVresum} can be understood as an equality of 
formal power series in $Q^{\beta}$ with coefficients being Laurent series in $\la$.
One can 
thereby regard \rf{GVresum} as a definition of the
GV invariants  $[\mathrm{GV}]^{}_{\be,g}$ in terms of the Gromov--Witten invariants
$[\mathrm{GW}]^{}_{\be,g}$. \\

Using the known results for the invariants $\mathrm{GW}_{\be,g}$ of the conifold, 
one finds that the right-hand side of 
 \rf{GVresum} simplifies  to 
\begin{equation}\label{FGV-def}
F_{\mathrm{GV}}(\lambda,t)
=\sum_{k=1}^\infty\frac{e^{2\pi \I k t}}{k\big(2\sin\big(\frac{\lambda k}{2}\big)\big)^2}\,.
\end{equation}
This has also been derived using the topological vertex formalism \cite{AKMV}.
Assuming $\mathrm{Im}(t)>0$, one may notice that the series defining $F_{\mathrm{GV}}(\lambda,t)$ in 
\rf{FGV-def} is convergent for $\mathrm{Im}(\la)>0$ or $\mathrm{Im}(\la)<0$.
One may regard $F_{\mathrm{GV}}(\lambda,t)$  as a minimal summation of the divergent
series \rf{formal}, in the sense that it is obtained  by a 
rearrangement of the formal series $\widetilde{F}(\lambda,t)$ 
 into a convergent series in powers of $Q=e^{2\pi\ii t}$
that defines functions analytic in $\la$  away from the real line $\BR$.\\

Our results relate $F_{\mathrm{GV}}(\lambda,t)$ to the limits of the Borel summations 
along rays $\rho_k$ for $k\ra\pm\infty$ when the rays $\rho_k$ approach the 
imaginary axis. 

\subsubsection{}

As mentioned above, the function $F_{\mathrm{GV}}(\lambda,t)$ is not well-defined
for $\la\in\BR$.  This is one of the motivations to look for analytic
functions having the same asymptotic expansion, but larger domains of definition, 
as candidates for non-perturbative definitions of the topological string partition
functions.\\

A general proposal has been  made in \cite{HMMO} 
for non-perturbative definitions of topological string partition 
functions. This proposal was motivated by the observation \cite{HMO}
that one can systematically add functions of $e^{(2\pi)^2\frac{\ii}{\la}}$ 
to the function $F_{\mathrm{GV}}(\la,t)$ cancelling all  the singularities that 
$F_{\mathrm{GV}}(\la,t)$ develops on the real $\la$-axis. The function  of 
$e^{(2\pi)^2\frac{\ii}{\la}}$ having this property can be interpreted as 
certain non-perturbative corrections in string theory.\\

Specialised to the conifold, the proposal made in 
\cite{HMMO} yields the following function:
\begin{equation}\label{polecancel}
F_{\mathrm{np}}(\lambda,t):=F_{\mathrm{GV}}(\lambda,t)+\frac{1}{2\pi\I}\frac{\pa}{\pa \la}\la F_{\mathrm{NS}}\big({\textstyle \frac{4\pi^2}{\la},
\frac{2\pi}{\la}(t-\frac{1}{2})}\big),
\end{equation}
using the notations
\begin{equation}
F_{\mathrm{GV}}(\lambda,t):=\sum_{k=1}^\infty\frac{e^{2\pi \I k t}}{k\big(2\sin\big({\textstyle\frac{\lambda k}{2}}\big)\big)^2}\,,\qquad
F_{\mathrm{NS}}(g,t):=\frac{1}{2\I}\sum_{k=1}^\infty\frac{e^{2\pi \I k t}}{k^2\sin\big({\textstyle\frac{g k}{2}}\big)}\,.
\end{equation}
It is easy to see that the right side of \rf{polecancel} 
coincides with the expression on the right of \rf{Fnp-decomp} (i.e. that $F_{\mathbb{R}_{>0}}=F_{\mathrm{np}}$).

\subsubsection{} 
Using Borel 
summation is another natural approach to finding 
non-perturbative definitions of the topological string partition 
functions, as previously investigated in 
\cite{Pasquetti:2009jg} and in \cite{HO}. A formula for the Borel transform 
had been first proposed in \cite{Pasquetti:2009jg}, and in \cite{HO} it was
conjectured that the Borel transform along the real axis
is equal to \rf{polecancel}. Extensive numerical studies provided 
convincing evidence for these proposals. \\

Our Theorem \ref{theorem1} offers a more complete  picture.
It shows that the Borel summations
$F_{\rho}(\la,t)$ interpolate between $F_{\mathrm{GV}}(\la,t)$ and $F_{\mathrm{np}}(\la,t)$.
All the functions $F_{\rho}(\la,t)$ defined by different choices of the 
ray $\rho$ can be regarded as different re-packagings of the same
information, contained in the formal series  (\ref{resconfree}).
Any of these summations can serve as a candidate for 
a non-perturbative definition of the topological string partition
function of the resolved conifold. Additional requirements have 
to be imposed to distinguish a particular choice among  others.\\

Defining the topological string partition 
functions by Borel summation whenever this possibility exists seems
to be the most canonical way to associate actual functions to the 
divergent series (\ref{resconfree}).
The price to pay 
is that the resulting function is only piecewise analytic, 
having jumps across the rays $\pm l_k$. However, as will be
explained in the rest of the paper, there is interesting 
information contained in these jumps. We are going to demonstrate that
the jump functions encode information on the 
spectrum of BPS states on the resolved conifold in 
a particularly simple and transparent way by 
relating them to the Riemann--Hilbert problem formulated 
in \cite{BridgelandCon} which takes as input data the generalised DT invariants 
for the resolved conifold.\\

The Borel summations $F_{\rho_k}(\lambda,t)$ each have natural domains of definition, 
bounded by the rays $l_k$ and $l_{k-1}$. It seems important to note, however,
that the functions $F_{\rho_k}(\lambda,t)$ can be analytically continued in 
$\la$ to larger domains of definition containing $l_k$ and $l_{k-1}$. This suggests
to regard the analytically continued functions $F_{\rho_k}(\lambda,t)$ as local
sections of a line bundle defined by taking exponentials of the
jumps $\phi_{l_k}(\la,t)=F_{\rho_{k+1}}(\lambda,t)-F_{\rho_{k}}(\lambda,t)$ 
as transition functions. This line bundle, together with the collection 
of distinguished local sections $F_{\rho_k}(\lambda,t)$ is a natural geometric object 
canonically associated to the formal series (\ref{resconfree})
by Borel summation. We will see that it is a natural 
analog of the line bundle proposed in \cite{CLT20} for the case of the 
resolved conifold. 

\subsubsection{} Let us note that the differences 
$F_{\mathrm{D},k}(\lambda,t):=F_{\rho_k}(\lambda,t)-F_{\mathrm{GV}}(\lambda,t)$ 
can be
represented as sums of terms which are all 
proportional to an exponential 
function having dependence with respect to the topological 
string coupling $\lambda$ of the form 
$e^{(const.)/\lambda}$.
It is therefore natural to associate the 
differences $F_{\mathrm{D},k}(\lambda,t)$ with 
non-perturbative effects in string theory. They can be represented
as a sum over the Stokes jumps across the rays $l_k$ enclosed by 
$\rho_k$ and $\ii \BR$. We will see that 
these  jumps are in a one-to-one correspondence with 
D-branes in type II string theory on the resolved conifold. \\

A dependence of the form of the form 
$e^{(const.)/\lambda}$ is characteristic
for non-perturbative effects in string theory having a 
world-sheet description through disk amplitudes with boundaries 
associated to D-branes. Such disk amplitudes can represent 
central charge functions of D-branes in type II string theory 
\cite{HIV00}.
We will see that the constants in the exponential 
functions  $e^{(const.)/\lambda}$ appearing in the
differences $F_{\mathrm{D},k}(\lambda,t)$
have a simple relation to the central charge functions of the D-branes
associated to the jumps. The functions $F_{\mathrm{D},k}(\lambda,t)$ 
can be represented as sums over all terms which are exponentially 
suppressed in the wedge of the $\la$-plane bounded by $l_k$ and $l_{k-1}$.\\

These observations suggest that the non-perturbative
effects represented by the functions $F_{\mathrm{D},k}(\lambda,t)$
may have a world-sheet description in terms of 
disk amplitudes with boundaries associated to stable 
D-branes representing states in the BPS-spectrum of the resolved 
conifold. The set of D-branes contributing to the non-perturbatively defined partition
functions would then depend on the phase of $\lambda$, and jump across the rays $l_k$.
It would be interesting to verify this 
interpretation more directly. 

\subsection{Previous results}\label{prevres}

Previous work on this subject had obtained several important partial results. 
The first study of the Borel summability of the series $\widetilde{F}(\lambda,t)$ was performed
in \cite{Pasquetti:2009jg}, where an explicit formula for the Borel transform was found.
While the direct comparison of the formula derived in \cite{Pasquetti:2009jg} with \rf{Borel-sum}
is not completely straightforward, 
it is easy to see that the poles and residues agree.\\

Another approach to the summation of the formal series $\widetilde{F}(\lambda,t)$
has been proposed in \cite{HO}. The summation considered in \cite{HO} is 
an analytic function $F^{\rm resum}_{\rm coni}(\la,t)$ defined
through an explicit integral representation. 
Numerical evidence has been presented for the conjecture that 
$F^{\rm resum}_{\rm coni}(\la,t)$ is equal to the Borel summation $F_{\BR_{>0}}(\la,t)$
along $\rho=\BR_{>0}$ in our notations. 
We will later in Section \ref{FHOrel} explicitly establish the relation between
$F^{\rm resum}_{\rm coni}(\la,t)$ and $F_{\BR_{>0}}(\la,t)$ considered in our paper. 
It was furthermore proposed in \cite{HO} that the function 
$F^{\rm resum}_{\rm coni}(\la,t)$ admits the decomposition
\rf{polecancel}. This conjecture 
has been extensively checked numerically.\\

Interesting relations with spectral determinants of 
finite difference operators along the lines of \cite{GHM} 
have been found in  \cite{BGT}. Further exploration of the 
relations to our results should be illuminating. \\

It has been demonstrated in \cite{BridgelandCon} that a special function closely related to the
triple sine function has \rf{formal} as its asymptotic expansion.  
The relation between the triple sine function and the formal series $\widetilde{F}(\la,t)$ 
has
stimulated the work \cite{alim2020difference,alim2021integrable,Alim:2021ukq}  studying  the function 
defined on the right side of \rf{FR+-def}
as a promising candidate for a non-perturbative definition of the topological string 
partition function. It was identified in \cite{alim2021integrable} as a solution with pleasant analytic properties of a difference equation \cite{alim2020difference} which governs the topological string free energy. 
In \cite{Alim:2021ukq}, the non-perturbative content of this function was extracted demonstrating 
that this function admits the 
decomposition 
\rf{polecancel} and matching in particular with the results of \cite{HO}.\\

Further work on the function  $F_{\rm np}^{}(\la,t)$ 
in connection with the non-perturbative structure of topological strings can be found in \cite{Lockhart:2012vp,Krefl:2015vna}.

\section{Proofs of the results of Section \ref{resultssect}}\label{proofs}

In this section we prove each of the points of Theorem \ref{theorem1}.
Our approach is strongly inspired by the paper
\cite{garoufalidis2020resurgence} which has studied the analogous problem
for the non-compact quantum dilogarithm function.
Each of the four subsections below corresponds to each of the four points of the Theorem.

\begin{subsection}{The Borel transform}\label{boreltransproof}

We start by proving the first part of Theorem \ref{theorem1}, concerning the Borel transform of of $\widetilde{F}(\lambda,t)$. We remark that an alternative expression of the Borel transform was previously given in  \cite{Pasquetti:2009jg}, which we recall in Section \ref{oldBT}. \\

Recall the asymptotic expansion of the topological string free energy for the resolved conifold, which is given by \eqref{resconfree}:
\begin{equation}
\begin{split}
    \widetilde{F}(\lambda,t)&= \sum_{g=0}^{\infty} \lambda^{2g-2} \widetilde{F}^g(t)= \frac{1}{\lambda^2} \textrm{Li}_{3}(Q)+\frac{B_2}{2}\mathrm{Li}_1(Q) + \sum_{g=2}^{\infty} \lambda^{2g-2} \frac{(-1)^{g-1}B_{2g}}{2g (2g-2)!}\, \textrm{Li}_{3-2g} (Q) \,  \\
&=\frac{1}{\lambda^2} \textrm{Li}_{3}(Q)  +\frac{B_2}{2}\mathrm{Li}_1(Q)+ \Phi(\check{\lambda},t)\,, \quad \check{\lambda}=\frac{\lambda}{2\pi}\,, \quad Q=e^{2\pi it}\,,
\end{split}
\end{equation}
We use the property
\begin{equation} \label{polylogder}
\theta_Q \textrm{Li}_s(Q) =\textrm{Li}_{s-1} (Q)\,,  \quad \theta_Q:= Q \,\frac{d}{dQ}\,,
\end{equation}
to write 
\begin{equation}\label{derrep}
\widetilde{F}^g=\frac{(-1)^{g-1}B_{2g}}{2g (2g-2)!} \,\theta_Q^{2g} \textrm{Li}_3(Q)\,, \quad g\ge2\,.
\end{equation}
Furthermore, using that $\theta_Q=\frac{1}{2\pi \I}\partial_t$ we obtain
\begin{equation}
   \widetilde{F}^g=\frac{(-1) B_{2g}}{2g (2g-2)!(2\pi )^{2g}} \,\partial_t^{2g} \textrm{Li}_3(Q)\,, \quad g\ge2\,. 
\end{equation}
We thus have
\begin{equation}\label{formsumborel}
   \Phi(\check{\lambda},t)=-\frac{1}{4\pi^2} \sum_{g=2}^{\infty}\frac{ B_{2g}}{2g (2g-2)!} \,\check{\lambda}^{2g-2} \partial_t^{2g} \textrm{Li}_3(Q)\,. 
\end{equation}

We now wish to compute the Borel transform of $\Phi(\check{\lambda},t)$ 
and specify its domain of convergence.
The Borel transform is defined as the formal power series $G(\xi,t):=\mathcal{B}(\Phi(-,t))(\xi)$, where
\begin{equation}
    \mathcal{B}\colon\check{\lambda}\mathbb{C}[[\check{\lambda}]]\to \mathbb{C}[[\xi]], \;\;\;\; \mathcal{B}(\check{\lambda}^{n+1})=\frac{\xi^n}{n!}.
\end{equation}
Namely, we wish to study

\begin{equation}
     G(\xi,t)= -\frac{1}{4\pi^2}\sum_{g=2}^{\infty} \frac{ B_{2g}}{2g (2g-2)! (2g-3)!} \xi^{2g-3}\,\partial_t^{2g} \textrm{Li}_3(Q)\,.
\end{equation}

In order to do this, it will be convenient to first recall the Hadamard product and a certain integral representation thereof. The techniques used below follow the lines of \cite{garoufalidis2020resurgence}.

\begin{dfn}
Consider two formal power series $\sum_{n=0}^{\infty}a_nz^n,\sum_{n=0}^{\infty}b_nz^n\in \mathbb{C}[[z]]$. Then the Hadamard product $\oast\colon\mathbb{C}[[z]]\times \mathbb{C}[[z]] \to \mathbb{C}[[z]]$ is defined by
\begin{equation}
    \bigg(\sum_{n=0}^{\infty}a_nz^n\bigg)\oast \bigg(\sum_{n=0}^{\infty}b_nz^n\bigg)=\sum_{n=0}^{\infty}a_nb_nz^n\,.
\end{equation}
\end{dfn}

\begin{lem} \label{lemHad} Consider two holomorphic functions near $z=0$ having series expansions
\begin{equation}
    f_1(z)=\sum_{n=0}^{\infty}a_nz^n, \;\;\; f_2(z)=\sum_{n=0}^{\infty}b_nz^n
\end{equation}
with radius of convergence $r_1>0$ and $r_2>0$, respectively. Then $(f_1\oast f_2)(z)$ converges for $|z|<r_1r_2$, and for any $\rho \in (0,r_1)$ the following holds for $|z|<\rho r_2$:
\begin{equation}
    (f_1\oast f_2)(z)=\frac{1}{2\pi i}\int_{|s|=\rho}\frac{\mathrm d s}{s}f_1(s)f_2\Big(\frac{z}{s}\Big)\,.
\end{equation}

\end{lem}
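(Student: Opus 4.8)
The statement is the classical integral formula for the Hadamard product, and the plan is to prove the two assertions in turn using nothing beyond Cauchy's estimates and termwise integration of a uniformly convergent series.

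First I would treat the radius of convergence. Fix $z$ with $|z|<r_1r_2$ and pick radii $\rho_1\in(0,r_1)$ and $\rho_2\in(0,r_2)$ with $\rho_1\rho_2>|z|$; this is possible precisely because $|z|<r_1r_2$. Cauchy's estimates applied to $f_1$ on the circle $|s|=\rho_1$ and to $f_2$ on the circle $|s|=\rho_2$ give constants $M_1,M_2$ with $|a_n|\le M_1\rho_1^{-n}$ and $|b_n|\le M_2\rho_2^{-n}$ for all $n$. Hence $|a_nb_nz^n|\le M_1M_2\,(|z|/(\rho_1\rho_2))^n$, a geometric series with ratio $<1$, so $\sum_n a_nb_nz^n$ converges absolutely; since $z$ was arbitrary in $|z|<r_1r_2$, the Hadamard product has radius of convergence at least $r_1r_2$.

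Next I would establish the integral representation. Fix $\rho\in(0,r_1)$ and $z$ with $|z|<\rho r_2$. On the circle $|s|=\rho$ the series $f_1(s)=\sum_n a_ns^n$ converges absolutely and uniformly (because $\rho<r_1$), and the series $f_2(z/s)=\sum_m b_mz^ms^{-m}$ also converges absolutely and uniformly (because $|z|/\rho<r_2$). Multiplying these two series and rearranging --- legitimate since $\sum_{n,m}|a_n|\rho^n\,|b_m|(|z|/\rho)^m=\bigl(\sum_n|a_n|\rho^n\bigr)\bigl(\sum_m|b_m|(|z|/\rho)^m\bigr)<\infty$ --- one gets that the double series $\sum_{n,m}a_nb_mz^ms^{n-m}$ converges absolutely and uniformly on $|s|=\rho$. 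I may therefore integrate it term by term against $\mathrm d s/(2\pi i\,s)$; using $\frac{1}{2\pi i}\int_{|s|=\rho}s^{k-1}\,\mathrm d s=\delta_{k,0}$, every term with $n\ne m$ drops out and the double sum collapses to $\sum_n a_nb_nz^n=(f_1\oast f_2)(z)$, which is the claimed identity.

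No genuine obstacle is anticipated here: the only point demanding care is the justification of termwise integration on the contour, and this follows at once from the absolute convergence of the product of the two one-variable series there. The payoff is that this contour-integral formula is exactly the device that will let us, in the steps that follow, turn the Borel transform $G(\xi,t)$ into a closed integral expression and ultimately read off its meromorphic continuation.
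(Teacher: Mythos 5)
Your proof is correct and follows essentially the same route as the paper's: both reduce the identity to a termwise integration over the circle $|s|=\rho$ justified by absolute convergence --- the paper by inserting Cauchy's coefficient formula for $a_n$ into the Hadamard series and applying Fubini--Tonelli, you by expanding $f_1(s)f_2(z/s)$ as a double series and using $\frac{1}{2\pi i}\int_{|s|=\rho}s^{k-1}\,\mathrm ds=\delta_{k,0}$. Your Cauchy-estimate argument for the radius of convergence is just an explicit version of the paper's appeal to the limsup formula.
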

\begin{proof} 

By using the limsup definition of the radius of convergence, one can easily check that the radius of convergence of $(f_1\oast f_2)(z)$ must be bigger or equal than $r_1r_2$. On the other hand, we have that for $|z|<\rho r_2<r_1r_2$,
\begin{equation}
    \begin{split}
   (f_1\oast f_2)(z)&=\sum_{n=0}^{\infty}a_nb_nz^n=\sum_{n=0}^{\infty}\bigg(\frac{1}{2\pi \I}\int_{|s|=\rho}\mathrm d s\,\frac{f_1(s)}{s^{n+1}}\bigg)b_nz^n\\
   &=\frac{1}{2\pi \I}\int_{|s|=\rho}\frac{\mathrm d s}{s}f_1(s)\sum_{n=0}^{\infty}b_n\Big(\frac{z}{s}\Big)^n=\frac{1}{2\pi \I}\int_{|s|=\rho}\frac{\mathrm d s}{s}f_1(s)f_2\Big(\frac{z}{s}\Big)\\
   \end{split}
\end{equation}
where the interchange of sum and integrals is justfied by the Fubini--Tonelli theorem and the absolute convergence of $(f_1\oast f_2)(z)$.
\end{proof}

The idea is to write 
\begin{equation}
    G(\xi,t)=(f_1\oast f_2(-,t))(\xi)
\end{equation}
for two functions $f_1(\xi)$, $f_2(\xi,t)$ which are holomorphic near $\xi=0$, and then use the first part of the previous lemma. 
We will take $f_1(\xi)$, $f_2(\xi,t)$ to be the following:
\begin{equation}
    \begin{split}
     f_1(\xi) &= -\frac{1}{4\pi^2}\sum_{g=2}^{\infty} \frac{ (2g-1)\,B_{2g}}{ (2g)!} \xi^{2g-3}\\
     f_2(\xi,t)&= \sum_{g=2}^{\infty}  \frac{\xi^{2g-3}}{(2g-3)!}\,\partial_t^{2g} \textrm{Li}_3(Q)=\sum_{g=2}^{\infty}  \frac{\xi^{2g-3}}{(2g-3)!}\,(2\pi \I)^{2g} \textrm{Li}_{3-2g}(Q)\,.
     \end{split}
\end{equation}

\begin{prop}\label{boreltransconv} Let $t\in \mathbb{C}^{\times}$ with $|\mathrm{Re}(t)|<\frac12$. Then $G(\xi,t)$ converges for $|\xi|<2\pi |t|$.
\end{prop}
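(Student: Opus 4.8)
The plan is to apply the first assertion of Lemma~\ref{lemHad} to the factorisation $G(\xi,t)=(f_1\oast f_2(-,t))(\xi)$ set up above: once we know the radii of convergence $r_1$ of $f_1$ and $r_2$ of $f_2(-,t)$, the lemma gives convergence of $G(\xi,t)$ for $|\xi|<r_1r_2$, so it suffices to check $r_1r_2\ge 2\pi|t|$. For $f_1$ this is immediate: plugging the standard asymptotics $|B_{2g}|=2(2g)!\,\zeta(2g)/(2\pi)^{2g}$, with $1\le\zeta(2g)\le\zeta(2)$, into the coefficient $-\tfrac{1}{4\pi^2}\tfrac{(2g-1)B_{2g}}{(2g)!}$ of $\xi^{2g-3}$, one finds that the $(2g-3)$-th root of its modulus tends to $1/(2\pi)$, hence $r_1=2\pi$.

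The substantive step is $r_2$, i.e.\ controlling the growth of $\mathrm{Li}_{3-2g}(Q)$ for fixed $t$. I would deliberately \emph{not} use the defining series $\sum_{k\ge1}k^{2g-3}e^{2\pi\I kt}$ (which converges only for $\mathrm{Im}(t)>0$), but instead write, using $\theta_Q\,\mathrm{Li}_s=\mathrm{Li}_{s-1}$ and $\mathrm{Li}_0(Q)=Q/(1-Q)$,
\[
\mathrm{Li}_{3-2g}(Q)=\frac{1}{(2\pi\I)^{2g-3}}\,\partial_t^{2g-3}\!\Big(\frac{Q}{1-Q}\Big),\qquad g\ge2,
\]
and then feed in the Mittag--Leffler expansion $\frac{Q}{1-Q}=\frac{\I}{2}\cot(\pi t)-\frac12=\frac{\I}{2\pi}\sum_{n\in\mathbb{Z}}{}'\frac{1}{t-n}-\frac12$. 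Differentiating $2g-3\ge1$ times is legitimate by absolute and locally uniform convergence of $\sum_n(t-n)^{-(2g-2)}$ and kills the constant, producing
\[
\mathrm{Li}_{3-2g}(Q)=\frac{(-1)^{g+1}(2g-3)!}{(2\pi)^{2g-2}}\sum_{n\in\mathbb{Z}}\frac{1}{(t-n)^{2g-2}}
\]
up to an explicit unimodular constant (one checks at $g=2$ that this recovers $\mathrm{Li}_{-1}(Q)=-1/(4\sin^2\pi t)$). Substituting into the $\xi^{2g-3}$-coefficient $\tfrac{(2\pi\I)^{2g}}{(2g-3)!}\mathrm{Li}_{3-2g}(Q)$ of $f_2(-,t)$, the factorials and powers of $2\pi$ collapse and the coefficient becomes simply $-(2\pi)^2\sum_{n\in\mathbb{Z}}(t-n)^{-(2g-2)}$.

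To finish I would estimate: the modulus of this coefficient is at most $(2\pi)^2 d^{-(2g-2)}\sum_{n\in\mathbb{Z}}\bigl(d/|t-n|\bigr)^{2g-2}$ with $d:=\mathrm{dist}(t,\mathbb{Z})$. The hypothesis $|\mathrm{Re}(t)|<\tfrac12$ is used precisely here: it forces $n=0$ to be the \emph{unique} nearest integer, so $d=|t|$, every term with $n\neq0$ is $<1$ and tends to $0$ (dominated by a $\sum 1/n^2$-type tail), and the sum over $n$ stays bounded (say $\le 2$) for all large $g$. Hence $\limsup_g|\text{coeff}|^{1/(2g-3)}\le 1/|t|$, i.e.\ $r_2\ge|t|$. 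Combining, $r_1r_2\ge 2\pi|t|$, and Lemma~\ref{lemHad} yields convergence of $G(\xi,t)$ for $|\xi|<2\pi|t|$, as claimed.

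The only genuinely non-routine point is getting a clean, $t$-uniform handle on $\mathrm{Li}_{3-2g}(Q)$, and the cotangent partial-fraction identity does this economically; note also that only a crude one-sided bound on the coefficients is needed, so no delicate asymptotics are required. As a byproduct this same computation identifies $f_2(-,t)$ with the germ at $\xi=0$ of the meromorphic function $-(2\pi)^2\,\xi\sum_{n\in\mathbb{Z}}\bigl((t-n)^2-\xi^2\bigr)^{-1}$, whose poles sit exactly at $\xi=\pm(t-n)$ — which should be the convenient starting point for the analytic continuation of $G(\xi,t)$ and for the explicit series \eqref{Borel-sum} in part~(i).
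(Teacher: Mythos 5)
Your proposal is correct and follows essentially the same route as the paper: factor $G(\xi,t)=(f_1\oast f_2(-,t))(\xi)$, obtain $r_1=2\pi$ from the growth of the Bernoulli numbers and $r_2=|t|$ from the growth of $\mathrm{Li}_{3-2g}(Q)$, and conclude via Lemma~\ref{lemHad}. The only difference is that where the paper quotes the asymptotic $\mathrm{Li}_{3-2g}(e^{2\pi\I t})\sim\Gamma(2g-2)(-2\pi\I t)^{2-2g}$ for $|\mathrm{Re}(t)|<\tfrac12$, you derive the exact Hurwitz-type expansion $\mathrm{Li}_{3-2g}(Q)\propto\sum_{n\in\mathbb{Z}}(t-n)^{2-2g}$ from the cotangent partial fractions and isolate the dominant $n=0$ term --- a self-contained substitute for that asymptotic which, as you note, also recovers the Pasquetti--Schiappa form of the Borel transform recorded separately in Section~\ref{oldBT}.
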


\begin{proof}
Using the fact that 
\begin{equation}
    B_{2g}\sim (-1)^{g+1}4\sqrt{\pi g}\Big(\frac{g}{\pi e}\Big)^{2g} \;\;\; \text{as} \;\;\; g\to \infty,
\end{equation}
we find that the radius of convergence for $f_1(\xi)$ is $2\pi$. On the other hand, using the fact that for $|\text{Re}(t)|<1/2$, we have
\begin{equation}
        \mathrm{Li}_{3-2g}(e^{2\pi \I t})\sim \Gamma(1-3+2g)(-2\pi \I t)^{3-2g-1}\, \;\;\; \text{as} \;\;\; g\to \infty,
\end{equation}
we find that the radius of convergence of $f_2(\xi,t)$ is $r_2(t)=|t|$.\\

By the use of Lemma \ref{lemHad}, we find that provided $t\in \mathbb{C}^{\times}$ satisfies $|\text{Re}(t)|<\frac12$, we have that $G(\xi,t)=(f_1\oast f_2(-,t))(\xi)$ converges for $|\xi|<r_1r_2(t)=2\pi |t|$.
\end{proof}

We now wish to use the integral representation of the Hadamard product to find a more convenient representation of $G(\xi,t)$.

\begin{prop}\label{Borelhadprod}
With the same hypothesis as in Proposition \ref{boreltransconv}, we have 
    \begin{equation}\label{Boreltrans}
        G(\xi,t) =\frac{1}{(2\pi )^2}\sum_{m\in\Z \setminus \{0\}}\frac{1}{m^3}\bigg(1+\frac{\xi}{2}\frac{\partial}{\partial \xi}\bigg)\bigg(\frac{1}{1-e^{-2\pi \I t + \xi/m}}-\frac{1}{1-e^{-2\pi \I t - \xi/m}}\bigg) \end{equation}
In particular, for fixed $t$, the expression on the right allows us to analytically continue $G(\xi,t)$ to a meromorphic function in $\xi$ with poles at $\xi=2\pi \I(t+k)m$ for $k \in \mathbb{Z}$ and $m \in \mathbb{Z}\setminus\{0\}$.
\end{prop}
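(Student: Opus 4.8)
The plan is to recognise $G(\xi,t)=\big(f_1\oast f_2(-,t)\big)(\xi)$, use the integral representation of the Hadamard product from Lemma~\ref{lemHad}, and evaluate the resulting contour integral by pushing the circle outward to infinity and collecting residues. First I would record closed forms for the two factors. From the Bernoulli generating function in \eqref{polylogbern} one gets $\sum_{g\ge0}\frac{B_{2g}}{(2g)!}s^{2g}=\frac s2\coth\frac s2$, and combining this with $(2g-1)s^{2g-3}=\frac1s\frac{d}{ds}s^{2g-1}$ gives
\[
 f_1(s)=-\frac{1}{4\pi^2}\,\frac1s\frac{d}{ds}\!\left[\frac1s\Big(\tfrac s2\coth\tfrac s2-1-\tfrac{s^2}{12}\Big)\right]
      =-\frac{1}{4\pi^2 s}\left(-\frac{1}{4\sinh^2(s/2)}+\frac1{s^2}-\frac1{12}\right),
\]
so $f_1$ is meromorphic on $\C$, holomorphic at $s=0$, with double poles exactly at $s=2\pi\I m$, $m\in\Z\setminus\{0\}$. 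For $f_2$, substituting $k=2g-3$ and using $\sum_{k\ge0}\mathrm{Li}_{-k}(Q)\frac{w^k}{k!}=\frac{Qe^w}{1-Qe^w}$ together with extraction of the odd part in $w$,
\[
 f_2(w,t)=\frac{(2\pi\I)^3}{2}\left(\frac{Qe^{2\pi\I w}}{1-Qe^{2\pi\I w}}-\frac{Qe^{-2\pi\I w}}{1-Qe^{-2\pi\I w}}\right),\qquad
 \frac{Qe^{u}}{1-Qe^{u}}=-\frac{1}{1-e^{-2\pi\I t-u}},
\]
which is meromorphic in $w$ with poles in $\pm t+\Z$ and a simple zero at $w=0$.

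Next, for $t$ as in Proposition~\ref{boreltransconv} and $0<|\xi|<2\pi|t|$, choose $\rho\in\big(|\xi|/|t|,\,2\pi\big)$; Lemma~\ref{lemHad} then gives $G(\xi,t)=\frac{1}{2\pi\I}\oint_{|s|=\rho}\frac{ds}{s}\,f_1(s)\,f_2(\xi/s,t)$. The poles of $s\mapsto f_2(\xi/s,t)$ sit at $s=\xi/(\pm t+n)$, all of modulus $\le|\xi|/|t|<\rho$, hence strictly inside the contour, while those of $f_1$ sit at $|s|=2\pi|m|\ge2\pi>\rho$, strictly outside. Deforming the circle outward through the radii $|s|=2\pi(N+\tfrac12)$, on which $|\sinh(s/2)|^2=\sinh^2(\mathrm{Re}\,\tfrac s2)+\sin^2(\mathrm{Im}\,\tfrac s2)$ is bounded below uniformly in $N$, one has $|f_1(s)|=O(1/|s|)$, while $|f_2(\xi/s,t)|=O(|\xi|/|s|)$ because $f_2$ vanishes at $0$; hence the integral over $|s|=2\pi(N+\tfrac12)$ is $O(|\xi|/N^2)\to0$, and
\[
 G(\xi,t)=-\sum_{m\in\Z\setminus\{0\}}\Res_{s=2\pi\I m}\!\left[\frac{f_1(s)\,f_2(\xi/s,t)}{s}\right].
\]

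To evaluate the residues, note that $|\xi/(2\pi\I m)|\le|\xi|/2\pi<|t|$ lies below the first pole of $w\mapsto f_2(w,t)$, so $s\mapsto f_2(\xi/s,t)/s$ is holomorphic at $s=2\pi\I m$, and the principal part of $f_1$ at $s=2\pi\I m$ comes only from $-\frac1{4\pi^2 s}\cdot(-\tfrac14)\cdot\frac{4}{(s-2\pi\I m)^2}$. Expanding $1/s$ to first order about $s=2\pi\I m$ and differentiating the product, one obtains
\[
 \Res_{s=2\pi\I m}\!\left[\frac{f_1(s)\,f_2(\xi/s,t)}{s}\right]=\frac{-1}{2\pi^2(2\pi\I m)^3}\Big(1+\tfrac w2\partial_w\Big)f_2(w,t)\Big|_{w=\xi/(2\pi\I m)} .
\]
Since $w\partial_w$ corresponds to $\xi\partial_\xi$ under $w=\xi/(2\pi\I m)$, and by the first step $f_2(\xi/(2\pi\I m),t)=-4\pi^3\I\big((1-e^{-2\pi\I t+\xi/m})^{-1}-(1-e^{-2\pi\I t-\xi/m})^{-1}\big)$, a short computation with $(2\pi\I m)^3=-8\pi^3\I m^3$ collapses the constants to the prefactor $(2\pi)^{-2}$ and reproduces \eqref{Boreltrans} exactly.

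Finally, for the meromorphic continuation: each summand of \eqref{Boreltrans} is meromorphic with poles at $\xi=2\pi\I(t+k)m$, $k\in\Z$; on any compact set avoiding these points the bracket $(1-e^{-2\pi\I t+\xi/m})^{-1}-(1-e^{-2\pi\I t-\xi/m})^{-1}$ is $O(1/|m|)$ uniformly (both terms tend to $(1-Q)^{-1}$), so the tail of the series is $O(1/|m|^4)$ and converges locally uniformly, giving a meromorphic function whose poles are contained in $\{2\pi\I(t+k)m:k\in\Z,\ m\in\Z\setminus\{0\}\}$. The step I expect to be most delicate is the estimate at infinity: choosing the radii $2\pi(N+\tfrac12)$, proving the uniform lower bound on $|\sinh(s/2)|$ along them and the $O(|\xi|/N^2)$ decay of the integral; a close second is the residue bookkeeping, where one must keep track of exactly which Laurent coefficients of $f_1$ enter so that the operator $1+\frac\xi2\partial_\xi$ emerges with the right constant (rather than, say, $1+\xi\partial_\xi$).
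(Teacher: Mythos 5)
Your proposal is correct and follows essentially the same route as the paper: closed forms for $f_1$ and $f_2$ from the Bernoulli and polylogarithm generating functions, the Hadamard-product contour integral from Lemma \ref{lemHad}, outward deformation through the circles $|s|=2\pi(N+\tfrac12)$ with the decay estimate, and collection of the double-pole residues at $s=2\pi\I m$. Your residue bookkeeping (isolating the principal part of $f_1$ and applying the double-pole formula) is just a repackaging of the paper's computation and lands on the same constants, and your treatment of the vanishing of the outer contour integral is if anything slightly more careful than the paper's.
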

\begin{proof}
The idea is to now use the integral representation of the Hadamard product in Lemma \ref{lemHad}, together with the results in the proof of Proposition \ref{boreltransconv}. In particular, for $t\in \mathbb{C}^{\times}$ with $|\text{Re}(t)|<1/2$ and $\rho \in (0,2\pi)$, we have for $|\xi|<\rho|t|$
\begin{equation}
    G(\xi,t)=\frac{1}{2\pi \mathrm i}\int_{|s|=\rho}\frac{\mathrm ds}{s}f_1(s)f_2\Big(\frac{\xi}{s},t\Big)
\end{equation}
where $f_1(\xi)$ and $f_2(\xi,t)$ are as in Proposition \ref{boreltransconv}.\\

We have, on the one hand,
\begin{equation}
    \begin{split}
    f_1(\xi) &= -\frac{1}{4\pi^2}\sum_{g=2}^{\infty} \frac{ (2g-1)\,B_{2g}}{ (2g)!} \xi^{2g-3}
    = -\frac{1}{4\pi^2}\frac{1}{\xi} \partial_{\xi} \left( \frac{1}{\xi} \sum_{g=2}^{\infty} \frac{ \,B_{2g}}{ (2g)!} \xi^{2g}\right)\\
    &= -\frac{1}{4\pi^2}\frac{1}{\xi} \partial_{\xi} \left( \frac{1}{\xi} \left( \sum_{g=0}^{\infty} \frac{ \,B_{g}}{ g!} \xi^{g} -1 + \frac{\xi}{2}-\frac{\xi^2}{12}\right)\right) \\
    &= -\frac{1}{4\pi^2}\frac{1}{\xi} \partial_{\xi} \left( \frac{1}{\xi} \left( \frac{\xi}{e^{\xi}-1} -1 + \frac{\xi}{2}-\frac{\xi^2}{12}\right)\right)\\
    &= -\frac{1}{4\pi^2}\left(  \frac{1}{\xi^3}- \frac{1}{\xi (e^{\xi/2}-e^{-\xi/2})^2}-\frac{1}{12\xi}\right)\,,
    \end{split}
\end{equation}
where we have used the expression for the generating function of the Bernoulli numbers
\begin{equation}\label{eq:bern}
\frac{w}{e^w-1} = \sum_{n=0}^{\infty} B_n \frac{w^n}{n!}\,,
\end{equation}
and the fact that except $B_1=-\frac12$, all odd Bernoulli numbers vanish. From the final expression we see that $f_1(\xi)$ admits an analytic continuation to a meromorphic function with double poles at $\xi=2\pi \I \mathbb{Z} \setminus \{0\}$.\\

On the other hand, for $f_2(\xi,t)$, we have
\begin{align*}
        f_2(\xi,t) &= \sum_{g=2}^{\infty}  \frac{\xi^{2g-3}}{(2g-3)!}\,\partial_t^{2g} \textrm{Li}_3(Q) 
    = \partial_{\xi}^3\, \left(\sum_{g=1}^{\infty}  \frac{\partial_t^{2g} \textrm{Li}_3(Q)}{(2g)!} \xi^{2g}\right) \\
    &= \partial_{\xi}^3 \left( \frac{1}{2} (\textrm{Li}_3(e^{2\pi \I(t+\xi)})+\textrm{Li}_3(e^{2\pi \I(t-\xi)})) - \textrm{Li}_3(e^{2\pi \I t}) \right) \\
    &=\frac{(2\pi \I)^3}{2}\left(\mathrm{Li}_0\big(e^{2\pi \I(t + \xi)}\big)-\mathrm{Li}_0\big(e^{2\pi \I(t - \xi)}\big)\right)\\
    &=\frac{(2\pi \I)^3}{2}\bigg(\frac{e^{2\pi \I(t + \xi)}}{1-e^{2\pi \I(t + \xi)}}-\frac{e^{2\pi \I(t - \xi)}}{1-e^{2\pi \I(t - \xi)}}\bigg)\,,   \numberthis 
\end{align*}

so that $f_2(\xi,t)$ admits an analytic continuation in $\xi$ with simple poles at $\pm t + \mathbb{Z}$. The integral representation then becomes
\begin{equation}\label{borelint}
    G(\xi,t)=\frac{1}{2}\int_{|s|=\rho} \frac{\mathrm ds}{s}\left(\frac{1}{s^3}- \frac{e^s}{ s(e^{s}-1)^2}-\frac{1}{12s}\right)\bigg(\frac{e^{2\pi \I(t + \xi/s)}}{1-e^{2\pi \I(t + \xi/s)}}-\frac{e^{2\pi \I(t - \xi/s)}}{1-e^{2\pi \I(t - \xi/s)}}\bigg)\,.
\end{equation}

Notice that $f_2(\xi/s,t)$ as a function of $s$ has simple poles at $s=\pm \xi/(t+k)$ for all $k \in \mathbb{Z}$. By our assumption that $|\xi|<\rho|t|$ and $|\mathrm{Re}(t)|<\frac12$, we have
\begin{equation}
    \Big|\frac{\pm \xi}{t+k}\Big|<\rho\frac{|t|}{|t+k|}\leq \rho\;,
\end{equation}
so that all the poles of $f_2(\xi/s,t)$ lie inside the contour. Furthermore, since $\rho<2\pi$, all the poles of $f_1(s)$ lie outside the contour.  \\

If, for each $k \in \mathbb{Z}$ with $k>1$, we denote as $\gamma_k$ the contour given by $|s|=\pi(2k+1)$, then between $|s|=\rho$ and $\gamma_k$, we have the poles at $\pm 2\pi \I n$ for $n=1,...,k$ due to $f_1(s)$. We can therefore write the following for any $k>1$:
\begin{align}\label{intcontlim}
%    \begin{split}
     G(\xi,t)&=\frac{1}{2}\int_{|s|=\rho} \frac{\mathrm d s}{s}\left(\frac{1}{s^3}- \frac{e^s}{ s(e^{s}-1)^2}-\frac{1}{12s}\right)\bigg(\frac{e^{2\pi \I(t + \xi/s)}}{1-e^{2\pi \I(t + \xi/s)}}-\frac{e^{2\pi \I(t - \xi/s)}}{1-e^{2\pi \I(t - \xi/s)}}\bigg)\\
     &= 2\pi \I\sum_{m\in \mathbb{Z}:-k<m<k , m\neq 0 }\frac{1}{2}\left.\frac{\mathrm d}{\mathrm ds}\left(\frac{e^s(s-2\pi \I m)^2}{(e^s-1)^2s^2}\bigg(\frac{e^{2\pi \I(t + \xi/s)}}{1-e^{2\pi \I(t + \xi/s)}}-\frac{e^{2\pi \I(t - \xi/s)}}{1-e^{2\pi \I(t - \xi/s)}}\bigg)\right)\right|_{s=2\pi \I m}\notag\\
     &\;\;\;+\frac{1}{2}\int_{\gamma_k} \frac{\mathrm ds}{s} \left(\frac{1}{s^3}- \frac{e^s}{ s(e^{s}-1)^2}-\frac{1}{12s}\right)\bigg(\frac{e^{2\pi \I(t + \xi/s)}}{1-e^{2\pi i(t + \xi/s)}}-\frac{e^{2\pi \I(t - \xi/s)}}{1-e^{2\pi \I(t - \xi/s)}}\bigg)\,,
%    \end{split}
\notag\end{align}
where the terms in the sum come from the contribution of the (clockwise) contours around the poles between the two contours.  \\

One can check that $f_2(\xi/s,t)=\mathcal{O}(1/s)$ as $s\to \infty$, while $f_1|_{\gamma_k}=\mathcal{O}(1/k)$ as $k\to \infty$. Hence, taking the limit $k\to \infty$ in (\ref{intcontlim}) we obtain the following expression:
\begin{align*}
        G(\xi,t)&=2\pi \I\sum_{m \in \mathbb{Z}-\{0\}}\frac{1}{2}\left.\frac{\mathrm d}{\mathrm ds}\left(\frac{e^s(s-2\pi \I m)^2}{(e^s-1)^2s^2}\bigg(\frac{e^{2\pi \I(t + \xi/s)}}{1-e^{2\pi \I(t + \xi/s)}}-\frac{e^{2\pi \I(t - \xi/s)}}{1-e^{2\pi \I(t - \xi/s)}}\bigg)\right)\right|_{s=2\pi \I m}\\
        &=-\sum_{m\in\Z \setminus \{0\}}\frac{1}{(2\pi \I)^2}\left(\frac{1}{m^3}\left(\frac{e^{2\pi \I t + \xi/m}}{1-e^{2\pi \I t + \xi/m}}-\frac{e^{2\pi \I t - \xi/m}}{1-e^{2\pi \I t - \xi/m}}\right)\right.\\ &\qquad\qquad\qquad+\left.\frac{\xi}{2m^4}\left(\frac{e^{2\pi \I t + \xi/m}}{(1-e^{2\pi \I t + \xi/m})^2}+\frac{e^{2\pi \I t - \xi/m}}{(1-e^{2\pi \I t - \xi/m})^2}\right)\right). \numberthis
\end{align*}
The result then follows by a simple rewriting of the summands.
\end{proof}
\end{subsection}

A direct check that $G(\xi,t)$ is the Borel transform of $\widetilde{F}(\la,t)$ can be found in Lemma \ref{lemmaApp} from Appendix \ref{App:TaylorBorel}.

\subsubsection{Previous expression for the Borel transform}\label{oldBT}
In the following we review an expression for the Borel transform of the topological string free energy for the resolved conifold obtained in \cite{Pasquetti:2009jg}, starting again with
\begin{equation}
\begin{split}
    \widetilde{F}(\lambda,t)&= \sum_{g=0}^{\infty} \lambda^{2g-2} \widetilde{F}^g(t)= \frac{1}{\lambda^2} \textrm{Li}_{3}(q)+\frac{B_2}{2}\mathrm{Li}_1(q) + \sum_{g=2}^{\infty} \lambda^{2g-2} \frac{(-1)^{g-1}B_{2g}}{2g (2g-2)!} \textrm{Li}_{3-2g} (q) \,  \\
&=\frac{1}{\lambda^2} \textrm{Li}_{3}(q)  +\frac{B_2}{2}\mathrm{Li}_1(q)+ \Phi(\check{\lambda},t)\,, \quad \check{\lambda}=\frac{\lambda}{2\pi}\,.
\end{split}
\end{equation}
Using the series representation of the polylogarithm
\begin{equation}
    \textrm{Li}_{s}(e^{2\pi i t})= \Gamma(1-s) \, \sum_{k\in \mathbb{Z}} (2\pi i)^{s-1}(k-t)^{s-1}\,,
\end{equation}
valid for $\textrm{Re}(s)<0$ and $t\notin \mathbb{Z}$,
we can write
\begin{equation}
\begin{split}
    \Phi(\check{\lambda},t)&=\sum_{g=2}^{\infty} \lambda^{2g-2} \frac{(-1)^{g-1}B_{2g}}{2g (2g-2)!}\, \textrm{Li}_{3-2g} (q)= \sum_{g=2}^{\infty} \check{\lambda}^{2g-2} \frac{B_{2g}}{2g (2g-2)} \sum_{k\in \mathbb{Z}} (k-t)^{2-2g}. 
\end{split}
\end{equation}
Taking the Borel transform of $\Phi(\check{\lambda},t)$, we find
\begin{equation}\label{diffBoreltrans}
\begin{split}
     G(\xi,t)&=\sum_{g=2}^{\infty} \frac{B_{2g}}{2g\, (2g-2)!} \xi^{2g-3} \sum_{k\in \mathbb{Z}} (k-t)^{2-2g}=\sum_{k\in\Z} \frac{1}{ \xi}\left(\frac{(k-t)^2}{\xi^2} - \frac{e^{\xi/(k-t)}}{(e^{\xi/(k-t)} -1)^2}-\frac{1}{12}\right),\\
\end{split}
\end{equation}
where the second equality follows from
\begin{equation}
    \frac{e^w}{(e^w-1)^2}=\frac{1}{w^2}-\frac{1}{12}-\sum_{g=2}^{\infty}\frac{B_{2g}}{2g (2g-2)!} w^{2g-2}\,,
\end{equation}
which can be obtained by taking a derivative of the generating function of Bernoulli numbers \eqref{eq:bern}
and rearranging the outcome. We note here that this expression for the Borel transform, which was previously obtained in \cite{Pasquetti:2009jg} has the same set of poles at
$$ \xi= 2\pi \I m (t+k)\,, \quad m\in \mathbb{Z}\setminus\{0\} \,,\quad k\in \mathbb{Z}\,,$$ 
as the one we have obtained in Theorem \ref{theorem1}. We will later show that it also has the same Stokes jumps.\\

One advantage of the expression (\ref{Boreltrans}) compared to (\ref{diffBoreltrans}) is that the first gives a well-defined expression for $t \in \mathbb{Z}$. As we will see below, this will allow us to express the Borel transform of constant map contribution of (\ref{freeenergydecomp}) in terms of $G(\xi,0)$ (see Corollary \ref{constmapBorelsum}).

\begin{rem} It seems one might be able to obtain (\ref{diffBoreltrans}) from the integral representation (\ref{borelint}) by deforming the contour to $0$ instead of $\infty$, and summing over the residues of the poles inside the contour. The only technical issue is that it seems harder to show that the contour limiting to $s=0$ limits to a zero contribution.

\end{rem}

%%%%%%%%%%%%%%%%%%%%%%%%%%%%%%%%%%%%%%%%%
\subsection{The Borel sum along \texorpdfstring{$\mathbb{R}_{>0}$}{TEXT}}

We now prove the second point of Theorem \ref{theorem1}. Hence, we wish to study the Borel sum of $\widetilde{F}(\lambda,t)$ along $\mathbb{R}_{>0}$. More generally, we define the following:

\begin{dfn} Given $t\in \mathbb{C}$ with $\text{Im}(t)\neq 0$, and a ray $\rho \subset \mathbb{C}^{\times}$ from $0$ to $\infty$ avoiding the poles $2\pi \I(t+k)m$ of $G(\xi,t)$ and which is different from $\pm \I \mathbb{R}_{>0}$, we define

\begin{equation}
    F_{\rho}(\lambda,t) :=  \frac{1}{\lambda^2} \mathrm{Li}_{3}(Q)+ \frac{B_2}{2}\mathrm{Li}_{1}(Q)+\int_{\rho} \mathrm d \xi \,e^{-\xi/\check{\lambda}} G(\xi,t)\,, \quad Q = e^{2\pi\I t}\,,
\end{equation}
for $\lambda$ in the half-plane $\mathbb{H}_{\rho}$ centered at $\rho$. We call $F_{\rho}(\lambda,t)$ the Borel sum of $\widetilde{F}(\lambda,t)$ along $\rho$.

\end{dfn}

The integral appearing in (\ref{FR+-def}) corresponds to an integral representation of a certain function $G_3(z\,|\, \omega_1,\omega_2)$ related to the triple sine function. Hence, before studying the Borel sum along $\mathbb{R}_{>0}$, we recall how this function is defined and some of its properties. For a convenient review of the special functions appearing below and their properties, see for example \cite[Section 4]{BridgelandCon} and the references cited therein.

\begin{dfn}\label{G3def} For $z\in \mathbb{C}$ and $\omega_1,\omega_2\in \mathbb{C}^{\times}$, we define 
\begin{equation}
G_3(z\,|\,\omega_1,\omega_2):=\exp\Big(\frac{\pi \I}{6}B_{3,3}(z+\omega_1\,|\,\omega_1,\omega_1,\omega_2)\Big)\cdot \sin_3(z+\omega_1\,|\,\omega_1,\omega_1,\omega_2)\,,
\end{equation}
where $\sin_3(z\,|\,\omega_1,\omega_2,\omega_3)$ denotes the triple sine function, and $B_{3,3}(z\,|\, \omega_1,\omega_2,\omega_3)$ is the multiple Bernoulli polynomial. 

\end{dfn}

What will be most important for us are the following properties:

\begin{prop}\cite[Prop.\ 4.2]{BridgelandCon}\cite[Prop.\ 2]{Narukawa}\label{intrep2}
$G_3(z\,|\,\omega_1,\omega_2)$ is a single-valued meromorphic function under the assumption $\omega_1/\omega_2\not\in \mathbb{R}_{<0}$. Furthermore, we have

\begin{itemize}
    \item It is regular everywhere, and vanishes only at the points
    \begin{equation}
        z=a\omega_1+b\omega_2, \quad a,b\in \mathbb{Z}\,,
    \end{equation}
    with $a<0$ and $b\leq 0$, or $a>0$ and $b>0$.
    \item Let $\mathrm{Re}(\omega_i)>0$ and $-\mathrm{Re}(\omega_1)<\mathrm{Re}(z)<\mathrm{Re}(\omega_1+\omega_2)$. Then 
\begin{equation}
    G_3(z\, |\, \omega_1,\omega_2)=\exp\bigg(- \int_{\mathbb{R}+\mathrm i0^+} \frac{\mathrm du}{8u}\frac{e^{u(z-\omega_2/2)}}{\sinh(\omega_2u/2)(\sinh(\omega_1 u/2))^2}\;\bigg).
\end{equation}
\end{itemize}

\end{prop}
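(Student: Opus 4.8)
The plan is to derive all three assertions from the classical analytic theory of the Barnes multiple gamma functions $\Gamma_r(z\,|\,\underline\nu)$ and the associated multiple sine functions, specialised to rank $r=3$, to the period tuple $\underline\nu=(\omega_1,\omega_1,\omega_2)$, and to the shifted argument $z\mapsto z+\omega_1$ that enters Definition~\ref{G3def}. Recall that when $\underline\nu=(\nu_1,\dots,\nu_r)$ all lie in a common open half-plane through the origin, the Barnes zeta function $\zeta_r(s,z\,|\,\underline\nu)=\sum_{\mathbf n\in\mathbb Z_{\ge 0}^r}(z+\mathbf n\cdot\underline\nu)^{-s}$ converges for $\mathrm{Re}(s)>r$ and continues meromorphically in $s$, so that $\Gamma_r(z\,|\,\underline\nu)=\exp(\partial_s\zeta_r(s,z\,|\,\underline\nu)|_{s=0})$ is a single-valued, nowhere-vanishing meromorphic function of $z$ whose only poles lie at $z=-\mathbf n\cdot\underline\nu$, $\mathbf n\in\mathbb Z_{\ge 0}^r$, and that $\sin_r(z\,|\,\underline\nu)=\Gamma_r(z\,|\,\underline\nu)^{-1}\,\Gamma_r(|\underline\nu|-z\,|\,\underline\nu)^{(-1)^r}$ with $|\underline\nu|=\nu_1+\dots+\nu_r$. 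The first step is the observation that the hypothesis $\omega_1/\omega_2\notin\mathbb R_{<0}$ is exactly the condition that $(\omega_1,\omega_1,\omega_2)$ lie in a common open half-plane; since moreover $\exp\big(\tfrac{\pi\I}{6}B_{3,3}(z+\omega_1\,|\,\omega_1,\omega_1,\omega_2)\big)$ is entire in $z$ ($B_{3,3}$ being a cubic polynomial in $z$), this already gives single-valued meromorphy of $G_3$.

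For the zero structure I would argue as follows. For $r=3$ odd one has $\sin_3(w\,|\,\underline\nu)=\Gamma_3(w\,|\,\underline\nu)^{-1}\Gamma_3(|\underline\nu|-w\,|\,\underline\nu)^{-1}$, a product of two reciprocal Barnes gammas, hence entire; so $G_3$ is entire, which is the assertion ``regular everywhere''. The zeros of $\sin_3(w\,|\,\omega_1,\omega_1,\omega_2)$ come in two families: $\Gamma_3(w\,|\,\cdot)^{-1}$ vanishes where $\Gamma_3(w\,|\,\cdot)$ has poles, i.e.\ at $w=-(n_1\omega_1+n_2\omega_1+n_3\omega_2)=-a\omega_1-b\omega_2$ with $a,b\ge 0$; and $\Gamma_3(|\underline\nu|-w\,|\,\cdot)^{-1}$ vanishes where $\Gamma_3(|\underline\nu|-w\,|\,\cdot)$ has poles, i.e.\ at $w=|\underline\nu|+(n_1+n_2)\omega_1+n_3\omega_2=a\omega_1+b\omega_2$ with $a\ge 2,\ b\ge 1$. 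Substituting $w=z+\omega_1$ turns the first family into $z=-(a+1)\omega_1-b\omega_2$, i.e.\ $z=\alpha\omega_1+\beta\omega_2$ with $\alpha<0$ and $\beta\le 0$, and the second into $z=(a-1)\omega_1+b\omega_2$, i.e.\ $z=\alpha\omega_1+\beta\omega_2$ with $\alpha>0$ and $\beta>0$. These are precisely the two stated lattice families, and the nowhere-vanishing prefactor $\exp(\tfrac{\pi\I}{6}B_{3,3})$ does not change them.

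For the integral representation I would invoke Narukawa's contour-integral formula for the multiple sine function \cite{Narukawa}: for $\underline\nu$ in a common open right half-plane and $0<\mathrm{Re}(w)<\mathrm{Re}(|\underline\nu|)$,
\[
\log\sin_r(w\,|\,\underline\nu)=\frac{(-1)^r\pi\I}{r!}\,B_{r,r}(w\,|\,\underline\nu)+(-1)^r\int_{\mathbb R+\I 0^+}\frac{\mathrm du}{u}\;\frac{e^{wu}}{\prod_{j=1}^r\big(e^{\nu_j u}-1\big)},
\]
the $B_{r,r}$ term being precisely the subtraction that renders the principal-value contour convergent at $u=0$. Taking $r=3$, $\underline\nu=(\omega_1,\omega_1,\omega_2)$, $w=z+\omega_1$, and adding the $\tfrac{\pi\I}{6}B_{3,3}(z+\omega_1\,|\,\omega_1,\omega_1,\omega_2)$ prescribed in Definition~\ref{G3def}, the two Bernoulli contributions cancel and one is left with
\[
\log G_3(z\,|\,\omega_1,\omega_2)=-\int_{\mathbb R+\I 0^+}\frac{\mathrm du}{u}\;\frac{e^{(z+\omega_1)u}}{\big(e^{\omega_1 u}-1\big)^2\big(e^{\omega_2 u}-1\big)}.
\]
The final step is the elementary identity $\big(e^{\omega_1 u}-1\big)^2\big(e^{\omega_2 u}-1\big)=8\,e^{(\omega_1+\omega_2/2)u}\,\sinh(\omega_1 u/2)^2\sinh(\omega_2 u/2)$, which rewrites the integrand as $\tfrac{1}{8u}\,e^{u(z-\omega_2/2)}\big/\big(\sinh(\omega_2 u/2)\sinh(\omega_1 u/2)^2\big)$ and produces exactly the stated formula. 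The domain $\mathrm{Re}(\omega_i)>0$, $-\mathrm{Re}(\omega_1)<\mathrm{Re}(z)<\mathrm{Re}(\omega_1+\omega_2)$ is simply $0<\mathrm{Re}(z+\omega_1)<\mathrm{Re}(2\omega_1+\omega_2)$, the range in which Narukawa's formula holds.

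The conceptual content is thus almost entirely imported from \cite{Narukawa} (for the integral representation and the convergence of the $\mathbb R+\I 0^+$ contour near $u=0$ and its decay as $|u|\to\infty$) and from the standard pole/zero theory of $\Gamma_r$ (for the location of zeros). The main obstacle is therefore bookkeeping rather than analysis: the various sources use different sign, shift and normalisation conventions for $\Gamma_r$, $\sin_r$ and $B_{r,r}$ (and for the placement of $\I 0^+$), so the genuinely delicate point is to pin these down against the conventions of Definition~\ref{G3def} and to verify that the $B_{3,3}$ terms cancel with the correct overall sign and that the repeated period $\omega_1$ is correctly accounted for in both the zero multiplicities and the exponent $(e^{\omega_1 u}-1)^2$. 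I would carry out that comparison by checking the $r=1$ case (where $\sin_1(w\,|\,\omega)=2\sin(\pi w/\omega)$ and $B_{1,1}(w\,|\,\omega)=w/\omega-\tfrac12$) to fix all signs before specialising to $r=3$.
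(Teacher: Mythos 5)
The paper does not actually prove this proposition: it is imported verbatim from \cite[Prop.~4.2]{BridgelandCon} and \cite[Prop.~2]{Narukawa}, which is why it carries those citations and why Section 3 only points the reader to \cite[Section 4]{BridgelandCon} for the relevant special-function facts. Your reconstruction --- single-valuedness and the zero set read off from the pole/zero structure of the Barnes gamma functions in the factorisation $\sin_3(w\,|\,\underline\nu)=\Gamma_3(w\,|\,\underline\nu)^{-1}\Gamma_3(|\underline\nu|-w\,|\,\underline\nu)^{-1}$ with the shift $w=z+\omega_1$ correctly converting the two lattice families into the stated ones, and the integral representation obtained from Narukawa's contour formula with the $B_{3,3}$ term cancelling against the prefactor in Definition~\ref{G3def} followed by the elementary rewriting $(e^{\omega_1 u}-1)^2(e^{\omega_2 u}-1)=8\,e^{(\omega_1+\omega_2/2)u}\sinh^2(\omega_1 u/2)\sinh(\omega_2 u/2)$ --- is correct and is essentially the argument of those references, down to the domain condition $-\mathrm{Re}(\omega_1)<\mathrm{Re}(z)<\mathrm{Re}(\omega_1+\omega_2)$ being the translate of $0<\mathrm{Re}(w)<\mathrm{Re}(2\omega_1+\omega_2)$.
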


\begin{dfn} For $\mathrm{Re}\,\check{\lambda} >0$ and $-\mathrm{Re}(\check{\lambda}) < \mathrm{Re}(t) < \mathrm{Re} (\check{\lambda}+1)$, we define
\begin{equation}\label{eq:F-nonpert}
F_{\mathrm{np}}(\lambda,t) := \log G_3(t\,|\, \check\lambda,1)=- \int_{\mathbb{R}+\mathrm i0^+} \frac{\mathrm du}{8u}\frac{e^{u(t-1/2)}}{\sinh(u/2)(\sinh(\check\lambda u/2))^2}\; ,
\end{equation}
\end{dfn}

We now wish to relate $F_{\mathbb{R}_{>0}}(\lambda,t)$ to $F_{\mathrm{np}}(\lambda,t)$. For this, we will need the following lemma, giving a ``Woronowicz form" for $F_{\mathrm{np}}$. We remark that a similar form for the triple-sine function was conjectured in \cite[Equation B.17] {ConformalTBA}.

\begin{lem}\label{woronowiczlemma} Let $t\in \mathbb{C}$ be such that $0<\mathrm{Re}(t)<1$, $\mathrm{Im}(t)>0$, and let $\lambda $ be in the sector determined by $l_0=\mathbb{R}_{<0}\cdot 2\pi \I t$ and $l_{-1}=\mathbb{R}_{<0}\cdot 2\pi \I (t-1)$. Furthermore, assume that $\mathrm{Re}(t)<\mathrm{Re}(\check\lambda +1)$. Then $F_{\mathrm{np}}(\lambda,t)$
admits the following Woronowicz form:
\begin{equation}\label{altform}
    F_{\mathrm{np}}(\lambda,t)=\frac{1}{(2\pi)^2}\int_{\mathbb{R}+\I0^+}\mathrm dv\frac{v}{1-e^v}\log(1-e^{\check\lambda v +2\pi \I t})\,.
\end{equation}
\end{lem}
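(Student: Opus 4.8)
The strategy is to start from the integral representation in \eqref{eq:F-nonpert},
\[
F_{\mathrm{np}}(\lambda,t)=- \int_{\mathbb{R}+\I 0^+} \frac{\mathrm du}{8u}\frac{e^{u(t-1/2)}}{\sinh(u/2)(\sinh(\check\lambda u/2))^2}\,,
\]
and convert it into the Woronowicz form \eqref{altform} by expanding one of the $\sinh$ factors into a geometric series and then recognising each resulting term as a residue sum (or, alternatively, by performing a Mellin--Barnes-type contour shift). Concretely, I would first rewrite the integrand using $\frac{1}{\sinh(u/2)}e^{u(t-1/2)} = \frac{2e^{ut}}{e^u-1}$, so that the integral becomes $-\tfrac14\int_{\mathbb{R}+\I 0^+}\frac{\mathrm du}{u}\,\frac{e^{ut}}{(e^u-1)}\,\frac{1}{(\sinh(\check\lambda u/2))^2}$; then I would use $\frac{1}{(\sinh(\check\lambda u/2))^2}=\frac{4e^{-\check\lambda u}}{(1-e^{-\check\lambda u})^2}=4\sum_{n\ge 1}n\,e^{-n\check\lambda u}$ valid in the region where $\mathrm{Re}(\check\lambda u)>0$ along the contour. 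Interchanging sum and integral (justified by the hypotheses $0<\mathrm{Re}(t)<1$, $\mathrm{Im}(t)>0$, $\mathrm{Re}(t)<\mathrm{Re}(\check\lambda+1)$, which guarantee convergence and control over the contour), each summand is an elementary integral that can be evaluated by residues, and resumming produces a $\log$.

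A cleaner route, which I would probably adopt, is to differentiate. Let $\Psi(\lambda,t)$ denote the right side of \eqref{altform}. Both $F_{\mathrm{np}}$ and $\Psi$ are holomorphic in $t$ on the relevant strip; I would compute $\partial_t F_{\mathrm{np}}$ and $\partial_t\Psi$ and show they agree, then fix the constant of integration by comparing the two at one convenient value of $t$ (or by comparing asymptotics as $\mathrm{Im}(t)\to+\infty$, where both sides tend to $0$). Differentiating \eqref{altform} in $t$ kills the $\log$: $\partial_t\Psi = \frac{1}{(2\pi)^2}\int_{\mathbb{R}+\I 0^+}\mathrm dv\,\frac{v}{1-e^v}\cdot\frac{-2\pi\I\, e^{\check\lambda v+2\pi\I t}}{1-e^{\check\lambda v+2\pi\I t}}$, and differentiating \eqref{eq:F-nonpert} in $t$ gives $\partial_t F_{\mathrm{np}} = -\int_{\mathbb{R}+\I 0^+}\frac{\mathrm du}{8}\frac{e^{u(t-1/2)}}{\sinh(u/2)(\sinh(\check\lambda u/2))^2}$. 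The two contour integrals should be matched by the substitution $v = -2\pi\I u$ (rotating the contour) together with the identity $\frac{1}{1-e^v}\cdot\frac{e^{\check\lambda v+2\pi\I t}}{1-e^{\check\lambda v+2\pi\I t}}$ versus $\frac{1}{\sinh(u/2)(\sinh(\check\lambda u/2))^2}$ after clearing the elementary trig factors; one checks the rotation of the contour picks up no poles in the sector constraint given (that $\lambda$ lies between $l_0$ and $l_{-1}$ with $\mathrm{Re}(t)<\mathrm{Re}(\check\lambda+1)$ is exactly what makes the contour rotation legal and pole-free).

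To make the constant-of-integration step clean, I would instead integrate by parts in \eqref{altform} once: writing $\frac{v}{1-e^v}\mathrm dv = -\mathrm d\big(\mathrm{Li}_2(e^v)+\tfrac{v^2}{2}-\,\dots\big)$ up to elementary terms, so that $\Psi$ becomes $\frac{1}{(2\pi)^2}\int \big(\mathrm{Li}_2(e^v)+\cdots\big)\cdot\frac{\check\lambda e^{\check\lambda v+2\pi\I t}}{1-e^{\check\lambda v+2\pi\I t}}\,\mathrm dv$, and the boundary terms vanish by the contour prescription $\mathbb{R}+\I 0^+$ and the decay/growth of the integrand at the two ends under the stated inequalities. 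Then expanding $\mathrm{Li}_2(e^v)=\sum_{m\ge1}e^{mv}/m^2$ and $\frac{e^{w}}{1-e^{w}} = \sum_{k\ge1}e^{kw}$ reduces everything to Gaussian-free elementary integrals $\int_{\mathbb{R}+\I 0^+}e^{(m+k\check\lambda)v}\,\mathrm dv$-type expressions evaluated as distributions/residues, reproducing the double series that also comes out of \eqref{eq:F-nonpert} by the analogous expansion; comparing the two double series term by term finishes the proof.

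\textbf{Main obstacle.} The delicate point is not the formal manipulation but the analytic bookkeeping: justifying the interchange of summation and integration, and above all tracking which poles are crossed when the $u$-contour $\mathbb{R}+\I 0^+$ is rotated to the $v$-contour (equivalently, when one expands $1/\sinh^2$ as a geometric series). The hypotheses on $\lambda$ lying in the sector between $l_0$ and $l_{-1}$, on $0<\mathrm{Re}(t)<1$, $\mathrm{Im}(t)>0$, and on $\mathrm{Re}(t)<\mathrm{Re}(\check\lambda+1)$ are precisely the conditions that make the rotation pole-free and the geometric expansions convergent, so the real work is verifying carefully that, under exactly these constraints, no residue contributions are picked up and the boundary terms in the integration by parts vanish. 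Once that is pinned down, matching the two double series is routine.
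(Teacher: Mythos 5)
Your starting point (the integral representation \eqref{eq:F-nonpert}) and your diagnosis that the difficulty is analytic bookkeeping are both reasonable, but none of the three routes you sketch survives closer inspection, and the key mechanism is missing. The decisive problem is with your "cleaner route": after differentiating in $t$, the integrand of $\partial_t F_{\mathrm{np}}$ contains $(\sinh(\check\lambda u/2))^{-2}$, which has \emph{double} poles at $\check\lambda u\in 2\pi\I\mathbb{Z}$, whereas the integrand of $\partial_t\Psi$ contains only the \emph{simple}-pole factors $v/(1-e^v)$ and $(1-e^{\check\lambda v+2\pi\I t})^{-1}$. No change of variables $v=-2\pi\I u$ (nor any substitution) can turn a simple-pole integrand into a double-pole one, so the claim that the two contour integrals "should be matched by the substitution" is false; the contour-rotation step you flag as the main obstacle is not merely delicate, it does not exist. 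Your other two routes also fail as stated: the expansions $(\sinh(\check\lambda u/2))^{-2}=4\sum_{n\ge1}n\,e^{-n\check\lambda u}$, $\mathrm{Li}_2(e^v)=\sum_m e^{mv}/m^2$ and $e^{w}/(1-e^{w})=\sum_k e^{kw}$ each converge only on a half of the contour $\mathbb{R}+\I 0^+$ (the sign of $\mathrm{Re}(\check\lambda u)$, resp.\ $\mathrm{Re}(v)$, changes along it), and the resulting "elementary" integrals $\int_{\mathbb{R}+\I0^+}e^{(m+k\check\lambda)v}\,\mathrm dv$ are divergent for the non-imaginary exponents that actually occur, so "evaluated as distributions" is not a meaningful step here.

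The missing idea, which is what the paper uses (following Garoufalidis--Kashaev), is that the two representations are related by a \emph{Fourier duality}, not by contour deformation or series rearrangement. One writes the right side of \eqref{altform} as a regularized $L^2$ pairing $\lim_{\ep\to0^+}\tfrac{1}{(2\pi)^2}\langle f_\ep,G_\ep\rangle$ with $f_\ep(x)=e^{-\ep x}\log(1-e^{\check\lambda x+2\pi\I t})$ and $G_\ep(x)=e^{\ep x}\,x/(1-e^{x+\I\ep})$, applies Parseval's identity $\langle f_\ep,G_\ep\rangle=\langle Ff_\ep,FG_\ep\rangle$, and computes the Fourier transforms by residues: $Ff_\ep$ produces the factor $e^{\pi\zeta(2t-1)/\check\lambda}/(\zeta\sinh(\pi\zeta/\check\lambda))$, and $FG_\ep$, being the Fourier transform of $x\cdot g_\ep(x)$, is a $\zeta$-derivative of $Fg_\ep$ --- this derivative is precisely where the double pole $(\sinh)^{-2}$ is generated. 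The product of the two transforms then reproduces the integrand of \eqref{eq:F-nonpert} after a contour deformation that \emph{is} legal under the stated hypotheses on $\lambda$ and $t$. Without this Fourier-analytic step (or an equivalent device), the passage between the two integral representations cannot be carried out along the lines you propose.
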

\begin{proof}
We will follow the method of \cite{garoufalidis2020resurgence}, based on the unitarity of the 
Fourier transform:
\[
\langle f,g\rangle=\langle Ff,Fg\rangle,\qquad \langle f,g\rangle=\int_\BR \mathrm dx\;f(x)\overline{g(x)}, \qquad (F\psi)(x)=\int_\BR \mathrm dy\;e^{2\pi\ii\,xy}\psi(y).
\]
We start by defining for sufficiently small $\ep>0$,
\begin{equation}
f_\ep(x):=e^{-\ep x}\log\big(1-e^{\check\la x
+2\pi\ii \,t}\big),\;\;\;\;
g_\ep(x):=e^{+\ep x}\frac{1}{1-e^{x+\ii\ep}},\;\;\;\;
G_\ep(x):=e^{+\ep x}\frac{x}{1-e^{x+\ii\ep}}.
\end{equation}
We then easily see that 
\begin{equation}\label{F0-eta0-scalarprod}
\lim_{\ep\ra 0^+}\frac{1}{(2\pi)^2}\langle f_\ep,G_\ep\rangle=\frac{1}{(2\pi)^2}\int_{\mathbb{R}+\I 0^+}dv\frac{v}{1-e^v}\log(1-e^{\check\lambda v +2\pi \I t})\;.
\end{equation}
We now compute the Fourier transform of $f_\ep$, $g_\ep$, and $G_\ep$. Setting $\zeta=2\pi x+\I\epsilon$, we find that 
\begin{equation}
        Ff_\ep(x)=\int_{\mathbb{R}}\mathrm dy\;e^{\I y\zeta}\log(1-e^{\check\lambda y + 2\pi \I t})=\frac{\I\check \lambda}{\zeta}\int_{\mathbb{R}}\mathrm dy\;\frac{e^{\I y \zeta}}{1-e^{-\check\lambda y -2\pi \I t}}\;,
\end{equation}
where we have integrated by parts, and used that the boundary terms vanish. The last integral has simple poles at $y=2\pi \I(k-t)/\check{\lambda}$, and under our assumptions for the parameters $t$ and $\lambda$, it is easy to check that the poles on the upper half-plane correspond to $k>0$, while those in the lower half-plane correspond to $k\leq 0$. If $\text{Re}(x)>0$, by an application of Jordan's lemma and the residue theorem, we can compute $Ff_\ep(x)$ by summing up the residues in the upper half-plane, obtaining
\begin{equation}
        \begin{split}
    \frac{\I\check \lambda}{\zeta}\int_{\mathbb{R}}\mathrm dy\;\frac{e^{\I y \zeta}}{1-e^{-\check\lambda y -2\pi \I t}}&=2\pi \I \frac{\I\check{\lambda}}{\zeta}\sum_{k=1}^{\infty}\frac{e^{\I y\zeta}}{\check{\lambda}}\Big|_{y=2\pi \I (k-t)/\check{\lambda}}=-\frac{2\pi}{\zeta}e^{2\pi t\zeta /\check\lambda}\sum_{k=1}^{\infty}e^{-2\pi k\zeta/\check{\lambda} }\\
    &=-\frac{\pi}{\zeta}e^{\pi \zeta(2t-1)/\check\lambda}\Big(2e^{-\pi \zeta/\check{\lambda}}\sum_{k=0}^{\infty}e^{-2\pi \zeta k/\check{\lambda}}\Big)=-\frac{\pi}{\zeta}\frac{e^{\pi \zeta(2t-1)/\check\lambda}}{\sinh(\pi \zeta/\check{\lambda})}\;,
    \end{split}
\end{equation}
where in the last equality we have used the Dirichlet series representation of $1/\sinh(z)$. Similarly, if $\text{Re}(x)<0$, we can compute $Ff_{\epsilon}$ summing up the residues in the lower half-plane, obtaining
\begin{align*}
   \frac{\I\check \lambda}{\zeta}\int_{\mathbb{R}}\mathrm dy\;\frac{e^{\I y \zeta}}{1-e^{-\check\lambda y -2\pi \I t}}&=-2\pi \I \frac{\I\check{\lambda}}{\zeta}\sum_{k=0}^{-\infty}\frac{e^{\I y\zeta}}{\check{\lambda}}\Big|_{y=2\pi \I (k-t)/\check{\lambda}}=\frac{2\pi}{\zeta}e^{2\pi t\zeta /\check\lambda}\sum_{k=0}^{\infty}e^{2\pi k\zeta/\check{\lambda} }\\ &=-\frac{\pi}{\zeta}e^{\pi \zeta(2t-1)/\check\lambda}\Big(-2e^{\pi \zeta/\check{\lambda}}\sum_{k=0}^{\infty}e^{2\pi \zeta k/\check{\lambda}}\Big)=-\frac{\pi}{\zeta}\frac{e^{\pi \zeta(2t-1)/\check\lambda}}{\sinh(\pi \zeta/\check{\lambda})}\,, \numberthis
\end{align*}
so that $Ff_{\ep}(x)$ exists for $x\neq 0$ and 
\begin{equation}
Ff_\ep(x)=-\frac{\pi}{\zeta}\frac{e^{\pi \zeta(2t-1)/\check\lambda}}{\sinh(\pi \zeta/\check{\lambda})}\;.
\end{equation}

The computation of $Fg_\ep(x)$ is simpler, and follows similar lines. One obtains that for $x\neq 0$,
\begin{equation}
    \overline{Fg_\ep(x)}=\pi \I \frac{e^{(\ep-\pi )\zeta}}{\sinh(\pi \zeta)}\,.
\end{equation}
On the other hand, since $G_\ep(x)=xg_\ep(x)$, we find
\begin{equation}
    \overline{FG_\ep(x)}=-\frac{1}{2\pi \I}\frac{\partial}{\partial x}\overline{Fg_\ep(x)}= \frac{\partial}{\partial \zeta}\Big(\frac{2\pi e^{\ep \zeta}}{1-e^{2\pi \zeta}}\Big)\,.
\end{equation}
We then have that
\begin{align*}
    \lim_{\ep\ra 0^+}\frac{1}{(2\pi)^2}\langle f_\ep,G_\ep\rangle&=\lim_{\ep\ra 0^+}\frac{1}{(2\pi)^2}\langle Ff_\ep,FG_\ep\rangle\\
    &=\frac{1}{(2\pi)^2}\int_{\mathbb{R}+\I 0^+}\mathrm dx \bigg(-\frac{\pi}{2\pi x}\frac{e^{\pi(2\pi x)(2t-1)/\check\lambda}}{\sinh(2\pi^2 x/\check\lambda)}\bigg)\bigg(\frac{(2\pi)^2}{4(\sinh(2\pi^2x))^2}\bigg)\\
    &=-\int_{\check{\lambda}^{-1}\cdot(\mathbb{R}+\I 0^+)}\frac{\mathrm dv}{8v} \frac{e^{v(t-1/2)}}{\sinh(v/2)(\sinh(\check{\lambda}v/2))^2}\\
    &=-\int_{\mathbb{R}+\I 0^+}\frac{\mathrm dv}{8v} \frac{e^{v(t-1/2)}}{\sinh(v/2)(\sinh(\check{\lambda}v/2))^2}=F_{\mathrm{np}}(\lambda,t)\, , \numberthis
\end{align*}
where we used the fact that the range of the parameter $\lambda$ allows us to deform the contour back to $\mathbb{R}+i0^{+}$. The result then follows. 
\end{proof}

We are now ready to prove the second point of Theorem \ref{theorem1}. Another proof is given in Appendix \ref{altproofBorel}.

\begin{prop} \label{BorelsumR} Under the same assumptions as in Lemma \ref{woronowiczlemma}, we have
\begin{equation}\label{F_0BorelS}
    F_{\mathrm{np}}(\lambda,t)=\frac{1}{\lambda^2}\mathrm{Li}_3(Q)+\frac{B_2}{2}\mathrm{Li}_1(Q) + \int_{0}^{\infty}\mathrm d\xi \,e^{-\xi/\check{\lambda}}G(\xi,t)\,,
\end{equation}
where $G(\xi,t)$ is the Borel transform (\ref{Boreltrans}) obtained in the previous section, $Q=e^{2\pi \I t}$, and $\check\lambda=\lambda/2\pi$.

\end{prop}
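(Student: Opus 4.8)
The plan is to start from the Woronowicz form of $F_{\mathrm{np}}(\lambda,t)$ established in Lemma \ref{woronowiczlemma}, namely \eqref{altform}, and to transport it step by step onto the right-hand side of \eqref{F_0BorelS}, reading off the elementary terms $\lambda^{-2}\mathrm{Li}_3(Q)$ and $\tfrac{B_2}{2}\mathrm{Li}_1(Q)$ along the way. First I would split the contour $\mathbb{R}+\I 0^+$ at the origin (which is harmless, since $v/(1-e^v)$ is regular there) and use $\tfrac{1}{1-e^v}=-\sum_{m\ge1}e^{-mv}$ for $v>0$ and $\tfrac{1}{1-e^v}=\sum_{m\ge0}e^{mv}$ for $v<0$. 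After substituting $v\mapsto -v$ in the $v<0$ piece, the $m=0$ summand there is $\int_{-\infty}^{0}v\log(1-e^{\check{\lambda}v+2\pi\I t})\,\mathrm dv$, which --- expanding the logarithm geometrically, legitimate since $\mathrm{Im}(t)>0$ and $\mathrm{Re}(\check{\lambda})>0$ --- evaluates to $\check{\lambda}^{-2}\mathrm{Li}_3(Q)=\lambda^{-2}\mathrm{Li}_3(Q)$ (after the $1/(2\pi)^2$ prefactor). One is then left with
\[
F_{\mathrm{np}}(\lambda,t)-\tfrac{1}{\lambda^2}\mathrm{Li}_3(Q)
=-\frac{1}{(2\pi)^2}\int_0^\infty \frac{v\,\mathrm dv}{e^v-1}\,L(v),\qquad
L(v):=\log\!\big(1-e^{\check{\lambda}v+2\pi\I t}\big)+\log\!\big(1-e^{-\check{\lambda}v+2\pi\I t}\big),
\]
the interchange of $\sum_m$ with the integral being justified by Tonelli since $v L(v)/(e^v-1)$ is absolutely integrable.

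Next I would extract $\tfrac{B_2}{2}\mathrm{Li}_1(Q)$: writing $\tfrac{1}{e^v-1}=\sum_{m\ge1}e^{-mv}$ and integrating each term by parts against $v\,e^{-mv}$, the boundary contribution at $v=0$ is $L(0)/m^2=2\log(1-Q)/m^2$, and summing over $m$ gives $-\tfrac{1}{(2\pi)^2}\cdot 2\log(1-Q)\cdot\zeta(2)=-\tfrac1{12}\log(1-Q)=\tfrac{B_2}{2}\mathrm{Li}_1(Q)$. Hence
\[
F_{\mathrm{np}}(\lambda,t)-\tfrac{1}{\lambda^2}\mathrm{Li}_3(Q)-\tfrac{B_2}{2}\mathrm{Li}_1(Q)
=-\frac{1}{(2\pi)^2}\sum_{m\ge1}\Big(\tfrac1m\!\int_0^\infty\!v\,e^{-mv}L'(v)\,\mathrm dv+\tfrac1{m^2}\!\int_0^\infty\!e^{-mv}L'(v)\,\mathrm dv\Big).
\]
Now the key elementary identity $L'(v)=-\check{\lambda}\,K(\check{\lambda}v)$ enters, where $K(\xi):=\tfrac{1}{1-e^{-2\pi\I t+\xi}}-\tfrac{1}{1-e^{-2\pi\I t-\xi}}$ is the odd building block of the series \eqref{Boreltrans} for $G(\xi,t)$. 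Substituting $\xi=\check{\lambda}v$ and rotating the resulting contour $\check{\lambda}\cdot\mathbb{R}_{>0}$ back to $\mathbb{R}_{>0}$ --- which is exactly where the hypothesis that $\lambda$ lies in the sector bounded by $l_0$ and $l_{-1}$ is used, since then no pole $\xi=\pm2\pi\I(t+k)$ of $K$ is swept and $\mathrm{Re}(\check{\lambda})>0$ gives decay --- rewrites the right-hand side as
\[
\frac{1}{(2\pi)^2}\sum_{m\ge1}\Big(\tfrac1{m^2}\,\widehat K\big(\tfrac{m}{\check{\lambda}}\big)-\tfrac1{m\check{\lambda}}\,\widehat K'\big(\tfrac{m}{\check{\lambda}}\big)\Big),\qquad
\widehat K(w):=\int_0^\infty e^{-w\xi}K(\xi)\,\mathrm d\xi .
\]

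It then remains to show that the Borel–Laplace integral reduces to the same expression. Starting from \eqref{Boreltrans}, I would fold $m\leftrightarrow -m$ (harmless since $K$ is odd), substitute $\xi=m\eta$, and integrate the operator $1+\tfrac{\xi}{2}\partial_\xi$ by parts; this turns $\int_0^\infty e^{-\xi/\check{\lambda}}G(\xi,t)\,\mathrm d\xi$ into $\tfrac{1}{(2\pi)^2}\sum_{m\ge1}\tfrac1{m^2}\big(\widehat K(m/\check{\lambda})-\tfrac{m}{\check{\lambda}}\widehat K'(m/\check{\lambda})\big)$, which is visibly the same double sum. Matching the two sides finishes the proof. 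The main obstacle is the convergence/interchange bookkeeping: the term-by-term Laplace integrals converge only conditionally because $K(\xi)$ tends to a nonzero constant as $\xi\to\infty$, so one must work with the form obtained \emph{after} the integration by parts (there $\widehat K(m/\check{\lambda})=O(\check{\lambda}^2/m^2)$, $\widehat K'(m/\check{\lambda})=O(\check{\lambda}^3/m^3)$ and the $m$-sum is absolutely convergent) and must track the branch cuts of $\log(1-e^{\pm\check{\lambda}v+2\pi\I t})$ carefully; it is precisely the full set of hypotheses of Lemma \ref{woronowiczlemma} ($0<\mathrm{Re}(t)<1$, $\mathrm{Im}(t)>0$, $\lambda$ between $l_0$ and $l_{-1}$, $\mathrm{Re}(t)<\mathrm{Re}(\check{\lambda}+1)$) that makes all the geometric expansions and the contour rotation legitimate. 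An alternative, comparable in length, is to bypass the computation by a uniqueness theorem for Borel sums (Watson / Nevanlinna–Sokal): both $F_{\mathrm{np}}-\lambda^{-2}\mathrm{Li}_3(Q)-\tfrac{B_2}{2}\mathrm{Li}_1(Q)$ --- using the asymptotic expansion of the triple sine proved in \cite{BridgelandCon} --- and the Borel sum along $\mathbb{R}_{>0}$ are Gevrey-$1$ asymptotic to $\Phi(\check{\lambda},t)$ on a common half-plane, and the Borel summability from part (i) then pins the function down.
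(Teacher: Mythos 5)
Your proof is correct, and it rests on the same key input as the paper's: the Woronowicz form (\ref{altform}) from Lemma \ref{woronowiczlemma}, processed by geometric expansion, term-by-term integration by parts, and a contour rotation that is licensed precisely by the hypothesis that $\lambda$ lies in the sector bounded by $l_0$ and $l_{-1}$. The execution is dual to the paper's, though. In the proof of Proposition \ref{BorelsumR} the paper expands the $\lambda$-dependent factor $\log(1-e^{-2\pi y/\lambda+\I\epsilon})$ as a series, rescales $y\mapsto y/n$ in each term, and re-sums so that $G(\xi,t)$ is reassembled as an explicit function under a single Laplace integral, with $\tfrac{B_2}{2}\mathrm{Li}_1(Q)$ appearing as the $y=0$ boundary term of a final integration by parts. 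You instead expand the $\lambda$-independent factor $1/(1-e^{v})$, never reassemble $G$, and reduce both the remainder of the Woronowicz integral and the Borel--Laplace integral of $G$ to the common double sum $\frac{1}{(2\pi)^2}\sum_{m\ge 1}\bigl(\tfrac{1}{m^2}\widehat{K}(m/\check{\lambda})-\tfrac{1}{m\check{\lambda}}\widehat{K}'(m/\check{\lambda})\bigr)$. I checked the pivotal steps --- the identity $L'(v)=-\check{\lambda}K(\check{\lambda}v)$, the boundary terms $L(0)/m^2=2\log(1-Q)/m^2$ summing (with the overall prefactor $-1/(2\pi)^2$) to $-\tfrac{1}{12}\log(1-Q)=\tfrac{B_2}{2}\mathrm{Li}_1(Q)$, and the folding and rescaling of (\ref{Boreltrans}) --- and they all hold. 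Your meet-in-the-middle version makes the absolute convergence of the final $m$-sum manifest (since $K(0)=0$ gives $\widehat{K}(w)=O(w^{-2})$), at the price of having to unfold the right-hand side of (\ref{F_0BorelS}) separately; the paper's version lands on the right-hand side directly. Your suggested shortcut via Nevanlinna--Sokal uniqueness is plausible in principle but not free, since it requires uniform Gevrey-$1$ bounds on a sector of opening greater than $\pi$; note that the paper's own alternative proof in Appendix \ref{altproofBorel} instead goes through the Hadamard-product integral representation of $G$, which is a genuinely different third route.
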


\begin{proof}

We start by performing the change of variables $y=\lambda v/2\pi$ on (\ref{altform}), obtaining
\begin{equation}
    \begin{split}
    F_{\mathrm{np}}(\lambda,t)&=\frac{1}{\lambda^2}\int_{\lambda(\mathbb{R}+\I 0^+)}\mathrm dy \;\; \frac{y}{1-e^{2\pi y/\lambda }}\log(1-e^{y+2\pi \I t})\\
    &=\frac{1}{\lambda^2}\int_{\mathbb{R}+\I 0^+}\mathrm dy \;\;  \frac{y}{1-e^{2\pi y/\lambda }}\log(1-e^{y+2\pi \I t})\\
    &=\lim_{\epsilon \to 0^+}\frac{1}{\lambda^2}\int_{\mathbb{R}}\mathrm dy \, \frac{y}{1-e^{2\pi y/\lambda -i\epsilon}}\log(1-e^{y+2\pi \I t})\,,
    \end{split}
\end{equation}
where in the second equality we have used that the range of $\lambda$ allows us to deform the contour back to $\mathbb{R}+\I 0^+$. Now using
\begin{equation}
    \frac{\mathrm d}{\mathrm dy}(-\log(1-e^{-2\pi y/\lambda +\I \epsilon}))=\frac{2\pi }{\lambda(1-e^{2\pi y/\lambda -\I \epsilon})}
\end{equation}
and integration by parts, we find
\begin{equation}
    \begin{split}
    F_{\mathrm{np}}(\lambda,t)=\lim_{\epsilon\to 0}\Big[&-\frac{y}{2\pi \lambda}\log(1-e^{-2\pi y/\lambda +i\epsilon})\log(1-e^{y+2\pi it})\Big|_{y=-\infty}^{\infty}\\
    &+\frac{1}{2\pi \lambda}\int_{\mathbb{R}}\mathrm dy\;\; \log(1-e^{-2\pi y/\lambda +i\epsilon})\Big(\log(1-e^{y+2\pi it})+\frac{y}{1-e^{-y-2\pi it}}\Big)\Big]\,.
    \end{split}
\end{equation}
Because $\text{Re}(\lambda)>0$, we obtain that the boundary terms vanish. Furthermore,  
splitting the integration over the left and right half intervals, one then obtains
\begin{equation}\label{p1}
    \begin{split}
    F_{\mathrm{np}}(\lambda,t)=\lim_{\epsilon\to 0}&\Big[\frac{1}{2\pi \lambda}\int_{0}^{\infty}\mathrm dy\;\; \log(1-e^{-2\pi y/\lambda +\I \epsilon})\Big(\log(1-e^{y+2\pi \I t})+\frac{y}{1-e^{-y-2\pi \I t}}\Big)\\
    &+\frac{1}{2\pi \lambda}\int_{0}^{\infty}\mathrm dy\;\; \log(1-e^{2\pi y/\lambda +\I\epsilon})\Big(\log(1-e^{-y+2\pi \I t})-\frac{y}{1-e^{y-2\pi \I t}}\Big)\Big]\\
    =\widetilde{H}(&\lambda,t) + \lim_{\epsilon\to 0}H(\lambda,t,\epsilon)\;,
    \end{split}
\end{equation}
where we have defined
\begin{align*}
    \widetilde{H}(\lambda,t):=\frac{1}{2\pi \lambda^2}\int_{0}^{\infty}\mathrm dy\;\; (2\pi y +\pi \I \lambda) &\Big(\log(1-e^{-y+2\pi \I t})-\frac{y}{1-e^{y-2\pi it}}\Big)\\
    H(\lambda,t,\epsilon):= \frac{1}{2\pi \lambda}\int_{0}^{\infty}\mathrm dy\;\; \log(1-e^{-2\pi y/\lambda +\I\epsilon})\Big(&\log(1-e^{y+2\pi \I t})+\log(1-e^{-y+2\pi \I t})\\
    &+\frac{y}{1-e^{-y-2\pi \I t}}-\frac{y}{1-e^{y-2\pi \I t}}\Big)\;. \numberthis
\end{align*}
One can compute $\widetilde{H}(\lambda,t)$ explicitly by performing an integration by parts to get rid of the log term:
\begin{equation}\label{Gcomp}
    \begin{split}
        \widetilde{H}(\lambda,t)&=\frac{1}{2\pi \lambda^2}\bigg((\pi y^2 +\pi \I \lambda y)\log(1-e^{-y+2\pi \I t})\Big|_{y=0}^{\infty}-\int_{0}^{\infty}\mathrm dy\;\; (\pi y^2 +\pi \I \lambda y) \frac{-1}{1-e^{y-2\pi \I t}}\bigg)\\
        &\;\;\;\;-\frac{1}{2\pi \lambda^2}\int_{0}^{\infty}\mathrm dy\;\; (2\pi y +\pi \I \lambda) \frac{y}{1-e^{y-2\pi \I t}}\\
        &=\frac{1}{2\lambda^2}\int_0^{\infty}\mathrm dy \frac{y^2}{e^{y-2\pi \I t}-1}\;.
    \end{split}
\end{equation}
Since $\text{Im}(t)>0$, we find that $|e^{2\pi \I t}|<1$, so that the last integral in (\ref{Gcomp}) corresponds to an integral representation of $\mathrm{Li}_3(e^{2\pi \I t})/\lambda^2$. Hence, we conclude that
\begin{equation}\label{p2}
    \widetilde{H}(\lambda,t)=\frac{1}{\lambda^2}\mathrm{Li}_3(e^{2\pi \I t})\;.
\end{equation}
On the other hand, by  expanding the first log term of $H$ and applying the Fubini--Tonelli theorem, we find that
\begin{equation}
    \begin{split}
    H(\lambda,t,\epsilon)
    =-\sum_{n=1}^\infty\frac{1}{2\pi \lambda}\int_{0}^{\infty}\mathrm dy\;\; \frac{\e^{-2\pi n y/\lambda +\I n\epsilon}}{n}\Big(&\log(1-e^{y+2\pi \I t})+\log(1-e^{-y+2\pi \I t})\\
    &+\frac{y}{1-e^{-y-2\pi \I t}}-\frac{y}{1-e^{y-2\pi \I t}}\Big)\;.
    \end{split}
\end{equation}
Performing a change of variables in each integral, and interchanging integral and summations again, we obtain
\begin{equation}
\begin{split}
    H(\lambda,t,\epsilon)
     =-\frac{1}{2\pi \lambda}\int_{0}^{\infty}\mathrm dy\;\; \e^{-2\pi  y/\lambda}\sum_{n=1}^\infty\frac{e^{in\epsilon}} {n^2}\Big(&\log(1-e^{y/n+2\pi \I t})+\log(1-e^{-y/n+2\pi \I t})\\
    &+\frac{y/n}{1-e^{-y/n-2\pi \I t}}-\frac{y/n}{1-e^{y/n-2\pi \I t}}\Big)\\
    \end{split}
\end{equation}
Letting $H(\lambda,t):=\lim_{\epsilon \to 0}H(\lambda,t,\epsilon)$, we get
\begin{equation}\label{Hterm}
    \begin{split}
H(\lambda,t) =-\frac{1}{2\pi \lambda}\int_{0}^{\infty}\mathrm dy\;\; \e^{-2\pi  y/\lambda}\sum_{n=1}^\infty\frac{1} {n^2}\Big(&\log(1-e^{y/n+2\pi \I t})+\log(1-e^{-y/n+2\pi \I t})\\
    &+\frac{y/n}{1-e^{-y/n-2\pi \I t}}-\frac{y/n}{1-e^{y/n-2\pi \I t}}\Big)\;.
\end{split}
\end{equation}
Finally, using that $-\frac{2\pi}{\lambda}e^{-2\pi y/\lambda}=\frac{\mathrm d}{\mathrm dy}e^{-2\pi y/\lambda}$ and integrating by parts yields
\begin{align*}
&H(\lambda,t) =
\Bigg[\frac{e^{-2\pi y/\lambda}}{(2\pi)^2}\sum_{n=1}^\infty\frac{1} {n^2}\Big(\log(1-e^{y/n+2\pi \I t})+\log(1-e^{-y/n+2\pi \I t})\\[-1ex]
&\hspace{5.5cm} +\frac{y/n}{1-e^{-y/n-2\pi \I t}}-\frac{y/n}{1-e^{y/n-2\pi \I t}}\Big)\Bigg]\Bigg|_{y=0}^{\infty}\\
&\hspace{1.5cm}-\int_{0}^{\infty}dy\;\; \frac{e^{-2\pi  y/\lambda}}{(2\pi)^2}\frac{\mathrm d}{\mathrm dy}\Bigg[\sum_{n=1}^\infty\frac{1} {n^2}\Big(\log(1-e^{y/n+2\pi \I t})+\log(1-e^{-y/n+2\pi \I t})\\[-1ex] &\hspace{5.5cm}+\frac{y/n}{1-e^{-y/n-2\pi \I t}}-\frac{y/n}{1-e^{y/n-2\pi \I t}}\Big)\Bigg]\,. \numberthis
\end{align*}
Using that the boundary term at $\infty$ vanishes, and interchanging the derivative with the sum, we obtain
\begin{equation}\label{p3}
    \begin{split}
H(\lambda,t) &=-\frac{2}{(2\pi)^2}\log(1-Q)\sum_{n=1}^{\infty}\frac{1}{n^2}+\int_{0}^{\infty}\mathrm dy\;\; \e^{-2\pi  y/\lambda}G(y,t)\\
    &=\frac{1}{2}\mathrm{Li}_1(Q)B_2+\int_{0}^{\infty}\mathrm dy\;\; \e^{-2\pi  y/\lambda}G(y,t)\;,
\end{split}
\end{equation}
where we used that $\sum_{n=1}^{\infty}\frac{1}{n^2}=\pi^2B_2$, $\mathrm{Li}_1(Q)=-\log(1-Q)$.\\

Hence, putting (\ref{p1}), (\ref{p2}) and (\ref{p3}) together gives us (\ref{F_0BorelS}).
\end{proof}

We finish this section with the following corollaries:

\begin{cor}  Let $l_k=\mathbb{R}_{<0}\cdot 2\pi \I(t+k)$, and let $\rho_k$ be a ray between $l_k$ and $l_{k-1}$. Then the following holds for $n \in \mathbb{Z}$:
\begin{equation}
    F_{\rho_{k-n}}(\lambda,t+n)= F_{\rho_k}(\lambda,t)\,. 
\end{equation}
In particular, if $0<\mathrm{Re}(t)<1$ and $\mathrm{Im}(t)>0$, we have on their common domains of definition
\begin{equation}
    F_{\rho_{-n}}(\lambda,t+n)=\log(G_3(t\,|\, \check\lambda,1))
\end{equation}
\end{cor}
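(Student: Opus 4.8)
The mechanism behind both identities is the manifest invariance of all the relevant data under integer translations of $t$, together with a relabelling of Stokes rays. Concretely, the Borel transform $G(\xi,t)$ of \eqref{Boreltrans} depends on $t$ only through $e^{\pm 2\pi\I t}$, so $G(\xi,t+n)=G(\xi,t)$ as meromorphic functions of $\xi$ for every $n\in\Z$; likewise $\mathrm{Li}_3(Q)$ and $\mathrm{Li}_1(Q)$ are unchanged under $t\mapsto t+n$, and the pole set $\{2\pi\I(t+k)m\}$ is literally the same for $t$ and for $t+n$. Hence, for any ray $\rho$ from $0$ to $\infty$ avoiding those poles, neither $\int_\rho \mathrm d\xi\,e^{-\xi/\check\lambda}G(\xi,t)$ nor the two polylogarithm terms change when $t$ is replaced by $t+n$, so $F_\rho(\lambda,t+n)=F_\rho(\lambda,t)$ for the same ray $\rho$.

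The only thing that changes is the \emph{name} of the ray in the $\rho_k$-notation, since that notation is pinned to the Stokes rays. Because $l_k=\BR_{<0}\cdot 2\pi\I(t+k)$ satisfies $\BR_{<0}\cdot 2\pi\I((t+n)+j)=l_{n+j}$ in the labelling attached to $t$, a ray lying in the sector bounded by $l_{k-n}$ and $l_{k-n-1}$ \emph{relative to the parameter $t+n$} lies in the sector bounded by $l_k$ and $l_{k-1}$ \emph{relative to the parameter $t$}; i.e.\ it is a legitimate choice of $\rho_k$ for $t$. Combining this with the previous paragraph gives $F_{\rho_{k-n}}(\lambda,t+n)=F_{\rho_k}(\lambda,t)$ on the overlap of the domains $\mathbb H_{\rho_{k-n}}$ and $\mathbb H_{\rho_k}$, the two sides analytically continuing one another; this is the first assertion. (I would also record explicitly the elementary fact used implicitly here: on a sector containing no poles of $G(\xi,t)$ and on which $G$ has the decay at infinity established in the proof of part (ii), the integral $\int_\rho \mathrm d\xi\,e^{-\xi/\check\lambda}G(\xi,t)$ is independent of the ray $\rho$ by Cauchy's theorem, so $F_\rho$ depends only on the sector.)

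For the ``in particular'' statement, take $k=0$, so $F_{\rho_{-n}}(\lambda,t+n)=F_{\rho_0}(\lambda,t)$. Under $0<\mathrm{Re}(t)<1$ and $\mathrm{Im}(t)>0$ a direct check of arguments shows $2\pi\I t$ lies in the second quadrant and $2\pi\I(t-1)$ in the third, so $l_0$ points into the fourth quadrant and $l_{-1}$ into the first; consequently $\BR_{>0}$ lies in the pole-free sector bounded by $l_0$ and $l_{-1}$ and may be taken as $\rho_0$. By the ray-independence noted above, $F_{\rho_0}(\lambda,t)=F_{\BR_{>0}}(\lambda,t)$, and Proposition \ref{BorelsumR} together with \eqref{eq:F-nonpert} and Proposition \ref{intrep2} identifies this with $F_{\mathrm{np}}(\lambda,t)=\log G_3(t\,|\,\check\lambda,1)$, valid wherever both sides are defined (which additionally requires $\mathrm{Re}(t)<\mathrm{Re}(\check\lambda+1)$).

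The only genuinely technical point is the ray-independence/contour-deformation step: one must bound $G(\xi,t)$ on circular arcs at infinity lying inside the pole-free sector. This is essentially the same estimate already needed to make sense of the Borel sum in part (ii), so it introduces no new difficulty; the step most prone to error is simply the bookkeeping of the index shift $k\mapsto k-n$ in the second paragraph.
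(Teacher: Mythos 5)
Your proposal is correct and follows essentially the same route as the paper: both rest on the periodicity $G(\xi,t+n)=G(\xi,t)$ (and of the polylogarithm terms), the relabelling $l_j(t+n)=l_{j+n}(t)$ of Stokes rays so that $\rho_{k-n}(t+n)$ and $\rho_k(t)$ determine the same sector, and then Proposition \ref{BorelsumR} together with Proposition \ref{intrep2} for the identification with $\log G_3(t\,|\,\check\lambda,1)$. Your explicit remark that the Borel sum depends only on the pole-free sector (via a contour deformation) is a slightly more careful version of the paper's implicit identification $\rho_{k-n}(t+n)=\rho_k(t)$, but it is the same argument in substance.
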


\begin{proof}

Note that the labels $l_k$ (and hence also $\rho_k$) depend on $t$. In the following, we denote $l_k(t)$ and $\rho_k(t)$ to emphasize the $t$ dependence. In particular, we have the relations $l_k(t+n)=l_{k+n}(t)$ and $\rho_k(t+n)=\rho_{k+n}(t)$ for $n \in \mathbb{Z}$. \\

Using the fact that $G(\xi,t)=G(\xi,t+n)$ for any $n\in \mathbb{Z}$, we thus obtain
\begin{align*}
    F_{\rho_{k-n}} (\lambda,t+n) &=\frac{1}{\lambda^2}\mathrm{Li}_3(e^{2\pi \I(t+n)})+\frac{B_2}{2}\mathrm{Li}_1(e^{2\pi \I(t+n)})+ \int_{\rho_{k-n}(t+n)}\mathrm d\xi \;e^{-\xi/\check{\lambda}} G(\xi,t+n)\\
    &=\frac{1}{\lambda^2}\mathrm{Li}_3(e^{2\pi \I t})+\frac{B_2}{2}\mathrm{Li}_1(e^{2\pi \I t})+ \int_{\rho_{k-n}(t+n)}\mathrm d\xi \;e^{-\xi/\check{\lambda}} G(\xi,t)\\ &=\frac{1}{\lambda^2}\mathrm{Li}_3(e^{2\pi \I t})+\frac{B_2}{2}\mathrm{Li}_1(e^{2\pi \I t})+ \int_{\rho_{k}(t)}\mathrm d\xi \;e^{-\xi/\check{\lambda}} G(\xi,t)\\
    &=F_{\rho_k}(\lambda,t)\,. \numberthis
\end{align*}
The final result then follows from Proposition \ref{BorelsumR} and \ref{intrep2}.
\end{proof}

\begin{cor}\label{constmapBorelsum} Let $\rho$ be a ray different from $\pm i\mathbb{R}_{>0}$.
The Borel-transform $F_{0,\rho}(\la)$ of the formal series $F_0(\la)$ defined in (\ref{constmapterms})
can be represented in the form 
\begin{equation}
 F_{0,\rho}(\la)=-\frac{1}{\la^2}\zeta(3)+F_0^1-\int_\rho\mathrm d\xi \;e^{-\xi/\check{\lambda}}\left(G(\xi,0)+\frac{1}{12\xi}\right).  
\end{equation}
It is related to the Borel transform $F_\rho(\la,t)$ of $\tilde{F}(\la,t)$ by the equation \rf{Borelconstmap}.
\end{cor}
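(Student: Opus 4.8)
The plan is to reduce the statement to the Borel transform $G(\xi,t)$ already obtained in Section~\ref{boreltransproof}, specialised to $t=0$. First I would split \eqref{constmapterms} (with $\chi(X)=2$) as $F_0(\lambda)=-\zeta(3)\lambda^{-2}+F_0^1+F_0^{\ge2}(\lambda)$ with $F_0^{\ge2}(\lambda):=\sum_{g\ge2}\lambda^{2g-2}F_0^g\in\check\lambda^2\mathbb C[[\check\lambda^2]]$, so that by definition
\[
F_{0,\rho}(\lambda)=-\zeta(3)\lambda^{-2}+F_0^1+\int_\rho\mathrm d\xi\;e^{-\xi/\check\lambda}\,\mathcal B(F_0^{\ge2})(\xi),\qquad \mathcal B(F_0^{\ge2})(\xi)=\sum_{g\ge2}(2\pi)^{2g-2}F_0^g\,\frac{\xi^{2g-3}}{(2g-3)!}.
\]
The elementary input is $\zeta(3-2g)=\zeta(-(2g-3))=-B_{2g-2}/(2g-2)$ for $g\ge2$ ($2g-3$ being a positive odd integer), whence $F_0^g=\tfrac{(-1)^{g-1}B_{2g}B_{2g-2}}{2g(2g-2)(2g-2)!}=-\tfrac{(-1)^{g-1}B_{2g}}{2g(2g-2)!}\,\zeta(3-2g)$, i.e.\ $F_0^g$ is minus the finite part at $t=0$ of the coefficient $\widetilde F^g$ in \eqref{resconfree}. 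Since $\mathrm{Li}_{3-2g}(e^{2\pi\I t})$ in fact has a pole at $t=0$, the series $\Phi(\check\lambda,0)$ does not literally exist, so throughout ``$G(\xi,0)$'' is to be read as the meromorphic function defined by the right-hand side of \eqref{Boreltrans} at $t=0$.

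Next I would put $\mathcal B(F_0^{\ge2})$ in closed form by expanding the summands of \eqref{Boreltrans} at $\xi=0$ after setting $t=0$: using $\tfrac1{1-e^{x}}=-\tfrac1x+\tfrac12-\sum_{k\ge1}\tfrac{B_{2k}}{(2k)!}x^{2k-1}$ one gets $\big(1+\tfrac\xi2\partial_\xi\big)\big(\tfrac1{1-e^{\xi/m}}-\tfrac1{1-e^{-\xi/m}}\big)=-\tfrac m\xi-\sum_{k\ge1}\tfrac{(2k+1)B_{2k}}{(2k)!}\tfrac{\xi^{2k-1}}{m^{2k-1}}$, and summing over $m\ne0$ with weight $\tfrac1{(2\pi)^2m^3}$ (using $\sum_{m\ne0}m^{-2j}=2\zeta(2j)$ and $2\zeta(2)/(2\pi)^2=\tfrac1{12}$)
\[
G(\xi,0)=-\frac1{12\xi}-\frac1{(2\pi)^2}\sum_{k\ge1}\frac{2(2k+1)B_{2k}\,\zeta(2k+2)}{(2k)!}\,\xi^{2k-1}.
\]
Substituting Euler's $\zeta(2k+2)=\tfrac{(-1)^kB_{2k+2}(2\pi)^{2k+2}}{2(2k+2)!}$ and $F_0^{k+1}=\tfrac{(-1)^kB_{2k}B_{2k+2}}{4k(k+1)(2k)!}$, the claimed identity $\mathcal B(F_0^{\ge2})(\xi)=-\big(G(\xi,0)+\tfrac1{12\xi}\big)$ reduces to $\tfrac1{4k(k+1)(2k)!(2k-1)!}=\tfrac{2k+1}{(2k)!(2k+2)!}$, both sides being $\tfrac1{(2k+2)((2k)!)^2}$.

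With that in hand the displayed formula for $F_{0,\rho}$ follows, once I note that $G(\xi,0)+\tfrac1{12\xi}$ is holomorphic at $\xi=0$, that all poles of $G(\xi,0)$ lie on $2\pi\I\mathbb Z\subset\I\mathbb R$ so that every $\rho\ne\pm\I\mathbb R_{>0}$ avoids them, and that the integral converges by the decay of $e^{-\xi/\check\lambda}$ on $\mathbb H_\rho$ together with the (at most polynomial) growth of the integrand along $\rho$ inherited from Section~\ref{boreltransproof}. To deduce \eqref{Borelconstmap} I would then compare with $F_\rho(\lambda,0)=\zeta(3)\lambda^{-2}+\tfrac{B_2}{2}\mathrm{Li}_1(1)+\int_\rho\mathrm d\xi\;e^{-\xi/\check\lambda}G(\xi,0)$: writing $\int_\rho e^{-\xi/\check\lambda}G(\xi,0)\,\mathrm d\xi=-\tfrac1{12}\int_\rho e^{-\xi/\check\lambda}\tfrac{\mathrm d\xi}{\xi}+\int_\rho e^{-\xi/\check\lambda}\big(G(\xi,0)+\tfrac1{12\xi}\big)\mathrm d\xi$ and using the Frullani-type regularisation $\int_\rho e^{-\xi/\check\lambda}\tfrac{\mathrm d\xi}{\xi}=\log\check\lambda+c$ with $c$ independent of $\rho$ (and cancelling the formally divergent $\tfrac{B_2}{2}\mathrm{Li}_1(1)$) yields $F_{0,\rho}(\lambda)=-F_\rho(\lambda,0)-\tfrac1{12}\log\check\lambda+C$ with $C$ $\rho$-independent.

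The step I expect to be the main obstacle is the $t=0$ specialisation in the middle paragraph: because $\widetilde F^g$ has a pole at $t=0$ and the radius of convergence of $\mathcal B(\Phi(\cdot,t))$ shrinks to $0$ as $t\to0$ while $G(\xi,0)$ acquires a simple pole at $\xi=0$, the identification of $\mathcal B(F_0^{\ge2})$ with $-(G(\xi,0)+\tfrac1{12\xi})$ should be done through the explicit coefficient comparison above rather than by exchanging the $t\to0$ limit with the sum over $m$ in \eqref{Boreltrans}. (A limiting argument also works — splitting off the singular pieces $\propto(-2\pi\I t)^{2-2g}$ of $\mathrm{Li}_{3-2g}$, whose Borel transform is $\tfrac{t^2}{\xi^3}-\tfrac1{\xi(2\sinh(\xi/2t))^2}-\tfrac1{12\xi}\to-\tfrac1{12\xi}$ — but it is messier.) Everything else is a special-value computation or routine regularisation bookkeeping.
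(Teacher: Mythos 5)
Your proposal is correct and is essentially the paper's own argument: the coefficient identity $\mathcal B(F_0^{\ge2})(\xi)=-\bigl(G(\xi,0)+\tfrac{1}{12\xi}\bigr)$ that you verify by Taylor expansion is exactly Lemma \ref{constboreltrans}, and your identification of $\int_\rho e^{-\xi/\check\lambda}\,\mathrm d\xi/\xi$ with $\log\check\lambda$ up to a constant (via differentiation in $\check\lambda$) is the same Frullani-type step the paper uses. The only place you are looser than the paper is in deducing \rf{Borelconstmap}: you subtract two individually divergent quantities ($\tfrac{B_2}{2}\mathrm{Li}_1(1)$ and the bare integral of $e^{-\xi/\check\lambda}G(\xi,0)$), whereas the paper first rewrites $\tfrac{B_2}{2}\mathrm{Li}_1(Q)$ as $\tfrac{1}{12}\int_\rho\mathrm d\xi\,(e^{\xi-2\pi\I t}-1)^{-1}$ so that the poles at $\xi=0$ cancel inside a single convergent integrand before the $t\to0$ limit is taken — this is also what gives the precise meaning of $F_\rho(\lambda,0)$ that your comparison implicitly relies on.
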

\begin{proof} We will consider a limit of $F_{\rho}(\lambda,t)$ as $t\to 0$, where $t$ is taken to satisfy $\text{Re}(t)>0$, $\text{Im}(t)>0$; and such that along the limit, $\rho$ is always between $l_{-1}$ and $l_0$ (resp.\ $-l_{-1}$ and $-l_0$) if $\rho$ is on the right (resp.\ left) Borel half-plane. Let us first assume that $\rho$ is taken on the right half plane. The limit requires some care as 
$G(\xi,0)$ has a simple pole with residue $-\frac{B_2}{2}=-\frac{1}{12}$ at $\xi=0$. We may observe, however, that
\[
F_{\rho}(\la,t)=\frac{1}{\lambda^2}\mathrm{Li}_3(e^{2\pi \I t})+\int_\rho\mathrm d\xi
\;\Big(e^{-\xi/\check{\lambda}}G(\xi,t)+\frac{1}{12}\frac{1}{e^{\xi-2\pi\I t}-1}\Big)
\]
has a well-defined limit for $t\ra 0$, as the poles at $\xi=0$ in the integrand cancel each other.
This limit can be represented as 
\begin{equation}\label{Frho-t0}
F_{\rho}(\la,0)=\frac{1}{\lambda^2}\zeta(3)+\int_\rho\mathrm d\xi
\;e^{-\xi/\check{\lambda}}\bigg(G(\xi,0)+\frac{1}{12\xi}
\bigg)
+\frac{1}{12}\int_\rho \mathrm d\xi\;\bigg(\frac{1}{e^\xi-1}-\frac{e^{-\xi/\check{\lambda}}}{\xi}\bigg).
\end{equation}
The derivative of the second integral with respect to $\check{\la}$ is easily found to be equal to $-\frac{1}{12}\frac{1}{\check{\la}}$.
It follows that this integral is equal to 
$-\frac{1}{12}\log{\check{\la}}+C$, with $C$ being an undetermined constant. 
The integrand of the first integral in \rf{Frho-t0}, on the other hand, is analytic at
$\xi=0$. By straightforward computation of the Taylor series one may check that
$G_0(\xi):=-G(\xi,0)-\frac{1}{12\xi}$ is the Borel transform of 
the formal series $F_0(\la) + \zeta(3)/\lambda^2-F_0^1$ (see Lemma \ref{constboreltrans} for the computation), so the relation (\ref{Borelconstmap}) then follows.
On the other hand, if $\rho$ is on the left half plane, one can apply the same argument from before by using the relation $F_{\rho}(\lambda,t)=F_{-\rho}(-\lambda,t)$ (see Lemma \ref{reflectionlemma} below).
\end{proof}

%%%%%%%%%%%%%%%%%%%%%%%%%%%%%%%%%%%%%%%%%
%%%%%%%%%%%%%%%%%%%%%%%%%%%%%%%%%%%%%%%%%%%
\subsection{Stokes phenomena of the Borel sum}\label{Stokesjumpssect}

In the previous section, we studied  $F_{\rho}(\lambda,t)$ for $\rho=\mathbb{R}_{>0}$. However, the ray $\mathbb{R}_{>0}$ is a choice, and any other ray $\rho$ that avoids the poles of $G(\xi,t)$ in principle is an equally valid choice to perform the Borel sum. In this section we study the dependence on this choice.

\begin{prop}\label{Stokesjumps} Assume that $\mathrm{Im}(t)> 0$ and for $k \in \mathbb{Z}$ let $l_k=\mathbb{R}_{<0}\cdot 2\pi \I(t+k)$. Furthermore let $\rho$ be a ray in the sector determined by the Stokes rays $l_{k+1}$ and $l_{k}$, and $\rho'$ a ray in the sector determined by $l_{k}$ and $l_{k-1}$. Then for $\lambda \in \mathbb{H}_{\rho}\cap \mathbb{H}_{\rho'}$ (resp.\ $\lambda \in \mathbb{H}_{-\rho}\cap \mathbb{H}_{-\rho'}$)  we have
\begin{equation}
F_{\pm \rho}(\lambda,t)
    -F_{\pm \rho'}(\lambda,t) =\frac{1}{2\pi i}\partial_{\check\lambda}\Big(\check{\lambda}\,\mathrm{Li}_2\big(e^{\pm 2\pi \I(t+k)/\check \lambda})\Big)\,.
\end{equation}
If $\mathrm{Im}(t)<0$, then the previous jumps also hold provided $\rho$ is interchanged with $\rho'$ in the above formulas.
\end{prop}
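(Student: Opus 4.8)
The plan is to read off the jump from a contour deformation. By Proposition~\ref{Borelhadprod}, $G(\xi,t)$ is meromorphic in $\xi$ with poles only at $\xi=2\pi\I(t+j)m$, $j\in\Z$, $m\in\Z\setminus\{0\}$, and the two Borel integrals $\int_\rho$ and $\int_{\rho'}$ differ by the integral over the closed contour $\rho\cup(-\rho')$, which by the residue theorem equals $2\pi\I$ times the sum of the residues of $e^{-\xi/\check\lambda}G(\xi,t)$ at the poles trapped between the two rays, namely those lying on $l_k$. (The two polylogarithm prefactors in the definition of $F_\rho$ cancel in the difference, so only the Borel integrals matter.)

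First I would fix $\lambda\in\mathbb{H}_\rho\cap\mathbb{H}_{\rho'}$ and reduce to a thin wedge: since $G(\xi,t)$ has no poles in the open sectors bounded by $l_{k+1},l_k$ or by $l_k,l_{k-1}$, Cauchy's theorem lets me rotate $\rho$ and $\rho'$ towards $l_k$ without changing $\int_\rho$ or $\int_{\rho'}$, so that the wedge between them contains $l_k$ and no other Stokes ray. On such a thin wedge one has $\mathrm{Re}(\xi/\check\lambda)\ge\delta\,|\xi|$ uniformly (this is precisely what $\lambda\in\mathbb{H}_\rho\cap\mathbb{H}_{\rho'}$ guarantees), while $G(\xi,t)$ is at most polynomially bounded along rays avoiding its poles (clear from \rf{Boreltrans}), so the arc at infinity drops out. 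For $\mathrm{Im}(t)>0$ and generic $t$, the poles on $l_k=\BR_{<0}\cdot2\pi\I(t+k)$ are exactly $\xi_{k,n}:=-2\pi\I(t+k)\,n$ for $n\ge1$; moreover the rays $l_j$ rotate clockwise as $j$ increases, which fixes the orientation of $\rho\cup(-\rho')$ so that $F_\rho(\lambda,t)-F_{\rho'}(\lambda,t)=2\pi\I\sum_{n\ge1}\Res_{\xi=\xi_{k,n}}\big(e^{-\xi/\check\lambda}G(\xi,t)\big)$.

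The heart of the computation is local. Near $\xi_{k,n}$ only the $m=\pm n$ summands of \rf{Boreltrans} are singular, and because $e^{-2\pi\I t-\xi_{k,n}/n}=e^{2\pi\I k}=1$ these two summands contribute identically, so one obtains
\[
G(\xi,t)=\frac{1}{(2\pi)^2 n^2}\Big(\frac{\xi_{k,n}}{(\xi-\xi_{k,n})^2}-\frac{1}{\xi-\xi_{k,n}}\Big)+(\text{holomorphic near }\xi_{k,n}).
\]
Multiplying by $e^{-\xi/\check\lambda}$ and extracting the residue at this double pole gives $\Res_{\xi_{k,n}}=-\frac{1}{(2\pi)^2 n^2}\,e^{2\pi\I(t+k)n/\check\lambda}\big(1-2\pi\I(t+k)n/\check\lambda\big)$. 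Since $\partial_{\check\lambda}\big(\check\lambda\,e^{2\pi\I(t+k)n/\check\lambda}\big)=e^{2\pi\I(t+k)n/\check\lambda}\big(1-2\pi\I(t+k)n/\check\lambda\big)$, summing over $n$ and using $2\pi\I\cdot(2\pi)^{-2}\cdot(-1)=(2\pi\I)^{-1}$ turns the residue sum into $\frac{1}{2\pi\I}\partial_{\check\lambda}\big(\check\lambda\,\mathrm{Li}_2(e^{2\pi\I(t+k)/\check\lambda})\big)$, which is the claimed jump.

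For the $-\rho,-\rho'$ statement I would invoke the reflection symmetry $F_\sigma(\lambda,t)=F_{-\sigma}(-\lambda,t)$ (Lemma~\ref{reflectionlemma}) and note that $\partial_{\check\lambda}\big(\check\lambda\,\mathrm{Li}_2(e^{2\pi\I(t+k)/\check\lambda})\big)$ is invariant under simultaneously sending $\check\lambda\mapsto-\check\lambda$ and flipping the sign of the exponent, which produces the $e^{-2\pi\I(t+k)/\check\lambda}$ of the statement; alternatively one simply repeats the residue computation on the poles lying on $-l_k=\BR_{>0}\cdot2\pi\I(t+k)$. Finally, if $\mathrm{Im}(t)<0$ the rays $l_j$ rotate counterclockwise with $j$, which reverses the orientation of $\rho\cup(-\rho')$ and hence exchanges $\rho$ with $\rho'$, exactly as stated. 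The main obstacle I anticipate is the sign and orientation bookkeeping — pinning down the sense in which $l_j$ rotates with $j$ as a function of $\operatorname{sgn}(\mathrm{Im}\,t)$, matching it to the residue-theorem orientation and to the sign generated by the double pole — together with making the genericity hypothesis precise, i.e.\ excluding accidental coincidences $2\pi\I(t+j)m\in l_k$ with $j\ne k$ (which can be dealt with by continuity in $t$).
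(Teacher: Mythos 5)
Your proposal is correct and follows essentially the same route as the paper: the difference of Borel sums is the integral over a Hankel contour around $l_k$, evaluated by summing the (generalized) residues of $e^{-\xi/\check\lambda}G(\xi,t)$ at the double poles $\xi=-2\pi\I(t+k)n$, after which the series over $n$ assembles into $\frac{1}{2\pi\I}\partial_{\check\lambda}\bigl(\check\lambda\,\mathrm{Li}_2(e^{2\pi\I(t+k)/\check\lambda})\bigr)$. Your local expansion of $G$ near $\xi_{k,n}$, the resulting residue, and the orientation bookkeeping all check out against the paper's computation (and your worry about accidental pole coincidences is moot: for $\mathrm{Im}(t)\neq 0$ a comparison of imaginary parts shows the only poles on $l_k$ are the $\xi_{k,n}$).
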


\begin{proof}

Notice that
\begin{equation}
    F_{\rho}(\lambda,t)-F_{\rho'}(\lambda, t)=\int_{\mathcal{H}(l_k)}\mathrm  d\xi\,e^{-\xi/\check \lambda}G(\xi,t)\,,
\end{equation}
where $\mathcal{H}(l_k)$ is a Hankel contour around $l_k =\mathbb{R}_{<0}\cdot 2\pi \I (t + k)$.\\

To compute this, notice that for our range of parameters, $G(\xi,t)$ has double poles at $\xi=2\pi \I m(t+k)$ for all $k\in\Z$ and $m\in \Z\setminus \{0\}$ with generalized residues
\begin{equation}
    \left.\frac{\mathrm d}{\mathrm d\xi}(e^{-\xi/\check \lambda}(\xi-2\pi \I m(t+k))^2G(\xi,t))\right|_{\xi=2\pi \I m(t+k)} =  -\frac{e^{-2\pi \I m (t + k)/\check \lambda}}{(2\pi)^2 m^2}\left(1+\frac{2\pi \I m(t+k)}{\check \lambda}\right).
\end{equation}
In particular, we have that
\begin{align*}
    \int_{\mathcal{H}(l_k)}\mathrm d\xi\,e^{-\xi/\check \lambda}G(\xi,t)&=2\pi \I \sum_{m=-1}^{-\infty}\left.\frac{\mathrm d}{\mathrm d\xi}(e^{-\xi/\check \lambda}(\xi-2\pi \mathrm im(t+k))^2G(\xi,t))\right|_{\xi=2\pi \I m(t+k)}\\
    &=-\frac{\I}{2\pi}\sum_{m=1}^{\infty}\frac{ e^{2\pi \I m (t + k)/\check \lambda}}{ m^2}\left(1-\frac{2\pi \I m(t+k)}{\check \lambda}\right)\\
    &=-\frac{\I }{2\pi} \Big(\mathrm{Li}_2\big(e^{2\pi \I(t+k)/\check \lambda}\big)+\frac{2\pi \I(t+k)}{\check \lambda}\log\big(1-e^{2\pi \mathrm i(t+k)/\check \lambda}\big)\Big)\\
    &=\frac{1}{2\pi \mathrm i}\partial_{\check\lambda}\Big(\check{\lambda}\mathrm{Li}_2\big(e^{ 2\pi \mathrm i(t+k)/\check \lambda})\Big)\;, \numberthis
    \end{align*}
where we have used the series representation of $\mathrm{Li}_s(z)$ for $s=1,2$; and the fact that for $\lambda$ in a sufficiently small sector containing $l_k$, we have $|e^{2\pi \mathrm i(t+k)/\check{\lambda}}|<1$.\\

A similar argument follows for the rest of the cases in the statement of the proposition.
\end{proof}

\subsubsection{Stokes jumps of the other Borel transform}
Recall the expression for the Borel transform which was obtained previously in \cite{Pasquetti:2009jg}, given in \eqref{diffBoreltrans}:
\begin{equation}
     G(\xi,t)=\sum_{k\in\Z} \frac{1}{ \xi}\left(\frac{(k-t)^2}{\xi^2} - \frac{e^{\xi/(k-t)}}{(e^{\xi/(k-t)} -1)^2}-\frac{1}{12}\right),\\
\end{equation}

Similarly to the previous discussion, if $y_m=\xi-2\pi \mathrm i m(t+k)$ near the pole given by $\xi=2\pi \mathrm i m (t+k)$, then by Taylor expanding the integrand near $y_m=0$ we obtain
\begin{equation}
   e^{-\xi/\check{\lambda}} G(\xi,t)= e^{-2\pi \mathrm i m(t+k)/\check{\lambda}} \left( \frac{1}{y_m^2} \frac{\mathrm i (t+k)}{2\pi m}+\frac{1}{y_m}\left( - \frac{1}{4\pi^2 m^2 }- \frac{\mathrm  i}{2\pi m\check{\lambda}} (t+k)\right) + \mathcal{O}(1)\right) \,.
\end{equation}
Hence, by following the argument of Proposition \ref{Stokesjumps}, we obtain the same Stokes jumps. 

\subsection{The limits  \texorpdfstring{$\lim_{k\to \pm \infty}F_{\rho_k}(\lambda,t)$}{TEXT} and \texorpdfstring{$F_{\mathrm{GV}}(\lambda,t)$}{Text}}
To finish the proof of Theorem \ref{theorem1}, we study the limits of $F_{\rho}(\lambda,t)$, discussed in point (iv). 

\begin{prop}\label{rblimit1} Let $\rho_k$ denote any ray between the Stokes rays $l_{k}$ and $l_{k-1}$. Furthermore, assume that $0<\mathrm{Re}(t)<1$, $\mathrm{Im}(t)>0$, $\mathrm{Re}(\lambda)>0$, $\mathrm{Im}(\lambda)<0$, and $\mathrm{Re}(t) < \mathrm{Re} (\check{\lambda}+1)$. Then
\begin{equation}
    \lim_{k\to \infty}F_{\rho_k}(\lambda,t)=\sum_{k=1}^\infty\frac{e^{2\pi \mathrm i k t}}{k\big(2\sin\big(\frac{\lambda k}{2}\big)\big)^2}=F_{\mathrm{GV}}(\lambda,t)\,.
\end{equation}

Furthermore, we can write the sum of the Stokes jumps along $l_k$ for $k\geq 0$ as
\begin{equation}
    \sum_{k=0}^{\infty}\phi_{l_k}(\lambda,t)=\frac{1}{2\pi i }\partial_{\lambda}\Big(\lambda \sum_{l=1}^{\infty}\frac{w^l}{l^2(1-\widetilde{q}^l)}\Big), \;\;\;\; w:=e^{2\pi \mathrm it/\check\lambda},\;\; \widetilde{q}:=e^{2\pi \mathrm i/\check{\lambda}}\;.
\end{equation}

\end{prop}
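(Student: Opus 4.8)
The plan is to reduce everything to a residue evaluation of the integral \rf{FR+-def}. Note first that $F_\rho(\la,t)$ depends on $\rho$ only through the sector it lies in: moving $\rho$ within a sector changes the contour without crossing poles of $G(\xi,t)$, so by Cauchy's theorem the integral is unchanged on the overlap of the half-planes $\mathbb{H}_\rho$. Proposition \ref{Stokesjumps} then gives $F_{\rho_{k+1}}(\la,t)-F_{\rho_k}(\la,t)=\phi_{l_k}(\la,t)$ for every $k\ge 0$, and since $\BR_{>0}$ lies in the sector bounded by $l_0$ and $l_{-1}$ when $0<\mathrm{Re}(t)<1$ and $\mathrm{Im}(t)>0$, we have $F_{\rho_0}(\la,t)=F_{\BR_{>0}}(\la,t)$. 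Telescoping and letting $k\to\infty$ gives $\lim_{k\to\infty}F_{\rho_k}(\la,t)=F_{\BR_{>0}}(\la,t)+\sum_{k=0}^\infty\phi_{l_k}(\la,t)$, the series converging because $e^{2\pi\I(t+k)/\check\la}=w\,\widetilde{q}^{k}$ decays geometrically in $k$ (here $|\widetilde{q}|<1$ as $\mathrm{Im}(\la)<0$). So it suffices to compute the sum of the jumps and to evaluate $F_{\BR_{>0}}$.

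\textbf{Sum of the Stokes jumps.} For the second displayed formula I would insert $\phi_{l_k}(\la,t)=\frac{1}{2\pi\I}\pa_{\check\la}\big(\check\la\,\mathrm{Li}_2(w\,\widetilde{q}^{k})\big)$ from Proposition \ref{Stokesjumps}, expand $\mathrm{Li}_2(z)=\sum_{l\ge 1}z^l/l^2$, and interchange the sums over $k$ and $l$ — legitimate since $|w|<1$ and $|\widetilde{q}|<1$ under the hypotheses, these following from $\mathrm{Re}(t)>0$, $\mathrm{Im}(t)>0$, $\mathrm{Re}(\la)>0$, $\mathrm{Im}(\la)<0$. Summing the geometric series $\sum_{k\ge 0}\widetilde{q}^{kl}=(1-\widetilde{q}^{l})^{-1}$ gives $\sum_{k\ge 0}\mathrm{Li}_2(w\,\widetilde{q}^{k})=\sum_{l\ge 1}w^l/(l^2(1-\widetilde{q}^{l}))$, and since $\pa_{\check\la}(\check\la\,\cdot\,)=\pa_\la(\la\,\cdot\,)$ under $\check\la=\la/2\pi$ this is exactly the asserted formula.

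\textbf{Residue evaluation of $F_{\BR_{>0}}$.} By Proposition \ref{BorelsumR} together with \rf{eq:F-nonpert} and Proposition \ref{intrep2}, $F_{\BR_{>0}}(\la,t)$ equals the integral $-\int_{\BR+\I 0^+}\frac{\mathrm du}{8u}\,\frac{e^{u(t-1/2)}}{\sinh(u/2)\,(\sinh(\check\la u/2))^2}$ of \rf{FR+-def}, first on the sector bounded by $l_0$ and $l_{-1}$ and then, by analytic continuation, on the full parameter range of the proposition. Under the hypotheses the integrand decays at infinity in the upper half-plane, so I would close the contour upwards; the enclosed poles are the simple poles at $u=2\pi\I n$, $n\ge 1$ (from $\sinh(u/2)$), and the double poles at $u=2\pi\I m/\check\la$, $m\ge 1$ (from $(\sinh(\check\la u/2))^2$; these lie in the upper half-plane precisely because $\mathrm{Re}(\la)>0$). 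Using $\sinh(\pi\I\check\la n)=\I\sin(\la n/2)$ and $e^{2\pi\I n(t-1/2)}=(-1)^n e^{2\pi\I nt}$, the simple residues contribute $-2\pi\I\sum_{n\ge 1}\Res_{u=2\pi\I n}=\sum_{n\ge 1}\frac{e^{2\pi\I nt}}{n(2\sin(\la n/2))^2}=F_{\mathrm{GV}}(\la,t)$. For the double poles, the identity $\frac{e^{u(t-1/2)}}{u\sinh(u/2)}=\frac{2e^{ut}}{u(e^u-1)}$ reduces $\Res_{u=2\pi\I m/\check\la}$ to $\check\la^{-2}\,\frac{\mathrm d}{\mathrm du}\big(\frac{e^{ut}}{u(e^u-1)}\big)$ evaluated at $u=2\pi\I m/\check\la$; converting the $u$-derivative into a $\check\la$-derivative of $\sum_{m\ge 1}\frac1m\,\frac{e^{ut}}{u(e^u-1)}|_{u=2\pi\I m/\check\la}$ and resumming gives $-2\pi\I\sum_{m\ge 1}\Res_{u=2\pi\I m/\check\la}=-\frac{1}{2\pi\I}\pa_\la\big(\la\sum_{l\ge 1}\frac{w^l}{l^2(1-\widetilde{q}^{l})}\big)$, which by the previous paragraph equals $-\sum_{k\ge 0}\phi_{l_k}(\la,t)$. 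Hence $F_{\BR_{>0}}(\la,t)=F_{\mathrm{GV}}(\la,t)-\sum_{k\ge 0}\phi_{l_k}(\la,t)$, and combining with the telescoping identity gives $\lim_{k\to\infty}F_{\rho_k}(\la,t)=F_{\mathrm{GV}}(\la,t)$; the coincidence with $\lim_{k\to\infty}F_{-\rho_k}(-\la,t)$ follows from the reflection symmetry $F_\rho(\la,t)=F_{-\rho}(-\la,t)$ of Lemma \ref{reflectionlemma}.

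\textbf{Main obstacle.} The delicate step is justifying the contour closing. Along the imaginary axis, and along the ray where $\check\la u$ is purely imaginary, the integrand of \rf{FR+-def} is only bounded and oscillatory (vanishing at the poles), and the two pole families $\{2\pi\I n\}$ and $\{2\pi\I m/\check\la\}$ interlace and accumulate toward those rays; one therefore has to close along an expanding sequence of contours threading between the poles and estimate each contribution carefully, as in \cite{garoufalidis2020resurgence}. A minor additional point is the analytic continuation used above to pass from the sector where Proposition \ref{BorelsumR} literally applies to the full parameter range of the present proposition, which is legitimate because both \rf{FR+-def} and $F_{\BR_{>0}}$ are holomorphic there.
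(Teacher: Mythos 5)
Your proposal is correct and follows essentially the same route as the paper's proof: telescoping the Stokes jumps from Proposition \ref{Stokesjumps} and summing them via the series expansion of $\mathrm{Li}_2$ with $|w|,|\widetilde q|<1$, then evaluating $F_{\BR_{>0}}=F_{\mathrm{np}}$ by closing the contour of \rf{FR+-def} in the upper half-plane and matching the double-pole contributions at $u=2\pi\I m/\check\la$ against (minus) the jump sum while the simple poles at $u=2\pi\I n$ reproduce $F_{\mathrm{GV}}$. The contour-closing and sum-interchange points you flag are handled at the same level of detail in the paper, so there is nothing further to add.
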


\begin{proof}
By Proposition \ref{Stokesjumps} we find that
\begin{equation}
    F_{\rho_k}-F_{\rho_{k+1}}=\frac{\mathrm i}{2\pi} \Big(\mathrm{Li}_2\big(e^{2\pi \mathrm i(t+k)/\check \lambda}\big)+\log(e^{2\pi \mathrm i\frac{t+k}{\check \lambda }})\log\big(1-e^{2\pi \mathrm i(t+k)/\check \lambda}\big)\Big)\,.
\end{equation}
Denoting $w=e^{2\pi \mathrm it/\check{\lambda}}$ and $\widetilde{q}=e^{2\pi \mathrm i/\check{\lambda}}$, we find
\begin{equation}
    \begin{split}
    F_{\rho_0}(\lambda,t)-\lim_{k\to \infty}F_{\rho_k}(\lambda,t)&=\sum_{k=0}^{\infty}F_{\rho_k}(\lambda,t)-F_{\rho_{k+1}}(\lambda,t)\\
    &=\frac{\mathrm i}{2\pi}\sum_{k=0}^{\infty}\Big(\mathrm{Li}_2\big(w\widetilde{q}^k\big)+\log\big(w\widetilde{q}^k\big)\log\big(1-w\widetilde{q}^k\big)\Big)\,.
    \end{split}
\end{equation}
We now use the following identities:
\begin{subequations}
\begin{align}
&\sum_{k=0}^\infty \log(1-w\widetilde{q}^k)=-\sum_{l=1}^\infty\frac{1}{l}\frac{w^l}{1-\widetilde{q}^l} \label{id1}\,,\\
&\sum_{k=0}^\infty k\log(1-w\widetilde{q}^k)=-\sum_{l=1}^\infty\frac{\widetilde{q}^l}{l}\frac{w^l}{(1-\widetilde{q}^l)^2}
\label{id2}\,,\\
&\sum_{k=0}^\infty \mathrm{Li}_2(w\widetilde{q}^k)=\sum_{l=1}^\infty\frac{1}{l^2}\frac{w^l}{1-\widetilde{q}^l}\,.
\label{id3}
\end{align}
\end{subequations}
The first two identities (\ref{id1}) and (\ref{id2}) are easily established by 
using the Taylor expansion of the logarithm function; using that $|\widetilde{q}|<1$ and $|w|<1$; and exchanging the two 
summations. In order to verify (\ref{id3}), one can first act on it with $w\frac{\mathrm d}{\mathrm dw}$. 
The left side of the resulting equation is easily seen to be equal to 
\[
-\sum_{k=0}^\infty\log(1-w\widetilde{q}^k)=\sum_{l=1}^\infty\frac{1}{l}\frac{w^l}{1-\widetilde{q}^l}\,,
\]
using (\ref{id1}). It follows that (\ref{id3}) holds up to addition of a term which is 
constant with respect to  $w$.  In order to fix this freedom, it 
suffices to note that (\ref{id3}) holds for $w=0$.\\

Using the previous identities, we obtain
\begin{equation}
    \begin{split}
        F_{\rho_0}(\lambda,t)-\lim_{k\to \infty}F_{\rho_k}(\lambda,t)&=\frac{\mathrm i}{2\pi}\sum_{l=1}^\infty\frac{w^l}{l(1-\widetilde{q}^l)}\bigg(\frac{1}{l}-\frac{\widetilde{q}^l\log \widetilde{q}}{1-\widetilde{q}^l}-\log w\bigg)\\
        &=-\frac{\mathrm i}{2\pi}\sum_{l=1}^\infty\frac{\partial}{\partial l}\bigg(\frac{w^l}{l(1-\widetilde{q}^l)}\bigg)\,.
    \end{split}
\end{equation}
Now notice that under our assumptions on $t$ and $\lambda$, we have $F_{\rho_0}=F_{\mathrm{np}}$ by Proposition \ref{BorelsumR}. We now show that $F_{\rho_0}$ admits the following representation as sum over residues:
\begin{equation}
 F_{\rho_0}(\lambda,t)=\frac{1}{2\pi \mathrm i} \sum_{l=1}^\infty \frac{\partial}{\partial l}\bigg(\frac{w^l}{l(1-\widetilde{q}^l)}\bigg)
 +\sum_{k=1}^\infty\frac{e^{2\pi \mathrm i k t}}{k\big(2\sin\big(\frac{\lambda k}{2}\big)\big)^2}\,.
\end{equation}
In order to see this, let us recall that by Proposition \ref{intrep2}, we have
\begin{align*}
F_{\mathrm{np}}(\lambda,t)&=-\int_{\mathbb{R}+\mathrm i0^+}\frac{\mathrm du}{8u}\,\frac{e^{u(t-\frac{1}{2})}}{\sinh(u/2)(\sinh(\lambda  u/4\pi))^2}=
\int_{\mathbb{R}+\mathrm i0^+}\frac{\mathrm du}{u}\,\frac{e^{ut}}{1-e^u}\frac{1}{(2\sinh(\lambda u/4\pi))^2}\,.
\end{align*}
The integrand  has two series of poles, one 
at $u=u_l:=(2\pi)^2\frac{\mathrm i}{\lambda}l$, $l\in \mathbb{Z}$ and the other at
$u=\widetilde{u}_k:=2\pi \mathrm i k$, $k\in \mathbb{Z}$.  We can compute the previous integral by closing the contour in the upper
half-plane. The contributions from  the poles at $u=u_l$ are calculated as 
\[
\begin{aligned}
2\pi \mathrm i \left(\frac{4\pi}{2\lambda}\right)^2
\frac{\partial}{\partial u}\frac{e^{ut}}{(1-e^u)u}\bigg|_{u=(2\pi)^2\frac{\mathrm i}{\lambda}l}
&=2\pi \mathrm i \,\frac{(2\pi)^2}{\lambda^2}
\bigg(\frac{\lambda}{\mathrm i(2\pi)^2}\bigg)^2\frac{\partial}{\partial l}\frac{w^l}{(1-\widetilde{q}^l)l}\\
&=\frac{1}{2\pi \mathrm i}
\frac{\partial}{\partial l}\frac{w^l}{(1-\widetilde{q}^l)l}\,,
\end{aligned}
\]
while the contributions of the poles at $u=2\pi \mathrm i k$ give the remaining term.\\

In particular, we conclude that
\begin{align*}
  %  \begin{split}
    \lim_{k\to \infty}F_{\rho_k}(\lambda,t)&=\lim_{k\to \infty}(F_{\rho_k}(\lambda,t)-F_{\rho_0}(\lambda,t))+F_{\rho_0}(\lambda,t)\\
    &=-\frac{1}{2\pi \mathrm i}\sum_{l=1}^\infty\frac{\partial}{\partial l}\bigg(\frac{w^l}{l(1-\widetilde{q}^l)}\bigg) +\frac{1}{2\pi \mathrm i} \sum_{l=1}^\infty \frac{\partial}{\partial l}\bigg(\frac{w^l}{l(1-\widetilde{q}^l)}\bigg)
 +\sum_{k=1}^\infty\frac{e^{2\pi \mathrm i k t}}{k\big(2\sin\big(\frac{\lambda k}{2}\big)\big)^2}\\
 &=\sum_{k=1}^\infty\frac{e^{2\pi \mathrm i k t}}{k\big(2\sin\big(\frac{\lambda k}{2}\big)\big)^2}\,. \numberthis
%    \end{split}
\end{align*}
The last statement follows easily by noticing that 
\begin{equation}
    \frac{\mathrm i}{2\pi}\sum_{l=1}^\infty\frac{\partial}{\partial l}\bigg(\frac{w^l}{l(1-\widetilde{q}^l)}\bigg)=\frac{1}{2\pi \mathrm i }\partial_{\lambda}\Big(\lambda \sum_{l=1}^{\infty}\frac{w^l}{l^2(1-\widetilde{q}^l)}\Big)\,.
\end{equation}
\end{proof}

To study the other limits to the imaginary rays, we use the following lemma:

\begin{lem}\label{reflectionlemma}
For $\rho$ any ray not in $\{\pm l_k\}\cup \{\pm l_{\infty}\}$ and $\lambda \in \mathbb{H}_\rho$, we have
\begin{equation}
    F_{\rho}(\lambda,t)=F_{-\rho}(-\lambda,t)\,.
\end{equation}
\end{lem}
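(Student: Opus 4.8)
The plan is to reduce the identity to the single structural fact that the Borel transform $G(\xi,t)$ is an \emph{odd} function of $\xi$, and then to change variables $\xi\mapsto-\xi$ in the Borel integral. First one checks that the non-integral part behaves correctly: since $Q=e^{2\pi\I t}$ does not involve $\lambda$ and the $\lambda$-dependence of $\frac{1}{\lambda^2}\mathrm{Li}_3(Q)+\frac{B_2}{2}\mathrm{Li}_1(Q)$ is only through $\lambda^2$, this term is unchanged under $\lambda\mapsto-\lambda$; and replacing $\lambda$ by $-\lambda$ replaces $\check\lambda=\lambda/2\pi$ by $-\check\lambda$, so that $e^{-\xi/\check\lambda}$ becomes $e^{\xi/\check\lambda}$. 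One also notes that $\mathbb{H}_{-\rho}=-\mathbb{H}_\rho$, so $-\lambda\in\mathbb{H}_{-\rho}$ precisely when $\lambda\in\mathbb{H}_\rho$, and that the excluded set $\{\pm l_k\}_{k\in\Z}\cup\{\pm l_\infty\}$ is invariant under $\rho\mapsto-\rho$; hence $F_{-\rho}(-\lambda,t)$ and $F_\rho(\lambda,t)$ are defined on exactly the same set of pairs $(\rho,\lambda)$, and it suffices to prove
\[
\int_{\rho}\mathrm d\xi\,e^{-\xi/\check\lambda}G(\xi,t)=\int_{-\rho}\mathrm d\xi\,e^{\xi/\check\lambda}G(\xi,t).
\]

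Next I would prove that $G(\xi,t)$ is odd in $\xi$. The cleanest argument is at the level of power series: by \rf{formsumborel} the formal series $\Phi(\check\lambda,t)$ contains only the monomials $\check\lambda^{2g-2}$ with $g\ge2$, and the Borel transform sends $\check\lambda^{2g-2}=\check\lambda^{(2g-3)+1}$ to $\xi^{2g-3}/(2g-3)!$, whose exponent $2g-3$ is odd; thus $G(\xi,t)$ is an odd power series near $\xi=0$, and therefore so is its meromorphic continuation, which by Proposition \ref{Borelhadprod} is given by \rf{Boreltrans}. If one prefers, the oddness can be read off \rf{Boreltrans} directly: for each $m$ the bracket $p_m(\xi):=\frac{1}{1-e^{-2\pi\I t+\xi/m}}-\frac{1}{1-e^{-2\pi\I t-\xi/m}}$ satisfies $p_m(-\xi)=-p_m(\xi)$, hence $p_m'$ is even and $\frac{\xi}{2}p_m'$ is odd, so $\bigl(1+\frac{\xi}{2}\partial_\xi\bigr)p_m$ is odd, and summation over $m$ preserves this.

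Finally, in the right-hand integral I substitute $\xi=-\eta$. Since $\xi\mapsto-\xi$ carries the ray $-\rho$, oriented from $0$ to $\infty$, onto the ray $\rho$, oriented from $0$ to $\infty$, with $\mathrm d\xi=-\mathrm d\eta$, we obtain
\[
\int_{-\rho}\mathrm d\xi\,e^{\xi/\check\lambda}G(\xi,t)=-\int_{\rho}\mathrm d\eta\,e^{-\eta/\check\lambda}G(-\eta,t)=\int_{\rho}\mathrm d\eta\,e^{-\eta/\check\lambda}G(\eta,t),
\]
the last step using the oddness of $G$. Combining this with the observations of the first paragraph gives $F_{-\rho}(-\lambda,t)=F_\rho(\lambda,t)$. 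I do not expect a genuine obstacle here: the only points requiring care are the bookkeeping of the contour orientation together with the substitution $\check\lambda\mapsto-\check\lambda$, and making sure oddness is invoked for the analytically continued $G$ (which is immediate from either representation). Convergence of both integrals on the common domain is already built into the hypothesis $\lambda\in\mathbb{H}_\rho$, so no new estimate is needed.
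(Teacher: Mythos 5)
Your proof is correct and follows essentially the same route as the paper: the key fact is the oddness $G(-\xi,t)=-G(\xi,t)$ (which the paper simply asserts and you verify both from the power series and from the explicit formula), followed by the substitution $\xi\mapsto-\xi$ in the Borel integral together with the observation that the polynomial-in-$1/\lambda^2$ prefactor is even in $\lambda$. No gaps; your version just spells out the bookkeeping the paper leaves implicit.
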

\begin{proof}
The main thing to notice is that
\begin{equation}
    G(\xi,t)=-G(-\xi,t)\,.
\end{equation}
Using the earlier relation, we obtain
\begin{equation}
    \begin{split}
    F_{\rho}(\lambda,t)&=\frac{1}{\lambda^2} \mathrm{Li}_{3}(Q)+ \frac{B_2}{2}\mathrm{Li}_{1}(Q)-\int_{\rho}\mathrm  d\xi\, e^{-\xi/\check{\lambda}} G(-\xi,t)\\
    &=\frac{1}{\lambda^2} \mathrm{Li}_{3}(Q)+ \frac{B_2}{2}\mathrm{Li}_{1}(Q)+\int_{-\rho}\mathrm  d\xi\, e^{\xi/\check{\lambda}} G(\xi,t)\\
    &=F_{-\rho}(-\lambda,t)\,.
    \end{split}
\end{equation}
\end{proof}
As an immediate corollary from Proposition \ref{rblimit1} and Lemma \ref{reflectionlemma}, we obtain:

\begin{cor}\label{lblimit2} With the same notation as in Proposition \ref{rblimit1}, assume that $0<\text{Re}(t)<1$, $\mathrm{Im}(t)>0$, $\mathrm{Re}(\lambda)<0$, $\mathrm{Im}(\lambda)>0$ and $\mathrm{Re}( t) < \mathrm{Re} (-\check{\lambda}+1)$. Then
\begin{equation}
    \lim_{k\to \infty}F_{-\rho_k}(\lambda,t)=F_{\mathrm{GV}}(\lambda,t)\,.
\end{equation}
\end{cor}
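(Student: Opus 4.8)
The plan is to deduce the corollary immediately from Proposition~\ref{rblimit1} and the reflection identity of Lemma~\ref{reflectionlemma}, so that no new analytic estimate is needed.

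First I would rewrite the left-hand side using Lemma~\ref{reflectionlemma}. For $k$ large the ray $\rho_k$ lies strictly between the consecutive Stokes rays $l_k$ and $l_{k-1}$, hence in particular $\rho_k\notin\{\pm l_j\}_{j\in\mathbb{Z}}\cup\{\pm l_\infty\}$, so the lemma applied to $\rho_k$ with coupling $-\lambda$ gives $F_{\rho_k}(-\lambda,t)=F_{-\rho_k}(\lambda,t)$ on the overlap of the half-planes $\mathbb{H}_{\rho_k}$ and $\mathbb{H}_{-\rho_k}$. Consequently $\lim_{k\to\infty}F_{-\rho_k}(\lambda,t)=\lim_{k\to\infty}F_{\rho_k}(-\lambda,t)$, and it remains to evaluate the limit on the right.

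Next I would check that the hypotheses of Proposition~\ref{rblimit1} hold after the substitution $\lambda\mapsto-\lambda$, under which $\check\lambda\mapsto-\check\lambda$. The conditions $0<\mathrm{Re}(t)<1$ and $\mathrm{Im}(t)>0$ are unchanged; the requirement $\mathrm{Re}(-\lambda)>0$ and $\mathrm{Im}(-\lambda)<0$ is exactly the corollary's assumption $\mathrm{Re}(\lambda)<0$, $\mathrm{Im}(\lambda)>0$; and $\mathrm{Re}(t)<\mathrm{Re}(-\check\lambda+1)$ is the remaining assumption. Since $-\lambda$ then lies in the fourth quadrant while the rays $\rho_k$ accumulate on $l_\infty=\I\mathbb{R}_{<0}$, one has $-\lambda\in\mathbb{H}_{\rho_k}$ for all sufficiently large $k$, so the limit is taken over well-defined quantities; Proposition~\ref{rblimit1} then gives $\lim_{k\to\infty}F_{\rho_k}(-\lambda,t)=F_{\mathrm{GV}}(-\lambda,t)$. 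Finally, since $F_{\mathrm{GV}}(\lambda,t)=\sum_{k\ge1}e^{2\pi\I kt}\big/\bigl(k\,\bigl(2\sin(\lambda k/2)\bigr)^2\bigr)$ depends on $\lambda$ only through $\sin^2(\lambda k/2)$ and sine is odd, $F_{\mathrm{GV}}(-\lambda,t)=F_{\mathrm{GV}}(\lambda,t)$, which completes the proof.

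I do not expect any genuine obstacle here: the entire content is matching the inequalities defining the admissible region for $\lambda$ with those of Proposition~\ref{rblimit1}, and noting that the reflection $\xi\mapsto-\xi$ carries the ray $\rho_k$ to $-\rho_k$ with no shift in the index $k$, so that the limit $k\to\infty$ on both sides runs through the same family of rays.
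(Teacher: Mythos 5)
Your proof is correct and follows exactly the route the paper intends: the paper derives this as an ``immediate corollary'' of Proposition \ref{rblimit1} and Lemma \ref{reflectionlemma}, which is precisely the combination you spell out (including the evenness of $F_{\mathrm{GV}}$ in $\lambda$, which is implicit in the paper's statement of Theorem \ref{theorem1}(iv)). No gaps.
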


\begin{prop}\label{lblimit1}
With the same notation as in Proposition \ref{rblimit1}, assume that $0<\text{Re}(t)<1$, $\text{Im}(t)>0$, $\text{Re}(\lambda)<0$, $\text{Im}(\lambda)<0$, $\mathrm{Re}\, t < \mathrm{Re} (-\check{\lambda}+1)$ and that $|w^{-1}|<1$. Then 
\begin{equation}
    \begin{split}
\lim_{k\to -\infty}F_{-\rho_k}(\lambda,t)=F_{\mathrm{GV}}(\lambda,t)
    \end{split}
\end{equation}
\end{prop}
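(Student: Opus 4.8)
The plan is to mirror the proof of Proposition \ref{rblimit1} on the opposite side of the Borel plane. First I would invoke the reflection identity of Lemma \ref{reflectionlemma}, $F_{-\rho_k}(\lambda,t)=F_{\rho_k}(-\lambda,t)$, and set $\mu:=-\lambda$, so that the hypotheses become $\mathrm{Re}(\mu)>0$, $\mathrm{Im}(\mu)>0$, $\mathrm{Re}(t)<\mathrm{Re}(\check\mu+1)$, and (since $\check\mu=-\check\lambda$) $|e^{2\pi\I t/\check\mu}|<1$, which is the hypothesis $|w^{-1}|<1$. Since $\sin^2$ is even we also have $F_{\mathrm{GV}}(-\mu,t)=F_{\mathrm{GV}}(\mu,t)$, so it suffices to prove $\lim_{k\to-\infty}F_{\rho_k}(\mu,t)=F_{\mathrm{GV}}(\mu,t)$ under these assumptions. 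Because $0<\mathrm{Re}(t)<1$ and $\mathrm{Im}(t)>0$, each ray $\rho_j$ with $j\le 0$ has argument in $[0,\pi/2)$, so $\mu\in\mathbb H_{\rho_j}$ for every such $j$; in particular one may take $\rho_0=\mathbb R_{>0}$, and the left-hand side is a genuine limit of the values $F_{\rho_k}(\mu,t)$.

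Next I would telescope the Stokes jumps towards $k=-\infty$. By Proposition \ref{Stokesjumps}, for $j\le-1$,
\[
\phi_{l_j}(\mu,t)=F_{\rho_{j+1}}(\mu,t)-F_{\rho_j}(\mu,t)=\frac{1}{2\pi\I}\,\partial_{\check\mu}\Big(\check\mu\,\mathrm{Li}_2\big(e^{2\pi\I(t+j)/\check\mu}\big)\Big),
\]
and the hypothesis $|e^{2\pi\I t/\check\mu}|<1$ together with $\mathrm{Im}(\check\mu)>0$ forces $|e^{2\pi\I(t+j)/\check\mu}|<1$ for every $j\le 0$, decaying exponentially as $j\to-\infty$; hence $\sum_{j\le-1}\phi_{l_j}(\mu,t)$ converges absolutely and the telescoping identity gives
\[
\lim_{k\to-\infty}F_{\rho_k}(\mu,t)=F_{\rho_0}(\mu,t)-\sum_{j=-\infty}^{-1}\phi_{l_j}(\mu,t).
\]
Writing $\zeta:=e^{2\pi\I t/\check\mu}$ and $\eta:=e^{2\pi\I/\check\mu}$ (so $|\zeta|<1$ and $|\eta|>1$) and setting $j=-n$, the last sum equals $\frac{1}{2\pi\I}\partial_{\check\mu}\big(\check\mu\sum_{n\ge1}\mathrm{Li}_2(\zeta\eta^{-n})\big)$; applying the identity \rf{id3} with $\eta^{-1}$ (of modulus $<1$) in place of $\widetilde q$ and removing the $n=0$ term yields $\sum_{n\ge1}\mathrm{Li}_2(\zeta\eta^{-n})=-\sum_{l\ge1}\frac{\zeta^l}{l^2(1-\eta^l)}$, so that
\[
\sum_{j=-\infty}^{-1}\phi_{l_j}(\mu,t)=-\frac{1}{2\pi\I}\,\partial_{\check\mu}\Big(\check\mu\sum_{l\ge1}\frac{\zeta^l}{l^2(1-\eta^l)}\Big).
\]

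It remains to identify $F_{\rho_0}(\mu,t)$. Since $\rho_0=\mathbb R_{>0}$ lies between $l_0$ and $l_{-1}$, and since for $0<\mathrm{Re}(t)<1$ both $F_{\mathbb R_{>0}}(\mu,t)$ and $F_{\mathrm{np}}(\mu,t)=\log G_3(t\,|\,\check\mu,1)$ are analytic in $\mu$ on $\{\mathrm{Re}(\mu)>0\}$ (the latter because $G_3$ is there regular and, for $0<\mathrm{Re}(t)<1$, nonvanishing by Proposition \ref{intrep2}) and coincide on the subsector where Proposition \ref{BorelsumR} applies, one gets $F_{\rho_0}(\mu,t)=F_{\mathrm{np}}(\mu,t)$ throughout our region. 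I would then rerun the residue computation from the proof of Proposition \ref{rblimit1} on the representation $F_{\mathrm{np}}(\mu,t)=\int_{\mathbb R+\I0^+}\frac{\mathrm du}{u}\,\frac{e^{ut}}{1-e^u}\,\frac{1}{(2\sinh(\mu u/4\pi))^2}$, closing the $u$-contour in the upper half-plane: the enclosed poles are the double poles $u_n=4\pi^2\I n/\mu$ with $n\ge1$ and the simple poles $2\pi\I k$ with $k\ge1$ (this uses only $\mathrm{Re}(\mu)>0$), and the large semicircle drops out under $0<\mathrm{Re}(t)$, $\mathrm{Re}(t)<\mathrm{Re}(\check\mu+1)$, $\mathrm{Im}(t)>0$. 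Summing the residues reproduces, verbatim as in Proposition \ref{rblimit1},
\[
F_{\mathrm{np}}(\mu,t)=\frac{1}{2\pi\I}\sum_{l\ge1}\partial_l\Big(\frac{\zeta^l}{l(1-\eta^l)}\Big)+F_{\mathrm{GV}}(\mu,t)=-\frac{1}{2\pi\I}\,\partial_{\check\mu}\Big(\check\mu\sum_{l\ge1}\frac{\zeta^l}{l^2(1-\eta^l)}\Big)+F_{\mathrm{GV}}(\mu,t),
\]
the last step being the identity $\partial_{\check\mu}\big(\check\mu\sum_l\frac{\zeta^l}{l^2(1-\eta^l)}\big)=-\sum_l\partial_l\big(\frac{\zeta^l}{l(1-\eta^l)}\big)$ used at the end of Proposition \ref{rblimit1}. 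Combining with the telescoping formula, the two $\partial_{\check\mu}(\check\mu\sum_l\cdots)$ terms cancel and $\lim_{k\to-\infty}F_{\rho_k}(\mu,t)=F_{\mathrm{GV}}(\mu,t)$; undoing the reflection gives the assertion.

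The main obstacle is the third step: justifying that the contour argument of Proposition \ref{rblimit1}, carried out there for $\mu$ in the fourth quadrant, survives unchanged for $\mu$ in the first quadrant. Identifying the swept poles is immediate since it depends only on $\mathrm{Re}(\mu)>0$, but the vanishing of the arc at infinity must be rechecked quadrant by quadrant: near $\arg u=0$ the growth $e^{u\,\mathrm{Re}(t)}$ has to be absorbed by the joint decay of $(1-e^u)^{-1}\sim-e^{-u}$ and $\sinh^{-2}(\mu u/4\pi)$, and it is exactly here that the hypothesis $\mathrm{Re}(t)<\mathrm{Re}(\check\mu+1)$ enters, while for the remaining arguments of $u$ the factor $e^{ut}$ decays on its own because $\mathrm{Im}(t)>0$. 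One cannot shortcut this by analytically continuing the residue formula of Proposition \ref{rblimit1}, since $F_{\mathrm{GV}}$ does not continue across $\mathbb R$ and the fourth- and first-quadrant regions lie in different components of $\mathbb C\setminus\mathbb R$. A last, purely bookkeeping, issue is to track which of the closed-form identities ($F_{\rho_0}=F_{\mathrm{np}}$, the jump formula of part (iii), and \rf{id3}) hold literally and which only after analytic continuation in $\mu$; each such extension is routine but should be stated.
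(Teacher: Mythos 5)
Your proof is correct and follows essentially the same route as the paper: reflect to $\mu=-\lambda$ via Lemma \ref{reflectionlemma}, telescope the Stokes jumps toward $k=-\infty$ using the identity \rf{id3}, and compare with the residue expansion of $F_{\mathrm{np}}$ carried over from Proposition \ref{rblimit1}. The only differences are organizational — applying \rf{id3} with nome $\eta^{-1}$ makes your two infinite sums cancel exactly, whereas the paper carries a leftover $\mathrm{Li}_2(w^{-1})$ term that cancels at the end via $\frac{1}{1-\widetilde q^{\,l}}+\frac{1}{1-\widetilde q^{-l}}=1$ — and you are more explicit than the paper about why $F_{\rho_0}(\mu,t)=F_{\mathrm{np}}(\mu,t)$ holds on the full right half-plane and why the contour-closing argument survives for $\mu$ in the first quadrant.
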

\begin{rem}
For fixed $\lambda$ satisfying $\mathrm{Re}(\lambda)<0$, $\mathrm{Im}(\lambda)<0$, the condition $|w^{-1}|<1$ can be satisfied by picking $t$ such that $0<\mathrm{Re}(t)<1$, $\mathrm{Im}(t)>0$, and $\mathrm{Im}(t)$ is sufficiently large. Similarly, for fixed $t$ with $0<\mathrm{Re}(t)<1$, $\mathrm{Im}(t)>0$, $|w^{-1}|<1$ can be satisfied by picking $\lambda$ such that $\mathrm{Re}(\lambda)<0$, $\mathrm{Im}(\lambda)<0$ and $|\mathrm{Im}( \lambda)|<<|\mathrm{Re}(\lambda)|$ .
\end{rem}

\begin{proof}
Using the jumps along the Stokes rays $l_{k}$ for $k<0$, we find that
\begin{align*}
    F_{-\rho_0}-\lim_{k\to -\infty}F_{-\rho_k}&=\sum_{k=-1}^{-\infty}F_{-\rho_{k+1}}-F_{-\rho_{k}}\\
    &=-\frac{\mathrm i}{2\pi}\sum_{k=-1}^{-\infty} \Big(\mathrm{Li}_2\big(e^{-2\pi \mathrm i(t+k)/\check \lambda}\big)+\log(e^{-2\pi \mathrm i\frac{t+k}{\check \lambda }})\log\big(1-e^{-2\pi \mathrm i(t+k)/\check \lambda}\big)\Big)\\
    &=-\frac{\mathrm i}{2\pi}\sum_{k=0}^{\infty} \Big(\mathrm{Li}_2\big(w^{-1}\widetilde{q}^k\big)+\log(w^{-1}\widetilde{q}^k)\log\big(1-w^{-1}\widetilde{q}^k\big)\Big)\\
    &\;\;\;\;+\frac{\mathrm i}{2\pi}\Big(\mathrm{Li}_2\big(w^{-1}\big)+\log(w^{-1})\log\big(1-w^{-1}\big)\Big)\,. \numberthis
\end{align*}
Using the constraints on $t$ and $\lambda$, we find that $|w^{-1}|<1$ and $|\widetilde{q}|<1$, so that we can  expand in series as in Proposition \ref{rblimit1} and write
\begin{align*}
    F_{-\rho_0}-\lim_{k\to -\infty}F_{-\rho_k}&=-\frac{\mathrm i}{2\pi}\sum_{l=1}^\infty\frac{w^{-l}}{l(1-\widetilde{q}^l)}\bigg(\frac{1}{l}-\frac{\widetilde{q}^l\log \widetilde{q}}{1-\widetilde{q}^l}-\log w^{-1}\bigg)\\
     &\;\;\;\;+\frac{\mathrm i}{2\pi}\Big(\mathrm{Li}_2\big(w^{-1}\big)+\log(w^{-1})\log\big(1-w^{-1})\Big)\\ \numberthis
     &=-\frac{1}{2\pi \mathrm i}\sum_{l=1}^\infty\frac{\partial}{\partial l}\bigg(\frac{w^{-l}}{l(1-\widetilde{q}^l)}\bigg)+\frac{\mathrm  i}{2\pi}\Big(\mathrm{Li}_2\big(w^{-1}\big)+\log(w^{-1})\log\big(1-w^{-1})\Big)\,.
\end{align*}

On the other hand, under our conditions on the parameters $t$ and $\lambda$, and by Lemma \ref{reflectionlemma},  we have that 
\begin{equation}
    F_{-\rho_0}(\lambda,t)=F_{\mathrm{np}}(-\lambda,t)\,.
\end{equation}
Following the same argument as in Proposition \ref{rblimit1} using the integral representation of $F_{\mathrm{np}}$, we find that
\[
\begin{aligned}
 F_{\mathrm{np}}(-\lambda,t)=\frac{1}{2\pi \mathrm i} \sum_{l=1}^\infty \frac{\partial}{\partial l}\bigg(\frac{w^{-l}}{l(1-\widetilde{q}^{-l})}\bigg)
 +\sum_{k=1}^\infty\frac{e^{2\pi \mathrm i k t}}{k\big(2\sin\big(\frac{\lambda k}{2}\big)\big)^2}.
\end{aligned}
\]
 Hence, we find that for $\lambda$ and $t$ as in the hypothesis 
\begin{equation}
    F_{-\rho_0}(\lambda,t)=\frac{1}{2\pi \mathrm i} \sum_{l=1}^\infty \frac{\partial}{\partial l}\bigg(\frac{w^{-l}}{l(1-\widetilde{q}^{-l})}\bigg)
 + F_{\mathrm{GV}}(\lambda,t)
\end{equation}
Joining our results together, we conclude that
\begin{equation}
    \begin{split}
    \lim_{k\to -\infty}F_{-\rho_k}(\lambda,t)&=\frac{1}{2\pi \mathrm i}\sum_{l=1}^{\infty}\Big(\frac{\partial}{\partial l}\bigg(\frac{w^{-l}}{l(1-\widetilde{q}^{-l})}\bigg) + \frac{\partial}{\partial l}\bigg(\frac{w^{-l}}{l(1-\widetilde{q}^{l})}\bigg)\Big)+F_{\mathrm{GV}}(\lambda,t)\\
    &\;\;\;\;+\frac{1}{2\pi \mathrm i}\Big(\mathrm{Li}_2\big(w^{-1}\big)+\log(w^{-1})\log\big(1-w^{-1})\Big)\,.
    \end{split}
\end{equation}
Finally, notice that
\begin{equation}
    \begin{split}
    \sum_{l=1}^{\infty}\Big(\frac{\partial}{\partial l}\bigg(\frac{w^{-l}}{l(1-\widetilde{q}^{-l})} +\frac{w^{-l}}{l(1-\widetilde{q}^{l})}\bigg)\Big)&=\sum_{l=1}^{\infty}\Big(\frac{\partial}{\partial l}\bigg(\frac{w^{-l}}{l}\bigg)\Big)\\
    &=\log(w^{-1})\sum_{l=1}^{\infty}\frac{w^{-l}}{l}-\sum_{l=1}^{\infty}\frac{w^{-l}}{l^2}\\
    &=-(\mathrm{Li}_2(w^{-1})+\log(w^{-1})\log(1-w^{-1}))\,,\\
    \end{split}
\end{equation}
where in the last equality we used that $|w^{-1}|<1$ under our hypotheses. Hence, we conclude that 
\begin{equation}
    \lim_{k\to -\infty}F_{-\rho_k}(\lambda,t)=F_{\mathrm{GV}}(\lambda,t)\,.
\end{equation}

\end{proof}

By using Lemma \ref{reflectionlemma} and Proposition \ref{lblimit1}, we get the following immediate corollary:

\begin{cor} With the same notation as in Proposition \ref{rblimit1}, assume that $0<\mathrm{Re}(t)<1$, $\mathrm{Im}(t)>0$, $\mathrm{Re}(\lambda)>0$, $\mathrm{Im}(\lambda)>0$, $\mathrm{Re}(t)<\mathrm{Re}(\check{\lambda}+1)$ and such that $|w|<1$. Then
\begin{equation}
    \lim_{k\to -\infty}F_{\rho_k}(\lambda,t)=F_{\mathrm{GV}}(\lambda,t)\,.
\end{equation}
\end{cor}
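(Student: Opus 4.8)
The plan is to obtain this as a formal consequence of Lemma~\ref{reflectionlemma} and Proposition~\ref{lblimit1}, exactly as the two previous corollaries are obtained from Proposition~\ref{rblimit1}. First I would apply the reflection identity $F_{\rho_k}(\lambda,t)=F_{-\rho_k}(-\lambda,t)$ of Lemma~\ref{reflectionlemma}, valid for every admissible ray; note that if $\rho_k$ lies in the sector between $l_k$ and $l_{k-1}$, then $-\rho_k$ lies between $-l_k$ and $-l_{k-1}$, which is precisely the meaning of the symbol $-\rho_k$ used in Proposition~\ref{lblimit1}. Passing to the limit $k\to-\infty$ on both sides gives
\[
\lim_{k\to-\infty}F_{\rho_k}(\lambda,t)=\lim_{k\to-\infty}F_{-\rho_k}(-\lambda,t)\,.
\]

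Next I would verify that $\mu:=-\lambda$ satisfies the hypotheses of Proposition~\ref{lblimit1} as soon as $\lambda$ satisfies the hypotheses of the present corollary. Indeed, $\mathrm{Re}(\lambda)>0$ and $\mathrm{Im}(\lambda)>0$ become $\mathrm{Re}(\mu)<0$ and $\mathrm{Im}(\mu)<0$; the inequality $\mathrm{Re}(t)<\mathrm{Re}(\check\lambda+1)$ becomes $\mathrm{Re}(t)<\mathrm{Re}(-\check\mu+1)$; and since the quantity $w$ attached to $\mu$ is $e^{2\pi\I t/\check\mu}=e^{-2\pi\I t/\check\lambda}=w^{-1}$, the standing assumption $|w|<1$ is exactly the condition $|w^{-1}|<1$ demanded in Proposition~\ref{lblimit1} (read with $\mu$ in place of $\lambda$). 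Applying that proposition then yields $\lim_{k\to-\infty}F_{-\rho_k}(-\lambda,t)=F_{\mathrm{GV}}(-\lambda,t)$.

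Finally I would invoke the evenness of $F_{\mathrm{GV}}$ in the coupling: since $F_{\mathrm{GV}}(\lambda,t)=\sum_{k\geq1}e^{2\pi\I kt}\big/\big(k(2\sin(\lambda k/2))^2\big)$ and $\sin(-\lambda k/2)=-\sin(\lambda k/2)$, we have $F_{\mathrm{GV}}(-\lambda,t)=F_{\mathrm{GV}}(\lambda,t)$ termwise (both series converging because $\mathrm{Im}(t)>0$ and $\mathrm{Im}(\lambda)\neq0$). Chaining the three displays gives the claim. I do not expect a genuine obstacle here — the statement is a corollary and the argument is pure bookkeeping — so the only point that needs care is the translation of the domain conditions under $\lambda\mapsto-\lambda$, in particular the matching of $|w|<1$ with the hypothesis $|w^{-1}|<1$ of Proposition~\ref{lblimit1}, together with the elementary observation that $F_{\mathrm{GV}}$ is even in $\lambda$.
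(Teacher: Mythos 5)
Your proof is correct and follows exactly the route the paper takes: the paper derives this corollary directly from Lemma \ref{reflectionlemma} and Proposition \ref{lblimit1}, and your bookkeeping of the hypotheses under $\lambda\mapsto-\lambda$ (in particular the matching of $|w|<1$ with $|w^{-1}|<1$) together with the evenness of $F_{\mathrm{GV}}$ in $\lambda$ is precisely what is needed.
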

The limits studied above can be informally 
interpreted as the relation between  $F_{\mathrm{GV}}(\lambda,t)$ 
and the Borel summations along the imaginary axes. 

\subsubsection{Relation between $F_{\mathbb{R}_{>0}}$ and $ F_{\mathrm{coni}}^{\mathrm{resum}}$}\label{FHOrel}

In this subsection, we briefly explain how $F_{\mathbb{R}_{>0}}=F_{np}$ relates to $ F_{\mathrm{coni}}^{\mathrm{resum}}$ from \cite{HO}.\\

On the one hand, from part (iv) of Theorem \ref{theorem1} together with the comments of Section \ref{connST}, we find that
\begin{equation}
    F_{\mathbb{R}_{>0}}(\lambda,t)=F_{\mathrm{GV}}(\lambda,t)+\frac{1}{2\pi i}\frac{\partial}{\partial \lambda }\lambda F_{NS}\Big(\frac{4\pi^2}{\lambda},\frac{2\pi}{\lambda}\Big(t-\frac{1}{2}\Big)\Big)\,.
\end{equation}
On the other hand, in \cite{HO} the following function is considered
\begin{equation}\label{FHO}
     F_{\mathrm{coni}}^{\mathrm{resum}}(\lambda,t)=\frac{\mathrm{Li}_3(Q)}{\lambda^2}+ \int_{0}^{\infty}dv \, \frac{v}{1-e^{2\pi v-i0^+}}\log(1+Q^2-2Q\cosh(\lambda v)), \;\;\;\; Q=e^{2\pi it}\,,
\end{equation}
and it is conjectured that
\begin{equation}
     F_{\mathrm{coni}}^{\mathrm{resum}}(\lambda,t)=F_{\mathrm{GV}}(\lambda,t)+\frac{1}{2\pi i}\frac{\partial}{\partial \lambda }\lambda F_{NS}\Big(\frac{4\pi^2}{\lambda},\frac{2\pi}{\lambda}\Big(t-\frac{1}{2}\Big)\Big)\,,
\end{equation}
as explained in Section \ref{prevres}.\\

We show that this in indeed the case, by the use of the Woronowicz form of $F_{\mathbb{R}_{>0}}$ of Lemma \ref{woronowiczlemma}.

\begin{prop} 
Let $t\in \mathbb{C}$ be such that $0<\mathrm{Re}(t)<1$, $\mathrm{Im}(t)>0$, and let $\lambda $ be in the sector determined by $l_0=\mathbb{R}_{<0}\cdot 2\pi \I t$ and $l_{-1}=\mathbb{R}_{<0}\cdot 2\pi \I (t-1)$. Then $F_{\mathbb{R}_{>0}}=F_{\mathrm{coni}}^{\mathrm{resum}}$ on their common domains of definition. 
\end{prop}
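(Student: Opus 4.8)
The plan is to transform the Woronowicz form of $F_{\mathrm{np}}$ supplied by Lemma~\ref{woronowiczlemma} directly into the integral representation \eqref{FHO} of $F_{\mathrm{coni}}^{\mathrm{resum}}$, using along the way that $F_{\mathbb{R}_{>0}}=F_{\mathrm{np}}$ on the common domain by Proposition~\ref{BorelsumR}. First I would rescale the integration variable in
\[
F_{\mathrm{np}}(\lambda,t)=\frac{1}{(2\pi)^2}\int_{\mathbb{R}+\I 0^+}\mathrm dv\,\frac{v}{1-e^v}\log(1-e^{\check\lambda v+2\pi\I t})
\]
by setting $v=2\pi w$; since $\tfrac{1}{2\pi}>0$ leaves the contour $\mathbb{R}+\I 0^+$ unchanged and $\check\lambda v=\lambda w$, this gives
\[
F_{\mathrm{np}}(\lambda,t)=\int_{\mathbb{R}+\I 0^+}\mathrm dw\,\frac{w}{1-e^{2\pi w}}\log\bigl(1-Qe^{\lambda w}\bigr),\qquad Q=e^{2\pi\I t}.
\]
Note $\tfrac{w}{1-e^{2\pi w}}$ has only a removable singularity at $w=0$ and $\log(1-Qe^{\lambda w})$ is regular there because $|Q|<1$, so the integrand is holomorphic in a neighbourhood of $\mathbb{R}$ except at the poles $w\in\I(\mathbb{Z}\setminus\{0\})$ and the logarithmic branch points $w=2\pi\I(n-t)/\lambda$, $n\in\mathbb{Z}$.

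Next I would split $\int_{\mathbb{R}+\I 0^+}$ at the origin into a half-line from $0$ to $+\infty$ and one from $-\infty$ to $0$, and in the second piece substitute $w\mapsto -w$. Using the elementary identity $\tfrac{-w}{1-e^{-2\pi w}}=-w+\tfrac{w}{1-e^{2\pi w}}$ this produces
\[
F_{\mathrm{np}}(\lambda,t)=\int_0^\infty\mathrm dw\,\frac{w}{1-e^{2\pi w}}\bigl[\log(1-Qe^{\lambda w})+\log(1-Qe^{-\lambda w})\bigr]-\int_0^\infty\mathrm dw\;w\,\log(1-Qe^{-\lambda w}).
\]
Since $(1-Qe^{\lambda w})(1-Qe^{-\lambda w})=1+Q^2-2Q\cosh(\lambda w)$, the first integral is exactly the integral term of \eqref{FHO} (with the branch of the logarithm pinned by continuity from $w=0^+$, where both sides equal $\log((1-Q)^2)$, no branch point being crossed on $(0,\infty)$ under the hypotheses). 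The second integral is evaluated by expanding $-\log(1-Qe^{-\lambda w})=\sum_{n\ge1}\tfrac1n Q^n e^{-n\lambda w}$ and integrating term by term, which gives $\sum_{n\ge1}\tfrac{Q^n}{n}\cdot\tfrac{1}{n^2\lambda^2}=\tfrac{1}{\lambda^2}\mathrm{Li}_3(Q)$; this is legitimate because $\mathrm{Re}(\lambda)>0$ on the relevant sector and $|Q|<1$, so the series converges absolutely and the interchange is justified by Fubini--Tonelli. Adding the two contributions reproduces $F_{\mathrm{coni}}^{\mathrm{resum}}(\lambda,t)$, and combining with Proposition~\ref{BorelsumR} yields $F_{\mathbb{R}_{>0}}=F_{\mathrm{coni}}^{\mathrm{resum}}$.

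The main technical point is the contour bookkeeping in the middle step: one must verify that splitting $\int_{\mathbb{R}+\I 0^+}$ and reflecting $w\mapsto-w$ never moves the contour across a pole $w=\I n$ of $(1-e^{2\pi w})^{-1}$ or a branch point $w=2\pi\I(n-t)/\lambda$, and that the two $\I 0^+$ prescriptions (which become $\pm\I 0^+$ after reflection, hence are immaterial near $w=0$ where the integrand is regular) cause no discrepancy. Both hold because all these singularities are bounded away from $\mathbb{R}$ precisely when $0<\mathrm{Re}(t)<1$, $\mathrm{Im}(t)>0$ and $\lambda$ lies in the sector between $l_0$ and $l_{-1}$ — which is exactly the hypothesis; the same observation shows the $\I 0^+$ in the denominator $1-e^{2\pi v-\I 0^+}$ of \eqref{FHO} is inactive along $\mathbb{R}_{>0}$. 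Convergence at $w\to+\infty$ is automatic, the exponential decay of $\tfrac{w}{1-e^{2\pi w}}$ beating the at most linear growth of the logarithmic factors.
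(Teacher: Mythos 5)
Your proof is correct and follows essentially the same route as the paper's: both rest on the factorization $1+Q^2-2Q\cosh(\lambda v)=(1-Qe^{\lambda v})(1-Qe^{-\lambda v})$, the splitting/reflection of the Woronowicz integral over $\mathbb{R}+\I 0^+$ using $\tfrac{1}{1-e^{-x}}+\tfrac{1}{1-e^{x}}=1$, and the evaluation $\int_0^\infty w\log(1-Qe^{-\lambda w})\,\mathrm dw=-\mathrm{Li}_3(Q)/\lambda^2$, combined with Lemma~\ref{woronowiczlemma} and Proposition~\ref{BorelsumR}. The only cosmetic differences are that you run the computation from $F_{\mathrm{np}}$ toward $F_{\mathrm{coni}}^{\mathrm{resum}}$ rather than the reverse, and evaluate the $\mathrm{Li}_3$ integral by termwise expansion instead of integration by parts.
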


\begin{proof}
First notice that since
\begin{equation}
    1+Q^2-2Q\cosh(\lambda x)=(1-e^{\lambda x}Q)(1-e^{-\lambda x}Q)\,,
\end{equation}
we can rewrite (\ref{FHO}) as follows
\begin{equation}
    \begin{split}
    F_{\mathrm{coni}}^{\mathrm{resum}}&=\frac{\mathrm{Li}_3(Q)}{\lambda^2}+\int_{\mathbb{R}+\mathrm i0^+}\mathrm dv \frac{v}{1-e^{2\pi v}}\log(1-e^{\lambda v}Q)\\
    &\quad -\int_{-\infty}^0\mathrm dv \frac{v}{1-e^{2\pi v-\mathrm i0^+}}\log(1-e^{\lambda v}Q)+\int_{0}^{\infty}\mathrm dv \frac{v}{1-e^{2\pi v-\mathrm i0^+}}\log(1-e^{-\lambda v}Q)\\
    &=\frac{\mathrm{Li}_3(Q)}{\lambda^2}+\int_{\mathbb{R}+\mathrm i0^+}\mathrm dv \frac{v}{1-e^{2\pi v}}\log(1-e^{\lambda v}Q) +\int_{0}^{\infty}\mathrm dv \; v\log(1-e^{-\lambda v}Q)\,.
    \end{split}
\end{equation}

On the other hand, notice that we can rewrite the last term in the above expression as follows:
\begin{align*} 
    \int_{0}^{\infty}\mathrm dv \; v\log(1-e^{-\lambda v}Q)&=\frac{1}{\lambda^2}\int_{\lambda \cdot \mathbb{R}_{>0}}\mathrm dv \; v\log(1-e^{-v}Q)\\
    &=\frac{1}{\lambda^2}\int_{0}^{\infty}\mathrm dv \; v\log(1-e^{- v}Q)\\
    &=\frac{1}{2\lambda^2}\int_{0}^{\infty}\mathrm dv\,\frac{v^2}{1-e^{v}Q^{-1}}=-\frac{1}{\lambda^2}\mathrm{Li}_3(Q) \numberthis
\end{align*}
where in the second equality we have used that the range of $\lambda$ allows us to deform back the contour to $\mathbb{R}_{>0}$; in the third equality we have integrated by parts; and in the last one we have used that $\text{Im}(t)>0$ implies that $|e^{2\pi\mathrm  it}|<1$, and hence we can use the integral representation of $\mathrm{Li}_3$.\\

Hence,
\begin{equation}
    F_{\mathrm{coni}}^{\mathrm{resum}}=\int_{\mathbb{R}+\mathrm i0^+}\mathrm dv \frac{v}{1-e^{2\pi v}}\log(1-e^{\lambda v}Q)=\frac{1}{(2\pi)^2}\int_{\mathbb{R}+\mathrm i0^+}\mathrm dv \frac{v}{1-e^{ v}}\log(1-e^{\frac{\lambda}{2\pi} v+2\pi\mathrm  it})
\end{equation}
so the result follows from Lemma \ref{woronowiczlemma} and Proposition \ref{BorelsumR}.
\end{proof}

%%%%%%%%%%%%%%%%%%%%%%%%%%%%%%%%%

\section{Relation to the Riemann--Hilbert problem and line bundles}
\label{hyperholo}

In Sections \ref{mainresults} and \ref{proofs} we discussed the Borel sum $F_{\rho}(\lambda,t)$ of $\widetilde{F}(\lambda,t)$ along the ray $\rho$, and its dependence on $\rho$ in terms of the Stokes jumps. Our objectives in this section are the following:

\begin{itemize}
    \item On one hand, in \cite{BridgelandCon} a Riemann--Hilbert problem is associated to the BPS spectrum of the resolved conifold. This involves finding piecewise holomorphic functions $\CX_{\ga}$ on $\BC^{\times}\times M$, with $M$ being called the space of stability structures,
    related by certain Stokes jumps along rays in $\BC^\times$. Introducing a coordinate $\lambda_B$ for $\BC^\times$ called
    twistor variable one may interpret the family of functions $\CX_{\ga}(\lambda_B,-)$ on $M$ as complex coordinates
    defining a family of complex structures on $M$.
    We will show that the jumps of $F_{\rho}(\lambda,t)$ serve as ``potentials" for the Stokes jumps associated to the Riemann--Hilbert problem. 
    
    \item On the other hand, in thinking of $F_{\rho}(\lambda,t)$ more geometrically, it is natural to consider the partition functions 
    \begin{equation}
        Z_{\rho}(\lambda,t)=\exp(F_{\rho}(\lambda,t))\;,
    \end{equation}
    and interpret them as defining a section of a line bundle $\mathcal{L}$, having transition functions equal to the exponentials of the Stokes jumps. This perspective follows the ideas of \cite{CLT20}, specialized to the case of the resolved conifold.
    
     \item We will furthermore demonstrate that the
    the line bundle $\mathcal{L}$ is related to certain hyperholomorphic line bundles previously considered in \cite{Neitzke_hyperhol,APP}. These hyperholomorphic line bundles are canonically defined by 
    a given BPS spectrum and represented by transition functions 
    defined from the Rogers dilogarithm function. We will show that the hyperholomorphic line bundles
    considered in \cite{Neitzke_hyperhol,APP} are in the case of  the resolved conifold related to the line bundle $\mathcal{L}$ by performing a  certain ``conformal limit" previously considered in \cite{Gaiotto:2014bza}.
\end{itemize}

   In order to facilitate comparison with \cite{BridgelandCon}, we will represent the parameters $\lambda$ and $t$
   used in this paper in the following form
    \begin{equation}
         t=v/w, \;\;\;\; \lambda = 2\pi \lambda_{\mathrm B}/w\,,
    \end{equation}
    where $\lambda_{\mathrm B}$ is the notation used here for the variable 
    denoted $t$ in \cite{BridgelandCon},
    and consider a projectivized partition function 
    \begin{equation}
        Z_{\rho}(\lambda_{\mathrm B},v,w):=\exp\Big(F_{w^{-1}\cdot \rho}\Big(\frac{2\pi \lambda_{\mathrm B}}{w},\frac{v}{w}\Big)\Big)\,.
    \end{equation}
    We will show that after appropriately normalizing the partition functions $Z_{\rho} \to \widehat{Z}_{\rho}$, the BPS spectrum of the resolved conifold will be neatly encoded in the transition functions of the line bundle defined by $\widehat{Z}_{\rho}$. Furthermore, we will see that the normalization reintroduces the constant map constribution \ref{constmapterms}, giving a partition function whose free energy has asymptotic expansion equal to \ref{freeenergydecomp}.

\subsection{Bridgeland's Riemann--Hilbert problem and its solution}\label{RHproblem}

We begin by recalling the Riemann--Hilbert problem considered in \cite{BridgelandCon}. 
The initial data for such Riemann--Hilbert problems is the following:

\begin{dfn}\label{varBPS} A variation of BPS structures is given by a tuple $(M,\Gamma,Z,\Omega)$, where 

\begin{itemize}
    \item $M$ is a complex manifold. 
    \item Charge lattice: $\Gamma \to M$ is a local system of lattices with a skew-symmetric, covariantly constant paring $\langle - , - \rangle\colon  \Gamma \times \Gamma \to \mathbb{Z}$. 
    \item Central charge: $Z$ is a holomorphic section of $\Gamma^*\otimes \mathbb{C} \to M$. 
    \item BPS indices: $\Omega \colon  \Gamma \to \mathbb{Z}$ is a function satisfying $\Omega(\gamma)=\Omega(-\gamma)$ and the Kontsevich--Soibelman wall-crossing formula \cite{KS, BridgelandDT}. 
\end{itemize}

The tuple $(M,\Gamma,Z,\Omega)$ should also satisfy the following conditions:

\begin{itemize}
\item Support property: Let $\text{Supp}(\Omega):=\{\gamma \in \Gamma \;\; | \;\; \Omega(\gamma)\neq 0\}$. Given a compact set $K\subset M$ and a choice of covariantly constant norm $|\cdot |$ on $\Gamma|_K \otimes_{\mathbb{Z}}\mathbb{R} $, there is a constant $C>0$ such that for any  $\text{Supp}(\Omega)\cap \Gamma|_K$:
    \begin{equation} \label{supportproperty}
        |Z_{\gamma}|>C|\gamma| \,.
    \end{equation}
    
    \item Convergence property: for each $p\in M$, there is an $R>0$ such that
    \begin{equation}
        \sum_{\gamma \in \Gamma_p}|\Omega(\gamma)|e^{-R|Z_{\gamma}|}<\infty \,.
    \end{equation}
\end{itemize}
\end{dfn}

The variation of BPS structures associated to the resolved conifold is then taken to be the tuple $(M,\Gamma,Z,\Omega)$, where:
\begin{itemize}
    \item $M$ is the complex $2$-dimensional manifold
    \begin{equation}
        M:=\{(v,w)\in \mathbb{C}^2 \;\; | \;\; w\neq 0, \;\;\; v+nw\neq 0 \;\;\; \text{for all} \;\;\; n \in \mathbb{Z}\}\,.
    \end{equation}
    \item $\Gamma\to M$ is given by the trivial local system
    \begin{equation}
        \Gamma=\mathbb{Z}\cdot \delta \oplus \mathbb{Z}\cdot \beta \oplus \mathbb{Z} \cdot \delta^{\vee} \oplus \mathbb{Z} \cdot \beta^{\vee},
    \end{equation}
    with pairing defined such that $(\beta^{\vee},\beta,\delta^{\vee},\delta)$ is a Darboux frame. Namely,
    \begin{equation}
        \langle \delta^{\vee},\delta \rangle= \langle \beta^{\vee},\beta \rangle=1\,,
    \end{equation}
    with all other pairings equal to $0$.
    \item If for $\gamma \in \Gamma$, we denote $Z_{\gamma}:=Z(\gamma)$, $Z$ is defined by
    \begin{equation}
        Z_{n\beta +m\delta +p\beta^{\vee}+q\delta^{\vee}}=2\pi \I(nv+mw), \;\;\;\; \text{for} \;\;\; n,m,p,q\in \mathbb{Z}\,.
    \end{equation}
    \item $\Omega$ is given by the BPS spectrum of the resolved conifold \cite{JS}, see also \cite{Banerjee:2019apt}:
    \begin{equation}
    \Omega(\gamma) =
  \begin{cases}
    1       & \quad \text{if } \gamma = \pm \beta + n \delta\quad \text{for }\;\; n\in \mathbb{Z}\,,\\
    -2  & \quad \text{if } \gamma= k \delta \quad \text{for} \;\;k\in \mathbb{Z}\setminus \{ 0\}\,,\\
    0 & \quad \text{otherwise}.
  \end{cases}
\end{equation}
\end{itemize}

To this data, the following Riemann--Hilbert problem is associated\footnote{We will follow slightly different conventions from \cite{BridgelandCon}. In particular, what we call $\mathcal{L}_k$ corresponds in Bridgeland's convention to $-\mathcal{L}_k$.}. First, we define $\mathcal{L}_k:=\mathbb{R}_{<0}\cdot 2\pi \I(v+kw)$ and $\mathcal{L}_{\infty}:=\mathbb{R}_{<0}\cdot2 \pi \I w$, and assume that $(v,w)\in M$ satisfies $\text{Im}(v/w)>0$ (the case $\text{Im}(v/w)\leq 0$ is also considered in \cite{BridgelandCon}, but we restrict to $\text{Im}(v/w)>0$ for simplicity). Then, for each ray $\rho$ from $0$ to $\infty$ not in $\{\pm \mathcal{L}_k\}_{k\in \mathbb{Z}}\cup \{\pm \mathcal{L}_{\infty}\}$, we should find a holomorphic function $\mathcal{X}_{\gamma,\rho}(v,w,-)\colon \mathbb{H}_{\rho} \to \mathbb{C}^{\times}$ labeled by $\gamma \in \Gamma$ such that they satisfy the following:

\begin{itemize}
    \item Twisted homomorphism property: for $\gamma, \gamma'\in \Gamma$ we have
    \begin{equation}
        \mathcal{X}_{\gamma+\gamma',\rho}(v,w,-)=(-1)^{\langle \gamma, \gamma' \rangle}\mathcal{X}_{\gamma,\rho}(v,w,-)\mathcal{X}_{\gamma',\rho}(v,w,-)\,.
    \end{equation}
    \item Stokes jumps: we denote by $\rho_k$ a ray between $\mathcal{L}_k$ and $\mathcal{L}_{k-1}$. We then have
        \begin{equation}\label{SJ1}
        \mathcal{X}_{\gamma, \pm \rho_{k+1}}(v,w,\lambda_{\mathrm B})=\mathcal{X}_{\gamma,\pm \rho_{k}}(v,w,\lambda_{\mathrm B})(1-\mathcal{X}_{\pm(\beta+k\delta)}(v,w,\lambda_{\mathrm B}))^{ \langle\gamma, \pm(\beta +k\delta) \rangle\Omega(\beta+k\delta)}\,.
    \end{equation}
    On the other hand, consider $\rho_{k_1}$ and $\rho_{k_2}$ with $k_1\neq k_2$, and let $[\rho_{k_1},-\rho_{k_2}]$ denote the smallest of the two sectors determined by $\rho_{k_1}$ and $-\rho_{k_2}$. In the case $\mathcal{L}_{\infty} \subset [\rho_{k_1},-\rho_{k_2}]$, for $\lambda_{\mathrm B}$ in the corresponding common domains we have
            \begin{align}\label{SJ2}
        \mathcal{X}_{\gamma, -\rho_{k_2}}=\mathcal{X}_{\gamma, \rho_{k_1}}\cdot \Bigg(\prod_{k\geq k_1}&(1-\mathcal{X}_{ \beta+k\delta})^{ \langle\gamma, \beta +k\delta \rangle\Omega(\beta+k\delta)}\prod_{k>-k_2}(1-\mathcal{X}_{-\beta+k\delta})^{ \langle\gamma, -\beta +k\delta \rangle\Omega(\beta-k\delta)}\notag\\
        &\prod_{k\geq 1}(1-\mathcal{X}_{ k\delta})^{ \langle \gamma, k\delta \rangle \Omega(k\delta)}\Bigg)\;,
    \end{align}
     while if $-\mathcal{L}_{\infty} \subset [\rho_{k_1},-\rho_{k_2}]$, we have
    \begin{align}\label{SJ3}
        \mathcal{X}_{\gamma, \rho_{k_1}}=\mathcal{X}_{\gamma, -\rho_{k_2}}\cdot \Bigg(\prod_{k\geq k_2}&(1-\mathcal{X}_{ -\beta-k\delta})^{ \langle\gamma, -\beta -k\delta \rangle\Omega(\beta+k\delta)}\prod_{k>-k_1}(1-\mathcal{X}_{\beta-k\delta})^{ \langle\gamma, \beta -k\delta \rangle\Omega(\beta-k\delta)}\notag\\
        &\prod_{k\geq 1}(1-\mathcal{X}_{- k\delta})^{ \langle \gamma, -k\delta \rangle \Omega(k\delta)}\Bigg)\;.
    \end{align}
    \item Asymptotics as $\lambda_{\mathrm B} \to 0$: For each ray $\rho$ and $\gamma \in \Gamma$, we have
    \begin{equation}
        \mathcal{X}_{\gamma,\rho}(v,w,\lambda_{\mathrm B})e^{-Z_{\gamma}(z,w)/\lambda_{\mathrm B}}\to 1, \;\;\; \text{as} \;\;\;\; \lambda_{\mathrm B} \to 0,\;\;\; \lambda_{\mathrm B}\in \mathbb{H}_{\rho}.
    \end{equation}
    \item Polynomial growth as $\lambda_{\mathrm B} \to \infty$: for each ray $\rho$ and $\gamma \in \Gamma$, we have the following for some $k>0$:
    \begin{equation}
        |\lambda_{\mathrm B}|^{-k}<|\mathcal{X}_{\gamma,\rho}(v,w,\lambda_{\mathrm B})|<|\lambda_{\mathrm B}|^k, \;\;\;\; \text{for} \;\;\; |\lambda_{\mathrm B}|\gg0\,.
    \end{equation}
\end{itemize}

Such a problem is shown to admit a unique solution \cite[Lemma 4.9]{BridgelandDT}, and the solution is given as follows. By the twisted homomorphism property, it is enough to describe $\mathcal{X}_{\gamma,\rho}$ for $\gamma\in \{ \beta^{\vee},\beta,\delta^{\vee},\delta\}$. The solutions for $\gamma=\beta$ and $\gamma=\delta$ have trivial Stokes jumps, and they are given by
\begin{equation}
    \mathcal{X}_{\beta,\rho}(v,w,\lambda_{\mathrm B})=e^{2\pi \I v/\lambda_{\mathrm B}}, \;\;\;\; \mathcal{X}_{\delta,\rho}(z,w,\lambda_{\mathrm B})=e^{2\pi \I w/\lambda_{\mathrm B}},
\end{equation}
for any ray $\rho$. On the other hand, for a ray $\rho_k$ between $\mathcal{L}_k$ and $\mathcal{L}_{k-1}$, the functions $\mathcal{X}_{\beta^{\vee},-\rho_k}(v,w,\lambda_{\mathrm B})$ and $\mathcal{X}_{\delta^{\vee},-\rho_k}(v,w,\lambda_{\mathrm B})$ are given by (see \cite[Equation (67)]{BridgelandCon})
\begin{equation}
    \begin{split}
    \mathcal{X}_{\beta^{\vee},-\rho_k}(v,w,\lambda_{\mathrm B})&=F^*(v+kw\,|\,w,-\lambda_{\mathrm B})\\ \mathcal{X}_{\delta^{\vee},-\rho_k}(v,w,\lambda_{\mathrm B})&=H^*(v+kw\,|\,w,-\lambda_{\mathrm B})(F^*(v+kw\,|\,w,-\lambda_{\mathrm B}))^k\,,
    \end{split}
\end{equation}
where $F^*$ is defined in terms of double sine function, and $H^*$ in terms of the triple sine function (see \cite[Section 5]{BridgelandCon} for more details on the definitions of $F^*$ and $H^*$). On the other hand, $\mathcal{X}_{\beta^{\vee},\rho_k}$ and $\mathcal{X}_{\delta^{\vee},\rho_k}$ are determined by the relation
\begin{equation}\label{realitycondition}
    \mathcal{X}_{\gamma,\rho_k}(v,w,\lambda_{\mathrm B})=1/\mathcal{X}_{\gamma,-\rho_k}(v,w,-\lambda_{\mathrm B})\,,
\end{equation}
that follows from uniqueness of the solutions of the Riemann--Hilbert problem. 

\subsection{Relation of the partition function to the Riemann--Hilbert problem}\label{normpartfunct}

In this section we wish to relate the Stokes jumps of $F_{\rho}(\lambda,t)$ with the Stokes jumps of the Riemann--Hilbert problem. More specifically, we will first consider a normalization of the partition function $\exp(F_{\rho}(\lambda,t))$ by a factor proportional to $\lambda^{1/12}\exp(F_{\rho}(\lambda,0))$. This normalization will not only capture the required Stokes jumps at $\pm l_{\infty}$, but will also allow us to recover the constant map contribution of (\ref{freeenergydecomp}). Indeed, by Corollary \ref{constmapBorelsum}, this normalization introduces back the Borel sum of the constant map constribution to the free energy. We will then relate the Stokes jumps of the normalized partition function to the jumps of the Riemann--Hilbert problem of Section \ref{RHproblem}. This will in turn show us how the BPS spectrum of the resolved conifold  is encoded in the Stokes jumps of $F_{\rho}(\lambda,t)$. We will use the notation of Section \ref{RHproblem} throughout. \\

To establish the link with Section \ref{RHproblem} more clearly, it will be convenient to consider the projectivized parameters
    \begin{equation}
        t=v/w, \;\;\;\; \check\lambda =\lambda_{\mathrm B}/w,
    \end{equation}
where we recall that $\check{\lambda}=\lambda/2\pi$. We think of the tuple $(v,w)$ as a point of $M$.\\

We consider the rays $\mathcal{L}_k=\mathbb{R}_{<0}\cdot 2\pi \I(v+kw)=\mathbb{R}_{<0}\cdot Z_{\beta+k\delta}$ for $k\in \mathbb{Z}$ and $\mathcal{L}_{\infty}=\mathbb{R}_{<0}\cdot 2\pi \I w=\mathbb{R}_{<0}\cdot Z_{k\delta}$. The relation to the old Stokes rays is then given by $\mathcal{L}_k=w\cdot l_k$ and $\mathcal{L}_{\infty}=w \cdot l_{\infty}$ .

\begin{dfn} Given a ray $\rho$ different from $\{\pm \mathcal{L}_{k}\}_{k\in \mathbb{Z}}\cup \{\pm \mathcal{L}_{\infty}\}$ we define the for $\lambda_{\mathrm B}\in \mathbb{H}_{\rho}$ and $(v,w)\in M$ with $\text{Im}(v/w)\neq 0$,
    \begin{equation}
        \mathcal{F}_{\rho}(\lambda_{\mathrm B},v,w):=  F_{ w^{-1}\cdot\rho}\bigg( \frac{2\pi\lambda_{\mathrm B}}{ w},\frac{v}{w}\bigg)=F_{ w^{-1}\cdot\rho}(\lambda,t)\,.
    \end{equation}
Notice that
\begin{equation}
    \mathcal{F}_{\rho}(\lambda_{\mathrm B},v,1)= F_{\rho}(\lambda,t)\,.
\end{equation}    
\end{dfn} 

Following the same argument as in Proposition \ref{Stokesjumps}, it is easy to check the following:

\begin{prop}
 Let $\rho_k$ be a ray in the sector determined by the rays $\mathcal{L}_{k}$ and $\mathcal{L}_{k-1}$. Then, if $\mathrm{Im}(v/w)>0$, on the overlap of their domains of definition in the $\lambda_{\mathrm B}$ variable we have
\begin{equation}\label{STproj}
    \Phi_{\pm \mathcal{L}_k}(\lambda_{\mathrm B},v,w):=\mathcal{F}_{\pm \rho_{k+1}}(\lambda_{\mathrm B},v,w)
    -\mathcal{F}_{\pm \rho_k}(\lambda_{\mathrm B},v,w) =\frac{1}{2\pi \I}\partial_{\lambda_{\mathrm B}}\Big(\lambda_{\mathrm B}\mathrm{Li}_2\big(e^{\pm 2\pi \I(v+kw)/\lambda_{\mathrm B}}\big)\Big)\;.
\end{equation}
If $\mathrm{Im}(v/w)<0$, then the previous jumps also hold provided $\rho_{k+1}$ is interchanged with $\rho_{k}$ in the above formulas.\\
\end{prop}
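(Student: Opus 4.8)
The plan is to reduce the statement to Proposition~\ref{Stokesjumps} via the change of variables $t=v/w$, $\check\lambda=\lambda_{\mathrm B}/w$ (equivalently $\lambda=2\pi\lambda_{\mathrm B}/w$) built into the definition of $\mathcal{F}_{\rho}$. First I would record the elementary identity $v+kw=w(t+k)$, which gives $\mathcal{L}_k=\mathbb{R}_{<0}\cdot 2\pi\I(v+kw)=w\cdot l_k$ (and likewise $\mathcal{L}_\infty=w\cdot l_\infty$), so that multiplication by $w^{-1}\in\mathbb{C}^{\times}$ carries the ray $\rho_{k}$ lying between $\mathcal{L}_k$ and $\mathcal{L}_{k-1}$ to a ray $w^{-1}\cdot\rho_{k}$ lying between $l_k$ and $l_{k-1}$, and correspondingly carries the half-plane domain $\mathbb{H}_{\rho}$ (in the $\lambda_{\mathrm B}$ variable) to $\mathbb{H}_{w^{-1}\cdot\rho}$ (in the $\lambda$ variable); moreover $\mathrm{Im}(t)=\mathrm{Im}(v/w)$ has the same sign as in the hypothesis. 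Hence the sector hypotheses of Proposition~\ref{Stokesjumps} are satisfied by $w^{-1}\cdot\rho_{k+1}$ and $w^{-1}\cdot\rho_{k}$.

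Next I would simply invoke Proposition~\ref{Stokesjumps}: on the overlap of the relevant $\lambda_{\mathrm B}$-domains,
\[
\mathcal{F}_{\pm\rho_{k+1}}(\lambda_{\mathrm B},v,w)-\mathcal{F}_{\pm\rho_{k}}(\lambda_{\mathrm B},v,w)
=F_{\pm w^{-1}\rho_{k+1}}(\lambda,t)-F_{\pm w^{-1}\rho_{k}}(\lambda,t)
=\frac{1}{2\pi\I}\,\partial_{\check\lambda}\Big(\check\lambda\,\mathrm{Li}_2\big(e^{\pm 2\pi\I(t+k)/\check\lambda}\big)\Big),
\]
with the roles of $\rho_{k+1}$ and $\rho_{k}$ exchanged when $\mathrm{Im}(v/w)<0$, exactly as in the cited proposition.

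Finally I would translate the right-hand side back into $(\lambda_{\mathrm B},v,w)$. Since $t=v/w$ is held fixed when $(v,w)$ is fixed and $\check\lambda=\lambda_{\mathrm B}/w$, the chain rule gives $\partial_{\check\lambda}=w\,\partial_{\lambda_{\mathrm B}}$; combined with $(t+k)/\check\lambda=(v+kw)/\lambda_{\mathrm B}$ and $\check\lambda=\lambda_{\mathrm B}/w$ this yields
\[
\frac{1}{2\pi\I}\,\partial_{\check\lambda}\Big(\check\lambda\,\mathrm{Li}_2\big(e^{\pm 2\pi\I(t+k)/\check\lambda}\big)\Big)
=\frac{w}{2\pi\I}\,\partial_{\lambda_{\mathrm B}}\Big(\tfrac{\lambda_{\mathrm B}}{w}\,\mathrm{Li}_2\big(e^{\pm 2\pi\I(v+kw)/\lambda_{\mathrm B}}\big)\Big)
=\frac{1}{2\pi\I}\,\partial_{\lambda_{\mathrm B}}\Big(\lambda_{\mathrm B}\,\mathrm{Li}_2\big(e^{\pm 2\pi\I(v+kw)/\lambda_{\mathrm B}}\big)\Big),
\]
which is precisely the asserted formula for $\Phi_{\pm\mathcal{L}_k}(\lambda_{\mathrm B},v,w)$.

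There is no genuine obstacle here: the statement is just a rescaling of an already-established result, and the only points meriting a moment's care are the bookkeeping of how the Stokes rays $\mathcal{L}_k$, the half-planes $\mathbb{H}_{\rho}$, and the sign of $\mathrm{Im}$ transform under multiplication by $w$, together with the harmless chain-rule factor $\partial_{\check\lambda}=w\,\partial_{\lambda_{\mathrm B}}$, which cancels against the explicit $1/w$ coming from $\check\lambda=\lambda_{\mathrm B}/w$.
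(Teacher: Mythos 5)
Your proof is correct and follows essentially the same route as the paper: the paper changes the integration variable to get a Hankel contour around $\mathcal{L}_k$ and then sums residues ``as in Proposition \ref{Stokesjumps}'', whereas you invoke the statement of Proposition \ref{Stokesjumps} directly (using $\mathcal{F}_{\rho}(\lambda_{\mathrm B},v,w)=F_{w^{-1}\rho}(\lambda,t)$ by definition) and convert the answer with the chain rule $\partial_{\check\lambda}=w\,\partial_{\lambda_{\mathrm B}}$. The bookkeeping of $\mathcal{L}_k=w\cdot l_k$, the half-planes, and the cancellation of the factors of $w$ is all handled correctly.
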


\begin{proof}
After a change of integration variables, we obtain
\begin{equation}
    \mathcal{F}_{ \rho_{k+1}}(\lambda_{\mathrm B},v,w)
    -\mathcal{F}_{ \rho_k}(\lambda_{\mathrm B},v,w)=\frac{1}{w}\int_{\mathcal{H}(\mathcal{L}_k)}\mathrm d\xi \; e^{-\xi/\lambda_{\mathrm B}}G(\xi/w,v/w)\;,
\end{equation}
where $\mathcal{H}(\mathcal{L}_k)$ is a Hankel contour around $\mathcal{L}_k$. The result then follows by summing over residues along the poles in $\mathcal{L}_k$, as in Proposition \ref{Stokesjumps}.
\end{proof}

\subsubsection{Normalizing the partition function and the Stokes jumps at $\pm l_{\infty}$}\label{-2BPS}

 We would like to first make sense of a limit of the form
\begin{equation}
    F_{\rho}(\lambda,0):= \lim_{t\to 0}F_{\rho}(\lambda,t)\,,
\end{equation}
where $t$ is taken to satisfy $\text{Re}(t)>0$, $\text{Im}(t)>0$; and such that along the limit, $\rho$ is always between $l_{-1}$ and $l_0$ (resp.\ $-l_{-1}$ and $-l_0$) if $\rho$ is on the right (resp.\ left) Borel half-plane.

\begin{lem}\label{t0limit} The limit $F_{\rho}(\lambda,0)$ from above exists for $\lambda \in \mathbb{H}_{\rho}$. In fact, if $\rho$ is on the right (resp.\ left) Borel plane, is can be analytically continued to $\lambda \in \mathbb{C}^{\times}\setminus\mathbb{R}_{\leq 0}$ (resp.\ $\lambda \in \mathbb{C}^{\times}\setminus\mathbb{R}_{\geq 0}$). 

\end{lem}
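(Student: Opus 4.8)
The plan is to prove existence of the limit first, and then obtain the analytic continuation by inspecting the explicit representation \eqref{Frho-t0} of $F_\rho(\lambda,0)$ produced in the proof of Corollary~\ref{constmapBorelsum}. By Lemma~\ref{reflectionlemma} one has $F_\rho(\lambda,t)=F_{-\rho}(-\lambda,t)$, so it is enough to handle the case where $\rho$ lies in the open right Borel half-plane; the left half-plane statement then follows by replacing $\lambda$ with $-\lambda$. One first checks that, for a given such fixed $\rho$, there is a path $t\to 0$ with $\mathrm{Re}(t),\mathrm{Im}(t)>0$ along which $\rho$ stays strictly between $l_{-1}(t)$ and $l_0(t)$ (take $t=\epsilon e^{\I\theta}$ with $\theta>0$ small enough that $\theta-\tfrac{\pi}{2}<\arg\rho$).

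For existence I would start from the rewriting established in the proof of Corollary~\ref{constmapBorelsum},
\[
F_\rho(\lambda,t)=\frac{1}{\lambda^2}\mathrm{Li}_3(e^{2\pi\I t})+\int_\rho\mathrm d\xi\,\Big(e^{-\xi/\check\lambda}G(\xi,t)+\frac{1}{12}\,\frac{1}{e^{\xi-2\pi\I t}-1}\Big),
\]
and split $\rho=\{|\xi|\le\delta\}\cup\{|\xi|>\delta\}$ for a small fixed $\delta>0$. On the outer part the poles $2\pi\I(t+k)m$ of $G(\cdot,t)$ stay at a fixed distance from $\rho$ along the chosen path, so $G(\xi,t)\to G(\xi,0)$ uniformly there, while $e^{-\xi/\check\lambda}$ provides uniform exponential decay for $\lambda\in\mathbb{H}_{\rho}$; hence this part converges. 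On the inner part I would use the representation \eqref{diffBoreltrans}: its $k=0$ summand $\tfrac{1}{\xi}\bigl(\tfrac{t^2}{\xi^2}-\tfrac{1}{4\sinh^2(\xi/2t)}-\tfrac{1}{12}\bigr)$ carries all of the $|t|^{-1}$-scale behaviour near the origin, the $k\neq0$ summands converging uniformly on $\{|\xi|\le\delta\}$, and after the substitution $\xi=t\sigma$ the rescaled integrand has a finite limit and is controlled uniformly in $\sigma$. Combining, $F_\rho(\lambda,0):=\lim_{t\to0}F_\rho(\lambda,t)$ exists for $\lambda\in\mathbb{H}_{\rho}$ and, reorganising exactly as in the proof of Corollary~\ref{constmapBorelsum}, is given by \eqref{Frho-t0}.

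For the continuation I would treat the three terms of \eqref{Frho-t0} separately. The term $\zeta(3)/\lambda^2$ is holomorphic on all of $\mathbb{C}^\times$; the integral $\tfrac{1}{12}\int_\rho\bigl(\tfrac{1}{e^\xi-1}-\tfrac{e^{-\xi/\check\lambda}}{\xi}\bigr)\mathrm d\xi$ was already identified in the proof of Corollary~\ref{constmapBorelsum} with $-\tfrac{1}{12}\log\check\lambda+C$, holomorphic on $\mathbb{C}^\times\setminus\mathbb{R}_{\le0}$. For the middle integral $\int_\rho e^{-\xi/\check\lambda}\bigl(G(\xi,0)+\tfrac{1}{12\xi}\bigr)\mathrm d\xi$, the integrand is holomorphic in $\xi$ on a sector around $\rho$: the apparent singularity at $\xi=0$ is removed (recall $G(\xi,0)$ has residue $-\tfrac{1}{12}$ there), and the other poles of $G(\xi,0)$ sit at $2\pi\I\mathbb{Z}\setminus\{0\}$, i.e.\ on the imaginary axis; moreover $G(\xi,0)+\tfrac{1}{12\xi}$ is bounded at infinity off the imaginary axis (visible from \eqref{Boreltrans} at $t=0$). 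Hence Cauchy's theorem lets me rotate $\rho$ freely to any ray $\rho'$ with $\arg\rho'\in(-\tfrac{\pi}{2},\tfrac{\pi}{2})$ without changing the value (the arc at infinity drops out thanks to $e^{-\xi/\check\lambda}$), the rotated integral is holomorphic in $\lambda$ on $\mathbb{H}_{\rho'}$, and $\bigcup_{\arg\rho'\in(-\pi/2,\pi/2)}\mathbb{H}_{\rho'}=\mathbb{C}^\times\setminus\mathbb{R}_{\le0}$, which is the asserted domain.

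The hard part is the inner-part estimate: near $\xi=0$ the two summands $e^{-\xi/\check\lambda}G(\xi,t)$ and $\tfrac{1}{12}(e^{\xi-2\pi\I t}-1)^{-1}$ are each of size $|t|^{-1}$ on $\rho$ and only their sum has a $t$-uniformly integrable behaviour, so no crude dominated-convergence bound is available and one must genuinely use the cancellation of their singular parts — which is precisely what the rescaling $\xi=t\sigma$ makes visible. The remaining points (existence of an admissible path, and uniform separation of the pole rays from the outer part of $\rho$) are routine geometry in the $\xi$-plane.
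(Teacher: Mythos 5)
Your argument is correct in outline, but it takes a genuinely different route from the paper. The paper's proof is essentially a two-line reduction to known special-function theory: since along the prescribed path $\rho$ stays in the sector between $l_{-1}$ and $l_0$, Proposition \ref{BorelsumR} identifies $F_\rho(\lambda,t)$ with $\log G_3(t\,|\,\check\lambda,1)$, and then Proposition \ref{intrep2} (quoting Bridgeland/Narukawa) says $G_3$ is meromorphic, regular at $t=0$, with $G_3(0\,|\,\check\lambda,1)$ everywhere regular and vanishing only at $\check\lambda\in\mathbb{Q}_{\leq 0}$; existence of the limit and the continuation to $\mathbb{C}^\times\setminus\mathbb{R}_{\leq 0}$ both drop out at once, the left-half-plane case following from Lemma \ref{reflectionlemma} exactly as you do. You instead work directly on the Borel integral: dominated convergence on the outer part of the contour, a rescaling $\xi=t\sigma$ exploiting the cancellation between the collapsing $k=0$ poles of $G(\xi,t)$ and the $\tfrac{1}{12}(e^{\xi-2\pi\I t}-1)^{-1}$ counterterm on the inner part, and then contour rotation of $\int_\rho e^{-\xi/\check\lambda}\bigl(G(\xi,0)+\tfrac{1}{12\xi}\bigr)\,\mathrm d\xi$ across the pole-free right half $\xi$-plane for the continuation. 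Your diagnosis of where the real work lies (the non-uniform $|t|^{-1}$ behaviour near $\xi=0$, only the sum being controllable) is accurate, and the rotation argument correctly reproduces the fact, proved later in the paper, that $F_\rho(\lambda,0)$ jumps only across $\pm l_\infty$. What the paper's route buys is brevity and an exact closed form $\log G_3(0\,|\,\check\lambda,1)$ with its zero set made explicit; what yours buys is independence from the triple-sine identification (so it would survive in situations where no such special function is available), at the cost of the delicate inner estimate you flag but do not fully write out, and of implicitly using the equality of the two Borel-transform representations \eqref{Boreltrans} and \eqref{diffBoreltrans}, which the paper only verifies at the level of poles, residues and the common germ at $\xi=0$.
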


\begin{proof}
Let us assume that $\rho$ is on the right Borel plane. Then by Propositions \ref{BorelsumR} and \ref{intrep2}, and our condition on $t$, we have
\begin{equation}
    F_{\rho}(\lambda,t)=F_{\mathbb{R}_{>0}}(\lambda,t)=\log(G_3(t\,|\,\check{\lambda},1))\,,
\end{equation}
where we recall that $G_3$ is the function defined in terms of the triple sine in Definition \ref{G3def}.\\

By Proposition \ref{intrep2}, the function $G_3(t\,|\,\check{\lambda},1)$ has a well defined value at $t=0$; and $G_3(0 \, | \, \check\lambda,1)$ is everywhere regular, vanishing only at the points $\check{\lambda} \in \mathbb{Q}_{\leq 0}$.\\

In particular, for $\check{\lambda} \in \mathbb{H}_{\rho}$, we have
\begin{equation}\label{t=0triplesine}
    F_{\rho}(\lambda,0)=\lim_{t \to 0}F_{\rho}(\lambda,t)=\log(G_3(0\,|\,\check \lambda,1))\,,
\end{equation}
and we can analytically continue $F_{\rho}(\lambda,0)$ to $\lambda \in \mathbb{C}^{\times}\setminus\mathbb{R}_{\leq 0}$.\\

If $\rho$ is on the left Borel plane, the statement follows from the previous case, together with the relation $F_{\rho}(\lambda,t)=F_{-\rho}(-\lambda,t)$ from Lemma \ref{reflectionlemma}.
\end{proof}
Notice that by Lemma \ref{t0limit}, we can also write
\begin{equation}
           F_{\rho}(\lambda,0)=\frac{1}{\lambda^2} \mathrm{Li}_{3}(1)+ \lim_{t\to 0}\bigg(\frac{B_2}{2}\mathrm{Li}_{1}(e^{2\pi \I t})+\int_{\rho} \mathrm d\xi\,e^{-\xi/\check{\lambda}} G(\xi,t)\bigg)\,,
\end{equation}
where the limit in $t$ is assumed to satisfy the constraints from above.\\

The following proposition suggests that one can obtain the appropriate Stokes jumps at $\pm l_{\infty}$ by considering a normalization involving $F_{\rho}(\lambda,0)$:
\begin{prop}
Let $\rho$ (resp.\ $\rho'$) be a ray close to $l_{\infty}=\I \mathbb{R}_{<0}$ from the left (resp.\ right). Then for $\lambda$ in their common domain of definition
\begin{equation}
    F_{\pm \rho}(\lambda,0)-F_{\pm \rho'}(\lambda,0)=\frac{1}{\pi \I}\sum_{k \geq 1}\partial_{\check\lambda}\Big(\check\lambda \,\mathrm{Li}_2\big(e^{\pm 2\pi \I k/\check \lambda}\big)\Big) - \frac{\I\pi}{12}\;.
\end{equation}

Furthermore $F_{\rho}(\lambda,0)$ only has Stokes jumps along $\pm l_{\infty}$.
\end{prop}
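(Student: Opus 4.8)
The plan is to realise $F_{\rho}(\lambda,0)$ as the limit $t\to 0$ of $F_{\rho}(\lambda,t)$ and to keep track of what happens to the Stokes structure. By Lemma~\ref{t0limit}, for $\rho$ in the sector between $l_{-1}$ and $l_0$ (the one containing $\mathbb{R}_{>0}$) one has $F_{\rho}(\lambda,0)=\log G_3(0\,|\,\check\lambda,1)$, independently of $\rho$ inside that sector, and by Lemma~\ref{reflectionlemma} the value on the reflected sector (containing $\mathbb{R}_{<0}$) is $F_{\rho}(\lambda,0)=F_{-\rho}(-\lambda,0)=\log G_3(0\,|\,-\check\lambda,1)$. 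If the approach $t\to 0$ is taken with $\arg t\to 0$, then $l_k(t)\to l_{\infty}$ for every $k\ge 0$ and $-l_k(t)\to l_{\infty}$ for every $k\le -1$ (and symmetrically all the remaining rays collapse onto $-l_{\infty}$); hence at $t=0$ the only Stokes rays that survive are $\pm l_{\infty}$, and together with $\pm l_{\infty}$ the two sectors above exhaust the ray space, so $F_{\rho}(\lambda,0)$ is locally constant in $\rho$ away from $\pm l_{\infty}$. This proves the last sentence.

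It remains to compute the jump across $l_{\infty}$, i.e.\ the difference of the two sector values. For this I would use the residue representation established inside the proof of Proposition~\ref{rblimit1}. Setting $t=0$ there (so $w=1$) and restricting to $\lambda$ with $\mathrm{Im}\,\lambda<0$, so that $|\widetilde{q}|<1$ with $\widetilde{q}=e^{2\pi\I/\check\lambda}$, gives
\[
F_{\rho_0}(\lambda,0)=F_{\mathrm{GV}}(\lambda,0)+\frac{1}{2\pi\I}\sum_{l\ge 1}\frac{\partial}{\partial l}\Big(\frac{1}{l\,(1-\widetilde{q}^{\,l})}\Big),
\]
and, since $F_{\mathrm{GV}}(\lambda,0)=\sum_{k\ge1}\tfrac{1}{k(2\sin(\lambda k/2))^2}$ is even in $\lambda$, Lemma~\ref{reflectionlemma} yields
\[
F_{-\rho_0}(\lambda,0)=F_{\rho_0}(-\lambda,0)=F_{\mathrm{GV}}(\lambda,0)+\frac{1}{2\pi\I}\sum_{l\ge 1}\frac{\partial}{\partial l}\Big(\frac{1}{l\,(1-\widetilde{q}^{-l})}\Big).
\]
Subtracting and using the elementary identity $\frac{1}{1-\widetilde{q}^{-l}}-\frac{1}{1-\widetilde{q}^{\,l}}=-1-\frac{2\widetilde{q}^{\,l}}{1-\widetilde{q}^{\,l}}$ splits the difference into two pieces: the $-1/l$ piece contributes $\frac{1}{2\pi\I}\sum_{l\ge1}\frac1{l^2}=\frac{1}{2\pi\I}\cdot\frac{\pi^2}{6}=-\frac{\I\pi}{12}$, while the $-\frac{2\widetilde{q}^{\,l}}{l(1-\widetilde{q}^{\,l})}$ piece, rewritten through the same chain-rule manipulation used at the end of the proof of Proposition~\ref{rblimit1} together with identity \eqref{id3} applied with $w=\widetilde{q}$ (which gives $\sum_{k\ge1}\mathrm{Li}_2(\widetilde{q}^{\,k})=\sum_{l\ge1}\frac{\widetilde{q}^{\,l}}{l^2(1-\widetilde{q}^{\,l})}$), becomes $\frac{1}{\pi\I}\sum_{k\ge1}\partial_{\check\lambda}\big(\check\lambda\,\mathrm{Li}_2(e^{2\pi\I k/\check\lambda})\big)$. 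Adding these reproduces the stated formula for the jump across $l_{\infty}$; the case of $-l_{\infty}$ follows immediately by applying Lemma~\ref{reflectionlemma} once more, i.e.\ replacing $\lambda$ by $-\lambda$, which interchanges $l_{\infty}\leftrightarrow -l_{\infty}$ and $e^{2\pi\I k/\check\lambda}\leftrightarrow e^{-2\pi\I k/\check\lambda}$ while fixing the constant.

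The point that needs real care is the passage to the limit $t\to 0$: one must verify that along the chosen approach all of $l_k$ ($k\ge0$), $-l_k$ ($k\le-1$) genuinely merge onto $l_{\infty}$ with no stray Stokes ray left behind, and that the residue series of Proposition~\ref{rblimit1} converges uniformly enough in $t$ near $0$ to pass to the limit term by term; this forces one to watch in which half-plane of $\lambda$ each expansion is valid ($|\widetilde{q}|<1$ needs $\mathrm{Im}\,\lambda<0$, the reflected expansion is then in $\mathrm{Im}\,\lambda>0$), and correspondingly which of the two sectors $F_{\pm\rho},F_{\pm\rho'}$ refer to and hence the placement of the signs. An equivalent and perhaps more transparent route avoids the residue representation: one sums the individual Stokes jumps of Proposition~\ref{Stokesjumps} that collapse onto $l_{\infty}$, where the $k=0$ term $\phi_{l_0}(\lambda,0)=\frac{1}{2\pi\I}\partial_{\check\lambda}\big(\check\lambda\,\mathrm{Li}_2(1)\big)=-\frac{\I\pi}{12}$ supplies the constant and the elementary identity $\phi_{-l_{-k}}(\lambda,0)=\phi_{l_k}(\lambda,0)$ for $k\ge1$ produces the factor of two in front of $\sum_{k\ge1}\partial_{\check\lambda}\big(\check\lambda\,\mathrm{Li}_2(e^{2\pi\I k/\check\lambda})\big)$; here the remaining work is instead to justify that the jump of the $t=0$ function across $l_{\infty}$ is exactly this absolutely convergent sum of collapsing jumps.
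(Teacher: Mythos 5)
Your argument is correct, and it actually contains two derivations. The ``equivalent and perhaps more transparent route'' you sketch at the end is precisely the paper's proof: the paper writes $F_{\rho}(\lambda,0)-F_{\rho'}(\lambda,0)$ as the $t\to 0$ limit of a Hankel-contour integral of $e^{-\xi/\check\lambda}G(\xi,t)$ around $\I\mathbb{R}_{<0}$, which picks up exactly the jumps $\phi_{l_k}$ for $k\ge 0$ and $\phi_{-l_k}$ for $k<0$; the $k=0$ term gives $\frac{1}{2\pi\I}\mathrm{Li}_2(1)=-\frac{\I\pi}{12}$ and the pairing $\phi_{-l_{-k}}(\lambda,0)=\phi_{l_k}(\lambda,0)$ gives the factor of two, just as you say. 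Your primary route --- subtracting the residue representations of $F_{\rho_0}(\lambda,0)$ and $F_{-\rho_0}(\lambda,0)=F_{\rho_0}(-\lambda,0)$ at $w=1$ and using $\frac{1}{1-\widetilde{q}^{-l}}-\frac{1}{1-\widetilde{q}^{l}}=-1-\frac{2\widetilde{q}^{l}}{1-\widetilde{q}^{l}}$ together with \eqref{id1}--\eqref{id3} --- is genuinely different and also reproduces the stated formula. What it buys is that the $t\to0$ limit is taken once inside the closed-form representation of Proposition \ref{rblimit1} rather than term by term over an infinite family of collapsing Stokes rays; the cost is a domain issue that you half-identify but misstate: the two residue representations are established a priori on the quadrants $\{\mathrm{Re}\,\lambda>0,\,\mathrm{Im}\,\lambda<0\}$ and $\{\mathrm{Re}\,\lambda<0,\,\mathrm{Im}\,\lambda<0\}$ respectively, not on $\mathrm{Im}\,\lambda<0$ versus $\mathrm{Im}\,\lambda>0$ (if the reflected expansion really lived only in the upper half-plane, the two could never be subtracted). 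Both must be continued analytically to a common subdomain of $\{\mathrm{Im}\,\check\lambda<0\}$, which works because the series involving $\widetilde{q}^{-l}$ still converges there, as $1/(1-\widetilde{q}^{-l})\to 0$ when $|\widetilde{q}|<1$. Your proof of the final sentence (no jumps away from $\pm l_\infty$) coincides with the paper's.
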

\begin{proof}
First, notice that by our definition of the limit in $t$, we have
\begin{equation}
    \begin{split}
    F_{\rho}(\lambda,0)-F_{\rho'}(\lambda,0)
    &=\lim_{t\to 0}\Big(\int_{\rho}\mathrm d\xi\,  e^{-\xi/\check{\lambda}} G(\xi,t)-\int_{\rho'}\mathrm d\xi\, e^{-\xi/\check{\lambda}} G(\xi,t)\Big)\\
    &=\lim_{t\to 0}\int_{\mathcal{H}}\mathrm d\xi\, e^{-\xi/\check{\lambda}}G(\xi,t)\,,
    \end{split}
\end{equation}
where $\mathcal{H}=\rho-\rho'$ denotes a Hankel contour along $\I\mathbb{R}_{<0}$, containing $l_k$ for $k\geq 0$ and $-l_{k}$ for $k<0$. Hence, for $\lambda$ close to $l_{\infty}$, the Hankel contour just gives the contribution of these rays that we previously computed:
\begin{equation}
    \begin{split}
    F_{\rho}(\lambda,0)-F_{\rho'}(\lambda,0)
    &=\lim_{t\to 0}\int_{\mathcal{H}}\mathrm d\xi\, e^{-\xi/\check{\lambda}}G(\xi,t)\\
    &=\lim_{t\to 0} \bigg(\frac{1}{2\pi \I}\sum_{k\geq 1}\Big[ \partial_{\check\lambda}\Big(\check\lambda\, \mathrm{Li}_2(e^{2\pi \I(t+k)/\check\lambda})\Big)+\partial_{\check\lambda}\Big(\check\lambda \,\mathrm{Li}_2(e^{-2\pi \I(t-k)/\check\lambda})\Big)\Big]\\
    & \quad\quad\quad\quad +\frac{1}{2\pi \I}\partial_{\check\lambda}\Big( \mathrm{Li}_2(e^{2\pi \I t/\check\lambda})\Big)\bigg)\\
    &=\frac{1}{\pi \I}\sum_{k\geq 1}\partial_{\check\lambda}\Big(\check\lambda \,\mathrm{Li}_2(e^{2\pi \I k/\check\lambda})\Big) + \frac{1}{2\pi \I}\mathrm{Li}_2(1)\\
    &=\frac{1}{\pi \I}\sum_{k\geq 1}\partial_{\check\lambda}\Big(\check\lambda\, \mathrm{Li}_2(e^{2\pi \I k/\check\lambda})\Big) -\frac{\pi \I}{12}\;.
    \end{split}
\end{equation}

A similar argument follows for $-l_{\infty}=\I\mathbb{R}_{>0}$. Furthermore, the fact that there are no other Stokes jumps follows from the way we have defined the limit $F_{\rho}(\lambda,0)$. For example, if $\rho$ and $\rho'$ are both on the right Borel plane, then along the limit $\rho$ and $\rho'$ are both between $l_0$ and $l_{-1}$, and hence $F_{\rho}(\lambda,0)=F_{\rho'}(\lambda,0)$.
\end{proof}

We can as before projectivize, and define
\begin{equation}
    \mathcal{F}_{ \rho}(\lambda_\mathrm B,0,w):=\lim_{v\to 0}\mathcal{F}_{\rho}(\lambda_\mathrm B,v,w)\,,
\end{equation}
where the limit in $v$ is such that $t=v/w$ satisfies the conditions of the previous limit in $F_{\rho}(\lambda,0)$.

\begin{prop}\label{projinfinityjumps}
Let $\rho$ (resp.\ $\rho'$) be a ray close to $\mathcal{L}_{\infty}=\I\mathbb{R}_{<0}$ from the left (resp.\ right). Then for $\lambda_\mathrm B$ in their common domain of definition
\begin{equation}\label{inftyjump}
    \mathcal{F}_{\pm \rho}(\lambda_\mathrm B,0,w)-\mathcal{F}_{\pm \rho'}(\lambda_\mathrm B,0,w)=\sum_{k \geq 1}\Phi_{\pm\mathcal{L}_{\infty},k}-\frac{\pi \I}{12}\,,
\end{equation}
where 
\begin{equation}\label{SJ2proj}
    \Phi_{\pm\mathcal{L}_{\infty},k}(\lambda_{\mathrm B},v,w):=\frac{1}{\pi \I}\partial_{\lambda_{\mathrm B}}\Big(\lambda_{\mathrm B} \mathrm{Li}_2\big(e^{\pm 2\pi \I k w/ \lambda_{\mathrm B}}\big)\Big)
\end{equation}
Furthermore, $\mathcal{F}_{\pm \rho}(\lambda_{\mathrm B},0,w)$ only has Stokes jumps along $\pm \mathcal{L}_{\infty}$.
\end{prop}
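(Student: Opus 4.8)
The plan is to deduce Proposition~\ref{projinfinityjumps} directly from the unprojectivized statement proved just above it, by unwinding the definition of the projectivized partition function and performing the change of variables relating $(\lambda,t)$ to $(\lambda_{\mathrm B},v,w)$. Recall that $\mathcal{F}_{\rho}(\lambda_{\mathrm B},v,w)=F_{w^{-1}\cdot\rho}(2\pi\lambda_{\mathrm B}/w,\,v/w)$, so that setting $\lambda=2\pi\lambda_{\mathrm B}/w$, hence $\check\lambda=\lambda_{\mathrm B}/w$, and $t=v/w$, the defining limit gives $\mathcal{F}_{\rho}(\lambda_{\mathrm B},0,w)=\lim_{v\to0}F_{w^{-1}\cdot\rho}(\lambda,t)=F_{w^{-1}\cdot\rho}(\lambda,0)$, where the limit $v\to0$ with $\lambda_{\mathrm B},w$ fixed is precisely the limit $t\to0$ under the constraints used to define $F_{\rho}(\lambda,0)$ (since $t=v/w$). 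Also, negation commutes with multiplication by $w^{-1}$, so $\mathcal{F}_{\pm\rho}(\lambda_{\mathrm B},0,w)=F_{\pm w^{-1}\cdot\rho}(\lambda,0)$.

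First I would record that multiplication by $w^{-1}$ is an orientation-preserving rotation-and-scaling carrying $\mathcal{L}_{\infty}=w\cdot l_{\infty}$ to $l_{\infty}$, so that a ray $\rho$ (resp.\ $\rho'$) close to $\mathcal{L}_{\infty}$ from the left (resp.\ right) is carried to a ray $w^{-1}\cdot\rho$ (resp.\ $w^{-1}\cdot\rho'$) close to $l_{\infty}$ from the left (resp.\ right). The preceding proposition then applies and yields
\[
F_{\pm w^{-1}\cdot\rho}(\lambda,0)-F_{\pm w^{-1}\cdot\rho'}(\lambda,0)=\frac{1}{\pi\I}\sum_{k\geq1}\partial_{\check\lambda}\Big(\check\lambda\,\mathrm{Li}_2\big(e^{\pm2\pi\I k/\check\lambda}\big)\Big)-\frac{\I\pi}{12}\,.
\]
It then remains to rewrite the right-hand side in the projectivized variables: substituting $\check\lambda=\lambda_{\mathrm B}/w$ gives $e^{\pm2\pi\I k/\check\lambda}=e^{\pm2\pi\I kw/\lambda_{\mathrm B}}$, and the chain rule $\partial_{\check\lambda}=w\,\partial_{\lambda_{\mathrm B}}$ together with $\check\lambda=w^{-1}\lambda_{\mathrm B}$ shows the factors of $w$ cancel, so $\partial_{\check\lambda}\big(\check\lambda\,\mathrm{Li}_2(e^{\pm2\pi\I k/\check\lambda})\big)=\partial_{\lambda_{\mathrm B}}\big(\lambda_{\mathrm B}\,\mathrm{Li}_2(e^{\pm2\pi\I kw/\lambda_{\mathrm B}})\big)$. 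Hence each summand equals $\pi\I\cdot\Phi_{\pm\mathcal{L}_{\infty},k}$ and the right-hand side becomes $\sum_{k\geq1}\Phi_{\pm\mathcal{L}_{\infty},k}-\frac{\pi\I}{12}$, which is exactly \rf{inftyjump}.

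For the final assertion that $\mathcal{F}_{\pm\rho}(\lambda_{\mathrm B},0,w)$ has Stokes jumps only along $\pm\mathcal{L}_{\infty}$: since $\mathcal{F}_{\pm\rho}(\lambda_{\mathrm B},0,w)=F_{\pm w^{-1}\cdot\rho}(2\pi\lambda_{\mathrm B}/w,0)$ and $\lambda_{\mathrm B}\mapsto\lambda=2\pi\lambda_{\mathrm B}/w$ is a biholomorphism of $\mathbb{C}^{\times}$ carrying $\pm\mathcal{L}_{\infty}$ to $\pm l_{\infty}$, any discontinuity of $\lambda_{\mathrm B}\mapsto\mathcal{F}_{\pm\rho}(\lambda_{\mathrm B},0,w)$ pulls back to a discontinuity of $\lambda\mapsto F_{\pm w^{-1}\cdot\rho}(\lambda,0)$, and the latter occurs only across $\pm l_{\infty}$ by the preceding proposition. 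I do not expect any genuine obstacle here — the only step that requires care is the bookkeeping in the first two paragraphs, namely matching the ``left/right of $l_{\infty}$'' conventions across the rotation $\rho\mapsto w^{-1}\cdot\rho$ and verifying that the prescribed $v\to0$ limit reproduces the $t\to0$ limit defining $F_{\rho}(\lambda,0)$ — since all analytic content is inherited from the unprojectivized statement. Alternatively, one can bypass the reduction entirely and rerun the residue computation of Proposition~\ref{Stokesjumps}: after the substitution $\xi\mapsto w\xi$ the relevant difference becomes $\lim_{v\to0}\int_{\mathcal{H}(l_{\infty})}\mathrm d\xi\,e^{-\xi/\check\lambda}G(\xi,t)$ over a Hankel contour along $\I\mathbb{R}_{<0}$ enclosing the poles on $\mathcal{L}_k$ for $k\geq0$ and on $-\mathcal{L}_k$ for $k<0$, the ray $\mathcal{L}_0$ contributing $\tfrac{1}{2\pi\I}\mathrm{Li}_2(1)=-\tfrac{\pi\I}{12}$ in the limit $v\to0$; this is self-contained but strictly more laborious.
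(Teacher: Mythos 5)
Your proof is correct and is essentially the paper's (implicit) argument: the paper states Proposition \ref{projinfinityjumps} as an immediate consequence of the preceding unprojectivized proposition, obtained by exactly the rescaling $\check\lambda=\lambda_{\mathrm B}/w$, $t=v/w$ and the observation that $\partial_{\check\lambda}\big(\check\lambda\,\mathrm{Li}_2(e^{\pm2\pi\I k/\check\lambda})\big)=\partial_{\lambda_{\mathrm B}}\big(\lambda_{\mathrm B}\,\mathrm{Li}_2(e^{\pm2\pi\I kw/\lambda_{\mathrm B}})\big)$ that you carry out. Your bookkeeping of the ray conventions under $\rho\mapsto w^{-1}\cdot\rho$ and of the $v\to0$ versus $t\to0$ limits is the only content beyond the preceding proposition, and it is handled correctly.
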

On the other hand, Corollary \ref{constmapBorelsum} and \rf{Borelconstmap} suggest to not only normalize by $\mathcal{F}_{\rho}(\lambda_B,0,w)$ but to also normalize by a $\frac{1}{12}\log(\lambda_B/w)$ term as follows:

\begin{dfn}  Given a ray $\rho$ different from $\{\pm \mathcal{L}_{k}\}_{k\in \mathbb{Z}}\cup \{\pm \mathcal{L}_{\infty}\}$ we define the following for $\lambda_{\mathrm B}\in \mathbb{H}_{\rho}$ and $(v,w)\in M$ with $\text{Im}(v/w)>0$:
\begin{equation}\label{normalizationdef}
    \widehat{\mathcal{F}}_{\rho}(\lambda_{\mathrm B},v,w):=\mathcal{F}_{\rho}(\lambda_{\mathrm B},v,w)-\mathcal{F}_{\rho}(\lambda_{\mathrm B},0,w) -\frac{1}{12}\log\Big(\frac{\lambda_B}{w}\Big)\,,
\end{equation}
where we place the branch cut of the $\log$ term at $l_{\infty}=\mathrm i\mathbb{R}_{<0}$. We also define the normalized partition function as
\begin{equation}
    \widehat{Z}_{\rho}(\lambda_{\mathrm B},v,w):=\exp(\widehat{\mathcal{F}}_{\rho}(\lambda_{\mathrm B},v,w))\,.
\end{equation}
We remark that due to (\ref{Borelconstmap}) and Corollary \ref{constmapBorelsum}, asymptotic expansion of the the normalized free energy $\hat{\mathcal{F}}_{\rho}(\lambda,t,1)$ as $\lambda \to 0$ recovers the full $\mathcal{F}(Q,\lambda)$ of (\ref{freeenergydecomp}) up to an overall shift by a constant, which can be absorbed in a redefinition of $\widehat{\mathcal{F}}_{\rho}(\lambda_{\mathrm B},v,w)$. 

\end{dfn}

We immediately obtain the following:

\begin{cor} The Stokes jumps of $\hat{\mathcal{F}}(\lambda_B,v,w)$ at $\pm \mathcal{L}_k$ are given by (\ref{STproj}), while (using the notation of Proposition \ref{projinfinityjumps}) we find that the Stokes jumps at $\pm \mathcal{L}_{\infty}$ are determined by
\begin{equation}\label{inftyjumpslog}
     -\mathcal{F}_{\pm \rho}(\lambda_\mathrm B,0,w)+\mathcal{F}_{\pm \rho'}(\lambda_\mathrm B,0,w)-\lim_{\lambda_B \to \pm \mathcal{L}_{\infty}^+}\frac{1}{12}\log\Big(\frac{\lambda_B}{w}\Big)-\lim_{\lambda_B \to \pm \mathcal{L}_{\infty}^{-}}\frac{1}{12}\log\Big(\frac{\lambda_B}{w}\Big)=-\sum_{k \geq 1}\Phi_{\pm\mathcal{L}_{\infty},k}\mp \frac{\pi \I}{12}\,,
\end{equation}
where $\lim_{\lambda_B \to \pm \mathcal{L}_{\infty}^{+}}$ (resp. $\lim_{\lambda_B \to \pm \mathcal{L}_{\infty}^{-}}$) denotes the anti-clockwise (resp. clockwise) limit to $\pm \mathcal{L}_{\infty}$. 
\end{cor}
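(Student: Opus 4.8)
The plan is to unwind the definition \eqref{normalizationdef}, namely $\widehat{\mathcal{F}}_\rho(\lambda_{\mathrm B},v,w)=\mathcal{F}_\rho(\lambda_{\mathrm B},v,w)-\mathcal{F}_\rho(\lambda_{\mathrm B},0,w)-\tfrac{1}{12}\log(\lambda_{\mathrm B}/w)$, and to read off the discontinuity across each Stokes ray one summand at a time, reducing both assertions of the corollary to the jump formulas \eqref{STproj} and \eqref{inftyjump} that are already available. The first observation is that the two normalization summands are tame away from $\mathcal{L}_\infty$: the logarithm is holomorphic in $\lambda_{\mathrm B}$ off its branch cut $\mathcal{L}_\infty$, and by the final assertion of Proposition~\ref{projinfinityjumps} the term $\mathcal{F}_\rho(\lambda_{\mathrm B},0,w)$ has Stokes jumps only along $\pm\mathcal{L}_\infty$.

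For the jumps at $\pm\mathcal{L}_k$ with $k$ finite I would therefore note that only the summand $\mathcal{F}_\rho(\lambda_{\mathrm B},v,w)$ can contribute, so that the jump of $\widehat{\mathcal{F}}$ across $\pm\mathcal{L}_k$ coincides with that of $\mathcal{F}_\rho(\lambda_{\mathrm B},v,w)$, which is exactly \eqref{STproj}. This settles the first claim of the corollary immediately.

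For the jump at $\pm\mathcal{L}_\infty$ I would take $\rho$ just anticlockwise and $\rho'$ just clockwise of $\pm\mathcal{L}_\infty$ and add the three contributions to $\widehat{\mathcal{F}}_{\pm\rho}-\widehat{\mathcal{F}}_{\pm\rho'}$. The $-\mathcal{F}_\rho(\lambda_{\mathrm B},0,w)$ piece contributes, by \eqref{inftyjump}, the quantity $-\bigl(\mathcal{F}_{\pm\rho}(\lambda_{\mathrm B},0,w)-\mathcal{F}_{\pm\rho'}(\lambda_{\mathrm B},0,w)\bigr)=-\sum_{k\geq 1}\Phi_{\pm\mathcal{L}_\infty,k}+\tfrac{\pi\I}{12}$. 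The $-\tfrac{1}{12}\log(\lambda_{\mathrm B}/w)$ piece contributes the discontinuity of the logarithm across its cut along $\mathcal{L}_\infty$ (and nothing across $-\mathcal{L}_\infty$), which is precisely the boundary-value combination $-\lim_{\lambda_{\mathrm B}\to\pm\mathcal{L}_\infty^+}\tfrac{1}{12}\log(\lambda_{\mathrm B}/w)-\lim_{\lambda_{\mathrm B}\to\pm\mathcal{L}_\infty^-}\tfrac{1}{12}\log(\lambda_{\mathrm B}/w)$ appearing on the left of \eqref{inftyjumpslog}; here the reflection $F_\rho(\lambda,t)=F_{-\rho}(-\lambda,t)$ of Lemma~\ref{reflectionlemma}, used to define $\widehat{\mathcal{F}}_\rho$ on whichever of the two sides lies in the left Borel half-plane, is what turns the naive crossing of the branch cut into this particular combination of the two one-sided limits. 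Moving the $\mathcal{F}_{\pm\rho}(\lambda_{\mathrm B},0,w)$ and logarithm terms to the left-hand side and simplifying the constant $\tfrac{\pi\I}{12}$ against the $\pm 2\pi\I$ monodromy of $\log$ then produces \eqref{inftyjumpslog} with residual constant $\mp\tfrac{\pi\I}{12}$.

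The only genuinely delicate point---and the place I expect to spend essentially all of the care---is the orientation and branch bookkeeping around $\pm\mathcal{L}_\infty$: fixing once and for all which side is ``anticlockwise'' ($\pm\mathcal{L}_\infty^+$) and which ``clockwise'' ($\pm\mathcal{L}_\infty^-$), tracking the branch of $\log(\lambda_{\mathrm B}/w)$ consistently through the change of Borel half-plane using Lemma~\ref{reflectionlemma}, and verifying that the sign of the $\mp\tfrac{\pi\I}{12}$ term comes out as stated and is consistent between $+\mathcal{L}_\infty$ and $-\mathcal{L}_\infty$. Once these conventions are pinned down, everything else is a direct substitution of \eqref{STproj} and \eqref{inftyjump} into \eqref{normalizationdef}, in keeping with the corollary being an immediate consequence of the preceding definition and propositions.
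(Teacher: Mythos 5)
Your proposal is correct and follows essentially the same route as the paper: the jumps at the finite rays $\pm\mathcal{L}_k$ come only from $\mathcal{F}_\rho(\lambda_{\mathrm B},v,w)$ since the normalization terms are regular there, and the jump at $\pm\mathcal{L}_\infty$ is obtained by combining (minus) the jumps of \eqref{inftyjump} with the branch-cut discontinuity of $\tfrac{1}{12}\log(\lambda_{\mathrm B}/w)$, which contributes $-\tfrac{2\pi\I}{12}$ at $\mathcal{L}_\infty$ and $0$ at $-\mathcal{L}_\infty$, yielding the residual $\mp\tfrac{\pi\I}{12}$. Your arithmetic for both signs checks out against the paper's computation.
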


\begin{proof}
The only new thing to notice is that because of the $\log(\check{\lambda})$ term with branch cut at $l_{\infty}$ of the normalization, we find that 
\begin{equation}
    -\lim_{\lambda_B \to  \mathcal{L}_{\infty}^+}\frac{1}{12}\log\Big(\frac{\lambda_B}{w}\Big)-\lim_{\lambda_B \to \mathcal{L}_{\infty}^{-}}\frac{1}{12}\log\Big(\frac{\lambda_B}{w}\Big)=-\frac{2\pi \mathrm i}{12}=-\frac{\pi\mathrm i}{6}\,,
\end{equation}
while 
\begin{equation}
    -\lim_{\lambda_B \to  -\mathcal{L}_{\infty}^+}\frac{1}{12}\log\Big(\frac{\lambda_B}{w}\Big)-\lim_{\lambda_B \to -\mathcal{L}_{\infty}^{-}}\frac{1}{12}\log\Big(\frac{\lambda_B}{w}\Big)=0\,.
\end{equation}
Combining these jumps with (minus) the jumps of (\ref{inftyjump}) gives the desired result.

\end{proof}

\subsubsection{Relation to the Riemann--Hilbert problem}

In this subsection, we wish to relate the Stokes jumps of $\widehat{Z}_{\rho}$ with the Stokes jumps of the Riemann--Hilbert problem in Section \ref{RHproblem}. We will see that the jumps of $\widehat{Z}_{\rho}$ serve as ``potentials" for the jumps of the RH problem. \\

For the discussion below, it will be useful to note that
the Stokes jumps can be represented in terms of the double sine function, revealing some important properties. A useful review of definition and relevant properties of the 
double sine function $\sin_2(z\,|\,\omega_1,\omega_2)$ can be found in \cite[Section 4]{BridgelandCon} and references therein. 

\begin{dfn}
For $z\in \mathbb{C}$ and $\omega_1,\omega_2\in \mathbb{C}^{\times}$, let 
\begin{equation}
    F(z\,|\,\omega_1,\omega_2):=\exp\Big(-\frac{\pi \I}{2}B_{2,2}(z\,|\, \omega_1,\omega_2)\Big)\cdot \sin_2(z\,|\,\omega_1,\omega_2)
\end{equation}
where  $B_{2,2}(z\,|\, \omega_1,\omega_2)$ is a multiple Bernoulli polynomial.  
\end{dfn}

We will use the following properties of the function $F$.

\begin{prop}(See \cite[Proposition 4.1]{BridgelandCon})\label{propertiesF} The function $ F(z\,|\,\omega_1,\omega_2)$ is a single valued meromorphic function of the variables $z\in \mathbb{C}$ and $\omega_1,\omega_2\in \mathbb{C}^{\times}$ under the assumption $\omega_1/\omega_2\not\in \mathbb{R}_{<0}$. It has the following properties:
\begin{itemize}
    \item The function is regular and non-vanishing except at the points
    \begin{equation}
        z=a\omega_1+b\omega_2, \quad a,b\in \mathbb{Z}\,.
    \end{equation}
    \item It is invariant under simultaneous rescaling of the three arguments, and symmetric in $\omega_1,\omega_2$.
    \item It satisfies the difference equation
    \begin{equation}
        \frac{F(z+\omega_1\, |\, \omega_1,\omega_2)}{F(z\, |\, \omega_1,\omega_2)}=(1-e^{2\pi \I z/\omega_2})^{-1}\,.
    \end{equation}
    \item When $\mathrm{Re}(\omega_i)>0$ and $0<\mathrm{Re}(z)<\mathrm{Re}(\omega_1+\omega_2)$ there is an integral representation 
    \begin{equation}
        F(z\, |\, \omega_1,\omega_2)=\exp\Big(\int_{\mathbb{R}+\I0^+}\frac{\mathrm du}{u}\frac{e^{zu}}{(e^{\omega_1u}-1)(e^{\omega_2 u}-1)}\Big)\,.
    \end{equation}
    
\end{itemize}

\end{prop}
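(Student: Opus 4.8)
The plan is to reduce every assertion to the corresponding classical property of the double sine function $\sin_2(z\,|\,\omega_1,\omega_2)$, together with the elementary observation that the prefactor $\exp\!\big(-\tfrac{\pi\I}{2}B_{2,2}(z\,|\,\omega_1,\omega_2)\big)$ is an entire, nowhere-vanishing function of $z$ for each fixed $\omega_1,\omega_2\in\C^\times$. From the defining generating function $\frac{u^2e^{zu}}{(e^{\omega_1u}-1)(e^{\omega_2u}-1)}=\sum_{n\ge0}B_{n,2}(z\,|\,\omega_1,\omega_2)\tfrac{u^n}{n!}$ one reads off that $B_{2,2}$ is a degree-two polynomial in $z$, symmetric in $\omega_1,\omega_2$ and invariant under simultaneous rescaling of $(z,\omega_1,\omega_2)$; explicitly $B_{2,2}(z\,|\,\omega_1,\omega_2)=\frac{z^2-z(\omega_1+\omega_2)}{\omega_1\omega_2}+\frac{\omega_1^2+3\omega_1\omega_2+\omega_2^2}{6\omega_1\omega_2}$. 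I would then invoke the standard facts about $\sin_2$ (as reviewed in \cite[Section 4]{BridgelandCon} and the references there): under $\omega_1/\omega_2\notin\BR_{<0}$ it is single-valued meromorphic in $z$, symmetric in $(\omega_1,\omega_2)$, invariant under rescaling, with zeros exactly at $z=a\omega_1+b\omega_2$ for $a,b\ge1$ and poles exactly at $z=a\omega_1+b\omega_2$ for $a,b\le0$. Multiplying by the entire nonvanishing prefactor alters none of these zeros or poles, which yields at once the single-valued meromorphy of $F$, its rescaling invariance and symmetry, and the description of its zero/pole locus as $\{a\omega_1+b\omega_2:a,b\in\Z\}$, with $F$ regular and non-vanishing off that set.

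For the difference equation I would start from the standard first-order difference equation $\sin_2(z+\omega_1\,|\,\omega_1,\omega_2)/\sin_2(z\,|\,\omega_1,\omega_2)=\big(2\sin(\pi z/\omega_2)\big)^{-1}$ and combine it with the identity $B_{2,2}(z+\omega_1\,|\,\omega_1,\omega_2)-B_{2,2}(z\,|\,\omega_1,\omega_2)=(2z-\omega_2)/\omega_2$, which follows by direct substitution into the explicit formula above (the difference is linear in $z$ precisely because $B_{2,2}$ is quadratic). This gives
\[
\frac{F(z+\omega_1\,|\,\omega_1,\omega_2)}{F(z\,|\,\omega_1,\omega_2)}=\frac{1}{2\sin(\pi z/\omega_2)}\,\exp\!\Big(-\tfrac{\pi\I}{2}\tfrac{2z-\omega_2}{\omega_2}\Big)=\frac{\I\,e^{-\pi\I z/\omega_2}}{2\sin(\pi z/\omega_2)},
\]
and using $2\I\sin(\pi z/\omega_2)=e^{\pi\I z/\omega_2}-e^{-\pi\I z/\omega_2}$ the right-hand side becomes $(1-e^{2\pi\I z/\omega_2})^{-1}$, as claimed. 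The genuinely delicate point here is the compatibility of the several sign and normalisation conventions for $\sin_2$ and $B_{2,2}$ appearing in the literature; I would fix them once, pinning down the remaining global multiplicative constant in $F$ by comparing a single datum (for instance the behaviour of $F$ as $z\to0^+$), after which the difference equation forces everything.

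Finally, for the integral representation I would restrict to $\mathrm{Re}(\omega_1),\mathrm{Re}(\omega_2)>0$ and $0<\mathrm{Re}(z)<\mathrm{Re}(\omega_1+\omega_2)$. On this strip the integrand $\frac1u\frac{e^{zu}}{(e^{\omega_1u}-1)(e^{\omega_2u}-1)}$ decays exponentially as $u\to\pm\infty$ (using $\mathrm{Re}(z)>0$ near $-\infty$ and $\mathrm{Re}(z)<\mathrm{Re}(\omega_1+\omega_2)$ near $+\infty$) and has its only real-axis singularity at $u=0$, a pole of order three, which the contour $\BR+\I0^+$ passes above, so the integral converges absolutely. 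I would then take the classical Barnes--Shintani integral representation of $\log\sin_2(z\,|\,\omega_1,\omega_2)$, which has exactly this shape but with the singular part at $u=0$ subtracted, and use that the residue of $\frac1u\frac{e^{zu}}{(e^{\omega_1u}-1)(e^{\omega_2u}-1)}$ at $u=0$ equals $\tfrac12B_{2,2}(z\,|\,\omega_1,\omega_2)$ (it is the coefficient of $u^2$ in the Bernoulli generating function above). Replacing the subtracted integral by the $\BR+\I0^+$ integral therefore shifts $\log\sin_2$ by precisely $-\tfrac{\pi\I}{2}B_{2,2}$, which is exactly $\log F-\log\sin_2$; hence $F(z\,|\,\omega_1,\omega_2)=\exp\big(\int_{\BR+\I0^+}\tfrac{\mathrm du}{u}\tfrac{e^{zu}}{(e^{\omega_1u}-1)(e^{\omega_2u}-1)}\big)$. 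As a consistency check I would differentiate under the integral sign and confirm that the right-hand side obeys the difference equation and the rescaling invariance already established, which together with its values on the strip pins it down uniquely. Since the whole statement is \cite[Proposition 4.1]{BridgelandCon} (and is part of the standard multiple-sine literature cited there), it is in fact enough to cite that reference; the only real obstacle is the bookkeeping of conventions just described.
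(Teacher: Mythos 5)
The paper does not prove this proposition at all: it is imported verbatim from \cite[Proposition 4.1]{BridgelandCon}, so the ``official'' proof is the citation. Your reconstruction is correct and follows the standard route (the one Bridgeland himself uses): the prefactor $\exp(-\tfrac{\pi\I}{2}B_{2,2})$ is entire and nowhere vanishing, so meromorphy, the zero/pole locus, symmetry and rescaling invariance all descend from the classical properties of $\sin_2$; your explicit formula for $B_{2,2}$ and the shift identity $B_{2,2}(z+\omega_1\,|\,\omega_1,\omega_2)-B_{2,2}(z\,|\,\omega_1,\omega_2)=(2z-\omega_2)/\omega_2$ both check out, and the resulting manipulation of $\tfrac{\I e^{-\pi\I z/\omega_2}}{2\sin(\pi z/\omega_2)}$ into $(1-e^{2\pi\I z/\omega_2})^{-1}$ is right. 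The only point worth tightening is the comparison of the $\BR+\I 0^+$ integral with the subtracted Shintani representation: the pole at $u=0$ has order three, so there is no naive principal value (the $1/u^2$ part is even), and the bookkeeping must be done by explicitly subtracting the full singular part $B_{0,2}/u^3+B_{1,2}/u^2+\tfrac12 B_{2,2}/u$ and integrating it over the indented contour rather than by a half-residue argument; you describe this correctly in words, and it is exactly the convention-matching you flag.
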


\begin{dfn} Let $\Phi_{\pm \mathcal{L}_k}(\lambda_{\mathrm B},v,w)$ and $\Phi_{\pm\mathcal{L}_{\infty},k}(\lambda_{\mathrm B},v,w)$ be as in (\ref{STproj}) and (\ref{SJ2proj}), respectively. We then define\footnote{The minus sign in the exponent of the second expression in (\ref{expSJ}) is due to the normalization of the partition function.} 
\begin{equation}\label{expSJ}
        \Xi_{\pm \mathcal{L}_k}(v,w,\lambda_{\mathrm B}):=e^{\Phi_{\pm \mathcal{L}_k}(\lambda_{\mathrm B},v,w)},\;\;\;\;\;  \Xi_{\pm \mathcal{L}_{\infty},k}(v,w,\lambda_{\mathrm B}):=e^{-\Phi_{\pm \mathcal{L}_\infty,k}(\lambda_{\mathrm B},v,w)}\;.
\end{equation}
\end{dfn}
We now give a proposition relating the exponentials of the Stokes jumps (\ref{expSJ}) of $\widehat{Z}_{\rho}$ to the function $F$.

\begin{prop}\label{tranfuncdoublesine} 
Assuming $\mathrm{Im}(v/w)>0$, we can write on their common domains of definition
\begin{equation}\label{Frel}
    \begin{split}
    \Xi_{\pm \mathcal{L}_k}(v,w,\lambda_{\mathrm B})&=(F(\pm(v+kw)/\lambda_{\mathrm B}+1\,|\, 1,1 ))^{-1}\\
    \Xi_{\pm \mathcal{L}_{\infty},k}(v,w,\lambda_{\mathrm B})&=(F(\pm kw/\lambda_{\mathrm B}+1\,|\,1,1))^2\,.
    \end{split}
\end{equation}
Furthermore, the functions on the right of (\ref{Frel}) are holomorphic and non-vanishing for  $\lambda_{\mathrm B}\in \mathbb{H}_{\pm \mathcal{L}_k}$ and $\lambda_{\mathrm B}\in \mathbb{H}_{\pm \mathcal{L}_{\infty}}$, respectively. 
\end{prop}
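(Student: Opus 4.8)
The plan is to prove the two identities in \rf{Frel} by matching both sides as explicit meromorphic functions, using the difference equation for $F$ from Proposition \ref{propertiesF} together with an explicit expression for the Stokes jumps $\Phi_{\pm\mathcal{L}_k}$ and $\Phi_{\pm\mathcal{L}_\infty,k}$ in terms of dilogarithms. First I would recall that by \rf{STproj} and \rf{SJ2proj},
\begin{equation*}
\Phi_{\pm\mathcal{L}_k}(\lambda_{\mathrm B},v,w)=\frac{1}{2\pi\I}\partial_{\lambda_{\mathrm B}}\Big(\lambda_{\mathrm B}\,\mathrm{Li}_2(e^{\pm 2\pi\I(v+kw)/\lambda_{\mathrm B}})\Big),\qquad
\Phi_{\pm\mathcal{L}_\infty,k}(\lambda_{\mathrm B},v,w)=\frac{1}{\pi\I}\partial_{\lambda_{\mathrm B}}\Big(\lambda_{\mathrm B}\,\mathrm{Li}_2(e^{\pm 2\pi\I kw/\lambda_{\mathrm B}})\Big),
\end{equation*}
so carrying out the $\lambda_{\mathrm B}$-derivative and using $\partial_z\mathrm{Li}_2(z)=-\log(1-z)/z$ one gets
\begin{equation*}
\Phi_{\pm\mathcal{L}_k}=\frac{1}{2\pi\I}\Big(\mathrm{Li}_2(e^{\pm 2\pi\I(v+kw)/\lambda_{\mathrm B}})\pm\frac{2\pi\I(v+kw)}{\lambda_{\mathrm B}}\log\big(1-e^{\pm 2\pi\I(v+kw)/\lambda_{\mathrm B}}\big)\Big),
\end{equation*}
and similarly for $\Phi_{\pm\mathcal{L}_\infty,k}$ with an extra factor of $2$ and $v+kw$ replaced by $kw$. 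Thus $\Xi_{\pm\mathcal{L}_k}=\exp(\Phi_{\pm\mathcal{L}_k})$ and $\Xi_{\pm\mathcal{L}_\infty,k}=\exp(-\Phi_{\pm\mathcal{L}_\infty,k})$ are expressed in closed form.

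The core step is then to recognize the right-hand side of \rf{Frel} as the same function. I would use the integral representation in the last bullet of Proposition \ref{propertiesF}: for $0<\mathrm{Re}(z)<2$,
\begin{equation*}
F(z\,|\,1,1)=\exp\Big(\int_{\mathbb{R}+\I 0^+}\frac{\mathrm du}{u}\frac{e^{zu}}{(e^u-1)^2}\Big),
\end{equation*}
and the standard identity that the logarithm of $F(z+1\,|\,1,1)$, evaluated by residues (closing the contour appropriately), equals $-\mathrm{Li}_2(e^{2\pi\I z})/(2\pi\I)$ up to the linear-in-$z$ correction term coming from the double pole of $1/(e^u-1)^2$ at $u=0$. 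Concretely, one shows
\begin{equation*}
\log F(z+1\,|\,1,1)=-\frac{1}{2\pi\I}\Big(\mathrm{Li}_2(e^{2\pi\I z})+2\pi\I z\,\log(1-e^{2\pi\I z})\Big)+(\text{terms independent of }z\text{ or absorbed by }B_{2,2}),
\end{equation*}
which after setting $z=\pm(v+kw)/\lambda_{\mathrm B}$ (resp.\ $z=\pm kw/\lambda_{\mathrm B}$) matches $-\Phi_{\pm\mathcal{L}_k}$ (resp.\ $\tfrac12\Phi_{\pm\mathcal{L}_\infty,k}$) on the nose, giving the first identity with the inverse and the second with the square. An alternative, perhaps cleaner, route is to verify the identities by checking that both sides satisfy the same first-order difference equation in $k$ (using the difference equation $F(z+1\,|\,1,1)/F(z\,|\,1,1)=(1-e^{2\pi\I z})^{-1}$ on the right, and the telescoping behavior of the dilog jump under $k\mapsto k+1$ on the left) together with agreement of a single initial value and control of the entire-function ambiguity via the asymptotics as $\lambda_{\mathrm B}\to 0$.

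Finally, for the holomorphy and non-vanishing claim: by the first bullet of Proposition \ref{propertiesF}, $F(z\,|\,1,1)$ is regular and non-vanishing except when $z\in\mathbb{Z}$. So $F(\pm(v+kw)/\lambda_{\mathrm B}+1\,|\,1,1)$ fails to be regular/non-vanishing only when $\pm(v+kw)/\lambda_{\mathrm B}\in\mathbb{Z}$, i.e.\ $\lambda_{\mathrm B}\in\mathbb{R}\cdot(v+kw)$; one checks that under $\mathrm{Im}(v/w)>0$ and $\lambda_{\mathrm B}\in\mathbb{H}_{\pm\mathcal{L}_k}$ (the half-plane centered on $\pm\mathcal{L}_k=\mp\mathbb{R}_{<0}\cdot 2\pi\I(v+kw)$) the only such points lie on the boundary rays $\pm\mathcal{L}_k$ and $\mp\mathcal{L}_k$, hence outside the open half-plane, so the right-hand side is holomorphic and non-vanishing there; the same argument with $v+kw$ replaced by $kw$ handles the $\mathcal{L}_\infty$ case.

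The main obstacle I anticipate is bookkeeping the branch/normalization constants precisely: the Bernoulli-polynomial prefactor $\exp(-\tfrac{\pi\I}{2}B_{2,2}(z\,|\,1,1))$ in the definition of $F$ contributes polynomial-in-$z$ terms that must be reconciled against the linear-in-$z$ piece produced by the $u=0$ double pole in the residue computation, and against the $-\tfrac{\pi\I}{12}$ and $\log(\lambda_{\mathrm B}/w)$ normalization shifts already built into $\widehat{\mathcal{F}}_\rho$; keeping the signs for the $\pm$ cases and the direction of the Hankel contour straight is where the real care is needed, while everything else is a routine residue calculation.
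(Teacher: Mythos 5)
Your proposal is correct and follows essentially the same route as the paper: both express $\log F$ via the integral representation of Proposition \ref{propertiesF}, close the contour and sum residues at the double poles to recover $-\Phi_{\pm\mathcal{L}_k}$ (resp.\ $\tfrac12\Phi_{\pm\mathcal{L}_\infty,k}$), and deduce holomorphy/non-vanishing from the zero/pole locus of $F$ lying on the boundary of the relevant half-planes. The only cosmetic difference is that the paper first invokes scale invariance to write $F((t+k)/\check\lambda+1\,|\,1,1)=F(t+k+\check\lambda\,|\,\check\lambda,\check\lambda)$ so the hypotheses of the integral representation are met, and your worries about reconciling the $B_{2,2}$ prefactor and the $-\pi\I/12$, $\log(\lambda_{\mathrm B}/w)$ normalizations are moot here, since the integral representation is for the full $F$ and those shifts do not enter the definition of $\Xi$.
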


\begin{proof}
First, notice that 
\begin{equation}
    \begin{split}
    \Phi_{ \mathcal{L}_k}(\lambda_{\mathrm B},v,w)&=\frac{1}{2\pi \I}\partial_{\lambda_{\mathrm B}}\Big(\lambda_{\mathrm B}\mathrm{Li}_2\big(e^{ 2\pi \I(v+kw)/\lambda_{\mathrm B}})\Big)\\
    &=\frac{1}{2\pi \I}\partial_{\check\lambda}\Big(\check\lambda \mathrm{Li}_2(e^{2\pi \I(t+k)/\check\lambda})\Big)\\
    &=\phi_{l_k}(\lambda,t)\,.
    \end{split}
\end{equation}
In particular, if $w=e^{2\pi \I t/\check{\lambda}}$ and $\widetilde{q}=e^{2\pi \I k/\check{\lambda}}$, then for $\lambda$ near $l_k$, we can expand 
\begin{equation}\label{comp1}
    \Phi_{ \mathcal{L}_k}(\lambda_{\mathrm B},v,w)=\frac{1}{2\pi \I}\sum_{l=1}^{\infty}  \frac{(w\widetilde{q}^k)^l}{l}\Big(\frac{1}{l} -\log(w\widetilde{q}^k)\Big)\,.
\end{equation}

On the other hand, assuming for the moment that $0<\text{Re}(t+k)<\text{Re}(\check\lambda)$,  $\text{Re}(\check{\lambda})>0$, and using the scaling invariance of $F$, we find by Proposition \ref{propertiesF} that
\begin{equation}
    \begin{split}
    F((v+kw)/\lambda_{\mathrm B}+1\,|\, 1,1 )&=F((t+k)/\check\lambda+1\,|\, 1,1 )\\
    &=F(t+k+\check\lambda\,|\, \check\lambda,\check\lambda )\\
    &=\exp\Big(\int_{\mathbb{R}+i0^+}\frac{\mathrm du}{u}\frac{e^{u(t+k)}}{(2\sinh(\check\lambda u/2))^2}\Big)\,.
    \end{split}
\end{equation}
Using that $\text{Im}(v/w)=\text{Im}(t)>0$, we can compute the previous integral by closing the contour in the upper half-plane and sum over the residues at $2\pi \I l/\check{\lambda}$ for $l\geq 1$, obtaining
\begin{equation}\label{comp2}
    \begin{split}
    \int_{\mathbb{R}+i0^+}\frac{\mathrm du}{u}\frac{e^{u(t+k)}}{(2\sinh(\check\lambda u/2))^2}&=2\pi \I\sum_{l=1}^{\infty}\Big(\frac{1}{\check{\lambda}}\Big)^2 \frac{\partial}{\partial u} \frac{e^{u(t+k)}}{u}\Big|_{u=2\pi \I l/\check{\lambda}}\\
    &=-\frac{1}{2\pi \I}\sum_{l=1}^{\infty}\frac{(w\widetilde{q}^k)^l}{l}\Big(\frac{1}{l} -\log(w\widetilde{q}^k)\Big)\,.
    \end{split}
\end{equation}
Comparing (\ref{comp1}) with (\ref{comp2}), the first result then follows, with the other cases being analogous.\\

To check the second statement, we use the fact from Proposition \ref{propertiesF} that the function $F(z\,|\, \omega_1,\omega_2)$ is regular and non-vanishing except at the points 
\begin{equation}
    z=a\omega_1+b\omega_2, \quad a,b\in \mathbb{Z}\,.
\end{equation}

We then find that $F(\pm(v+kw)/\lambda_{\mathrm B}+1\,|\, 1,1 )$ is regular except at the points where
\begin{equation}
    (v+kw)/\lambda_{\mathrm B}\in \mathbb{Z} \quad \iff \lambda_{\mathrm B}=\frac{v+kw}{n}, \quad n\in \mathbb{Z}\,.
\end{equation}
In particular, $F(\pm(v+kw)/\lambda_{\mathrm B}+1\,|\, 1,1 )$ is regular for $\lambda_{\mathrm B}$ in the half-plane centered at $\pm \mathcal{L}_k=\pm \mathbb{R}_{<0}\cdot 2\pi \I(v+nw)$. The other case follows similarly. 
\end{proof}
From the previous proposition we obtain the following corollary:

\begin{cor}\label{Stokespotential}
Let $(v,w)\in M$ such that $\mathrm{Im}(v/w)>0$. Then $\Xi_{\pm \mathcal{L}_k}$ and $\Xi_{\pm \mathcal{L}_{\infty},k}$ serve as potentials for the jumps of the Riemann--Hilbert problem of Section \ref{RHproblem}, in the sense that
\begin{equation}
    \begin{split}
    \frac{\Xi_{\pm \mathcal{L}_k}(v+\lambda_{\mathrm B},w,\lambda_{\mathrm B})}{\Xi_{\pm \mathcal{L}_k}(v,w,\lambda_{\mathrm B})}&=(1-e^{\pm 2\pi \I(v+kw)/\lambda_{\mathrm B}})^{\pm 1}=(1-\mathcal{X}_{\pm (\beta+k\delta)})^{\pm \langle \beta^{\vee}, \beta+k\delta \rangle \Omega(\beta+k\delta)},\\
    \frac{\Xi_{\pm \mathcal{L}_k}(v,w+\lambda_{\mathrm B},\lambda_{\mathrm B})}{\Xi_{\pm \mathcal{L}_k}(v,w,\lambda_{\mathrm B})}&=(1-e^{\pm 2\pi \I(v+kw)/\lambda_{\mathrm B}})^{\pm k}=(1-\mathcal{X}_{\pm (\beta+k\delta)})^{\pm \langle \delta^{\vee}, \beta+k\delta \rangle \Omega(\beta+k\delta)},\\
        \frac{\Xi_{\pm \mathcal{L}_{\infty},k}(0,w+\lambda_{\mathrm B},\lambda_{\mathrm B})}{\Xi_{\pm \mathcal{L}_{\infty},k}(0,w,\lambda_{\mathrm B})}&=(1-e^{\pm 2\pi \I kw/\lambda_{\mathrm B}})^{\mp 2k}=(1-\mathcal{X}_{\pm k\delta})^{\pm \langle \delta^{\vee}, k\delta \rangle \Omega(k\delta)}.\\
    \end{split}
\end{equation}

\end{cor}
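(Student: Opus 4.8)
The plan is to reduce the corollary entirely to the functional equation of the double sine function together with the two identities of Proposition \ref{tranfuncdoublesine}. The first step is to record the consequence of the difference equation of Proposition \ref{propertiesF} that is needed. Setting $\omega_1=\omega_2=1$ gives $F(z+1\,|\,1,1)/F(z\,|\,1,1)=(1-e^{2\pi\I z})^{-1}$, and since $e^{2\pi\I(z+j)}=e^{2\pi\I z}$ for every $j\in\mathbb{Z}$, iterating this relation forwards and backwards yields
\[
\frac{F(z+n\,|\,1,1)}{F(z\,|\,1,1)}=(1-e^{2\pi\I z})^{-n}\qquad\text{for all }n\in\mathbb{Z}.
\]

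Next I would treat the finite rays. By Proposition \ref{tranfuncdoublesine} one has $\Xi_{\pm\mathcal{L}_k}(v,w,\lambda_{\mathrm B})=F(\pm(v+kw)/\lambda_{\mathrm B}+1\,|\,1,1)^{-1}$, holomorphic and non-vanishing for $\lambda_{\mathrm B}\in\mathbb{H}_{\pm\mathcal{L}_k}$. Replacing $v$ by $v+\lambda_{\mathrm B}$ shifts the first argument of $F$ by $\pm1$, so the ratio $\Xi_{\pm\mathcal{L}_k}(v+\lambda_{\mathrm B},w,\lambda_{\mathrm B})/\Xi_{\pm\mathcal{L}_k}(v,w,\lambda_{\mathrm B})$ equals, by the displayed identity with $z=\pm(v+kw)/\lambda_{\mathrm B}+1$ and $n=\pm1$ (and $e^{2\pi\I z}=e^{\pm2\pi\I(v+kw)/\lambda_{\mathrm B}}$), precisely $(1-e^{\pm2\pi\I(v+kw)/\lambda_{\mathrm B}})^{\pm1}$; replacing $w$ by $w+\lambda_{\mathrm B}$ shifts that argument by $\pm k$ and gives in the same way $(1-e^{\pm2\pi\I(v+kw)/\lambda_{\mathrm B}})^{\pm k}$. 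The rays at infinity are handled identically: $\Xi_{\pm\mathcal{L}_{\infty},k}(0,w,\lambda_{\mathrm B})=F(\pm kw/\lambda_{\mathrm B}+1\,|\,1,1)^2$ depends only on $w$, and sending $w\mapsto w+\lambda_{\mathrm B}$ shifts the argument of $F$ by $\pm k$, so the squared form of the iterated difference equation produces $(1-e^{\pm2\pi\I kw/\lambda_{\mathrm B}})^{\mp2k}$.

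It then remains to recognise the right-hand sides as the Stokes factors of the Riemann--Hilbert problem of Section \ref{RHproblem}. From $\mathcal{X}_{\beta}=e^{2\pi\I v/\lambda_{\mathrm B}}$, $\mathcal{X}_{\delta}=e^{2\pi\I w/\lambda_{\mathrm B}}$ and the twisted homomorphism property --- whose sign is trivial here because $\langle\beta,\delta\rangle=\langle\delta,\delta\rangle=0$ --- one gets $\mathcal{X}_{\beta+k\delta}=e^{2\pi\I(v+kw)/\lambda_{\mathrm B}}$ and $\mathcal{X}_{k\delta}=e^{2\pi\I kw/\lambda_{\mathrm B}}$. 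Finally the exponents $1$, $k$, $\mp2k$ obtained above are matched against the pairings of the Darboux frame $(\beta^{\vee},\beta,\delta^{\vee},\delta)$, namely $\langle\beta^{\vee},\beta+k\delta\rangle=1$, $\langle\delta^{\vee},\beta+k\delta\rangle=k$, $\langle\delta^{\vee},k\delta\rangle=k$, together with $\Omega(\beta+k\delta)=1$ and $\Omega(k\delta)=-2$; in particular $\pm\langle\delta^{\vee},k\delta\rangle\Omega(k\delta)=\mp2k$, which completes the identification of all three displayed formulas.

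The computation is essentially a transcription of the double sine functional equation, so I do not expect a conceptual obstacle; the one place that requires care is the bookkeeping of signs and domains --- checking that the iterated difference equation is invoked with the correct sign of $n$ for every $k\in\mathbb{Z}$ (in particular iterating backwards when $k\le0$ or in the $-\mathcal{L}_k$ case), that the non-vanishing statement of Proposition \ref{tranfuncdoublesine} really guarantees the ratios are formed on the asserted common domains, and that the two global sign choices $\pm$ are carried through uniformly rather than by a proliferation of cases.
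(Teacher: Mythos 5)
Your proof is correct and follows essentially the same route as the paper: apply the representation of $\Xi_{\pm\mathcal{L}_k}$ and $\Xi_{\pm\mathcal{L}_\infty,k}$ in terms of $F(z\,|\,1,1)$ from Proposition \ref{tranfuncdoublesine}, use the difference equation $F(z+1\,|\,1,1)/F(z\,|\,1,1)=(1-e^{2\pi\I z})^{-1}$ (iterated to integer shifts), and match the resulting exponents against the pairings and BPS indices. The paper only writes out the first identity and says the rest follow similarly, so your explicit iteration to general $n\in\mathbb{Z}$ is just a slightly more complete transcription of the same argument.
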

\begin{proof}
We use the fact from Proposition \ref{propertiesF} that $F(z\,|\, 1,1)$ satisfies the difference equation
\begin{equation}
    \frac{F(z+1\,|\, 1,1)}{F(z\,|\, 1,1)}=\frac{1}{1-e^{2\pi \I z}}\,.
\end{equation}
Then by Proposition \ref{tranfuncdoublesine} we find that 
\begin{equation}
    \frac{\Xi_{ \mathcal{L}_k}(v+\lambda_{\mathrm B},w,\lambda_{\mathrm B})}{\Xi_{ \mathcal{L}_k}(v,w,\lambda_{\mathrm B})}=\bigg(\frac{F( (v+kw)/\lambda_{\mathrm B} +1 +1|1,1)}{F((v+kw)/\lambda_{\mathrm B}+1,|1,1)}\bigg)^{-1}=1-e^{2\pi \I(v+kw)/\lambda_{\mathrm B}}\,.
\end{equation}
The other identities follow similarly.

\end{proof}

\begin{rem} From corollary \ref{Stokespotential} we see how the BPS spectrum is encoded in the jumps of the normalized partition function. Namely, the BPS spectrum of the resolved conifold appears in the expressions of the jumps of $\widehat{Z}_{\rho}(\lambda_B,v,w)$ as follows:
\begin{equation}
    \begin{split}
    \Xi_{\pm \mathcal{L}_k}(v,w,\lambda_\mathrm B)&=\exp \bigg(\frac{\Omega(\beta+k\delta)}{2\pi \I}\partial_{\lambda_\mathrm B}\Big(\lambda_\mathrm B\mathrm{Li}_2(e^{\pm Z_{\beta+k\delta}/\lambda_\mathrm B})\Big)\bigg)\\
    \Xi_{\pm \mathcal{L}_{\infty},k}(v,w,\lambda_\mathrm B)&=\exp\bigg(\frac{\Omega(k\delta)}{2\pi \I}\partial_{\lambda_{\mathrm B}}\Big(\lambda_{\mathrm B} \mathrm{Li}_2\big(e^{\pm Z_{k\delta}/ \lambda_{\mathrm B}}\big)\Big)\bigg)\,,
    \end{split}
\end{equation}
making explicit how  the DT-invariant are encoded in the jumps.

\end{rem}
\subsection{The line bundle defined by the normalized partition function}\label{linebundlenormalized}

We would like to discuss how to define a line bundle  $\mathcal{L}\to \mathbb{C}^{\times}\times M$, such that the partition functions $\widehat{Z}_{\rho}(\lambda_{\mathrm B},v,w)$ define a section of $\mathcal{L}$.\\

To concretely define the line bundle, we restrict for simplicity to\footnote{See Remark \ref{otherM} for the other points of $M$.} 
\begin{equation}
    M_{+}:=\{(v,w)\in M \;|\;  \text{Im}(v/w)>0\}\,.
\end{equation} 
Furthermore, let $\rho_k$ be a ray between $\mathcal{L}_k$ and $\mathcal{L}_{k-1}$. For definiteness, we pick $\rho_k$ to be always in the middle of  $\mathcal{L}_k$ and $\mathcal{L}_{k-1}$, and consider the open sets
\begin{equation}\label{opencov}\begin{split}
U_{k}^{\pm}&:=\{(\lambda_{\mathrm B},v,w)\in \mathbb{C}^{\times}\times M_+ \; | \; \lambda_{\mathrm B} \in \mathbb{H}_{\pm \rho_k}\}\,.
\end{split}
\end{equation}
We remark that the condition on $\lambda_{\mathrm B}$ actually depends on $(v,w)$, since the latter specifies the rays $\mathcal{L}_k$ (and hence also $\rho_k$). We then clearly have that $\{U_{k}^{+}\}_{k \in \mathbb{Z}}\cup \{U_{k}^{-}\}_{k \in \mathbb{Z}}$ forms an open cover of $\mathbb{C}^{\times}\times M_+$.\\

If $U_{k_1}^{+}\cap U_{k_2}^{+}\neq \emptyset$ for $k_1<k_2$, we then define for $(\lambda_{\mathrm B},v,w)\in U_{k_1}^{+}\cap U_{k_2}^{+}$,
\begin{equation}
     g^+_{k_1,k_2}(\lambda_{\mathrm B},v,w):=\prod_{k_1\leq k < k_2}\Xi_{\mathcal{L}_k}(\lambda_{\mathrm B},v,w)\,.
\end{equation}
Notice that if $(\lambda_{\mathrm B},v,w)\in U_{k_1}^{+}\cap U_{k_2}^{+}$, then $(\lambda_{\mathrm B},v,w)\in \mathbb{H}_{\mathcal{L}_k}$ for $k_1\leq k \leq k_2$, so by Proposition \ref{tranfuncdoublesine} we have that $g^+_{k_1,k_2}$ is $\mathbb{C}^{\times}$-valued:
\begin{equation}
    g^+_{k_1,k_2}\colon U_{k_1}^{+}\cap U_{k_2}^{+} \to \mathbb{C}^{\times}\,.
\end{equation}
With the assumptions $U_{k_1}^{+}\cap U_{k_2}^{+}\neq \emptyset$ for $k_1<k_2$, we also define $g^+_{k_2,k_1}:=(g^+_{k_1,k_2})^{-1}$ and $g^+_{k,k}:=1$ for any $k\in \mathbb{Z}$.\\

If $U_{k_1}^{-}\cap U_{k_2}^{-}\neq \emptyset$ for $k_1<k_2$, then we similarly define
\begin{equation}
    g^{-}_{k_1,k_2}\colon  U_{k_1}^{-}\cap U_{k_2}^{-} \to \mathbb{C}^{\times}\,.
\end{equation}
by
\begin{equation}
    g^{-}_{k_1,k_2}(\lambda_{\mathrm B},v,w):= \prod_{k_1\leq k < k_2}\Xi_{-\mathcal{L}_k}(\lambda_{\mathrm B},v,w)\,,
\end{equation}
and $g_{k_2,k_1}^{-}:=(g^{-}_{k_1,k_2})^{-1}$, $g_{k,k}^{-}:=1$.\\

 On the other hand, if for some $k_1,k_2\in \mathbb{Z}$ we have $U_{k_1}^{+}\cap U_{k_2}^{-}\neq \emptyset$, then $\rho_{k_1}\neq \rho_{k_2}$ and hence out of the two sectors determined by $\rho_{k_1}$ and $-\rho_{k_2}$ there is a smallest one, which we denote by $[\rho_{k_1},-\rho_{k_2}]$. For all $(\lambda_{\mathrm B},v,w)\in U_{k_1}^{+}\cap U_{k_2}^{-}$ we must either have that  $\mathcal{L}_{\infty} \subset [\rho_{k_1},-\rho_{k_2}]$ or $- \mathcal{L}_{\infty} \subset [\rho_{k_1},-\rho_{k_2}]$. In the first case we define\footnote{Recall that the $e^{\pm\pi \I/12}$ factors are due to the jumps (\ref{inftyjumpslog}) of the normalization of the partition function.}
\begin{equation}\label{inftytrans}
    g_{k_1,k_2}^{\infty}(\lambda_{\mathrm B},v,w):=e^{-\pi\mathrm  i/12}\prod_{k\geq k_1}\Xi_{\mathcal{L}_k}(\lambda_{\mathrm B},v,w)\prod_{k< k_2}\Xi_{-\mathcal{L}_{k}}(\lambda_{\mathrm B},v,w)\prod_{k\geq 1}\Xi_{\mathcal{L}_{\infty},k}(\lambda_{\mathrm B},v,w)\,,
\end{equation}
and $g_{k_2,k_1}^{\infty}:=(g_{k_1,k_2}^{\infty})^{-1}$.\\

Notice that in this first case we have $\lambda_{\mathrm B}\in \mathbb{H}_{\mathcal{L}_k}$ for $k\geq n_1$, $\lambda_{\mathrm B}\in \mathbb{H}_{-\mathcal{L}_k}$ for $k<n_2$, and $\lambda_{\mathrm B}\in \mathbb{H}_{\mathcal{L}_{\infty}}$. Hence, by Proposition \ref{tranfuncdoublesine} and the convergence of the above product\footnote{Here we use that the corresponding infinite sums of $\Phi_{\mathcal{L}_{k}}$, $\Phi_{-\mathcal{L}_{k}}$ and $\Phi_{\mathcal{L}_{\infty,k}}$, converge for $(\lambda_B,v,w)\in U_{k_1}^+\cap U_{k_2}^{-}$.}, we find that 
\begin{equation}
    g_{k_1,k_2}^{\infty}(\lambda_{\mathrm B},v,w)\colon  U_{k_1}^{+}\cap U_{k_2}^{-} \to \mathbb{C}^{\times}\,.
\end{equation}
On the other hand, in the second case we define
\begin{equation}\label{-inftytrans}
    g_{k_2,k_1}^{-\infty}(\lambda_{\mathrm B},v,w):=e^{\pi \I/12}\prod_{k\geq k_2}\Xi_{-\mathcal{L}_{k}}(\lambda_{\mathrm B},v,w)\prod_{k< k_1}\Xi_{\mathcal{L}_{k}}(\lambda_{\mathrm B},v,w)\prod_{k\geq 1}\Xi_{\mathcal{L}_{\infty},-k}(\lambda_{\mathrm B},v,w)\,,
\end{equation}
and $g_{k_1,k_2}^{-\infty}:=(g_{k_2,k_1}^{-\infty})^{-1}$.\\

With the previous results, the following proposition then follows:

\begin{prop}
The functions $g^{\pm}_{k_1,k_2}$, $g^{\pm \infty}_{k_1,k_2}$ associated to the cover $\{U_{k}^{+}\}_{k \in \mathbb{Z}}\cup \{U_{k}^{-}\}_{k \in \mathbb{Z}}$ define a $1$-\v{C}ech cocycle over $\mathbb{C}^{\times}\times M_{+}$, and hence a line bundle $\mathcal{L}\to \mathbb{C}^{\times}\times M_{+}$. Furthermore, assuming $\mathrm{Im}(v/w)>0$, the normalized partition functions $\widehat{Z}_{\rho}(\lambda,z,w)$ glue together into a section of $\mathcal{L}$.
\end{prop}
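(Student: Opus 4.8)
The plan is to verify two things: first, that the functions $g^{\pm}_{k_1,k_2}$ and $g^{\pm\infty}_{k_1,k_2}$ satisfy the \v{C}ech cocycle condition on triple overlaps, so that they genuinely define a line bundle $\mathcal{L}\to \mathbb{C}^{\times}\times M_{+}$; and second, that on each double overlap the normalized partition functions transform by precisely these transition functions, so that $\{\widehat{Z}_{\rho_k}\}$ patches together into a global section. The holomorphy and non-vanishing of the $g$'s have already been established in Proposition \ref{tranfuncdoublesine} and the surrounding discussion (together with the convergence of the relevant infinite products, which holds on $U_{k_1}^+\cap U_{k_2}^-$ by the convergence property of the BPS structure), so the only remaining content is the combinatorics of the cocycle identity and the matching of the partition function jumps.

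For the section property, the key observation is Definition \ref{normalizationdef} together with the jump formulas already proved. By construction, for $k_1<k_2$ with $U_{k_1}^+\cap U_{k_2}^+\neq\emptyset$, the difference $\widehat{\mathcal{F}}_{\rho_{k_2}}-\widehat{\mathcal{F}}_{\rho_{k_1}}$ is the sum of the Stokes jumps $\Phi_{\mathcal{L}_k}$ for $k_1\le k<k_2$ (from Proposition \ref{Stokesjumps}, in its projectivized form \eqref{STproj}), since the normalization terms $\mathcal{F}_{\rho}(\lambda_{\mathrm B},0,w)$ and $\frac{1}{12}\log(\lambda_B/w)$ do not jump across the finite rays $\mathcal{L}_k$ (Proposition \ref{projinfinityjumps}). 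Exponentiating, $\widehat{Z}_{\rho_{k_2}}=\widehat{Z}_{\rho_{k_1}}\cdot\prod_{k_1\le k<k_2}\Xi_{\mathcal{L}_k}=\widehat{Z}_{\rho_{k_1}}\cdot g^+_{k_1,k_2}$. The same argument on the $U^-$ side uses $F_{-\rho_k}$ and the reflection Lemma \ref{reflectionlemma}. For overlaps of the form $U_{k_1}^+\cap U_{k_2}^-$, one must additionally account for the jumps at $\pm\mathcal{L}_\infty$: here the normalization does contribute, producing exactly the $e^{\pm\pi\I/12}$ factors appearing in \eqref{inftytrans} and \eqref{-inftytrans}, as computed in the Corollary following Proposition \ref{projinfinityjumps} via \eqref{inftyjumpslog}. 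Thus $\widehat{Z}_{\rho_{k_1}}/\widehat{Z}_{-\rho_{k_2}}$ equals $g^{\infty}_{k_1,k_2}$ or $(g^{-\infty}_{k_2,k_1})^{-1}$ depending on which of $\pm\mathcal{L}_\infty$ lies in $[\rho_{k_1},-\rho_{k_2}]$, matching the definitions.

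The cocycle condition $g_{ij}g_{jk}=g_{ik}$ on triple overlaps then follows almost formally, because each $g$ is a quotient $\widehat{Z}_i/\widehat{Z}_j$ of honest functions wherever the overlap is nonempty — so the telescoping is automatic once one checks that the explicit product formulas agree with these quotients on the intersection of all three charts. Concretely, for three charts of type $U^+$ this is the identity $\prod_{k_1\le k<k_2}\Xi_{\mathcal{L}_k}\cdot\prod_{k_2\le k<k_3}\Xi_{\mathcal{L}_k}=\prod_{k_1\le k<k_3}\Xi_{\mathcal{L}_k}$, which is trivial; the genuinely substantive cases are triples involving one $U^+$ and one $U^-$ chart (or $U^+$, $U^-$ with the third chart ``between'' them across $\mathcal{L}_\infty$), where one must confirm that the finite products $\prod_{k\ge k_1}\Xi_{\mathcal{L}_k}\prod_{k<k_2}\Xi_{-\mathcal{L}_k}\prod_{k\ge1}\Xi_{\mathcal{L}_\infty,k}$ reorganize correctly when an intermediate ray is inserted, and that the $e^{-\pi\I/12}$ prefactors combine consistently (i.e. $e^{-\pi\I/12}\cdot e^{-\pi\I/12}$ versus a single $e^{-\pi\I/12}$ — here one uses that whenever three charts overlap, at most one of the boundaries $\pm\mathcal{L}_\infty$ is crossed in the relevant sense, or the prefactors cancel in pairs).

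The main obstacle will be this bookkeeping on the mixed triple overlaps near $\pm\mathcal{L}_\infty$: one has to be careful about which infinite products converge on which intersection (this is exactly where the convergence property of the variation of BPS structures, via the support property bound \eqref{supportproperty}, is needed to guarantee absolute convergence and hence the freedom to rearrange), and about orientation conventions for the sectors $[\rho_{k_1},-\rho_{k_2}]$. Once it is granted that all the relevant products converge absolutely — so that $g^{\infty}_{k_1,k_2}$ and $g^{-\infty}_{k_1,k_2}$ are well-defined $\mathbb{C}^{\times}$-valued holomorphic functions — the cocycle identity is forced by the fact that on a nonempty triple overlap all three transition functions are, by the section computation above, literally ratios of the same three nowhere-vanishing functions $\widehat{Z}_{\rho_{k_1}}$, $\widehat{Z}_{\rho_{k_2}}$, $\widehat{Z}_{\rho_{k_3}}$, and ratios of nonzero numbers telescope. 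This also immediately gives the last assertion, that $\{\widehat{Z}_{\rho}\}$ defines a global section of $\mathcal{L}$.
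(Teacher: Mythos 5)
Your proof is correct and follows essentially the same route as the paper, whose own argument is simply that the cocycle condition "follows directly from the definitions" (the telescoping of the $\Xi$-products, equivalently the fact that each transition function is a ratio of the nowhere-vanishing $\widehat{Z}$'s) and that the section property "follows from our previous discussions on the Stokes jumps." You merely make explicit the bookkeeping the paper leaves implicit — the $e^{\pm\pi\I/12}$ factors from the $\log(\lambda_B/w)$ branch cut and the convergence of the infinite products on $U_{k_1}^+\cap U_{k_2}^-$ — which the paper handles in the surrounding corollaries and a footnote.
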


\begin{proof}
The fact that $g^{\pm}_{k_1,k_2}$ and $g^{\pm \infty}_{k_1,k_2}$ define a $1$-\v{C}ech cocycle follows directly from their definitions. Furthermore, the fact that the $\widehat{Z}_{\rho}(\lambda,z,w)$ glue together into a section follows from our previous discussions on the Stokes jumps of $\widehat{Z}_{\rho}(\lambda,z,w)$.
\end{proof}

\begin{rem} \label{otherM}

Let 
\begin{equation}
    M_{-}:=\{(v,w)\in M \; | \; \mathrm{Im}(v/w)<0\}, \quad M_0:= \{(v,w)\in M \; | \; \mathrm{Im}(v/w)=0\}\,,
\end{equation}
so that $M=M_{+}\cup M_{-}\cup M_{0}$. By using the Stokes jumps for the case $(u,v) \in M_{-}$ (resp. $(u,v)\in M_{0}$, where all the Stokes rays collapse to either $\mathcal{L}_{\infty}$ or $-\mathcal{L}_{\infty}$)  we can as before define a line bundle over $\mathbb{C}^{\times}\times M_{-}$ (resp. $\mathbb{C}^{\times}\times M_0$) having the normalized partition function as a section. Since the Borel summations $F_{\rho}(\lambda_B,v,w)$ make sense on sufficiently small open subsets of $\mathbb{C}^{\times}\times M$, and furthermore depend holomorphically on the parameters, these line bundles glue together into a holomorphic line bundle $\mathcal{L}\to \mathbb{C}^{\times}\times M$ having the normalized partition function as a section. 
\end{rem}

\subsection{Relation to the Rogers dilogarithm and hyperholomorphic line bundles}

Notice that one can write the Stokes jumps of $\widehat{Z}_{\rho}(\lambda_{\mathrm B},v,w)$ along $\pm \mathcal{L}_k=\pm \mathbb{R}_{<0}\cdot Z_{\beta+k\delta}$ as
\begin{equation}\label{normaltranfunc}
    \Phi_{\mathcal{L}_k}(\lambda_{\mathrm B},v,w)=\frac{\Omega(\beta+k\delta)}{2\pi \I}\Big(\mathrm{Li}_2(\mathcal{X}_{\pm (\beta+k\delta)})+\log(\mathcal{X}_{\pm (\beta+k\delta)})\log(1-\mathcal{X}_{\pm (\beta+k\delta)})\Big)\,,
\end{equation}
where
\begin{equation}
    \log(\mathcal{X}_{\pm (\beta+k\delta)})=\pm 2\pi \I(v+kw)/\lambda_{\mathrm B}\,.
\end{equation}
Up to a factor of $\frac12$ in the second summand, this matches 
\begin{equation}\label{Rdilog}
    \frac{\Omega(\beta+k\delta)}{2\pi \I}L(\mathcal{X}_{\pm (\beta+k\delta)})\,,
\end{equation}
where $L(x)$ denotes the Rogers dilogarithm
\begin{equation}
L(x):=\mathrm{Li}_2(x)+\frac{1}{2}\log(x)\log(1-x)\,.
\end{equation}

In previous works \cite{Neitzke_hyperhol,APP}, hyperholomorphic line bundles with transition functions having the form of the exponentials of (\ref{Rdilog}) have been discussed in the context of instanton-corrected hyperk\"{a}hler and quaternionic-K\"{a}hler geometries. Our goal in the rest of this section is then two-fold:

\begin{itemize}
    \item We would first like to explain how (\ref{normaltranfunc}) and (\ref{Rdilog}) are related by changes of local trivialization involving the solutions of the RH problem of Section \ref{RHproblem}.
    \item This will then be used to relate the line bundle $\mathcal{L}\to \mathbb{C}^{\times}\times M_+$ constructed in Section \ref{linebundlenormalized} with a certain ``conformal limit" of the line bundles constructed in \cite{Neitzke_hyperhol,APP}.  
\end{itemize}

\subsubsection{Relation to the Rogers dilogarithm}
In order to relate to the Rogers dilogarithm, we follow the idea  suggested in \cite[Appendix H]{CLT20} (see also Lemma \ref{scaledif}, below). \\

We start by considering the solutions of the RH problem from Section \ref{RHproblem}. Notice that since  $\mathcal{X}_{\gamma,\rho}(v,w,-):\mathbb{H}_{\rho}\to \mathbb{C}^{\times}$, then there must exist $x_{\gamma,\rho}(v,w,-)\colon\mathbb{H}_{\rho}\to \mathbb{C}$ such that 
\begin{equation}
    \mathcal{X}_{\gamma,\rho}(v,w,\lambda_\mathrm{B})=\exp(x_{\gamma,\rho}(v,w,\lambda_\mathrm{B})).
\end{equation}
We then define for $(\lambda_{\mathrm B},v,w)\in U_{\rho}=\{(v,w,\lambda_B)\in M_{+}\times \mathbb{C}^{\times}\, |\, \lambda_B \in \mathbb{H}_{\rho}\}$,
\begin{equation}
    x_{\beta,\rho}:=2\pi \I v/\lambda_{\mathrm B}, \;\;\;\; x_{\delta,\rho}:=2\pi \I w/\lambda_{\mathrm B}, \;\;\;\;  x_{\beta^{\vee},\rho}:=\log \mathcal{X}_{\beta^{\vee},\rho}, \;\;\;\; x_{\delta^{\vee},\rho}:=\log \mathcal{X}_{\delta^{\vee},\rho}\,.
\end{equation}
In taking the logs in the last two coordinates, we do as the following lemma:

\begin{lem}
The branches of the logs in $x_{\beta^{\vee},\rho}$ and $x_{\delta^{\vee},\rho}$ can be taken such that the following relations are satisfied on the common domains of definition:

\begin{itemize}
    \item Along $\pm \mathcal{L}_{k}$:
    \begin{equation}\label{logSJ1}
    \begin{split}
    x_{\beta^{\vee},\pm \rho_{k+1}}&=x_{\beta^{\vee},\pm \rho_k}\pm \log(1-\mathcal{X}_{\pm (\beta+k\delta)})\,,\\
    x_{\delta^{\vee},\pm \rho_{k+1}}&=x_{\delta^{\vee},\pm \rho_k}\pm k\log(1-\mathcal{X}_{\pm (\beta+k\delta)})\,,
    \end{split}
\end{equation}
\item If $\rho_{k_1}$, and $\rho_{k_2}$ are such that $\mathcal{L}_{\infty} \subset [\rho_{k_1},-\rho_{k_2}]$:
            \begin{equation}\label{logSJ2}
    \begin{split}
        x_{\beta^{\vee}, -\rho_{k_2}}=x_{\beta^{\vee}, \rho_{k_1}} + \Big(\sum_{k\geq k_1}& \log(1-\mathcal{X}_{ \beta+k\delta})-\sum_{k>-k_2} \log(1-\mathcal{X}_{-\beta+k\delta})\Big)\,,
    \end{split}
    \end{equation}
and 
            \begin{equation}\label{logSJ3}
    \begin{split}
        x_{\delta^{\vee},- \rho_{k_2}}=x_{\delta^{\vee},\rho_{k_1}} + \Big(\sum_{k\geq k_1}& k\log(1-\mathcal{X}_{ \beta+k\delta})+\sum_{k>-k_2} k\log(1-\mathcal{X}_{-\beta+k\delta})\\
        &\quad \quad -\sum_{k\geq 1} 2k \log (1-\mathcal{X}_{ k\delta})\Big)\,,
    \end{split}
    \end{equation}
    \item  If $\rho_{k_1}$, and $\rho_{k_2}$ are such that $-\mathcal{L}_{\infty} \subset [\rho_{k_1},-\rho_{k_2}]$:
            \begin{equation}\label{logSJ4}
    \begin{split}
        x_{\beta^{\vee}, \rho_1}=x_{\beta^{\vee}, -\rho_{k_2}} +\Big(-\sum_{k\geq k_2}& \log(1-\mathcal{X}_{- \beta-k\delta})+\sum_{k>-k_1} \log(1-\mathcal{X}_{\beta-k\delta})\Big)\,,
    \end{split}
    \end{equation}
and 
            \begin{equation}\label{logSJ5}
    \begin{split}
        x_{\delta^{\vee}, \rho_{k_1}}=x_{\delta^{\vee},-\rho_{k_2}} + \Big(-\sum_{k\geq k_2}& k\log(1-\mathcal{X}_{ -\beta-k\delta})-\sum_{k>-k_1} k\log(1-\mathcal{X}_{\beta-k\delta})\\
        &\quad \quad +\sum_{k\geq 1} 2k \log (1-\mathcal{X}_{ -k\delta})\Big)\,.
    \end{split}
    \end{equation}
\end{itemize}
\end{lem}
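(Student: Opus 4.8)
The plan is to pin down the branches of $\log\mathcal{X}_{\beta^{\vee},\rho}$ and $\log\mathcal{X}_{\delta^{\vee},\rho}$ once and for all using the small-$\lambda_{\mathrm B}$ asymptotics built into the Riemann--Hilbert problem of Section~\ref{RHproblem}, and then to establish each of \rf{logSJ1}--\rf{logSJ5} by observing that it exponentiates to one of the already known multiplicative Stokes relations \rf{SJ1}--\rf{SJ3}. On the relevant (connected) overlap the two sides of each logarithmic identity can therefore differ only by a constant element of $2\pi\I\Z$, and this constant is then computed to be $0$ by letting $\lambda_{\mathrm B}\to 0$ inside the overlap.

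First I would fix the branches. Since $Z_{\beta^{\vee}}=Z_{\delta^{\vee}}=0$ by the definition of the central charge, the asymptotic condition of the Riemann--Hilbert problem gives $\mathcal{X}_{\beta^{\vee},\rho}\to 1$ and $\mathcal{X}_{\delta^{\vee},\rho}\to 1$ as $\lambda_{\mathrm B}\to 0$ in $\mathbb{H}_{\rho}$. As $\mathcal{X}_{\gamma,\rho}$ is a nowhere-vanishing holomorphic function on the simply connected half-plane $\mathbb{H}_{\rho}$, it admits a global holomorphic logarithm there; I take $x_{\beta^{\vee},\rho}$ and $x_{\delta^{\vee},\rho}$ to be the unique such logarithms tending to $0$ as $\lambda_{\mathrm B}\to 0$. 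For the elementary factors $\log(1-\mathcal{X}_{\pm(\beta+k\delta)})$ and $\log(1-\mathcal{X}_{\pm k\delta})$ on the right-hand sides I would use the principal branch, which is unambiguous on the relevant domains because there $|\mathcal{X}_{\pm(\beta+k\delta)}|=|e^{\pm 2\pi\I(v+kw)/\lambda_{\mathrm B}}|<1$ (and similarly for $\pm k\delta$), so $1-\mathcal{X}_{\pm(\beta+k\delta)}$ lies in the open right half-plane.

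For the finite jumps \rf{logSJ1}: exponentiating the proposed identity reproduces \rf{SJ1} for $\gamma=\beta^{\vee}$ (using $\langle\beta^{\vee},\beta+k\delta\rangle=1$, $\Omega(\beta+k\delta)=1$) and for $\gamma=\delta^{\vee}$ (using $\langle\delta^{\vee},\beta+k\delta\rangle=k$), so the difference of the two sides is a locally constant, hence constant, element of $2\pi\I\Z$ on the connected overlap $\mathbb{H}_{\pm\rho_k}\cap\mathbb{H}_{\pm\rho_{k+1}}$. This overlap is a sector whose closure contains $0$ and which contains the ray $\pm\mathcal{L}_k$; letting $\lambda_{\mathrm B}\to 0$ along $\pm\mathcal{L}_k$ one has $x_{\beta^{\vee},\pm\rho_{k}},x_{\beta^{\vee},\pm\rho_{k+1}}\to 0$ and $x_{\delta^{\vee},\pm\rho_k},x_{\delta^{\vee},\pm\rho_{k+1}}\to0$ by construction, while $\mathcal{X}_{\pm(\beta+k\delta)}=e^{\pm 2\pi\I(v+kw)/\lambda_{\mathrm B}}\to 0$ there, so $\log(1-\mathcal{X}_{\pm(\beta+k\delta)})\to 0$; hence the constant vanishes.

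The infinite jumps \rf{logSJ2}--\rf{logSJ5} are treated in the same spirit: exponentiation reproduces \rf{SJ2}--\rf{SJ3} for $\gamma\in\{\beta^{\vee},\delta^{\vee}\}$ (the $k\delta$-products are trivial in the $\beta^{\vee}$ relations because $\langle\beta^{\vee},k\delta\rangle=0$, and contribute the $-2k\log(1-\mathcal{X}_{\pm k\delta})$ terms to the $\delta^{\vee}$ relations since $\langle\delta^{\vee},k\delta\rangle=k$ and $\Omega(k\delta)=-2$), so the two sides differ by a constant in $2\pi\I\Z$ on the connected overlap $\mathbb{H}_{\rho_{k_1}}\cap\mathbb{H}_{-\rho_{k_2}}$. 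Evaluating this constant by letting $\lambda_{\mathrm B}\to 0$ within the overlap, the left-hand difference tends to $0$ and each summand on the right tends to $0$; the interchange of the limit with the infinite sums rests on the same uniform (dominated) estimate that underlies the convergence of the infinite products $g^{\pm\infty}_{k_1,k_2}$ in Section~\ref{linebundlenormalized}, namely that $\mathrm{Re}(\pm 2\pi\I(v+kw)/\lambda_{\mathrm B})\to-\infty$ fast enough, uniformly in the relevant range of $k$, as $\lambda_{\mathrm B}\to 0$ near $\pm\mathcal{L}_{\infty}$. Alternatively, one can telescope the already-proven elementary jumps \rf{logSJ1} from $\rho_{k_1}$ through $\rho_{k_1+1},\rho_{k_1+2},\dots$, using $\rho_j\to\mathcal{L}_{\infty}$, and identify the limit with $x_{\beta^{\vee},-\rho_{k_2}}$ and $x_{\delta^{\vee},-\rho_{k_2}}$. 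The main obstacle is exactly this limit-exchange / tail-control step, but it is precisely the convergence already invoked in the construction of $\mathcal{L}$.
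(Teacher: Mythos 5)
Your proof is correct, but it takes a genuinely different route from the paper's. The paper proceeds constructively: it fixes an arbitrary branch at one ray $\rho_k$, \emph{defines} the branches at all other $\rho_n$ on the same side by enforcing \rf{logSJ1}, transports them to the opposite half-plane via the reflection identity \rf{realitycondition} (so that $x_{\gamma,-\rho_n}(\lambda_{\mathrm B})=-x_{\gamma,\rho_n}(-\lambda_{\mathrm B})$), and then argues that the infinite jumps \rf{logSJ2} and \rf{logSJ4} can only fail by ray-independent constants $2\pi\I n_1$, $2\pi\I n_2$ with $n_1=-n_2$, which are absorbed into an overall shift of the branches. You instead normalize all branches at once by the $\lambda_{\mathrm B}\to0$ asymptotics of the Riemann--Hilbert problem (using $Z_{\beta^\vee}=Z_{\delta^\vee}=0$), and verify every identity, finite and infinite alike, by one mechanism: exponentiation reduces the discrepancy to a locally constant element of $2\pi\I\Z$ on the connected overlap, which is then killed by a limit at $\lambda_{\mathrm B}=0$. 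Your normalization is more canonical (it singles out a distinguished branch, never invokes \rf{realitycondition}, and actually \emph{verifies} \rf{logSJ1} rather than imposing it by construction); the price is the limit--sum interchange for \rf{logSJ2}--\rf{logSJ5}, which you correctly identify as the only delicate step. That step does go through: any ray of the open overlap $\mathbb{H}_{\rho_{k_1}}\cap\mathbb{H}_{-\rho_{k_2}}$ makes a uniformly acute angle with all active BPS rays, so the support property gives $|\log(1-\mathcal{X}_\gamma)|\lesssim e^{-c|\gamma|/|\lambda_{\mathrm B}|}$ and the tails vanish uniformly as $\lambda_{\mathrm B}\to 0$. One small imprecision: for $k_1$ sufficiently negative the ray $\pm\mathcal{L}_\infty$ need not itself lie in that overlap, so the limit should be taken along an arbitrary ray of the overlap rather than ``near $\pm\mathcal{L}_\infty$''; since the discrepancy is constant on the whole overlap, this costs nothing.
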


\begin{proof}
We do the argument for $\beta^{\vee}$, since for the $\delta^{\vee}$ is the same.\\

First we pick $\rho_k$ for some $k$ and fix a branch for $\log \mathcal{X}_{\beta^{\vee},\rho_k}$. We then fix the branches of $\log \mathcal{X}_{\beta^{\vee},\rho_n}$ with $n\in \mathbb{Z}$ by enforcing the jumps (\ref{logSJ1}). On the other hand, we set $x_{\beta^{\vee},-\rho_k}(\lambda_B):=-\log \mathcal{X}_{\beta^{\vee},\rho_n}(-\lambda_B)$. This indeed gives a log of $\mathcal{X}_{\beta^{\vee},-\rho_k}$ due to (\ref{realitycondition}). It is then easy to check that the $x_{\beta^{\vee},-\rho_n}$ for $n \in \mathbb{Z}$ must satisfy the corresponding jumps in (\ref{logSJ1}).\\

With such choices, the jumps (\ref{logSJ2}) and (\ref{logSJ4}) must be satisfied up to summands of $2\pi i n_1$ and $2\pi in_2$ respectively. Furthermore, it is easy to check that $n_1$ and $n_2$ must be independent of the rays $\rho_{k_1}$ and $\rho_{k_2}$ satisfying $\mathcal{L}_{\infty} \subset [\rho_{k_1},-\rho_{k_2}]$ or $-\mathcal{L}_{\infty} \subset [\rho_{k_1},-\rho_{k_2}]$, respectively. Furthermore, by the condition  $x_{\beta^{\vee},-\rho_n}(\lambda_B)=-\log \mathcal{X}_{\beta^{\vee},\rho_n}(-\lambda_B)$, one finds that $n_1=-n_2$. It is then easy to check that by setting $x_{\beta^{\vee},\rho_n}:=\log \mathcal{X}_{\beta^{\vee},\rho_n}+2\pi in_1$ for all $n\in \mathbb{Z}$, the jumps (\ref{logSJ1}), (\ref{logSJ2}) and (\ref{logSJ4}) are satisfied. 
\end{proof}

\begin{lem}\label{scaledif} For $\rho_k$ between $\mathcal{L}_{k}$ and $\mathcal{L}_{k-1}$, consider the following holomorphic function on $U_{k}^+$ (resp.\ $U_{k}^{-}$):
\begin{equation}
     f_{\pm \rho_k}:=x_{\beta,\pm \rho_{k}}\cdot x_{\beta^{\vee},\pm \rho_{k}}+x_{\delta,\pm \rho_{k}}\cdot x_{\delta^{\vee},\pm \rho_{k}}\,.
\end{equation}
We then have the following relations:

\begin{itemize}
    \item On $U_{k_1}^{\pm}\cap U_{k_2}^{\pm}$ with $k_1<k_2$:
    \begin{equation}
        f_{\pm  \rho_{k_2}}-f_{\pm  \rho_{k_1}}=\sum_{k_1\leq k< k_2}\log(\mathcal{X}_{\pm (\beta+k\delta)})\log(1-\mathcal{X}_{\pm (\beta+k\delta)})\,.
    \end{equation}
    \item If $U_{k_1}^{+}\cap U_{k_2}^{-}\neq \emptyset$, then recall that for all $(\lambda_{\mathrm B},v,w)\in U_{k_1}^{+}\cap U_{k_2}^{-}$ we either have $\mathcal{L}_{\infty}\subset [\rho_{k_1},-\rho_{k_2}]$ or $-\mathcal{L}_{\infty}\subset [\rho_{k_1},-\rho_{k_2}]$. In the first case, we have
    \begin{equation}
        \begin{split}
        f_{-\rho_{k_2}}-f_{\rho_{k_1}}&=\sum_{k \geq k_1}\log(\mathcal{X}_{ \beta+k\delta})\log(1-\mathcal{X}_{\beta+k\delta}) +\sum_{k > -k_2}\log(\mathcal{X}_{ -\beta+k\delta})\log(1-\mathcal{X}_{-\beta+k\delta})\\
        &\quad -2 \sum_{k\geq 1}\log(\mathcal{X}_{k\delta})\log(1-\mathcal{X}_{k\delta})\,,
        \end{split}
    \end{equation}
    while in the second case 
    \begin{equation}
        \begin{split}
        f_{\rho_{k_1}}-f_{-\rho_{k_2}}&=\sum_{k \geq k_2}\log(\mathcal{X}_{ -\beta-k\delta})\log(1-\mathcal{X}_{-\beta-k\delta}) + \sum_{k > -k_1}\log(\mathcal{X}_{ \beta-k\delta})\log(1-\mathcal{X}_{\beta-k\delta})\\
        &\quad -2\sum_{k\geq 1}\log(\mathcal{X}_{-k\delta})\log(1-\mathcal{X}_{-k\delta})\,.
        \end{split}
    \end{equation}
\end{itemize}
\end{lem}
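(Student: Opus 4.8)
The plan is to exploit the fact that $f_{\pm\rho_k}$ is the bilinear pairing of the \emph{tautological} coordinates $x_{\beta,\pm\rho_k}=2\pi\I v/\lambda_{\mathrm B}$ and $x_{\delta,\pm\rho_k}=2\pi\I w/\lambda_{\mathrm B}$, which are independent of the ray, with the \emph{dual} coordinates $x_{\beta^\vee,\pm\rho_k}$ and $x_{\delta^\vee,\pm\rho_k}$, whose wall-crossing is exactly the content of the preceding lemma. Thus every jump of $f$ comes from a jump of $x_{\beta^\vee}$ or $x_{\delta^\vee}$, and the computation reduces to substituting \eqref{logSJ1}--\eqref{logSJ5} and repackaging via the twisted-homomorphism identity $\log\mathcal{X}_{n\beta+m\delta}=n\,x_{\beta}+m\,x_{\delta}=2\pi\I(nv+mw)/\lambda_{\mathrm B}$ for all $n,m\in\mathbb{Z}$, with branches fixed as in the preceding lemma.

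First I would record the two elementary observations: since $x_{\beta,\rho}$ and $x_{\delta,\rho}$ depend only on $\lambda_{\mathrm B}$ (not on $\rho$), any difference has the form
\[
f_{\rho'}-f_{\rho}=x_{\beta}\,\bigl(x_{\beta^\vee,\rho'}-x_{\beta^\vee,\rho}\bigr)+x_{\delta}\,\bigl(x_{\delta^\vee,\rho'}-x_{\delta^\vee,\rho}\bigr),
\]
and $\pm(x_{\beta}+k\,x_{\delta})=\log\mathcal{X}_{\pm(\beta+k\delta)}$, $\;k\,x_{\delta}=\log\mathcal{X}_{k\delta}$. For the single-wall statement, across $\pm\mathcal{L}_k$ the relations \eqref{logSJ1} give $x_{\beta^\vee,\pm\rho_{k+1}}-x_{\beta^\vee,\pm\rho_k}=\pm\log(1-\mathcal{X}_{\pm(\beta+k\delta)})$ and $x_{\delta^\vee,\pm\rho_{k+1}}-x_{\delta^\vee,\pm\rho_k}=\pm k\log(1-\mathcal{X}_{\pm(\beta+k\delta)})$, whence
\[
f_{\pm\rho_{k+1}}-f_{\pm\rho_k}=\pm(x_{\beta}+k\,x_{\delta})\,\log(1-\mathcal{X}_{\pm(\beta+k\delta)})=\log(\mathcal{X}_{\pm(\beta+k\delta)})\log(1-\mathcal{X}_{\pm(\beta+k\delta)}).
\]
Telescoping over $k_1\le k<k_2$ on $U_{k_1}^{\pm}\cap U_{k_2}^{\pm}$ yields the first displayed identity.

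For the two $\pm\mathcal{L}_\infty$ cases I would substitute \eqref{logSJ2}--\eqref{logSJ3} (resp.\ \eqref{logSJ4}--\eqref{logSJ5}) into $x_{\beta}\cdot(\Delta x_{\beta^\vee})+x_{\delta}\cdot(\Delta x_{\delta^\vee})$, collect the coefficient of each $\log(1-\mathcal{X}_{\pm\beta+k\delta})$ and each $\log(1-\mathcal{X}_{\pm k\delta})$, and identify those coefficients: the coefficient of $\log(1-\mathcal{X}_{\beta+k\delta})$ is $x_{\beta}+k\,x_{\delta}=\log\mathcal{X}_{\beta+k\delta}$, that of $\log(1-\mathcal{X}_{-\beta+k\delta})$ is $-x_{\beta}+k\,x_{\delta}=\log\mathcal{X}_{-\beta+k\delta}$, and that of $\log(1-\mathcal{X}_{k\delta})$ is $-2k\,x_{\delta}=-2\log\mathcal{X}_{k\delta}$, giving precisely the two displayed formulas (with signs flipped in the $-\mathcal{L}_\infty$ case).

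The only genuine point to verify is that in the $\pm\mathcal{L}_\infty$ cases the infinite sums converge on $U_{k_1}^{+}\cap U_{k_2}^{-}$, so that the term-by-term reorganization is legitimate; but this is exactly the convergence already established and used in Section~\ref{linebundlenormalized}, so it can be quoted. One must also keep the fixed branch choices of the preceding lemma throughout, so that the $\log\mathcal{X}$ on the right-hand sides is the intended one with no stray $2\pi\I$; with those conventions the rest is purely algebraic.
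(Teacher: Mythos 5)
Your proposal is correct and follows essentially the same route as the paper's proof: since $x_{\beta}$ and $x_{\delta}$ are ray-independent, the jump of $f$ reduces to $x_{\beta}\,\Delta x_{\beta^{\vee}}+x_{\delta}\,\Delta x_{\delta^{\vee}}$, into which one substitutes the jumping relations of the preceding lemma and collects coefficients via $\pm(x_{\beta}+k x_{\delta})=\log\mathcal{X}_{\pm(\beta+k\delta)}$. Your added remark on convergence of the infinite sums in the $\pm\mathcal{L}_{\infty}$ cases is a reasonable extra precaution that the paper leaves implicit.
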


\begin{proof}
From the Stokes jumps (\ref{logSJ1}) we obtain that on $U_{k_1}^+\cap U_{k_2}^+$, we have
\begin{align*}
    f_{\rho_{k_2}}-f_{\rho_{k_1}}
    &=\sum_{k_1\leq k< k_2}(x_{\beta,\rho'}+kx_{\delta,\rho'})\log(1-\mathcal{X}_{\beta+k\delta})\\
    &=\sum_{k_1\leq k< k_2}\Big(\frac{2\pi \I(v+kw)}{\check{\lambda}}\Big)\log(1-\mathcal{X}_{\beta+k\delta})\\
    &=\sum_{k_1\leq k< k_2}\log(\mathcal{X}_{\beta+k\delta})\log(1-\mathcal{X}_{\beta+k\delta})\,.\\ \numberthis
\end{align*}
The other cases follow similarly by using the other jumping relations (\ref{logSJ1}), (\ref{logSJ2}), (\ref{logSJ3}), (\ref{logSJ4}), (\ref{logSJ5}). For example, if $U_{k_1}^{+}\cap U_{k_2}^{-}\neq \emptyset$ and $\mathcal{L}_{\infty}\subset [\rho_{k_1},-\rho_{k_2}]$, we then have
\begin{align*}
        f_{-\rho_{k_2}}-f_{\rho_{k_1}}&=x_{\beta,\rho_{k_1}}\cdot \Big(\sum_{k\geq k_1} \log(1-\mathcal{X}_{ \beta+k\delta})-\sum_{k>-k_2} \log(1-\mathcal{X}_{-\beta+k\delta})\Big)\\
        &+x_{\delta,\rho_{k_1}}\cdot \Big(\sum_{k\geq k_1} k\log(1-\mathcal{X}_{ \beta+k\delta})+\sum_{k>-k_2} k\log(1-\mathcal{X}_{-\beta+k\delta}) -\sum_{k\geq 1} 2k \log (1-\mathcal{X}_{ k\delta})\Big)\\
        &=\sum_{k \geq k_1}\log(\mathcal{X}_{ \beta+k\delta})\log(1-\mathcal{X}_{\beta+k\delta}) + \sum_{k > -k_2}\log(\mathcal{X}_{ -\beta+k\delta})\log(1-\mathcal{X}_{-\beta+k\delta})\\
        &\quad -2 \sum_{k\geq 1}\log(\mathcal{X}_{k\delta})\log(1-\mathcal{X}_{k\delta})\,. \numberthis
\end{align*}
\end{proof}

We therefore obtain the following:

\begin{prop} \label{transRoger}In terms of the local trivializations of $\mathcal{L} \to \mathbb{C}^{\times}\times M_+$ given by
\begin{equation}
   \exp \Big(\frac{\I}{4\pi}f_{\pm \rho_k}\mp \frac{\mathrm i\pi}{24}\Big)\widehat{Z}_{\pm \rho_k}\colon U_{k}^{\pm}\to \mathcal{L}\,.
\end{equation}
The transition functions of $\mathcal{L}$ are given as follows:

\begin{itemize}
    \item If $U_{k_1}^{\pm}\cap U_{k_2}^{\pm}\neq \emptyset$ for $k_1<k_2$, we have
    \begin{equation}
        \widetilde{g}^{\pm }_{k_1,k_2}=\prod_{k_1\leq k < k_2}\exp\Big(\frac{\Omega(\beta+k\delta)}{2\pi \I}L(\mathcal{X}_{\pm (\beta+k\delta)})\Big)\,.
    \end{equation}
    \item If $U_{k_1}^{+}\cap U_{k_2}^{-}\neq \emptyset$, and $\mathcal{L}_{\infty}\subset [\rho_{k_1},-\rho_{k_2}]$, then 
\begin{align*}
    \widetilde{g}_{k_1,k_2}^{\infty}(\lambda_{\mathrm B},v,w)&=\prod_{k\geq k_1}\exp\Big(\frac{\Omega(\beta+k\delta)}{2\pi \I}L(\mathcal{X}_{ \beta+k\delta})\Big)\prod_{k< k_2}\exp\Big(\frac{\Omega(\beta+k\delta)}{2\pi \I}L(\mathcal{X}_{- (\beta+k\delta)})\Big)\\
    &\quad \quad \cdot \prod_{k\geq 1}\exp\Big(\frac{\Omega(k\delta)}{2\pi \I}L(\mathcal{X}_{k\delta})\Big)\,, \numberthis
\end{align*}
and the analogous relation for the case $\mathcal{L}_{-\infty}\subset [\rho_{k_1},-\rho_{k_2}]$. In particular, one can get rid of the $e^{\pm i\pi/12}$ factors appearing in (\ref{inftytrans}) and (\ref{-inftytrans}).
\end{itemize}

\end{prop}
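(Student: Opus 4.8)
The plan is to read off the transition functions of $\mathcal{L}$ in the new trivialization directly from those in the $\widehat{Z}_\rho$-trivialization. Since the rescaled local sections are $h_k^{\pm}\,\widehat{Z}_{\pm\rho_k}$ with $h_k^{\pm}=\exp\!\big(\tfrac{\I}{4\pi}f_{\pm\rho_k}\mp\tfrac{\I\pi}{24}\big)$, the transition function between two rescaled trivializations is the corresponding transition function of $\mathcal{L}$ already constructed in Section \ref{linebundlenormalized} (a finite or convergent product of the $\Xi$'s) multiplied by the ratio of the relevant $h_k^{\pm}$'s. Passing to logarithms, one is left with the sum of the jump potentials $\Phi_{\pm\mathcal{L}_k}$ (respectively $-\Phi_{\pm\mathcal{L}_\infty,k}$) entering the old transition function, plus $\tfrac{\I}{4\pi}$ times a difference $f_{\pm\rho_{k_2}}-f_{\pm\rho_{k_1}}$, plus a leftover constant coming from the $e^{\mp\I\pi/24}$ factors.

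First I would separate each jump potential into a Rogers part and a quadratic-in-logarithm remainder. Using $\mathrm{Li}_2'(z)=-z^{-1}\log(1-z)$ together with $\partial_{\lambda_{\mathrm B}}\log\mathcal{X}_{\pm(\beta+k\delta)}=-\lambda_{\mathrm B}^{-1}\log\mathcal{X}_{\pm(\beta+k\delta)}$ one obtains $\partial_{\lambda_{\mathrm B}}\big(\lambda_{\mathrm B}\mathrm{Li}_2(\mathcal{X})\big)=\mathrm{Li}_2(\mathcal{X})+\log\mathcal{X}\,\log(1-\mathcal{X})$, which is exactly the rewriting \eqref{normaltranfunc} of $\Phi_{\pm\mathcal{L}_k}$; since $\Omega(\beta+k\delta)=1$ and $L(x)=\mathrm{Li}_2(x)+\tfrac12\log x\,\log(1-x)$, this reads $\Phi_{\pm\mathcal{L}_k}=\tfrac{\Omega(\beta+k\delta)}{2\pi\I}L(\mathcal{X}_{\pm(\beta+k\delta)})+\tfrac{1}{4\pi\I}\log\mathcal{X}_{\pm(\beta+k\delta)}\log(1-\mathcal{X}_{\pm(\beta+k\delta)})$. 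The identical computation, now with $\Omega(k\delta)=-2$, gives $-\Phi_{\pm\mathcal{L}_\infty,k}=\tfrac{\Omega(k\delta)}{2\pi\I}L(\mathcal{X}_{\pm k\delta})-\tfrac{1}{2\pi\I}\log\mathcal{X}_{\pm k\delta}\log(1-\mathcal{X}_{\pm k\delta})$, so each $\Xi$-factor contributes a Rogers term together with a quadratic remainder.

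Next I would feed in Lemma \ref{scaledif} and sum over $k$. On $U_{k_1}^{\pm}\cap U_{k_2}^{\pm}$ with $k_1<k_2$ the lemma gives $f_{\pm\rho_{k_2}}-f_{\pm\rho_{k_1}}=\sum_{k_1\le k<k_2}\log\mathcal{X}_{\pm(\beta+k\delta)}\log(1-\mathcal{X}_{\pm(\beta+k\delta)})$; since $\tfrac{\I}{4\pi}=-\tfrac{1}{4\pi\I}$, the $\tfrac{\I}{4\pi}f$-contribution cancels the $\tfrac{1}{4\pi\I}\log\mathcal{X}\,\log(1-\mathcal{X})$ remainders term by term, the $e^{\mp\I\pi/24}$ constants cancel in the ratio, and $\log\widetilde{g}^{\pm}_{k_1,k_2}=\sum_{k_1\le k<k_2}\tfrac{\Omega(\beta+k\delta)}{2\pi\I}L(\mathcal{X}_{\pm(\beta+k\delta)})$ is all that remains. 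For a mixed overlap $U_{k_1}^{+}\cap U_{k_2}^{-}$ with $\mathcal{L}_\infty\subset[\rho_{k_1},-\rho_{k_2}]$ I would run the same bookkeeping with the three-sum identity of Lemma \ref{scaledif}: after the reindexing $k\mapsto-k$ identifying $\mathcal{X}_{-\beta+k\delta}$ (for $k>-k_2$) with $\mathcal{X}_{-(\beta+k\delta)}$ (for $k<k_2$), the $\beta$- and $(-\beta)$-remainders cancel as above, the remainder $-\tfrac{1}{2\pi\I}\sum_{k\ge1}\log\mathcal{X}_{k\delta}\log(1-\mathcal{X}_{k\delta})$ produced by $-\Phi_{\mathcal{L}_\infty,k}$ is killed by the $-2\cdot\tfrac{\I}{4\pi}\sum_{k\ge1}$ contribution of $f_{-\rho_{k_2}}-f_{\rho_{k_1}}$, and the constant $e^{-\I\pi/12}$ of \eqref{inftytrans} is cancelled by the $e^{+\I\pi/12}$ produced by $(-\tfrac{\I\pi}{24})-(+\tfrac{\I\pi}{24})$; this gives the asserted formula for $\widetilde{g}^{\infty}_{k_1,k_2}$, and the case $-\mathcal{L}_\infty\subset[\rho_{k_1},-\rho_{k_2}]$ is symmetric.

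The computation is all bookkeeping, and the only delicate point --- which I expect to be the main obstacle --- is to make sure that all of the displayed identities are equalities of genuine holomorphic functions rather than equalities modulo $2\pi\I\Z$. This is exactly what is arranged by the branch choices for $\log\mathcal{X}_{\beta^\vee,\rho}$ and $\log\mathcal{X}_{\delta^\vee,\rho}$ fixed in the lemma immediately preceding Lemma \ref{scaledif}, which make the jump relations \eqref{logSJ1}--\eqref{logSJ5}, and hence Lemma \ref{scaledif} itself, hold on the nose; once those are in force the numerical coefficient $\tfrac{\I}{4\pi}$ in the definition of the rescaled trivialization forces the exact cancellations described above, and removing the $e^{\pm\I\pi/12}$ factors of \eqref{inftytrans} and \eqref{-inftytrans} is just the constant-term version of the same cancellation.
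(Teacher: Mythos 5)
Your proposal is correct and follows essentially the same route as the paper: the new transition function is the old one multiplied by the ratio of the trivializing factors, Lemma \ref{scaledif} converts the difference of the $f_{\pm\rho_k}$ into the quadratic $\log\mathcal{X}\log(1-\mathcal{X})$ remainders, and these cancel the non-Rogers half of each jump potential, leaving $\frac{\Omega}{2\pi\I}L(\mathcal{X})$ per factor. The paper only writes out the case $U_{k+1}^{+}\cap U_{k}^{+}$ and delegates the rest to "the same type of argument", so your explicit treatment of the $\delta$-contributions, the reindexing $k\mapsto -k$, and the cancellation of the $e^{\pm\I\pi/12}$ constants is a more complete version of the same computation.
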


\begin{proof}
For simplicity, we compute the new transition functions in the case $U_{k+1}^{+}\cap U_{k}^{+}$, with all the others following the same type of argument using the rest of the identities in Lemma \ref{scaledif}. We have
\begin{equation}
    \begin{split}
        \widetilde{g}^{+}_{k,k+1}&=  \exp\Big(\frac{\I}{4\pi }(f_{\rho_{k+1}}-f_{\rho_k})\Big)\cdot g^{+}_{k,k+1}\\
        &=\exp\Big(-\frac{1}{4\pi \I}\log(\mathcal{X}_{\beta+k\delta})\log(1-\mathcal{X}_{\beta+k\delta})\Big)\cdot \exp\Big(\frac{1}{2\pi \I}\partial_{\lambda_{\mathrm B}}\Big(\lambda_{\mathrm B}\mathrm{Li}_2\big(\mathcal{X}_{\beta+k\delta})\Big)\Big)\\
        &=\exp\Big(\frac{\Omega(\beta+k\delta)}{2\pi \I}L(\mathcal{X}_{\beta+k\delta})\Big)\,,
    \end{split}
\end{equation}
where on the second line we have used Lemma \ref{scaledif}. 
\end{proof}

\subsubsection{Relation to hyperholomorphic line bundles}

We briefly recall the setting of \cite{Neitzke_hyperhol,APP}. The starting point is again a certain type of variations of BPS structures $(M,\Gamma,Z,\Omega)$. The variation of BPS structures should be such that the data $(M,\Gamma,Z)$ defines a (possibly indefinite) affine special K\"{a}hler (ASK) geometry on $M$ (or conical ASK in the case of \cite{APP}). More precisely, the pair $(M,\Gamma,Z)$ should satisfy:

\begin{itemize}
    \item The pairing $\langle - , - \rangle$ admits local Darboux frames $(\widetilde{\gamma}_i,\gamma^i)$\footnote{One can relax this condition by allowing ``flavor charges" (i.e. charges $\gamma$ such that $\langle\gamma,- \rangle=0$). For a description of this more general case see for example \cite{Neitzkenotes} or \cite[Section 2.2]{AST21}.}, and the $1$-forms $dZ_{\gamma^i}$ give a local frame of $T^*M$. 
    \item By using the identification $\Gamma \cong \Gamma^*$ given by $\gamma \to \langle \gamma, - \rangle$, we can induce a $\mathbb{C}$-bilinear pairing on $\Gamma^*\otimes \mathbb{C}$. With respect to this pairing, we should have
    \begin{equation}\label{sym}
        \langle\mathrm dZ\wedge\mathrm dZ \rangle=0\,.
    \end{equation}
    \item The two-form 
    \begin{equation}
        \omega:=\frac{1}{4}\langle \mathrm dZ \wedge \mathrm d\overline{Z} \rangle
    \end{equation}
    is non-degenerate. 
\end{itemize}
    Under the above conditions, $\{Z_{\gamma^i}\}_{i=1}^{\mathrm{dim}_{\mathbb{C}}(M)}$ give local coordinates on $M$, and it is not hard to check that $\tau_{ij}$ defined by $dZ_{\widetilde{\gamma}_i}=\tau_{ij}dZ_{\gamma^j}$ must be symmetric by (\ref{sym}), and 
    \begin{equation}
        \omega= \frac{1}{4}(\mathrm dZ_{\widetilde{\gamma}_i}\wedge \mathrm d\overline{Z}_{\gamma^i}- \mathrm dZ_{\gamma^i}\wedge \mathrm d\overline{Z}_{\widetilde{\gamma}_i})=\frac{\I}{2}\mathrm{Im}(\tau_{ij})\mathrm dZ_{\gamma^i}\wedge \mathrm d\overline{Z}_{\gamma^j}\,.
    \end{equation}
    In particular, $\omega$ gives a K\"{a}hler form of a (possibly indefinite) ASK geometry. The functions $\{Z_{\gamma^i}\}$ then define special holomorphic coordinates, while $\{Z_{\widetilde{\gamma}_i}\}$ define a conjugate system of special holomorphic coordinates.\\ 
    
    By following the prescription in \cite{GMN1,Neitzkenotes}, one can then define an ``instanton-corrected"\footnote{What this means is that the data of the BPS indices is used to obtain a new HK geometry from the semi-flat HK geometry associated to the ASK geometry via the rigid c-map. In the context of 4d $\mathcal{N}=2$ theories compactified on $S^1$, the modifications to the semi-flat HK geometry correspond to instanton corrections of the HK geometry associated to the corresponding low-energy effective theory.} hyperk\"{a}hler (HK) structure on the total space of a torus fibration $\pi \colon \mathcal{M}\to M$, where 
    \begin{equation}
        \mathcal{M}|_p:=\{\theta\colon \Gamma|_p \to \mathbb{R}/2\pi \mathbb{Z}\; |\; \theta_{\gamma+\gamma'}=\theta_{\gamma}+\theta_{\gamma'}+\pi \langle \gamma, \gamma' \rangle \}\,.
    \end{equation}
    The hyperk\"{a}hler geometry is encoded in terms of certain $\mathbb{C}^{\times}$-valued local functions  $\mathcal{Y}_{\gamma}(\zeta,\theta)$ on the twistor space $\mathcal{Z}:=\mathbb{C}P^1\times \mathcal{M}$ of $\mathcal{M}$. The functions are labeled by local sections $\gamma$ of $\Gamma$, and must satisfy the GMN integral equations \cite{GMN1,Neitzkenotes}.\\
    
    Given such data, a certain holomorphic line bundle $\mathcal{L_{\mathcal{Z}}}\to \mathcal{Z}$ is constructed in \cite{Neitzke_hyperhol,APP}, descending to a hyperholomorphic line bundle $\mathcal{L}_{\mathcal{M}}\to \mathcal{M}$ (that is, a hermitian bundle having a unitary connection with curvature of type $(1,1)$ in all the complex structures of the HK structure of $\mathcal{M}$). Our concern in the following will not be the hyperholomorphic structure of $\mathcal{L}_{\mathcal{M}}$ itself, but the topology of $\mathcal{L}_{\mathcal{Z}}$. \\
    
    We would like to now describe $\mathcal{L}_{\mathcal{Z}}$ in the case of the resolved conifold. We need, however, to solve the following issue: the variation of BPS structures $(M,\Gamma,Z,\Omega)$ we have discussed in Section \ref{RHproblem} does not satisfy the ASK geometry condition, since for that case $Z_{\beta^{\vee}}=Z_{\delta^{\vee}}=0$, and hence
    \begin{equation}
         \langle\mathrm dZ \wedge\mathrm d\overline{Z} \rangle=\mathrm dZ_{\beta^{\vee}}\wedge\mathrm d\overline{Z}_{\beta}+\mathrm dZ_{\delta^{\vee}}\wedge\mathrm d\overline{Z}_{\delta}-\mathrm dZ_{\beta}\wedge\mathrm d\overline{Z}_{\beta^{\vee}}-\mathrm dZ_{\delta}\wedge\mathrm d\overline{Z}_{\delta^{\vee}}= 0\,.
    \end{equation}
    By possibly restricting to an open set $M'\subset M$, and taking $\Gamma':=\Gamma|_{M'}$, we can assume we have a central charge $Z'\colon M'\to \Gamma'\otimes \mathbb{C}$ satisfying the ASK geometry property and such that $Z'|_{\mathbb{Z}\beta\oplus \mathbb{Z}\delta}=Z|_{\mathbb{Z}\beta\oplus \mathbb{Z}\delta}$. Setting $(\gamma^1,\gamma^2)=(\beta,\delta)$, such a central charge can be found by picking a  holomorphic function $\mathfrak{F}(Z_{\gamma^i})\colon M'\to \mathbb{C}$ such that the matrix $\mathrm{Im}(\partial^2\mathfrak{F}/\partial{Z_{\gamma^i}}\partial{Z_{\gamma^j}})$ is non-degenerate, and taking $Z'_{\beta^{\vee}}:=\partial_{Z_{\beta}}\mathfrak{F}$, $Z'_{\delta^{\vee}}:=\partial_{Z_{\delta}}\mathfrak{F}$ (for example, we can just pick $\mathfrak{F}(Z_{\beta},Z_{\delta})=\I(Z_{\beta}^2 +Z_{\delta}^2)/2$). For simplicity, we will assume in the following that have chosen $Z'$ satisfying the ASK condition and such that $M'=M$.
    We can therefore consider the HK manifold $\mathcal{M}$ and line bundle $\mathcal{L}_{\mathcal{Z}}\to \mathcal{Z}$ associated to $(M,\Gamma,Z',\Omega)$.\\
    
    To describe $\mathcal{L}_{\mathcal{Z}}\to \mathcal{Z}$ in this case, we do as follows: recall that a quadratic refinement for $(\Gamma|_p,\langle -, - \rangle)$ is a function $\sigma:\Gamma|_p \to \mathbb{Z}_2$ such that
      \begin{equation}
        \sigma(\gamma)\sigma(\gamma')=(-1)^{\langle \gamma,\gamma'\rangle}\sigma(\gamma+\gamma')\,.
    \end{equation}
    In our particular case, we can make a global choice of quadratic refinement $\sigma:\Gamma \to \mathbb{Z}_2$ determined by $\sigma(\beta)=\sigma(\delta)=\sigma(\beta^{\vee})=\sigma(\delta^{\vee})=1$. With such a choice, we can identify
        \begin{equation}
        \mathcal{M}\cong\{\theta\colon \Gamma \to \mathbb{R}/2\pi \mathbb{Z}\; |\; \theta_{\gamma+\gamma'}=\theta_{\gamma}+\theta_{\gamma'}\}\cong M\times (S^1)^4\,,
    \end{equation}
    via $e^{\I\theta_{\gamma}} \to \sigma(\gamma) e^{\I\theta_{\gamma}}$.\\ 
    
    We consider the bundle  $\widetilde{\pi}\colon\widetilde{\mathcal{M}}\to M$, whose fibers are the universal cover of the fibers of $\pi\colon \mathcal{M}\to M$. Namely,
      \begin{equation}
        \widetilde{\mathcal{M}}:=\{\theta: \Gamma \to \mathbb{R} \; |\; \theta_{\gamma+\gamma'}=\theta_{\gamma}+\theta_{\gamma'}\}\cong M\times \mathbb{R}^4\;.
    \end{equation}
    The main reason for going to the universal cover is to avoid certain issues regarding the domains of definitions of the transition functions involving the Rogers dilogarithm expressions, as we will see below.\\

    Since our end-goal is to compare with $\mathcal{L}\to \mathbb{C}^{\times}\times M_+$ from Section \ref{normpartfunct}, we will restrict to $\widetilde{\mathcal{M}}_{+}:=\widetilde{\pi}^{-1}(M_+)$. However, a similar argument follows for the line bundle over $\mathbb{C}^{\times}\times M_{-}$ and $\mathbb{C}^{\times}\times M_{0}$ (recall Remark \ref{otherM}).\\
    
    It is easy to see that the HK structure on $\mathcal{M}_+$ lifts to $\widetilde{\mathcal{M}}_+$, and we will denote the corresponding twistor space by $\widetilde{\mathcal{Z}}_+=\mathbb{C}P^1\times \widetilde{\mathcal{M}}_+$. We consider the rays $\mathcal{L}_{k}=\mathbb{R}_{<0}\cdot 2\pi \I(v+kw)$, and pick $\rho_k$ between $\mathcal{L}_k$ and $\mathcal{L}_{k-1}$. We furthermore consider the cover $\{V_{k}^{\pm}\}_{k\in \mathbb{Z}}$ of $\mathbb{C}^{\times}\times \widetilde{\mathcal{M}}_{+}$ given by
     \begin{equation}
        V_{k}^{\pm}:=\{(\zeta,\theta)\in \mathbb{C}^{\times}\times \widetilde{\mathcal{M}}_{+} \; | \; \zeta \in \mathbb{H}_{\pm \rho_k}\}\,.
    \end{equation}
    Notice that the condition on $\zeta$ depends on $\widetilde{\pi}(\theta)=(u,v)$, since the latter determines the rays $\mathcal{L}_{k}$, and hence $\rho_k$. Furthermore, notice that $V_k^{\pm}=\widetilde{\pi}^{-1}(U_k^{\pm})$, where $U_k^{\pm}$ was defined in  (\ref{opencov}).\\
    
    We define a line bundle $\mathcal{L}_{\widetilde{\mathcal{Z}}_{+}}\to  \mathbb{C}^{\times}\times \widetilde{\mathcal{M}}_+$ via the following cocycle associated to the cover $\{V_{k}^{\pm}\}_{k\in \mathbb{Z}}$ (compare with \cite[Equation 4.8]{Neitzke_hyperhol} or \cite[Equation 3.29]{APP}):
    
    \begin{itemize}
        \item If $V_{k_1}^{\pm}\cap V_{k_2}^{\pm}\neq \emptyset$ for $k_1<k_2$, then
        \begin{equation}
            h_{k_1,k_2}^{\pm }(\zeta,\theta):= \prod_{k_1\leq k < k_2}\exp\bigg(\frac{\Omega(\beta+k\delta)}{2\pi \I}L(\mathcal{Y}_{\pm (\beta+k\delta)}(\zeta,\theta))\bigg)\,,
        \end{equation}
        where\footnote{Formula (\ref{sfcoord}) gives the so--called semi-flat coordinate labeled by $\pm(\beta+k\delta)$. In the case of the resolved conifold, only the coordinates of the form $\mathcal{Y}_{n\beta+m\delta +p\beta^{\vee}+q\delta^{\vee}}$ with $p\neq 0$ or $q\neq 0$ get ``instanton corrected" away from the semi-flat form.} 
        \begin{equation}\label{sfcoord}
            \mathcal{Y}_{\pm(\beta+k\delta)}(\zeta,\theta)=\exp \Big (\zeta^{-1}Z_{\pm(\beta+k\delta)}(\widetilde{\pi}(\theta)) +\I\theta_{\pm(\beta+k\delta)} +\zeta \overline{Z}_{\pm(\beta+k\delta)}(\widetilde{\pi}(\theta))\Big)\,.
        \end{equation}
        Notice that 
        \begin{equation}
            L(\mathcal{Y}_{ \pm(\beta+k\delta)})=\mathrm{Li}_2(\mathcal{Y}_{ \pm(\beta+k\delta)})+\frac{1}{2}\log(\mathcal{Y}_{ \pm(\beta+k\delta)})\log(1-\mathcal{Y}_{\pm(\beta+k\delta)})
        \end{equation}
        with 
        \begin{equation}\label{logy}
            \log(\mathcal{Y}_{ \pm(\beta+k\delta)})=\zeta^{-1}Z_{\pm(\beta+k\delta)}(\widetilde{\pi}(\theta)) +\theta_{\pm(\beta+k\delta)} +\zeta \overline{Z}_{\pm(\beta+k\delta)}(\widetilde{\pi}(\theta))
        \end{equation}
        is well defined for $\zeta \in \mathbb{H}_{\mathbb{R}_{<0}\cdot Z_{\pm(\beta+k\delta)}}$, since for such $\zeta$ we have $|\mathcal{Y}_{ \pm(\beta+k\delta)}|<1$, and hence $\mathrm{Li}_2(\mathcal{Y}_{ \pm(\beta+k\delta)})$ and $\log(1-\mathcal{Y}_{ \pm(\beta+k\delta)})$ make sense with their principal branches. \\
        
        We also set $h_{k_2,k_1}^{\pm}:=(h_{k_1,k_2}^{\pm})^{-1}$.
        \item   If $V_{k_1}^{+}\cap V_{k_2}^{-}\neq \emptyset$, and $\mathcal{L}_{\infty}\subset [\rho_{k_1},-\rho_{k_2}]$, then 
\begin{equation}
    \begin{split}
    h_{k_1,k_2}^{\infty}(\zeta,\theta)&:=\prod_{k\geq k_1}\exp\bigg(\frac{\Omega(\beta+k\delta)}{2\pi \I}L(\mathcal{\mathcal{Y}}_{ \beta+k\delta})\bigg)\prod_{k< k_2}\exp\bigg(\frac{\Omega(\beta+k\delta)}{2\pi \I}L(\mathcal{\mathcal{Y}}_{- (\beta+k\delta)})\bigg)\\
    &\quad \quad \cdot \prod_{k\geq 1}\exp\bigg(\frac{\Omega(k\delta)}{2\pi \I}L(\mathcal{Y}_{k\delta})\bigg)\,,\\
    \end{split}
\end{equation}
while for the case $\mathcal{L}_{-\infty}\subset [\rho_{k_1},-\rho_{k_2}]$
\begin{equation}
    \begin{split}
    h_{k_2,k_1}^{-\infty}(\zeta,\theta)&:=\prod_{k\geq k_2}\exp\bigg(\frac{\Omega(\beta+k\delta)}{2\pi \I}L(\mathcal{\mathcal{Y}}_{-( \beta+k\delta)}))\bigg)\prod_{k< k_1}\exp\bigg(\frac{\Omega(\beta+k\delta)}{2\pi \I}L(\mathcal{\mathcal{Y}}_{ \beta+k\delta})\bigg)\\
    &\quad \quad \cdot \prod_{k\geq 1}\exp\bigg(\frac{\Omega(k\delta)}{2\pi \I}L(\mathcal{Y}_{-k\delta})\bigg)\,.\\
    \end{split}
\end{equation}
We also set as before $h_{k_2,k_1}^{ \infty}:=(h_{k_1,k_2}^{\infty})^{-1}$ and $h_{k_1,k_2}^{- \infty}:=(h_{k_2,k_1}^{- \infty})^{-1}$.
\end{itemize}

In \cite{Neitzke_hyperhol,APP}, it is argued that such a bundle extends to a holomorphic bundle $\mathcal{L}_{\widetilde{\mathcal{Z}}_+}\to \widetilde{\mathcal{Z}}_+$, and that it descends to a hyperholomorphic line bundle $\mathcal{L}_{\widetilde{\mathcal{M}}_{+}}\to \widetilde{\mathcal{M}}_{+}$. The corresponding line bundle $\mathcal{L}_{\mathcal{M}_{+}}\to \mathcal{M}_+$ can then be obtained by a quotient by a certain action of $\Gamma^*\to M_+$, acting fiberwise on both $\mathcal{L}_{\widetilde{\mathcal{M}}_+}\to M_+$ and $\widetilde{\mathcal{M}}_+\to M_+$, and equivariantly with respect to $\mathcal{L}_{\widetilde{\mathcal{M}}_{+}}\to \widetilde{\mathcal{M}}_{+}$ (see for example \cite[Equation 3.7]{Neitzke_hyperhol}). The pullback of $\mathcal{L}_{\mathcal{M}_+}$ to the twistor space then gives $\mathcal{L}_{\mathcal{Z}}\to \mathbb{C}P^1\times \mathcal{M}_+$.\\

We now wish to relate the bundle $\mathcal{L}_{\mathcal{Z}}|_{\mathbb{C}^{\times}\times M_+}$ with the previous bundle $\mathcal{L}\to \mathbb{C}^{\times}\times M_+$ defined by the normalized partition functions $\widehat{Z}_\rho$ in Section \ref{normpartfunct}. 
We will focus on the complex Lagrangian submanifold $L\subset \mathcal{M}$ (with respect to one of the complex symplectic structures of the HK structure) given by
    \begin{equation}
        L:=\{\theta\in \mathcal{M} \; | \; \theta_{\beta}=\theta_{\delta}=\theta_{\beta^{\vee}}=\theta_{\delta^{\vee}}=0\}\,.
    \end{equation}
    The fact that this defines a complex Lagrangian submanifold $L$ of $\mathcal{M}$, can be seen for example from formula \cite[equation 3.10]{CT} of the instanton corrected holomorphic symplectic form (see also \cite{Gaiotto:2014bza}). Since $L$ can be identified with $M$ as complex manifolds, we will do so in the following. \\
    
    The line bundle $\mathcal{L}_{\mathcal{Z}}|_{\mathbb{C}^{\times}\times M_+}$ can be described by the transition functions $h_{k_1,k_2}^{\pm}|_{\mathbb{C}^{\times}\times M_+}$ and $h_{k_1,k_2}^{\pm \infty}|_{\mathbb{C}^{\times}\times M_+}$ associated to the cover $\{U_{k}^{\pm}\}_{k\in \mathbb{Z}}$ of $\mathbb{C}^{\times}\times M_{+}$ given in (\ref{opencov}). It is now easy to see how to obtain $\mathcal{L}\to \mathbb{C}^{\times} \times M_+$ from $\mathcal{L}_Z|_{\mathbb{C}^{\times}\times M_+}\to \mathbb{C}^{\times}\times M_+$. Namely, one considers the following conformal limit, studied in \cite{Gaiotto:2014bza}:

\begin{itemize}
    \item First, one introduces a scaling parameter $Z \to RZ$ for $R>0$. 
    \item One then considers the limit of the transition functions as $R\to 0$, while keeping the quotient $\lambda_{\mathrm B}=\zeta/R$ fixed.
\end{itemize}

After taking the conformal limit, we see that 
\begin{equation}
    \mathcal{Y}_{n\beta +m\delta}|_{\mathbb{C}^{\times}\times M_+} \to \mathcal{X}_{n\beta +m\delta}\,,
\end{equation}
and hence 
\begin{equation}
    h^{\pm}_{k_1,k_2}|_{\mathbb{C}^{\times}\times M_+} \to \widetilde{g}^{\pm}_{k_1,k_2}, \quad\quad   h^{\pm \infty }_{k_1,k_2}|_{\mathbb{C}^{\times}\times M_+} \to \widetilde{g}^{\pm \infty }_{k_1,k_2}\,,
\end{equation}
where $\widetilde{g}^{\pm}_{k_1,k_2}$ and $\widetilde{g}^{\pm \infty }_{k_1,k_2}$ correspond to the 1-cocycle associated to the cover $\{U_{k}^{\pm}\}_{k\in \mathbb{Z}}$ describing $\mathcal{L}\to \mathbb{C}^{\times}\times M_+$ from Proposition \ref{transRoger}.\\

From the previous discussion, we obtain the following:
\begin{prop}
Consider the $1$-cocycles associated to the cover $\{U_{k}^{\pm}\}$ of $\mathbb{C}^{\times}\times M_+$ and given by $\{\widetilde{g}^{\pm}_{k_1,k_2},\widetilde{g}^{\pm \infty}_{k_1,k_2}\}$ and $\{ h^{\pm}_{k_1,k_2}|_{\mathbb{C}^{\times}\times M_+}, h^{\pm \infty}_{k_1,k_2}|_{\mathbb{C}^{\times}\times M_+} \}$, respectively.
Then the $1$-cocycles are related by the conformal limit from above.  
\end{prop}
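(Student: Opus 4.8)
The plan is to obtain the statement by directly computing the $R\to 0$ conformal limit of the transition functions $h^{\pm}_{k_1,k_2}$, $h^{\pm\infty}_{k_1,k_2}$ of $\mathcal{L}_{\widetilde{\mathcal{Z}}_+}$, restricted to the complex Lagrangian $L\subset\mathcal{M}$ that we have identified with $M$, and then comparing the outcome with the explicit cocycle $\{\widetilde{g}^{\pm}_{k_1,k_2},\widetilde{g}^{\pm\infty}_{k_1,k_2}\}$ for $\mathcal{L}\to\mathbb{C}^{\times}\times M_+$ computed in Proposition \ref{transRoger}. The structural point that makes this work is that both cocycles are assembled from the same building blocks $\exp\!\big(\frac{\Omega(\gamma)}{2\pi\I}L(\,\cdot\,)\big)$, with $L$ the Rogers dilogarithm, evaluated at coordinate functions labelled by charges $\gamma=n\beta+m\delta$; for these charges the twistorial coordinate $\mathcal{Y}_\gamma$ is \emph{not} instanton corrected and equals the semi-flat expression \eqref{sfcoord}, so the whole comparison reduces to the elementary degeneration $\mathcal{Y}_\gamma\to\mathcal{X}_\gamma$.

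Concretely, I would first restrict \eqref{sfcoord} to $L$: since $\theta_\beta=\theta_\delta=0$ there, one has $\theta_{\pm(\beta+k\delta)}=\theta_{\pm k\delta}=0$, hence $\mathcal{Y}_{\pm(\beta+k\delta)}(\zeta,\theta)=\exp\!\big(\pm(\zeta^{-1}Z_{\beta+k\delta}+\zeta\,\overline{Z}_{\beta+k\delta})\big)$ with $Z_{\beta+k\delta}=2\pi\I(v+kw)$, and similarly for $\mathcal{Y}_{\pm k\delta}$. I would then apply the conformal limit of \cite{Gaiotto:2014bza}: after the rescaling $Z\mapsto RZ$ and the substitution $\zeta=R\lambda_{\mathrm B}$ with $\lambda_{\mathrm B}$ held fixed, $\zeta^{-1}(R Z_{\beta+k\delta})=Z_{\beta+k\delta}/\lambda_{\mathrm B}$ while $\zeta\,\overline{R Z_{\beta+k\delta}}=R^2\lambda_{\mathrm B}\,\overline{Z}_{\beta+k\delta}\to 0$ as $R\to 0^+$, so $\mathcal{Y}_{\pm(\beta+k\delta)}|_{\mathbb{C}^{\times}\times M_+}\to e^{\pm 2\pi\I(v+kw)/\lambda_{\mathrm B}}=\mathcal{X}_{\pm(\beta+k\delta)}$ and $\mathcal{Y}_{\pm k\delta}|_{\mathbb{C}^{\times}\times M_+}\to\mathcal{X}_{\pm k\delta}$. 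Two pieces of bookkeeping accompany this: the Stokes rays $\mathcal{L}_k=\mathbb{R}_{<0}\cdot 2\pi\I(v+kw)$ and all the half-planes $\mathbb{H}_{\pm\rho_k}$, $\mathbb{H}_{\pm\mathcal{L}_k}$, $\mathbb{H}_{\pm\mathcal{L}_\infty}$ are invariant under the positive rescaling, so $V_k^{\pm}=\widetilde{\pi}^{-1}(U_k^{\pm})$ is respected by the limit; and the branch condition $|\mathcal{Y}_{\pm(\beta+k\delta)}|<1\iff\zeta\in\mathbb{H}_{\mathbb{R}_{<0}\cdot Z_{\pm(\beta+k\delta)}}$ passes to $|\mathcal{X}_{\pm(\beta+k\delta)}|<1\iff\lambda_{\mathrm B}\in\mathbb{H}_{\pm\mathcal{L}_k}$, so the principal branches used to define $L(\mathcal{Y}_\gamma)$ on each overlap are exactly the ones defining $L(\mathcal{X}_\gamma)$ in the limit.

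With these facts in place, continuity of $L$ on the relevant region gives that each factor of $h$ converges to the corresponding factor of $\widetilde{g}$: for the finite products this already yields $h^{\pm}_{k_1,k_2}|_{\mathbb{C}^{\times}\times M_+}\to\widetilde{g}^{\pm}_{k_1,k_2}$, and I would note that the constants $e^{\pm\I\pi/12}$ appearing in \eqref{inftytrans}, \eqref{-inftytrans} do not show up in $\widetilde{g}^{\pm\infty}_{k_1,k_2}$ because, as established in Proposition \ref{transRoger}, the change of local trivialization by $\exp\!\big(\frac{\I}{4\pi}f_{\pm\rho_k}\mp\frac{\I\pi}{24}\big)$ already cancels them, so the match is exact. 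The main obstacle — and essentially the only non-routine step — is justifying the interchange of $\lim_{R\to 0^+}$ with the \emph{infinite} products in $h^{\pm\infty}_{k_1,k_2}$; I would handle this by proving that the sums $\sum_k\Omega(\gamma_k)L(\mathcal{Y}_{\gamma_k})$ converge locally uniformly in $R$ on $U_{k_1}^{+}\cap U_{k_2}^{-}$, which follows from the support and convergence properties of the variation of BPS structures in Definition \ref{varBPS}: the decay of $|\mathcal{Y}_{\gamma_k}|$ — and hence of $L(\mathcal{Y}_{\gamma_k})$, which is $\mathcal{O}(\mathcal{Y}_{\gamma_k})$ — is uniform for $R$ in a compact neighbourhood of $0$. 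Everything else is a routine verification.
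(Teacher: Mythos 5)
Your proposal is correct and follows essentially the same route as the paper: restrict to the Lagrangian $L$ where the $\mathcal{Y}_{n\beta+m\delta}$ reduce to their semi-flat form, observe that the conformal limit sends $\zeta^{-1}(RZ_\gamma)\to Z_\gamma/\lambda_{\mathrm B}$ while $\zeta\,\overline{RZ_\gamma}\to 0$, so $\mathcal{Y}_{n\beta+m\delta}\to\mathcal{X}_{n\beta+m\delta}$, and conclude factor-by-factor that $h\to\widetilde{g}$ using the Rogers-dilogarithm form of the cocycle from Proposition \ref{transRoger}. Your additional checks (invariance of the half-planes and branch conditions under the rescaling, and the uniform-in-$R$ convergence needed to pass the limit through the infinite products) are points the paper leaves implicit, and they are handled correctly.
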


%%%%%%%%%%%%%%%%%%%%%%%%%%%%%%%%%

\section{The strong-coupling expansion and its Stokes phenomena}

In this section, we will demonstrate that the topological string partition 
function has a Borel summable strong-coupling expansion for $\la\ra\infty$.
The Stokes jumps of the strong-coupling expansion are found to reproduce 
the wall-crossing behaviour of counting functions for the {\it framed} BPS states representing 
composites of D0 and D2 branes bound to a heavy D6 brane in string theory on the resolved conifold.
This wall-crossing
has previously been studied by Jafferis and Moore \cite{JM}. This work in particular gave a physical derivation of the results of Szendr\"oi on the generating function of non-commutative DT invariants \cite{Szendroi}, see also \cite{NN} for related work in mathematics and \cite{Dimofte:2009bv,Aganagic:2009kf} in physics. \\

Because the techniques required in this section are the same as in the previous sections, we will give less details of the intermediate computations. 

\subsection{Borel summation of the strong-coupling expansion}

In order to derive the strong-coupling expansion we shall start with
the Woronowicz form of $F_{\mathrm{np}}(\lambda,t)$ given in (\ref{altform}), now rewritten as
\begin{equation}\label{worono-ints-strong}
F_{\mathrm{np}}(\la,t)=\frac{1}{(2\pi)^2}
\int_{\BR+\ii 0^+}\mathrm dv\;\frac{v+\alpha}{1-e^{v+\alpha}}
\log\big(1-e^{\check\la v}\big). 
\end{equation}
using the notations $\alpha=-2\pi\ii t'$ and $t'=t/\check\la $.
As before, we may rewrite this in terms of a  Laplace transformation, 
\begin{align}\label{strong-Laplace}
F_{\mathrm{np}}(\la,t)&=
\int_{0}^\infty \!\! \frac{\mathrm dv}{(2\pi)^2}\bigg(\frac{(v+\alpha)\log\big(1-e^{\check\la v-\ii0^+}\big)}{1-e^{v+\alpha}}-\frac{(v-\alpha)\log\big(1-e^{-\check\la v-\ii0^+}\big)}{1-e^{\alpha-v}}
\bigg)\\
&=\frac{1}{(2\pi)^2}
\int_{0}^{\infty}\mathrm dv\;\bigg[\frac{(v+\alpha)(\check\la v+\pi\ii)}{1-e^{v+\alpha}}+\bigg(\frac{v+\alpha}{1-e^{v+\alpha}}-\frac{v-\alpha}{1-e^{\alpha-v}}
\bigg)\log\big(1-e^{-\check\la v-\ii0^+}\big)\bigg]\notag\\
&=-\frac{\la}{(2\pi)^3}(2\mathrm{Li}_3(Q')+\alpha\,\mathrm{Li}_2(Q'))-\frac{\ii}{4\pi}(\mathrm{Li}_2(Q')+\alpha\,\mathrm{Li}_1(Q'))
+\int_{0}^{\infty}\mathrm dv\;e^{-\check\la v}G_s(v,t')\,,
\notag\end{align}
using the notations $Q'=e^{2\pi\ii t'}$
 and 
\begin{align}
G_\mathrm s(v,t')
&=\frac{1}{(2\pi)^2}\sum_{n\in\BZ\setminus \{0\}}\frac{1}{n^3}
\frac{v+2\pi\ii\,nt'}{1-e^{-v/n-2\pi\ii t'}}\,.
\notag\end{align}
Having represented the function $F_{\mathrm{np}}(\la,t)$ as a Laplace transform makes it straightforward to 
derive  an asymptotic series in inverse powers of $\la$ for which 
\rf{strong-Laplace} represents a Borel transform.\\

$G_\mathrm s(v,t')$ has poles  at $v=v_{kn}^\pm:=\mp 2\pi \ii n(t'+k)$, $k\in\BZ\setminus\{0\}$, 
$n\in\BZ_{>0}$. In the case $\mathrm{Im}(t')>0$, the poles
$v_{kn}^+$ and $v_{kn}^-$ are in the right and left half-planes, respectively.
Assuming $\mathrm{Re}(t')<1$,  one finds that  the strings of 
poles $\{v_{kn}^+\mid n\in\BZ_{>0}\}$ with $k\in\BZ_{< 0}$ are located in 
the upper half-plane. \\

We may decompose the complex plane representing
values of the integration variable $v$ into a union of rays $\pm l_k':= \pm  \mathbb{R}_{<0}\cdot 2\pi \I(t'+k)$ and wedges $[\pm l_k',\pm l_{k-1}']$ bounded by 
$\pm l_k'$ and 
$\pm l_{k-1}'$. Letting $\lambda':=1/\check\lambda$, for $\rho_k$ in the wedge $[l_k',l_{k-1}']$ and $\lambda'\in \mathbb{H}_{\pm \rho_k}$, we define 
\begin{align}
F'_{\pm \rho_k}(\la',t'):=&-\frac{1}{(2\pi)^2\lambda'}(2\mathrm{Li}_3(Q')-2\pi\ii t'\mathrm{Li}_2(Q')\big)-\frac{\ii}{4\pi}\big(\mathrm{Li}_2(Q')-2\pi\ii t'\mathrm{Li}_1(Q')\big) \notag\\
&+\int_{\pm \rho_k}\mathrm dv\;e^{-\frac{v}{\lambda'}}G_\mathrm s(v,t')\,.
\end{align}
The wedges
$[\pm l_k',\pm l_{k-1}']$
are natural domains of definition (in the $\lambda'$ variable) for the functions $F'_{\pm \rho_k}(\la',t')$,
differing by Stokes jumps from the strings of 
poles $\{v_{kn}^\pm\mid n\in\BZ_{>0}\}$ of $G_\mathrm s(v,t)$.

\subsection{Stokes jumps}

To compute the Stokes jumps, we follow the strategy from Section \ref{Stokesjumpssect}. The relevant residues  are
\begin{align}
&\Res_{v=v_{kn}^{\pm }}e^{-v/\lambda'}
G_\mathrm s(v,t)=\frac{e^{-v_{kn}^{\pm }/\lambda'}}{(2\pi)^2n^3}(\mp 2\pi\ii\,nk)n=
\pm\frac{1}{2\pi \ii}\frac{k}{n}\,e^{\pm\ii\la n(t'+k)}.\,
\end{align}
It follows that 
the Stokes jumps  across $\pm l_k'$ are explicitly given as
\begin{equation}\label{Fplusjumps}
\begin{aligned}
F'_{\pm \rho_{k+1}}-F'_{\pm \rho_k}&=2\pi\ii\sum_{n=1}^{\infty}
\pm  e^{\pm \la\ii n(t'+k)} \frac{\, k}{2\pi \ii n}=\mp k\log\big(1-e^{\pm \la\ii (t'+k)}\big)\\ &=
\mp k\log\big(1-e^{\pm 2\pi\ii (t+\check\la  k)}\big)=\mp k\log(1-Q^{\pm 1}{q}^{\pm k})\,,
\end{aligned}
\qquad\quad
\begin{aligned}
&Q:=e^{2\pi\ii t},\\[1ex]
&q:=e^{\ii\la}.
\end{aligned}
\end{equation}
Note that there is no jump for $k=0$. \\

For the rest of the section, we will assume that $0<\mathrm{Re}(t')<1$ and $\mathrm{Im}(t')>0$. Taking $\mathrm{Im}(\lambda')>0$ ($\iff \mathrm{Im}(\lambda)<0$), we can sum the jumps in the upper half-plane, and obtain
\begin{align}\label{strongjumps}
\lim_{k \to -\infty}F'_{\rho_k}-F'_{\rho_0}& =\sum_{k=-1}^{-\infty}(F'_{\rho_k}-F'_{\rho_{k+1}})=
-\sum_{k=1}^\infty k\log(1-Q{q}^{-k})\\
&=\sum_{l=1}^\infty\frac{q^{-l}}{l}\frac{Q^l}{(1-q^{-l})^2}
=-\sum_{k=1}^\infty\frac{1}{k}
\frac{e^{2\pi\ii kt}}{\big(2\sin\big(\frac{k\la}{2}\big)\big)^2}\,.
\notag \end{align}
On the other hand, if $\mathrm{Im}(\lambda')<0$ ($\iff \mathrm{Im}(\la)>0$), we can sum the jumps in the lower half-plane of the variable, which leads to
\begin{align}\label{strongjumps2}
\lim_{k\to \infty}F'_{\rho_k}-F'_{\rho_0}& =\sum_{k=0}^\infty(F'_{\rho_{k+1}}-F'_{\rho_k})=
-\sum_{k=1}^\infty k\log(1-Q{q}^{k})\notag \\
&=\sum_{l=1}^\infty\frac{q^{l}}{l}\frac{Q^l}{(1-q^{l})^2}=-\sum_{k=1}^\infty\frac{1}{k}
\frac{e^{2\pi\ii kt}}{\big(2\sin\big(\frac{k\la}{2}\big)\big)^2}\,.
 \end{align}
Note that the domains of definition of $\lim_{k\to \infty}F'_{\rho_k}-F'_{\rho_0}$
and $\lim_{k\to -\infty}F'_{\rho_k}-F'_{\rho_0}$ have empty intersection.\\ 

It will be instructive  to consider the normalised partition functions 
\begin{equation}\label{norm-funct}
\CZ_{\pm \rho_k}(\la',t'):=
\frac{Z'_{\rho_0}(\la',t')}{Z'_{\pm \rho_k}(\la',t')}\bigg(\frac{Z'_{\rho_0}(\la',0)}{Z'_{\pm \rho_k}(\la',0)}\bigg)^{-1},\qquad   
Z'_{\pm \rho_k}(\la',t')=e^{F'_{\pm \rho_{k}}(\la',t')}.
\end{equation}
The jumping behaviour of the normalised partition functions
can be summarised as follows. Equation \rf{Fplusjumps} immediately 
implies that across $l_k'$, we have
\[
{\CZ_{\rho_{k+1}}(\la',t')}=(1-Qq^k)^k{\CZ_{\rho_k}(\la',t')}\,.
\]
It follows that for $k\geq 0$
\[
{\CZ_{\rho_{k+1}}(\la',t')}=\prod_{j=1}^k (1-Qq^j)^j\,,
\]
where we have used that $Z'_{\rho_1}=Z'_{\rho_0}$.\\

Considering the functions $\CZ_{-\rho_k}(\la',t')$, one needs to take into account 
the fact that the jumps of the normalising factor accumulate at the imaginary 
axis. It is then straightforward to compute

\begin{align}
&\lim_{k\to \infty}\CZ_{\rho_k}(\la',t')=\prod_{k=1}^\infty (1-q^k Q)^k,\\
&\lim_{k\to -\infty}\CZ_{-\rho_{k}}(\la',t')=(M(q))^2
\prod_{k=1}^\infty (1-q^k Q)^k,\\
&\CZ_{-\rho_0}(\la',t')=(M(q))^2
\prod_{k=1}^\infty (1-q^k Q)^k(1-q^{k} Q^{-1})^{k},
\end{align}
with $M(q)=\prod_{k=1}^\infty (1-q^k)^{-k}$ being the MacMahon function. We note furthermore that $\CZ_{-\rho_0}(\la',t')$ is the expression obtained in \cite{Szendroi} as a generating function of non-commutative DT invariants.

\subsection{Relation to framed BPS states}

Our findings can be compared with the known results on counting of 
{\it framed} BPS-states, representing bound states of D0- and D2-branes
with a single infinitely heavy D6 in string theory on local CY manifolds. 
A useful characteristic of the spectrum of BPS states are the BPS indices (generalised DT invariants) 
$[\mathrm{DT}]_{n\de+k\be+{\de}^\vee}^{\CC}$ which are locally constant with respect
to the K\"ahler parameters, but may jump along walls of marginal stability in the 
K\"ahler moduli space $\CM_{\mathrm{K\ddot{a}h}}$ and therefore 
depend on the choice of a chamber $\CC\subset\CM_{\mathrm{K\ddot{a}h}}$.
The BPS partition functions are generating 
functions for the BPS indices  for  the case of the conifold defined as
\begin{equation}
\CZ_{\mathrm{BPS}}(u,v;\CC)=\sum_{k= 0}^\infty\sum_{n=1}^\infty\,
[\mathrm{DT}]_{n\de+k\be+\de^\vee}^{\CC}\,
u^n v^k.
\end{equation}

The pattern of chambers can be described as follows \cite{JM}. The 
processes associated to walls of marginal stability
represent decay or recombination of framed BPS--states
with charges $\ga_1=k'\de+m'\be+\de^\vee$ and unframed BPS-state 
 with charges $\ga_2=k\de+m\be$. By regarding the resolved conifold as a 
 limit $\Lambda\ra\infty$ of a family of compact CY manifolds having a 
complexified K\"ahler parameter $\Lambda e^{\ii\vf}$, one may introduce
a regularised central charge function $Z(\ga_1)$, 
to leading order in $\Lambda$ given by $(\Lambda e^{\ii \vf})^3$.  
Unframed BPS-states with charges $\ga_2=k\de+m\be$ have central charge function 
$Z(\ga_2)=mz-k$, where $z$ is the complexified K\"ahler parameter associated to 
the compact two-cycle of the resolved conifold. 
The phases of $Z(\ga_1)$ and $Z(\ga_2)=mz-k$ align if 
\[
3\vf=\mathrm{arg}(mz-k)+2\pi n, \qquad n\in\BZ\,.
\]
Taking into account that there only exist BPS-states with $m=\pm 1$, one arrives 
at the pattern of walls $\CW_k^m$ described in \cite{JM}, decomposing the parameter
space into a collection of chambers $\CC_k^-=[\CW_{k-1}^{-1}\CW_{k}^{-1}]$ and
$\CC_k^+=[\CW_{k}^1\CW_{k-1}^1]$, respectively.\\ 

Of special interest are the core region $\CC_0^+\cup\CC_{1}^+$, the limits
$\CC_\infty^\pm$, and the  chamber $\CC_0^-$ called non-commutative chamber
following \cite{JM}. The partition functions are
\begin{align}
&\CZ_{\mathrm{BPS}}(u,v;\CC_\infty^{+})=\prod_{k=1}^\infty (1-(-u)^k v)^k,\qquad
\CZ_{\mathrm{BPS}}(u,v;\CC_{\rm core})=1,\\
&\CZ_{\mathrm{DT}}(u,v):=\CZ_{\mathrm{BPS}}(u,v;\CC_\infty^{-})=(M(-u))^2
\prod_{k=1}^\infty (1-(-u)^k v)^k,\\
&\label{nc-chamber}\CZ_{\mathrm{BPS}}(u,v;\CC_0^{-})=(M(-u))^2
\prod_{k=1}^\infty (1-(-u)^k v)^k(1-(-u)^k v^{-1})^{k}.
\end{align}
One may identify the exponents in \rf{nc-chamber} 
with the unframed BPS indices defining the 
BPS Riemann--Hilbert problem for the conifold.\\

The GW-DT correspondence \cite{MNOP1,MNOP2,MOOP} 
relates the BPS partition function to the topological 
string partition function through the following relation\footnote{Comparing with \cite{MNOP1,MNOP2}, one should note that the variable $q$ used in these papers corresponds to the quantity $-q$ in our notations.}
\begin{equation}\label{GW-DT}
\CZ_{\mathrm{ DT}}(-q,Q)
=(M(q))^{\chi(X)}e^{F_{\rm GV}(\la,t)}, \qquad q=e^{\ii\la},\quad
Q=e^{2\pi\ii t}.
\end{equation}
Taking into account the relation between the variables $u,v$ 
and $q,Q$ following from \rf{GW-DT},
and identifying $\arg(\la')=3\vf$, $z=t'$,
we find a one-to-one correspondence between the chambers $\CC_k^\pm$ 
and the wedges $[\pm l_k',\pm l_{k-1}']$ representing 
natural domains of definition for the Borel summations $F'_{\pm \rho_k}(\la',t')$ of the strong-coupling 
expansion, together with a precise match between
the BPS partition functions $\CZ_{\mathrm{ BPS}}(u,v;\CC_k^\pm)$
and the normalised partition functions $\CZ_{\pm \rho_k}(\la',t')$ defined in \rf{norm-funct},
chamber by chamber.

\section{S-duality}\label{sec:S-dual}

%\subsubsection{}

It seems interesting to observe that the wall-crossing behaviour 
of the generating functions $\CZ_{\mathrm{ BPS}}(u,v;\CC)$
for BPS indices involves jumps related
to the jumps in Bridgeland's RH problem by the replacements
\begin{equation}\label{S-duality}
\la\mapsto \la_{\mathrm D}=-\frac{4\pi^2}\la,\qquad t\mapsto t_{\mathrm D}=\frac{2\pi}{\la}t.
\end{equation}

This suggests that we can use the framed wall-crossing phenomena studied in \cite{JM}
causing the jumps of the BPS partition functions 
$\CZ_{\mathrm{ BPS}}(u,v;\CC)$ to 
define a ``dual'' version of the RH problem studied by Bridgeland in \cite{BridgelandCon}.
The location of walls and the explicit formulae for the jumps 
of the dual RH problem are obtained by replacing 
$\la$ and $t$ by $\la_{\mathrm D}$ and $t_{\mathrm D}$, respectively. \\

The dependence 
on the variable $\la$ suggests that Bridgleland's RH problem describes
wall-crossing phenomena in non-perturbative 
effects due to disk instantons in string theory,  while the dual
RH problem describes the wall-crossing of BPS states in supergravity.
As an outlook we will now 
briefly indicate how weak and strong-coupling expansions can be combined to get
a more global geometric picture of the space
$\CM_{\text{K\"ah}}\times\BC^\times$ with coordinates $(t,\la)$,
outline connections 
to the S-duality conjectures in string theory, and point out 
a relation to the mathematical phenomenon called Langlands modular duality
in the context of quantum cluster algebras \cite{FG}.

\subsection{Global aspects}

In the space $\CM_{\text{K\"ah}}\times\BC^\times$ with coordinates $(t,\la)$,
one may naturally consider two asymptotic regions, referred to as weak and
strong-coupling regions, respectively. The weak coupling region is 
defined by sending $\la\ra 0$ keeping $t$ fixed, while the strong 
coupling region can be described by sending $\la\ra\infty$ with constant $t_{\mathrm D}$.
The asymptotic expansions of the non-perturbative free energy $F_{\mathrm{np}}(\la,t)$
in powers of $\la$ and $\la^{-1}$ are valid in the weak and
strong-coupling regions, respectively.\\

In order to get a more global picture, it seems natural to include the rays and jumps 
of the strong coupling expansion into the definition of a refined version of the 
line bundle discussed in the previous section \ref{hyperholo}. 
More precisely:

\begin{itemize}
    \item On the complex $2$-dimensional parameter space $\CM_{\text{K\"ah}}\times\BC^\times$ parametrized by $(t,\lambda)$ or $(t',\lambda')=(t_D,-\lambda_D/2\pi)$, one can consider the real 3-dimensional walls
\begin{equation}\label{wallsdef}
    \begin{split}
    \mathcal{W}_{\text{weak},k}^{\pm}&:=\{(t,\lambda) \; | \; \lambda \in \pm \mathbb{R}_{<0}2\pi i (t+k)\}, \quad k \in \mathbb{Z},\\
    \mathcal{W}_{\text{strong},k}^{\pm}&:=\big\{(t',\lambda') \; | \; \lambda' \in \pm \mathbb{R}_{<0}2\pi i (t'+k)\big\}, \quad k \in \mathbb{Z}-\{0\}\,\\
    &=\big\{(t,\lambda) \; | \; \lambda={\textstyle\frac{-2\pi t \pm ir}{k}},\quad r\in \mathbb{R}_{>0}\big\}, \quad k \in \mathbb{Z}-\{0\}\,.
    \end{split}
\end{equation}
and the chambers defined by the connected components of the complement of $\mathcal{W}:=\cup_{k\in \mathbb{Z}} \mathcal{W}_{\text{weak},k}^{\pm}\cup_{k\in \mathbb{Z}-\{0\}}\mathcal{W}_{\text{strong},k}^{\pm}$ (notice that since there is no jump associated to $\pm l'_0$ it is safe to exclude the case $k=0$ for the strong coupling walls).
Intersecting $\mathcal{W}_{\text{weak},k}^{\pm}$ with a $t$-slice $\{t\}\times \mathbb{C}^{\times}$, one obtains $\pm l_k$ in the corresponding $\lambda$-plane; while intersecting $\mathcal{W}_{\text{strong},k}^{\pm}$ with a $t'$-slice $\{t'=t/\check{\lambda}=\text{const}\}$, one obtains $\pm l'_k$ in the corresponding $\lambda'$-plane. Furthermore, the intersection of $\mathcal{W}_{\text{strong},k}^{\pm}$ with  $\{t\}\times \mathbb{C}^{\times}$ gives a ray starting at $-2\pi t/k$ and parallel to the imaginary axis in the corresponding $\lambda$-plane. The rays corresponding to the intersection of $\mathcal{W}_{\text{strong},k}^{+}$ and $\mathcal{W}_{\text{strong},k}^{-}$ with  $\{t\}\times \mathbb{C}^{\times}$ combine into a line parallel to the imaginary axis, and missing the point $-2\pi t/k$. In particular, these lines accumulate near the imaginary axis of the $\lambda$-plane $\{t\}\times \mathbb{C}^{\times}$. Assuming $\text{Re}(t)>0$, the ones to the right of the imaginary axis correspond to $k<0$, while the ones on the left correspond to $k>0$.
\item Taking into account the chamber structure on $(\CM_{\text{K\"ah}}\times\BC^\times)-\mathcal{W}$, the corresponding refined line bundle would then have transition functions along the walls determined by the jumps 
obtained at strong and weak coupling. In particular, there is in $(\CM_{\text{K\"ah}}\times\BC^\times)-\mathcal{W}$ a distinguished chamber $\mathcal{D}$ determined by the constraint $0<\mathrm{Re}(t)<1$, $\text{Im}(t)>0$, and the condition that $\mathcal{D}\cap (\{t\}\times \mathbb{C}^{\times})$ gives the region of the Stokes sector $[l_0,l_{-1}]$ to the right of the line $\mathcal{W}_{\text{strong},-1}^{\pm}\cap (\{t\}\times \mathbb{C}^{\times}) $. On this region $F_{\text{np}}(\lambda,t)$ is defined and matches $F_{\mathbb{R}_{>0}}(\lambda,t)$.  We can then use the jumps along the walls to extend $F_{\text{np}}$ to the other regions. In particular, if we fix $t$ and we cross the infinite set of weak coupling walls $\mathcal{W}_{\text{weak},k}^{+}$ for $k>0$ (or $k<0$) while avoiding the strong coupling walls, one is left with $F_{\text{GV}}$; while if we cross the infinite set of strong coupling walls $\mathcal{W}_{\text{strong},k}^{+}$ for $k<0$ while avoiding the weak coupling walls (i.e. while remaining in the sector $[l_0,l_{-1}]$), we are left with $F_{\text{NS}}$.  
\end{itemize} 
%The  line
%bundle defined in this way would have transition functions associated to the jumps 
%at strong coupling which are directly related to the wall crossing jumps
%of the BPS partition functions. \\

The original and dual RH problems have 
jumps arranged according to peacock patterns in the product of two complex planes
with coordinates $(\la,t)$ and $(\la_{\mathrm D},t_{\mathrm D})$, 
respectively. 
Assuming that $0<\text{Re}(t)<1$ and $\text{Im}(t)>0$,
one finds that the 
positive and negative real half-axes are distinguished by the property
of being self-dual in the sense that they are contained both in 
$\CC_{0}^\pm$ and in the wedges between $\pm l_0$ and $\pm l_{-1}$.
The self-duality of the intersection of these chambers
strengthens the sense in which $F_{\mathrm{np}}(\la,t)$ is distinguished 
as a non-perturbative definition of the topological string partition function.

\subsection{Relation to string-theoretic S-duality}

Relation \rf{S-duality}
 resembles the realisation of S-duality discussed in \cite{APSV,APP10} 
 on complex Darboux coordinates for
the QK manifolds representing the hypermultiplet moduli spaces of 
type II string theory (see \cite{Al11,AMPP} for reviews).
This is probably no accident. \\ 

One may in particular notice that a Riemann--Hilbert problem similar to 
the one studied in \cite{BridgelandCon} is expected to be solved by twistor coordinates
for the hypermultiplet moduli space in type II
string theory on the resolved conifold.  
This Riemann--Hilbert problem should reproduce the
problem studied  in \cite{BridgelandCon} in  a limit called the conformal limit. 
Both Riemann--Hilbert problems are defined with the help of the same 
BPS structure, implying that the symplectic transformations used in 
the definitions 
%of the two Riemann--Hilbert problems 
coincide. The main differences 
will  concern the asymptotic conditions imposed in the formulation of the
two problems. 
These considerations suggest that the complex structures on 
$M\times\BC^\times$ defined by the coordinate functions solving 
the RH problem from \cite{BridgelandCon}
are limits  of the complex structures on the conifold 
hypermultiplet moduli space defined by twistor coordinates.\\

The QK metrics defined by mutually local D-instanton corrections have been
studied intensively already \cite{RRSTV,AS09,AB15, CT}. 
Infinite-distance limits of such QK-metrics have been studied in 
\cite{BMW} motivated by the swampland conjectures in type II string theories. 
Two infinite-distance limits play a basic role. The first, called the D1 limit
in \cite{BMW}, is characterised by  large volume and large coupling $g_\mathrm s=1/\tau_2$. 
The second is called the F1 limit. It is simply described by 
small coupling $g_\mathrm s$ at finite values of the K\"ahler moduli.
The two limits are related by S-duality. This implies that
the D1 limit is characterised by a scaling of the form
\[
\tau_2(\si)=e^{-\frac{3}{2}\si}\tau_2(0)\,,\qquad
t(\si)=e^{\frac{3}{2}\si}t(0)\,,
\]
taking into account the leading quantum 
corrections to the QK metric in this limit, as expressed most clearly
in \cite[Equation (3.41)]{BMW}.\\

It is known that a scaling of $g_\mathrm s$ induces the same scaling of the
topological string coupling $\la$
in the conformal limit.  This relates the F1 and D1 limits 
to the weak- and strong-coupling regions in the 
space $M\times\BC^{\times}$, respectively. 
As the F1 and D1 limits are exchanged by S-duality,
it seems natural to conjecture that the relations between 
the Stokes jumps of weak- and strong-coupling expansions 
observed above are related to the S-duality phenomenon
by the conformal limit. \\

It has been argued in \cite{APP}, see also \cite{AMPP} for a review, that the 
string theoretical S-duality conjectures relating D5 and NS5 branes predict relations
between BPS partition functions and NS5-brane partition functions.
As discussed in \cite{APP,AMPP}, the NS5-brane partition function lives precisely
in the line bundle governed by Rogers dilogarithm  discussed in 
Section \ref{hyperholo}.\footnote{This was pointed out to us by S. Alexandrov.}

\subsection{Langlands modular duality}

It seems finally worth pointing out that the coordinate changes
associated to Stokes jumps in weak- and strong coupling
expansions are related by the phenomenon
called Langlands modular duality in the terminology introduced by 
Fock and Goncharov in the 
context of quantum cluster algebras \cite{FG} following \cite{Fad}. 
An essential aspect of this phenomenon, specialized to the case at hand, is the 
possibility to introduce dual shift operators 
\[
({T}f)(t)=f(t+\la/2\pi),\qquad (\tilde{T}f)(t)=f(t+1), 
\]
which act on the variables $\tilde{Q}:=e^{4\pi^2\ii\,t/\la}\equiv w$ and $Q=e^{2\pi\ii t}$
as
\[
{T}Q=qQ,
\qquad \tilde{T}\tilde{Q}=\tilde{q}\tilde{Q}, \qquad\tilde{T}Q=Q, \qquad {T}\tilde{Q}=\tilde{Q}.
\]
This implies in particular that the functions representing
the cluster coordinate transformations associated to the weak coupling jumps
are invariant under the shift $\tilde{T}$, 
while the shift ${T}$ acts trivially on 
the cluster coordinate transformations associated to the strong coupling jumps.
This simple phenomenon has a natural generalisation 
which is the root of some remarkable features of quantized cluster 
algebras \cite{FG}.
We can't help the feeling that
this manifestation of Langlands modular duality in the case of the 
resolved conifold partition functions can be the tip of an iceberg.

%%%%%%%%%%%%%%%%%%%%%%%%%%%%%%%%%
\clearpage
\appendix
\begin{section}{Alternative proof for the Borel sum}\label{altproofBorel}

In this section we present an alternative derivation of the Borel sum and transform of $\widetilde{F}(\lambda,t)$. The alternative proof uses the integral representation of the Hadamard product used in Section \ref{boreltransproof}.

\begin{prop}  Take $t \in \mathbb{C}$ with $\mathrm{Im}(t)>0$ and $0<\mathrm{Re}(t)<1$. Then
$F_{\mathrm{np}}(\lambda,t)$ equals $F_{\mathbb{R}_{>0}}$ on their common domain of definition. More specifically:
\begin{equation}\label{eq:Fnp-G}
    F_{\mathrm{np}}(\lambda,t) =  \frac{1}{\lambda^2} \mathrm{Li}_{3}(Q)+ \frac{B_2}{2}\mathrm{Li}_{1}(Q)+\int_{0}^\infty\mathrm d\xi\, e^{-\xi/\check{\lambda}} G(\xi,t)\,. 
\end{equation}
\end{prop}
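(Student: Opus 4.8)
The plan is to bypass the Fourier--unitarity argument of Lemma~\ref{woronowiczlemma} and evaluate the Borel sum $\int_0^\infty \mathrm d\xi\,e^{-\xi/\check\lambda}G(\xi,t)$ directly from the factorisation $G(\xi,t)=(f_1\oast f_2(-,t))(\xi)$ obtained in Section~\ref{boreltransproof}, using the integral representation of the Hadamard product from Lemma~\ref{lemHad}. Writing $G(\xi,t)=\tfrac{1}{2\pi\I}\oint_{|s|=\rho}\tfrac{\mathrm ds}{s}\,f_1(s)\,f_2(\xi/s,t)$ and inserting it into the Laplace integral, the first step is to interchange the order of integration and substitute $\eta=\xi/s$ in the inner integral, turning it into the Laplace transform $s\int e^{-s\eta/\check\lambda}f_2(\eta,t)\,\mathrm d\eta$ of the closed form $f_2(\eta,t)=\tfrac{(2\pi\I)^3}{2}\bigl(\mathrm{Li}_0(e^{2\pi\I(t+\eta)})-\mathrm{Li}_0(e^{2\pi\I(t-\eta)})\bigr)$ established in Section~\ref{boreltransproof}. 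Expanding $\mathrm{Li}_0$ geometrically, this Laplace transform is elementary and equals $\tfrac{(2\pi\I)^3}{2}\sum_{n\ge 1}Q^n\bigl(\tfrac{1}{s/\check\lambda-2\pi\I n}-\tfrac{1}{s/\check\lambda+2\pi\I n}\bigr)$; call the result $\Psi_{\check\lambda}(s,t)$. After this reduction the Borel sum collapses to the single contour integral $\tfrac{1}{2\pi\I}\oint_{|s|=\rho}\mathrm ds\,f_1(s)\,\Psi_{\check\lambda}(s,t)$, the factor $s$ cancelling the $1/s$ in $\mathrm ds/s$.

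The second step is to evaluate this contour integral exactly as in the proof of Proposition~\ref{Borelhadprod}: enlarging the circle $|s|=\rho$ and using the closed form $f_1(s)=-\tfrac{1}{4\pi^2}\bigl(\tfrac{1}{s^3}-\tfrac{1}{s(e^{s/2}-e^{-s/2})^2}-\tfrac{1}{12 s}\bigr)$, one picks up the double poles of $f_1$ at $s=2\pi\I m$, $m\in\Z\setminus\{0\}$, together with the poles of $\Psi_{\check\lambda}$ at $s=\pm2\pi\I n\check\lambda$, and checks that the integral over the growing contour vanishes in the limit (the same $O(1/k)$, $O(1/s)$ bounds as in Section~\ref{boreltransproof} apply). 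The double sum produced by the $f_1$-residues is then recognised, after the standard dilogarithm identities \eqref{id1}--\eqref{id3}, as $F_{\mathrm{np}}(\lambda,t)-\tfrac{1}{\lambda^2}\mathrm{Li}_3(Q)-\tfrac{B_2}{2}\mathrm{Li}_1(Q)$ --- equivalently as its $F_{\mathrm{GV}}+F_{\mathrm{NS}}$ decomposition --- by comparing with the residue evaluation of the integral representation of $F_{\mathrm{np}}$ from Proposition~\ref{intrep2} used in Proposition~\ref{rblimit1}; the two leftover polar pieces are precisely the $\tfrac{1}{\lambda^2}\mathrm{Li}_3(Q)$ and $\tfrac{B_2}{2}\mathrm{Li}_1(Q)$ terms that were split off when passing from $\widetilde F(\lambda,t)$ to $\Phi(\check\lambda,t)$. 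Matching the two double sums yields \eqref{eq:Fnp-G}.

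The main obstacle is the analytic bookkeeping around the Hadamard representation, not any individual residue: Lemma~\ref{lemHad} only gives $G(\xi,t)=\tfrac{1}{2\pi\I}\oint_{|s|=\rho}\tfrac{\mathrm ds}{s}f_1(s)f_2(\xi/s,t)$ for $|\xi|<\rho|t|$ with $\rho<2\pi$, so it cannot be inserted naively under $\int_0^\infty\mathrm d\xi$. I would handle this by first running the computation with the Laplace integral taken along a slightly tilted ray $e^{\I\vartheta}\R_{>0}$, chosen so that under $\eta=\xi/s$ the rotated ray avoids the poles $\eta=\pm t+k$ of $f_2$ for all $s$ on the contour, while simultaneously enlarging $\rho$ together with the truncation point of the $\xi$-integral and absorbing the extra $f_1$-residues, exactly in the spirit of the $\gamma_k$-estimates of Proposition~\ref{Borelhadprod}. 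The hypotheses $0<\mathrm{Re}(t)<1$, $\mathrm{Im}(t)>0$ and $\mathrm{Re}(t)<\mathrm{Re}(\check\lambda+1)$ then allow the $\xi$-contour to be deformed back to $\R_{>0}$ at the end, and each interchange of sum and integral is justified by Fubini--Tonelli together with the absolute convergence already recorded in Section~\ref{boreltransproof}.
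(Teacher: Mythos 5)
Your chosen route is in fact the same as the paper's own proof of this statement: the proposition you are proving is the appendix version, and its proof in the paper is precisely the Hadamard-product argument (swap the Laplace and contour integrals, evaluate the inner $\xi$-integral, then evaluate the $s$-integral), not the Fourier-unitarity argument of Lemma~\ref{woronowiczlemma}, which belongs to the main-text proof of the equivalent Proposition~\ref{BorelsumR}. So you are not bypassing anything; the divergence is in how you evaluate the inner integral, and that is where the proposal breaks down.

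The step ``expanding $\mathrm{Li}_0$ geometrically, this Laplace transform is elementary and equals $\Psi_{\check\lambda}(s,t)$'' is invalid on most of the contour $|s|=\rho$, and the failure is fatal rather than technical. After $\eta=\xi/s$ the $\eta$-integral runs along the rotated ray $s^{-1}\mathbb{R}_{>0}$; along that ray one of the two factors $\mathrm{Li}_0(e^{2\pi\I(t\pm\eta)})$ has argument of modulus tending to infinity, so its geometric expansion diverges, and the termwise integrals $\int e^{-s\eta/\check\lambda}e^{\pm2\pi\I n\eta}\,\mathrm d\eta$ fail to converge on roughly half of the $s$-circle. Deforming the $\eta$-ray back to $\mathbb{R}_{>0}$ to legitimise the expansion sweeps across the poles of $f_2$ at $\eta=\pm t+k$, and the residues $\propto e^{-s(\pm t+k)/\check\lambda}$ picked up there are exactly what must \emph{not} be discarded: they are the sole source of the non-perturbative terms. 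Your proposed fix (tilting the ray so as to \emph{avoid} these poles) points in the wrong direction; the paper's Hankel-contour / $\mathcal C'-\mathcal C$ analysis exists precisely to account for them. One can see the failure without any contour bookkeeping: $\Psi_{\check\lambda}(s,t)$ is a power series in $Q$ with $t$-independent, $s$-rational coefficients, its poles at $s=\pm2\pi\I n\check\lambda$ carry residues $\propto Q^n$, and the residues of $f_1$ at $s=2\pi\I m$ evaluated on $\Psi$ again yield only integer powers of $Q$. Hence your final double sum is necessarily a series in integer powers of $Q=e^{2\pi\I t}$, whereas by \eqref{Fnp-decomp} and Proposition~\ref{rblimit1} the quantity $F_{\mathrm{np}}-\frac{1}{\lambda^2}\mathrm{Li}_3(Q)-\frac{B_2}{2}\mathrm{Li}_1(Q)$ equals $F_{\mathrm{GV}}$ plus a series in $w=e^{2\pi\I t/\check\lambda}$ and $\widetilde q=e^{2\pi\I/\check\lambda}$, i.e.\ contains non-integer powers of $Q$ that the identities \eqref{id1}--\eqref{id3} cannot produce from a $Q$-power series. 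The proposed computation would therefore reproduce at best the $F_{\mathrm{GV}}$ and polar pieces, not $F_{\mathrm{np}}$; in the paper's version the $w^l$ terms arise from the poles of the correctly evaluated inner integral $\propto e^{st}/(e^s-1)$ against the factor $(e^{\check\lambda s}-1)^{-2}$ at $s=2\pi\I l/\check\lambda$ in \eqref{intrep}.
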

\begin{proof}

We will first write down an integral representation for $F_{\mathbb{R}_{>0}}$ assuming that $t\in (0,1)$, and $\check{\lambda}>0$ satisfies the conditions of Proposition \ref{intrep2}. We will then deform $t$ to $\text{Im}(t)>0$ and show what we want.\\

We recall the Hadamard product representation (see Proposition \ref{Borelhadprod}):

\begin{equation}
    G(\xi,t)=\frac{1}{2\pi i}\int_{\gamma}\frac{\mathrm ds}{s}f_1(s)\,f_2\Big(\frac{\xi}{s},t\Big)\,,
\end{equation}
where $\gamma$ was an appropriate counterclockwise contour around $0$, and 
\begin{equation}
    \begin{split}
        f_1(s)&= -\frac{1}{4\pi^2}\left(  \frac{1}{\xi^3}- \frac{1}{\xi (e^{\xi/2}-e^{-\xi/2})^2}-\frac{1}{12\xi}\right),\\
     f_2(\xi,t)&=\frac{(2\pi \I)^3}{2}\left(\mathrm{Li}_0\big(e^{2\pi \I(t + \xi)}\big)-\mathrm{Li}_0\big(e^{2\pi \I(t - \xi)}\big)\right).\\
    \end{split}
\end{equation}
Integrating $e^{-\xi/\check \lambda} G(\xi,t)$ along  the positive real line and swapping the integral signs, we get
\begin{equation}
\begin{split}
    &\int_{0}^{\infty} \mathrm d\xi\,e^{-\xi/{\check \lambda}} G(\xi,t) \\
    &= \frac{(2\pi \I)^2}{2} \int_{\gamma}\frac{\mathrm ds}{s} \left(\int_{0}^{\infty}\mathrm d\xi\, f_1(s)e^{-\xi/{\check \lambda}}\mathrm{Li}_0\big(e^{2\pi i(t+\xi/s)}\big) - f_1(s)e^{-\xi/{\check \lambda}}\mathrm{Li}_0\big(e^{2\pi i(t-\xi/s)}\big)\right).
\end{split}
\end{equation}
Next we simultaneously rescale $s\mapsto{\check \lambda}s$ and $\xi \mapsto {\check \lambda}s\xi$ on the first term, while simultaneously rescaling $s\mapsto - {\check \lambda}s$ and $\xi \mapsto {\check \lambda}s\xi$ on the second term to obtain
\begin{align*}
    \int_{0}^{\infty} \mathrm d\xi \,e^{-\xi/{\check \lambda}} G(\xi,t)&= \frac{(2\pi \I)^2}{2} \int_{\gamma} \frac{\mathrm ds}{s} \left(\int_{0}^{s^{-1}\infty}\mathrm d\xi\,{\check \lambda} s \big(f_1({\check \lambda} s)-f_1(-{\check \lambda} s)\big)e^{-s\xi}\mathrm{Li}_0\big(e^{2\pi \I(t+\xi)}\big)\right) \\
    &=(2\pi \I)^2\int_{\gamma}\mathrm ds\,  \check \lambda   f_1({\check \lambda}s)\left(\int_{0}^{s^{-1}\infty}\mathrm d\xi\, e^{-s\xi}\mathrm{Li}_0\big(e^{2\pi \I(t+\xi)}\big)\right). \numberthis
\end{align*}
Let $\mathcal C$ and $\mathcal C'$ denote the contours  following the real line  from $-\infty $ to $\infty  $ avoiding $0$ by a small detour in the upper and lower half-planes respectively. We may in fact take them to the lines with imaginary parts $\epsilon $ and $-\epsilon$ respectively, for some small $\epsilon > 0$. Since $\mathcal C' - \mathcal C = \gamma$ up to homology, we can  write
\begin{align*}
     \int_{0}^{\infty} \mathrm d\xi\,e^{-\xi/{\check \lambda}} G(\xi,t) &=(2\pi \I)^2\left(\int_{\mathcal C'}-\int_{\mathcal C} \right)\mathrm ds\, {\check \lambda}  f_1({\check \lambda}s)\left(\int_{0}^{s^{-1}\infty}\mathrm d\xi\, e^{-s\xi}\mathrm{Li}_0\big(e^{2\pi \I(t+\xi)}\big)\right) \\
    &=(2\pi \I)^2\int_{-\infty}^\infty\mathrm ds\Bigg({\check \lambda}  f_1({\check \lambda}(s-\I\epsilon))\left(\int_0^{(s+\I\epsilon)\infty}\mathrm d\xi\, e^{-(s-\I\epsilon)\xi}\mathrm{Li}_0\big(e^{2\pi \I(t+\xi)}\big)\right)\\
    &\qquad -{\check \lambda}  f_1({\check \lambda}(s+\I\epsilon))\left(\int_0^{(s-\I\epsilon)\infty}\mathrm d\xi\, e^{-(s+\I\epsilon)\xi}\mathrm{Li}_0\big(e^{2\pi \I(t+\xi)}\big)\right) \Bigg). \numberthis
\end{align*}

Now taking the limit $\epsilon \rightarrow 0^+$ then gives us
\begin{equation}
\int_{0}^{\infty} \mathrm d\xi\,e^{-\xi/{\check \lambda}} G(\xi,t) =(2\pi \I)^2\int_{-\infty}^\infty\mathrm ds\,{\check \lambda}  f_1({\check \lambda}s)\left(\int_{\mathcal H_s}\mathrm d\xi\, e^{-s\xi}\mathrm{Li}_0\big(e^{2\pi \I(t+\xi)}\big)\right),
\end{equation}
where $\mathcal H_s$ is a counterclockwise Hankel contour along the negative real axis when $s<0$, a clockwise Hankel contour along the positive real axis when $s>0$, and the imaginary axis from $-\I\infty$ to $\I\infty$ when $s=0$.\\

The poles and residues of the inner integrand are given by
\begin{equation}
    \begin{split}
        \mathrm{Res}_{-(t+k)} \big(e^{-s\xi}\mathrm{Li}_0(e^{2\pi \I(t+\xi)})\big) =-\frac{1}{2\pi \I}e^{s(t+k)}, 
    \end{split}
\end{equation}
for all $k\in \Z$. We can thus deduce using Cauchy's residue theorem that the inner integral is the sum of $2\pi \I$ times the residues from the poles at $-(t+k)$ with $k \ge 0$ when $s<0$ and minus the sum of $2\pi \I$ times the residues from the poles at $-(t+k)$ with $k > 0$ when $s>0$: 
\begin{align*}
   \int_{\mathcal H_s} \mathrm d\xi\,e^{-s\xi}\mathrm{Li}_0\big(e^{2\pi \I(t+\xi)}\big) &=  -\sum_{k=0}^{\infty}e^{s(t+k)}=-\frac{e^{st}}{1-e^s}\,, \qquad \mbox{when $s<0$},\\
  \int_{\mathcal H_s} \mathrm d\xi\,e^{-s\xi}\mathrm{Li}_0\big(e^{2\pi \I(t+\xi)}\big)&=\sum_{k=-1}^{-\infty}e^{s(t+k)}= \frac{e^{st}e^{-s}}{1-e^{-s}}=-\frac{e^{st}}{1-e^s}\,, \qquad \mbox{when $s>0$}. \numberthis
\end{align*}

Putting everything together, we get
\begin{align*}\label{intrep}
    \int_{0}^\infty \mathrm d\xi\,e^{-\xi/\check{\lambda}}  G(\xi,t)
    &= -\int_{-\infty}^\infty \frac{\mathrm ds}{s}\left(\frac{e^{\check \lambda s}}{(e^{\check \lambda s} -1)^2}-\frac{1}{({\check \lambda s})^2}+\frac{1}{12}\right)\frac{e^{st}}{e^s -1} \\
    &=-\int_{\mathcal C} \frac{\mathrm ds}{s}\left(\frac{e^{\check \lambda s}}{(e^{\check \lambda s} -1)^2}-\frac{1}{({\check \lambda s})^2}+\frac{1}{12}\right)\frac{e^{st}}{e^s -1}\,, \numberthis
\end{align*}
where we remark that the integrand of the integral over $\mathbb{R}$ is actually regular at $s=0$.\\

Both expressions in the equality (\ref{intrep}) above are analytic in $t$ and $\check{\lambda}$, so we can deform $t$ to $\text{Im}(t)>0$ with $\text{Im}(t)$ small, and $\lambda$ away from $\mathbb{R}_{>0}$, so that (\ref{intrep}) continues to hold in their common domain of definition. \\

The result to be proved will follow if we can show the following for $m=0$ and $m=1$:
\begin{equation}
\begin{split}
    \int_{\mathcal{C}} \frac{\mathrm ds}{s^{2m+1}}\frac{e^{ts}}{e^s -1}  &= \frac{1}{(2\pi \I)^{2m}} \mathrm{Li}_{2m+1}(e^{2\pi \I t}).
    \end{split}
\end{equation}
The integrand has a  pole of order $2m+2$ at $s=0$ and simple poles at $s=2\pi \I k$ with residues $e^{2\pi \I kt}/(2\pi \I k)^{2m+1}$ for all nonzero integers $k$. The contour $\mathcal C$ contains  only the simple poles with $k >0$. Thus, again by Cauchy's residue theorem, we have:

\begin{equation}
    \int_{\mathcal{C}} \frac{\mathrm ds}{s^{2m+1}}\frac{e^{ts}}{e^s -1}  = 2\pi \I \sum_{k=1}^\infty \frac{e^{2\pi \I kt}}{(2\pi \I k)^{2m+1}} = \frac{1}{(2\pi \I)^{2m}} \mathrm{Li}_{2m+1}(e^{2\pi \I t})\,.
\end{equation}
Where in the last equality we used the fact that $\text{Im}(t)>0$ and hence $|e^{2\pi \I t}|<1$, so that the series representation of $\mathrm{Li}_s(z)$ holds.
This completes the proof.
\end{proof}

\end{section}

%%%%%%%%%%%%%%%%%%%%%%%%%%%%

\section{Asymptotic series from Borel transforms}\label{App:TaylorBorel}
\begin{lem}\label{lemmaApp}
The expression
\begin{equation}
     G(\xi,t)= -\frac{1}{4\pi^2}\sum_{g=2}^{\infty} \frac{ B_{2g}}{2g (2g-2)! (2g-3)!} \xi^{2g-3}\,\partial_t^{2g} \mathrm{Li}_3(Q)\,,
\end{equation}
of the Borel transform can be obtained back from
\begin{equation}
    \begin{split}
        G(\xi,t) &= -\sum_{m\in\Z \setminus \{0\}}\frac{1}{(2\pi \I)^2}\left(\frac{1}{m^3}\left(\frac{e^{2\pi \I t + \xi/m}}{1-e^{2\pi \I t + \xi/m}}-\frac{e^{2\pi \I t - \xi/m}}{1-e^{2\pi \I t - \xi/m}}\right)\right.\\ &\qquad\qquad\qquad+\left.\frac{\xi}{2m^4}\left(\frac{e^{2\pi \I t + \xi/m}}{(1-e^{2\pi \I t + \xi/m})^2}+\frac{e^{2\pi \I t - \xi/m}}{(1-e^{2\pi \I t - \xi/m})^2}\right)\right).
    \end{split}
\end{equation}

\end{lem}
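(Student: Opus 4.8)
The plan is to reverse the computation of Proposition \ref{Borelhadprod} by expanding the closed-form expression for $G(\xi,t)$ back into a power series in $\xi$ and reading off the coefficients. First I would expand each summand of the closed form. Recalling that $\tfrac{e^w}{1-e^w}=-\tfrac{1}{1-e^{-w}}$, I can write $\tfrac{e^{2\pi\ii t+\xi/m}}{1-e^{2\pi\ii t+\xi/m}}=\partial_\xi\big(m\log(1-e^{2\pi\ii t+\xi/m})\big)$ and similarly for the other terms, so that the whole expression for $G(\xi,t)$ is a $\xi$-derivative of a combination of $\log(1-e^{2\pi\ii t\pm\xi/m})$ terms; alternatively, and more directly, I would use the generating function $\tfrac{e^w}{(e^w-1)^2}=\sum_{g\ge 0}(2g-1)\tfrac{B_{2g}}{(2g)!}w^{2g-2}$ (obtained from $\tfrac{w}{e^w-1}=\sum B_n w^n/n!$ by differentiation, as already used in the proof of Proposition \ref{Borelhadprod}) together with the analogous identity for $\tfrac{1}{1-e^{w}}$. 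Expanding $\tfrac{e^{2\pi\ii t+\xi/m}}{1-e^{2\pi\ii t+\xi/m}}$ and $\tfrac{e^{2\pi\ii t-\xi/m}}{1-e^{2\pi\ii t-\xi/m}}$ in Taylor series in $\xi$ and taking the combination appearing in $G$, the even powers of $\xi$ cancel in the first bracket and the odd powers cancel in the second (after multiplication by the explicit $\xi$), leaving only odd powers $\xi^{2g-3}$ overall.

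Next I would identify the coefficient of $\xi^{2g-3}$. The key point is that the Taylor coefficients of $\tfrac{e^{a+u}}{1-e^{a+u}}$ in $u$ at fixed $a$ are, up to combinatorial factors, the polylogarithms $\mathrm{Li}_{-j}(e^a)$: indeed $\partial_u^j\big(\tfrac{e^{a+u}}{1-e^{a+u}}\big)\big|_{u=0}=\partial_u^j\big(-\mathrm{Li}_0(e^{a+u})\big)\big|_{u=0}$, and since $\partial_u\mathrm{Li}_s(e^{a+u})=\mathrm{Li}_{s-1}(e^{a+u})$ (the identity \eqref{polylogder} transcribed to the $u$-variable), this equals $-\mathrm{Li}_{-j}(e^a)$. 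Combining the $\pm\xi/m$ contributions and summing over $m\in\Z\setminus\{0\}$ produces $\sum_{m}m^{-(2g)}=2\zeta(2g)$, and using $\zeta(2g)=\tfrac{(-1)^{g+1}B_{2g}(2\pi)^{2g}}{2(2g)!}$ together with $\partial_t^{2g}\mathrm{Li}_3(Q)=(2\pi\ii)^{2g}\mathrm{Li}_{3-2g}(Q)$ should convert the answer into exactly $-\tfrac{1}{4\pi^2}\tfrac{B_{2g}}{2g(2g-2)!(2g-3)!}\partial_t^{2g}\mathrm{Li}_3(Q)$. Care is needed with signs and with the two separate pieces (the $m^{-3}$ piece and the $\tfrac{\xi}{m^4}$ piece), which should recombine because $\tfrac{1}{2g}+\tfrac{1}{2g-1}\cdot\tfrac{1}{2}$-type bookkeeping collapses the two generating-function contributions into the single coefficient $\tfrac{(2g-1)B_{2g}}{(2g)!}$ matching $f_1$; in fact it is cleanest to observe that $G(\xi,t)=(f_1\oast f_2(-,t))(\xi)$ by the very construction in Section \ref{boreltransproof}, and that the closed form is just the resummation of $f_1$ and $f_2$, so expanding termwise and using the Hadamard (coefficientwise) product recovers the defining series.

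The routine but slightly delicate part — and the step I expect to be the main obstacle — is justifying the interchange of the infinite sum over $m$ with the Taylor expansion in $\xi$, i.e.\ checking that for $|\xi|<2\pi|t|$ (the domain from Proposition \ref{boreltransconv}) one may expand each summand and resum the coefficients of each fixed power of $\xi$; this is where one invokes absolute convergence, exactly as Fubini--Tonelli was used in Lemma \ref{lemHad}. Given the bound on $\xi$ and the asymptotics $\mathrm{Li}_{3-2g}(e^{2\pi\ii t})\sim\Gamma(2g-1)(-2\pi\ii t)^{2-2g}$ quoted in the proof of Proposition \ref{boreltransconv}, the resulting double series converges absolutely, so the rearrangement is legitimate. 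Once that is in place the remaining work is the bookkeeping of Bernoulli numbers, zeta values, and factorials described above, which is entirely mechanical. I would present this as: (1) expand the closed form in powers of $\xi$, using the polylog differentiation identity; (2) perform the sum over $m$, producing $\zeta(2g)$; (3) rewrite $\zeta(2g)$ via $B_{2g}$ and match factorials; (4) note absolute convergence justifies all interchanges.
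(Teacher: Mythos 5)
Your proposal follows essentially the same route as the paper's proof: Taylor-expand each summand of the closed form in $\xi$ using $\partial_u^j\,\mathrm{Li}_0(e^{a+u})=\mathrm{Li}_{-j}(e^{a+u})$ so that only the odd powers $\xi^{2g-3}$ survive, perform the sum over $m$ to produce $\zeta(2g)$, and convert to Bernoulli numbers via $\zeta(2g)=\frac{(-1)^{g+1}B_{2g}(2\pi)^{2g}}{2(2g)!}$ together with $\partial_t^{2g}\mathrm{Li}_3(Q)=(2\pi \mathrm{i})^{2g}\mathrm{Li}_{3-2g}(Q)$; the paper merely packages the two pieces (the $m^{-3}$ and $\xi/(2m^4)$ terms) at the outset by rewriting the closed form as $-\frac{1}{\xi}\partial_\xi\bigl(\xi^2\cdot(\cdots)\bigr)$, which spares the bookkeeping you describe. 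One small correction: since $\mathrm{Li}_0(z)=z/(1-z)$, one has $\frac{e^{w}}{1-e^{w}}=+\mathrm{Li}_0(e^{w})$, not $-\mathrm{Li}_0(e^{w})$ as written in your second paragraph.
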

\begin{proof}
We first write the second expression of $G(\xi,t)$ as
\begin{equation}
    \begin{split}
        G(\xi,t) &= -\frac{1}{\xi} \frac{\partial}{\partial \xi} \left( \frac{\xi^2}{(2\pi \I)^2}\sum_{m=1}^{\infty} \left(\frac{1}{m^3}\left(\frac{e^{2\pi \I t + \xi/m}}{1-e^{2\pi \I t + \xi/m}}-\frac{e^{2\pi \I t - \xi/m}}{1-e^{2\pi \I t - \xi/m}}\right)\right.\right)
    \end{split}
\end{equation}
we next use the Taylor expansion around $\xi=0$:
\begin{equation}
    \frac{e^{2\pi \I t + \xi/m}}{1-e^{2\pi \I t + \xi/m}}= \textrm{Li}_0(e^{2\pi \I t + \xi/m})= \sum_{k=0}^{\infty} \frac{\xi^k}{m^k} \textrm{Li}_{-k}(e^{2\pi \I t})\,,
\end{equation}
which makes use of the property
\begin{equation} 
\theta_Q \textrm{Li}_s(Q) =\textrm{Li}_{s-1} (Q)\,,  \quad \theta_Q:= Q \,\frac{\mathrm d}{\mathrm dQ}\,,
\end{equation}
We thus obtain:
\begin{align*}
    G(\xi,t) &= -\frac{1}{\xi} \frac{\partial}{\partial \xi} \left( \frac{\xi^2}{(2\pi \I)^2}\sum_{m=1}^{\infty} \left(\frac{2}{m^3}\left( \sum_{k=0}^{\infty} \frac{1}{(2k+1)!} \left(\frac{\xi}{m}\right)^{2k+1} \textrm{Li}_{-2k-1}(e^{2\pi \I t})\right)\right)\right)\, \\
        &=-\frac{2}{\xi} \frac{\partial}{\partial \xi} \left( \frac{\xi^2}{(2\pi \I)^2}\left( \sum_{k=0}^{\infty} \zeta(2k+4) \frac{1}{(2k+1)!} \xi^{2k+1} \textrm{Li}_{-2k-1}(e^{2\pi \I t})\right)\right)\\
        &=-\frac{2}{\xi} \frac{\partial}{\partial \xi} \left( \frac{\xi^2}{(2\pi \I)^2}\left( \sum_{k=0}^{\infty} (-1)^{k+3}\frac{B_{2k+4}\, (2\pi)^{2k+4} }{2(2k+4)!} \frac{1}{(2k+1)!} \xi^{2k+1} \textrm{Li}_{-2k-1}(e^{2\pi \I t})\right)\right)\,\\
        &= -\left( \frac{1}{(2\pi \I)^2} \sum_{k=0}^{\infty} (-1)^{k+3} \frac{B_{2k+4}\, (2\pi)^{2k+4}\, (2k+3)}{(2k+4)! (2k+1)!} \xi^{2k+1} \, \textrm{Li}_{-2k-1}(e^{2\pi \I t})\right)\\
        &= -\frac{1}{4\pi^2}\sum_{g=2}^{\infty} \frac{ B_{2g}}{2g (2g-2)! (2g-3)!} \xi^{2g-3}\,\partial_t^{2g} \textrm{Li}_3(Q)\,, \numberthis
\end{align*}
        
where in going from the first to the second line we have used the following expression for the Riemann zeta function:
$$ \zeta(s)= \sum_{n=1}^{\infty}\frac{1}{n^s}\,, \quad \textrm{Re}(s)>0\,,$$
and in going from the second to the third line we have used the following identity:
$$ \zeta(2n)= \frac{(-1)^{n+1} B_{2n} (2\pi)^{2n}}{2 (2n)!}\,,$$
and where we have changed the summation variable in the fifth line to $g=k+2$ and made use of
$$ (-1)^g \, (2\pi)^{2g} \textrm{Li}_{3-2g}(Q) = \partial_t^{2g} \textrm{Li}_{3}(Q)\,.$$
\end{proof}

\begin{lem}\label{constboreltrans}
The expression $-G(\xi,0)-\frac{1}{12\xi}$ gives the Borel transform of $F_0(\la) + \zeta(3)/\lambda^2-F_0^1$.
\end{lem}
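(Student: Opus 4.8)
\textbf{Proof plan for Lemma \ref{constboreltrans}.} The statement to be proved is that $-G(\xi,0)-\frac{1}{12\xi}$ is the Borel transform of the formal series $F_0(\lambda)+\zeta(3)/\lambda^2-F_0^1$, where $F_0$ is the constant-map contribution \eqref{constmapterms} and the Borel transform $\mathcal{B}$ acts by $\mathcal{B}(\check\lambda^{n+1})=\xi^n/n!$ with $\check\lambda=\lambda/2\pi$. The plan is to mimic exactly the computation in Lemma \ref{lemmaApp}: start from the explicit series representation of $G(\xi,t)$ established in Proposition \ref{Borelhadprod} (equation \eqref{Boreltrans}), specialise to $t=0$, subtract the $1/(12\xi)$ pole, extract the Taylor coefficients in $\xi$, and check term by term that they match the coefficients obtained by applying $\mathcal{B}$ to $F_0(\lambda)+\zeta(3)/\lambda^2-F_0^1$.

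First I would write down the target series explicitly. From \eqref{constmapterms} with $\chi(X)=2$, we have $F_0(\lambda)=-\zeta(3)/\lambda^2+F_0^1+\sum_{g\ge2}\lambda^{2g-2}\frac{2(-1)^{g-1}B_{2g}B_{2g-2}}{4g(2g-2)(2g-2)!}$, so that $F_0(\lambda)+\zeta(3)/\lambda^2-F_0^1=\sum_{g\ge2}\lambda^{2g-2}\frac{(-1)^{g-1}B_{2g}B_{2g-2}}{2g(2g-2)(2g-2)!}$. Rewriting $\lambda^{2g-2}=(2\pi)^{2g-2}\check\lambda^{2g-2}$ and applying $\mathcal{B}$ gives a series $\sum_{g\ge2}\frac{(-1)^{g-1}(2\pi)^{2g-2}B_{2g}B_{2g-2}}{2g(2g-2)(2g-2)!(2g-3)!}\xi^{2g-3}$; this is the expression I need to reproduce. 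On the other side, I would set $t=0$ in \eqref{Boreltrans}, which is legitimate since that representation is manifestly regular at $t=0$ except for the overall $m^{-3}$ sum producing a simple pole at $\xi=0$ with residue $-\frac{1}{12}$ (as noted in the proof of Corollary \ref{constmapBorelsum}). Thus $-G(\xi,0)-\frac1{12\xi}$ is analytic near $\xi=0$, and I can Taylor-expand it.

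The core computation is the Taylor expansion, and it runs parallel to Lemma \ref{lemmaApp}. Writing $G(\xi,0)=-\frac1\xi\partial_\xi\big(\frac{\xi^2}{(2\pi)^2}\sum_{m\ge1}\frac1{m^3}(\frac{1}{1-e^{\xi/m}}-\frac{1}{1-e^{-\xi/m}})\big)$ and using $\frac{1}{1-e^{w}}=-\mathrm{Li}_0(e^w)=-\sum_{k\ge0}\frac{w^k}{k!}\mathrm{Li}_{-k}(1)$ — where the divergent $k=0$, i.e. $\mathrm{Li}_0(1)$, term cancels in the antisymmetric combination and $\mathrm{Li}_{-k}(1)$ for $k\ge1$ should be replaced by the regularised value $-B_{k+1}/(k+1)$ coming from the Hurwitz/Bernoulli formula — one collects $\sum_{m\ge1}m^{-3-(2k+1)}=\zeta(2k+4)$ for the surviving odd powers, then uses $\zeta(2n)=\frac{(-1)^{n+1}B_{2n}(2\pi)^{2n}}{2(2n)!}$, and finally applies $\partial_\xi$ and the $-1/(12\xi)$ subtraction to kill the lowest-order term. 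After the dust settles and reindexing $g=k+2$, the coefficient of $\xi^{2g-3}$ should come out exactly as above. I expect the main obstacle to be bookkeeping of the $B_{2g-2}$ factor: in Lemma \ref{lemmaApp} the $t$-dependence sat in $\mathrm{Li}_{3-2g}(Q)$ and supplied the $\partial_t^{2g}\mathrm{Li}_3$ piece, whereas here that slot is replaced by the regularised $\mathrm{Li}_{-(2k+1)}(1)=-B_{2k+2}/(2k+2)$, which is precisely what produces the second Bernoulli factor; I would need to make the "$t\to0$ regularisation of $\mathrm{Li}_{3-2g}(e^{2\pi\I t})$" step rigorous, either by the analytic-continuation argument used throughout Section \ref{proofs} or by simply noting that $G(\xi,t)$ depends analytically on $t$ near $0$ (Proposition \ref{Borelhadprod}) and evaluating the Taylor coefficients of the manifestly-regular function $-G(\xi,0)-\frac1{12\xi}$ directly from \eqref{Boreltrans}. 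Once the two coefficient formulas are seen to agree for all $g\ge2$, the lemma follows, and this also completes the justification of \eqref{Borelconstmap} referenced in Corollary \ref{constmapBorelsum}.
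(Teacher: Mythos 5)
Your plan is essentially the paper's own proof: both sides compute the two Borel transforms as explicit power series in $\xi$ and match coefficients term by term, with one Bernoulli factor coming from the (regularised) expansion of $1/(1-e^{\pm\xi/m})$ — equivalently $\mathrm{Li}_{-k}(1)=\zeta(-k)=-B_{k+1}/(k+1)$, which is just the Bernoulli generating function — and the other from $\sum_m m^{-2n}=\zeta(2n)=\frac{(-1)^{n+1}B_{2n}(2\pi)^{2n}}{2(2n)!}$. One small imprecision worth fixing: in the antisymmetric combination the $-m/\xi$ pole terms of $\mathrm{Li}_0(e^{\pm\xi/m})$ do not cancel but double, and after summing over $m$ they produce exactly the $-\tfrac{1}{12\xi}$ that your subtraction removes (the terms that cancel are the even-power ones, including the $\zeta(0)$ constants), but this does not affect the rest of your argument.
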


\begin{proof}
From the definition of $F_0(\lambda)$ in (\ref{constmapterms}), we have the following (recall that we take $\chi(X)=2$ for the resolved conifold):
\begin{equation}
    F_0(\la) + \zeta(3)/\lambda^2-F_0^1=\sum_{g\geq 2}\lambda^{2g-2}\frac{(-1)^{g-1}\, B_{2g}\, B_{2g-2}}{2g (2g-2)\, (2g-2)!}\,.
\end{equation}
The Borel transform  $G_0(\xi)$ of the previous series is then given by 
\begin{equation}
    G_0(\xi)=\sum_{g\geq 2}\xi^{2g-3}\frac{(-1)^{g-1}\, B_{2g}\, B_{2g-2}(2\pi)^{2g-2}}{2g\, ((2g-2)!)^2}\,.
\end{equation}
On the other hand, we have
\begin{equation}
    \begin{split}
    G(\xi,0)&=\frac{2}{(2\pi)^2}\sum_{m>0}\frac{1}{m^3}\Big(1+\frac{\xi}{2}\frac{\partial}{\partial \xi}\Big)\Big(\frac{1}{1-e^{\xi/m}}-\frac{1}{1-e^{-\xi/m}}\Big)\\
    &=\frac{2}{(2\pi)^2}\sum_{m>0}\frac{1}{m^3}\Big[\frac{1}{2}\Big(\frac{1}{1-e^{\xi/m}}-\frac{1}{1-e^{-\xi/m}}\Big)+\frac{1}{2}\frac{\partial}{\partial \xi}\Big(\frac{\xi}{1-e^{\xi/m}}-\frac{\xi}{1-e^{-\xi/m}}\Big)\Big]\\
    \end{split}
\end{equation}
Using (\ref{polylogbern}) one finds 
    \begin{align*}
        G(\xi,0)&=\frac{2}{(2\pi)^2}\sum_{m>0}\frac{1}{m^3}\Big[-\frac{m}{\xi}\sum_{k=0}^{\infty}\frac{B_{2k}}{(2k)!}\Big(\frac{\xi}{m}\Big)^{2k}-m\frac{\partial}{\partial \xi}\Big(\sum_{k=0}^{\infty}\frac{B_{2k}}{(2k)!}\Big(\frac{\xi}{m}\Big)^{2k}\Big)\Big]\\
        &=-\frac{2}{(2\pi)^2}\sum_{k=0}^{\infty}(2k+1)\frac{B_{2k}}{(2k)!}\xi^{2k-1}\zeta(2k+2)\\
        &=-\frac{2}{(2\pi)^2}\sum_{k=0}^{\infty}(2k+1)\frac{B_{2k}}{(2k)!}\xi^{2k-1}\Big(\frac{(-1)^kB_{2k+2}(2\pi)^{2k+2}}{(2k+2)!2}\Big)\\
        &=-\sum_{g\geq 1}\xi^{2g-3}\frac{(-1)^{g-1}\, B_{2g}\, B_{2g-2}(2\pi)^{2g-2}}{2g\, ((2g-2)!)^2}\\
        &=-G_0(\xi) -\frac{1}{12\xi}\numberthis
    \end{align*}
and the result follows.
\end{proof}

\newcommand{\etalchar}[1]{$^{#1}$}

\end{document}